\newtheorem{theorem}{Theorem}[section]
\newtheorem{lemma}{Lemma}[section]
\newcommand{\norm}[1]{\left\lVert#1\right\rVert}
\long\def\@makecaption#1#2{%
  \normalsize%% add this line
  \vskip\abovecaptionskip
  \sbox\@tempboxa{#1: #2}%
  \ifdim \wd\@tempboxa >\hsize
    #1: #2\par
  \else
    \global \@minipagefalse
    \hb@xt@\hsize{\hfil\box\@tempboxa\hfil}%
  \fi
  \vskip\belowcaptionskip}
\title{A Goodness-of-fit Test on the Number of Biclusters in a Relational Data Matrix}
\author[1]{Chihiro Watanabe\thanks{watanabe-chihiro763@g.ecc.u-tokyo.ac.jp}} %%%
\author[1,2]{Taiji Suzuki\thanks{taiji@mist.i.u-tokyo.ac.jp}}
\affil[1]{{\normalsize Graduate School of Information Science and Technology, The University of Tokyo, Tokyo, Japan}}
\affil[2]{{\normalsize Center for Advanced Intelligence Project (AIP), RIKEN, Tokyo, Japan}}
\date{}
\begin{document}
\maketitle

\begin{abstract}
Biclustering is a method for detecting homogeneous submatrices in a given observed matrix, and it is an effective tool for relational data analysis. Although there are many studies that estimate the underlying bicluster structure of a matrix, few have enabled us to determine the appropriate number of biclusters in an observed matrix. Recently, a statistical test on the number of biclusters has been proposed for a regular-grid bicluster structure, where we assume that the latent bicluster structure can be represented by row-column clustering. However, when the latent bicluster structure does not satisfy such regular-grid assumption, the previous test requires a larger number of biclusters than necessary (i.e., a finer bicluster structure than necessary) for the null hypothesis to be accepted, which is not desirable in terms of interpreting the accepted bicluster structure. In this study, we propose a new statistical test on the number of biclusters that does not require the regular-grid assumption and derive the asymptotic behavior of the proposed test statistic in both null and alternative cases. We illustrate the effectiveness of the proposed method by applying it to both synthetic and practical relational data matrices. 

\smallskip
\noindent \textit{\textbf{Keywords.}} biclustering, submatrix detection, goodness-of-fit test, random matrix theory
\end{abstract}

\section{Introduction}

Relational data is a kind of matrix data, the entries of which reflect some kind of relationship between two (generally different) objects. For example, the rows and columns, respectively, of an observed matrix $A \in \mathbb{R}^{n \times p}$ represent customers and products, and each entry $A_{ij}$ is a number of times for which the $i$th customer purchased the $j$th product. It has been shown that we can successfully model various kinds of relational data matrices, including customer-item transaction/rating data \cite{Shan2008,Symeonidis2007}, document-word co-occurrence data \cite{Dhillon2001,Franca2012}, and gene expression data \cite{Madeira2004,Oghabian2014,Prelic2006,Tanay2002}, by assuming the existence of a latent \textit{bicluster} or ``homogeneous'' submatrix (e.g., the entries in each bicluster are identically distributed). In the example above, concerning the customer-item relational data matrix, this assumption corresponds to that there are some groups of customers $I \subseteq \{ 1, \dots, n \}$ and some groups of items $J \subseteq \{ 1, \dots, p \}$, and that the customers in $I$ tend to purchase the items in $J$ at a similar frequency. 

Regarding the bicluster structure of a relational data matrix, the following two problems have been extensively studied in the literature: \textit{submatrix detection} and \textit{localization}. Submatrix detection serves to detect the existence of such biclusters in a given observed matrix $A$ (i.e., whether or not matrix $A$ contains at least one bicluster) \cite{Butucea2013,Hartigan1972,Ma2015,Shabalin2009}. In this paper, as in a number of previous studies \cite{Cai2017,Chen2016}, we distinguish such a task from submatrix localization (which is also known as \textit{biclustering}), the purpose of which is to recover the exact position of such biclusters. 
So far, many biclustering methods have been proposed for a fixed number of biclusters $K$ \cite{Cai2017,Chen2016,Hajek2018,Hochreiter2010,Shabalin2009}. In most practical cases, however, there would not be any prior knowledge about $K$ in a given data matrix. Therefore, it is an important task to develop some method to appropriately determine $K$ from the observed data $A$. In the next two paragraphs, we outline some related studies that propose methods for choosing $K$. 

\paragraph{A statistical test on the number of biclusters $K$.} Although many studies have tested whether an observed matrix $A$ contains any large average submatrix \cite{Brennan2019,Butucea2013,Cai2020,Liu2018,Ma2015}, few statistical test methods have been proposed for ascertaining the number of biclusters $K$ in a given matrix $A$. 
Recently, statistical tests on $K$ have been proposed in \cite{Bickel2016,Hu2020,Lei2016,Watanabe2021} with the constraint that the underlying bicluster structure should be represented by a regular grid (as shown in Figure \ref{fig:submatrix} (b-2))\footnote{Particularly, in \cite{Bickel2016,Hu2020,Lei2016}, the observed matrix $A$ (and thus its bicluster structure) is assumed to be square symmetric.}. 
However, if the latent bicluster structure does not satisfy the regular grid constraint (as shown in Figure \ref{fig:submatrix} (b-1)), such a test needs a larger hypothetical number of biclusters $K_0$ than necessary (i.e., a finer bicluster structure than necessary) to accept the null hypothesis $K = K_0$, which is not desirable from the perspective of interpreting the accepted bicluster structure. To cope with such a problem, a more flexible model is required, one which can represent the existence of local biclusters \cite{Shabalin2009}. 
For a singular-value-decomposition-based biclustering, a stopping criterion has been proposed for detecting multiple biclusters (which determines $K$) based on stability selection in \cite{Sill2011}. This method has made it possible to detect a bicluster structure with Type I error control and without the regular grid constraint. However, unlike the method we propose in this paper, its Type I error has been guaranteed only in terms of an upper bound, not the null distribution of a test statistic. Therefore, in this previous study \cite{Sill2011}, no means is provided to perform a statistical test on the number of biclusters. Moreover, this method has no theoretical guarantee for the alternative cases (i.e., statistical power). 
In this study, we address these problems by developing a new statistical test on $K$, which does not require the regular grid constraint and whose test statistic $T$ shows a good property in an alternative case (i.e., with high probability, $T$ asymptotically increases with the matrix size and thus the Type II error converges in probability to zero\footnote{It must be noted that Theorem \ref{th:unrealizable} shows the asymptotic behavior of test statistic $T$ under the assumption that an observed matrix has some latent bicluster structure. To derive its behavior in the case that an observed matrix cannot be represented by such a model is beyond the scope of this paper.}), as shown later in Theorem \ref{th:unrealizable}. 

\paragraph{An information criterion on the number of biclusters $K$.} Aside from the statistical-test-based methods, some studies have proposed that $K$ can be determined based on the minimum description length \cite{Sakai2013,Tepper2016,Yamanishi2019} and modified DIC for the biclustering problem \cite{Chekouo2015a,Chekouo2015b}. 
Particularly, under the regular grid constraint of the bicluster structure, an information criterion called integrated completed likelihood (ICL) has been proposed for determining $K$ \cite{Corneli2015,Lomet2012,Wyse2017}, which approximates the maximum marginal likelihood of a given $K$. These methods aim to select the optimal number of biclusters $K$ from a given set of candidates, in terms of some criterion (e.g., marginal likelihood). This purpose is different from that of a statistical test, which aims to judge whether we accept a hypothetical number of biclusters $K_0$ with a specific significance level given by a user. 

\paragraph{Other approaches for determining the number of biclusters $K$.} Aside from the above information-criterion-based and statistical-test-based methods, some studies have proposed the construction of a generative model of the bicluster structure including the number of biclusters $K$ and the subsequent selection of an optimal model in terms of some measure (e.g., choosing a MAP estimator) \cite{Moran2019,Raff2020}. 
There have also been some heuristic criteria for determining the number of biclusters in an observed matrix, which have been proposed as stopping rules for top-down division-based biclustering algorithms \cite{Duffy1991,Hartigan1972,Tibshirani1999} or bottom-up merging-based one \cite{Pio2013}. 

In this study, we propose a statistical test method for the number of biclusters $K$ in a given observed matrix $A$, without the regular grid constraint of the latent bicluster structure. Specifically, we consider the notions of \textit{disjointness} and \textit{bi-disjointness}. A bicluster structure can be called a \textit{disjoint} structure iff each entry belongs to at most one bicluster (as shown in Figure \ref{fig:submatrix} (a) and (b)). \textit{Bi-disjointness} has a stricter condition: we call a bicluster structure a \textit{bi-disjoint} structure iff each row or column belongs to at most one bicluster (as shown in Figure \ref{fig:submatrix} (a)). We develop for the first time a statistical test on $K$ under the assumptions that the underlying bicluster structure is disjoint (but not necessarily bi-disjoint) and that the submatrix localization algorithm is consistent. 

\begin{figure}[t]
  \centering
  \includegraphics[width=\hsize]{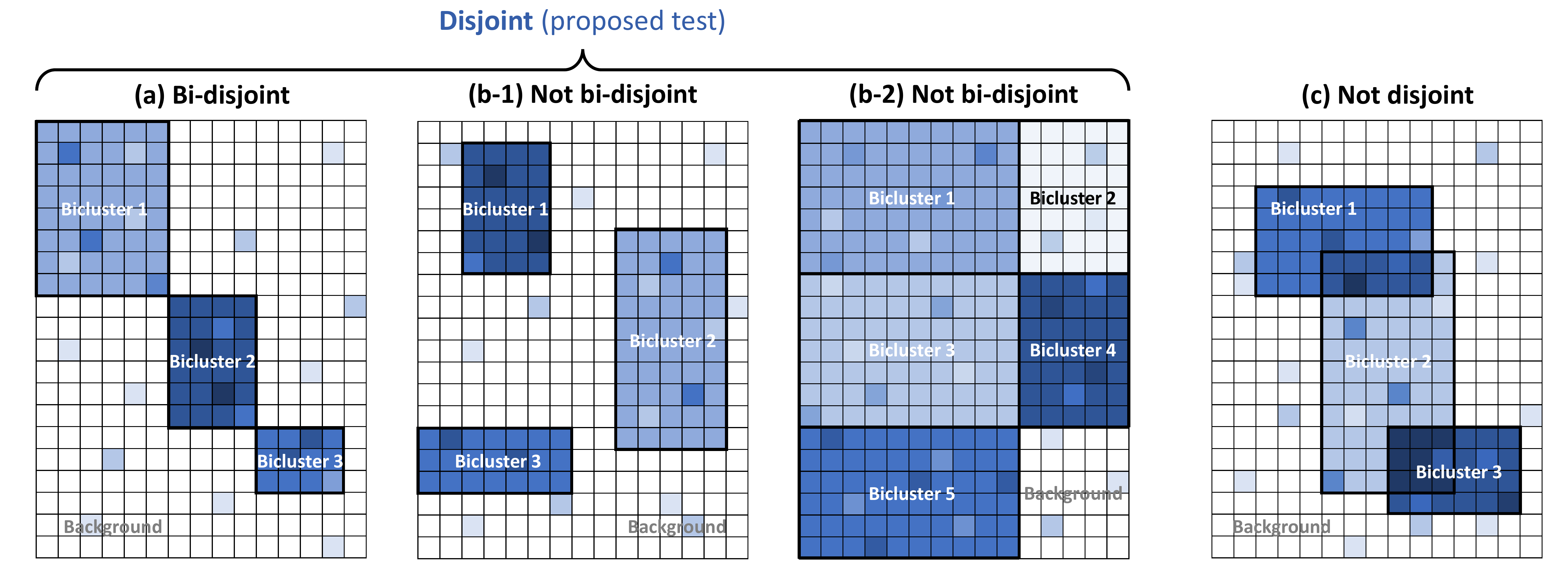}
  \caption{(a) Bi-disjoint, (b) disjoint but not bi-disjoint, and (c) not disjoint bicluster structures. In the proposed method, we assume that the underlying bicluster structure is disjoint, but not necessarily bi-disjoint. We assume that the observed matrix consists of one or multiple biclusters, in each of which the entries are generated in the i.i.d. sense. Note that in the cases of (b-1) and (c), it is not always possible to make all the rows and columns within the biclusters contiguous by sorting the rows and columns.}
  \label{fig:submatrix}
\end{figure}

To guarantee the asymptotic behavior of the proposed test statistic in the null case (i.e., $K = K_0$), which is given in Theorem \ref{th:realizable}, we use the properties of a random matrix with a sub-exponential decay \cite{Bloemendal2016,Pillai2014}. Moreover, we derive its behavior in the alternative case (i.e., $K > K_0$) such that it increases with the matrix size in high probability, as given in Theorem \ref{th:unrealizable}. Unlike a previous study \cite{Watanabe2021}, wherein the number of biclusters $K$ is assumed to be a fixed constant that does not depend on the matrix size $m$, we consider a case in which $K$ might increase with $m$ (the precise description of this assumption is given in (\ref{eq:KH_condition_KH_sum})). Additionally, since we consider more general bicluster structures (i.e., without the regular-grid assumption) than in the previous study \cite{Watanabe2021}, we use a different approach to complete the proof in the alternative case. Based on these results, in Sect.~\ref{sec:method}, we explain a method for estimating $K$ from an observed matrix $A$, by sequentially testing the hypothetical numbers of biclusters in an ascending order (i.e., $K_0 = 0, 1, 2, \dots$) until the null hypothesis is accepted. 

This paper is organized as follows: in Sect.~\ref{sec:method}, we describe the problem settings and the model of the underlying bicluster structure in an observed matrix. Next, in Sect.~\ref{sec:statistic}, we propose a statistical test on the number of biclusters $K$ in an observed matrix, and derive its theoretical guarantee in both null and alternative cases. In Sect.~\ref{sec:experiments}, we provide some experimental results that demonstrate the effectiveness of the proposed test. Finally, in Sect.~\ref{sec:discussion}, we discuss the obtained results and limitations of the proposed method, and conclude this paper in Sect.~\ref{sec:conclusion}. 

%%%%%%%%%%%%%%%%%%%%%%%%%%%%%%%%%%%%%%%%%%%%%%%%%%%%%%%

\section{Problem setting and statistical model for goodness-of-fit test for submatrix detection problem}
\label{sec:method}

Let $A \in \mathbb{R}^{n \times p}$ be an $n \times p$ observed matrix. Given such an observed matrix $A$, the goal of submatrix detection problem is to determine whether it contains one or multiple disjoint submatrices, say \textit{biclusters}, in each of which the entries are generated in the i.i.d. sense (Figure \ref{fig:submatrix} (a) and (b)). As in the previous studies \cite{Cai2017,Liu2018}, we distinguish the submatrix detection from \textit{localization} problems in that the goal of the latter is not only to detect the existence of biclusters in an observed matrix, but to estimate their precise locations. 

Let $K$ be the minimum number of such biclusters to represent the matrix $A$, which is unknown beforehand. Aside from the $K$ biclusters, we assume that there are ``background'' entries in matrix $A$ that do not belong to any bicluster. Note that the difference between a bicluster and the background is in that the former can be represented as a submatrix (i.e., $\{ (i, j): i \in I_k, j \in J_k \}$ for some sets $I_k \subset \{ 1, \dots, n \}$ and $J_k \subset \{ 1, \dots, p \}$), while the latter does not necessarily have such a submatrix structure, as shown in Figure \ref{fig:submatrix}\footnote{It must be noted that multiple bicluster assignments may exist that represent an equivalent bicluster structure. For instance, in a regular-grid bicluster structure (as shown in Figure \ref{fig:submatrix} (b-2)), any block can be defined as the background. In such cases, the consistency condition that we give later in \ref{asmp:consistency} requires that the probability converges to one with increasing matrix size that an estimated bicluster assignment is included in the set of correct bicluster assignments.}. In the proposed test, we do \textbf{not} assume the underlying bicluster structure to be \textit{bi-disjoint}, that is, each row or column is assigned to at most one bicluster (Figure \ref{fig:submatrix} (a)). 

We denote the bicluster index of the $(i, j)$th entry of matrix $A$ as $g_{ij} \in \{ 0, 1, \dots, K \}$, where $g_{ij} = k$ if the $(i, j)$th entry belongs to the $k$th bicluster for some $k \in \{ 1, \dots, K \}$ and $g_{ij} = 0$ if it belongs to the background. We define the set of group indices of all the entries as $g \equiv (g_{ij})_{1\leq i \leq n, 1 \leq j \leq p}$. We also define that $\mathcal{I}_k \equiv \{ (i, j): g_{ij} = k \}$, which represents the set of entries in the $k$th group. Specifically, we consider the following model: 
\begin{align}
\label{eq:SD}
&P = (P_{ij})_{1\leq i \leq n, 1 \leq j \leq p}, \ \ \ \ \ 
P_{ij} = b_{g_{ij}}.  \nonumber \\
&\sigma = (\sigma_{ij})_{1\leq i \leq n, 1 \leq j \leq p}, \ \ \ \ \ 
\sigma_{ij} = s_{g_{ij}}.  \nonumber \\
&A = (A_{ij})_{1\leq i \leq n, 1 \leq j \leq p}, \ \ \ \ \ 
\mathbb{E} [A_{ij}] = P_{ij}, \ \ \ \ \ 
\mathbb{E} [(A_{ij} - P_{ij})^2] = \sigma_{ij}^2, 
\end{align}
where $b_k$ and $s_k > 0$, respectively, are the mean and standard deviation of the $k$th bicluster ($k = 1, \dots, K$) or background ($k = 0$). 
This model is a generalized version of well-studied submatrix detection models, in which we assume that the mean of the background noise is zero (i.e., $b_0 = 0$) \cite{Butucea2013,Ma2015,Shabalin2009}. 
Let $Z \in \mathbb{R}^{n \times p}$ be a standardized noise matrix, which is given by
\begin{align}
\label{eq:Z_true}
Z = (Z_{ij})_{1\leq i \leq n, 1 \leq j \leq p}, \ \ \ \ \ \ \ \ \ \ 
Z_{ij} = \frac{A_{ij} - P_{ij}}{\sigma_{ij}}. 
\end{align}

In most cases, the number of biclusters $K$ is unknown in advance. This study aims to develop a statistical test on $K$, which is based on the following null (N) and alternative (A) hypotheses: 
\begin{align}
\label{eq:hypothesis}
\mathrm{(N):}\ K = K_0, \ \ \ \ \ \ \ \ \ \ 
\mathrm{(A):}\ K > K_0, 
\end{align}
where $K_0$ is a given hypothetical number of biclusters. In this study, we only consider the cases where $K_0 \leq K$. If $K = K_0$ (i.e., the null case), then we call it a \textit{realizable} case. Otherwise (i.e., in the alternative case), we call it an \textit{unrealizable} case. To select the number of biclusters for a given observed matrix $A$, we propose the sequential testing of the bicluster numbers $K_0 = 0, 1, 2, \dots$ until the null hypothesis (N) is accepted. Let $\hat{K}$ be the hypothetical number of biclusters when (N) is accepted. The proposed method outputs $\hat{K}$ as the selected number of biclusters in matrix $A$. 

\paragraph{Notations.}
Throughout this paper, we use the following notations: 
\begin{align}
&X = O_p \left[ f(m) \right]. \nonumber \\
&\Leftrightarrow \ 
\forall \epsilon>0, \ \exists C>0, M>0, \ \forall m \geq M, \ \mathrm{Pr} \left[C f(m) \geq X \right] \geq 1-\epsilon. \nonumber \\
&X = \Omega_p \left[ f(m) \right]. \nonumber \\
&\Leftrightarrow \ 
\forall \epsilon>0, \ \exists C>0, M>0, \ \forall m \geq M, \ \mathrm{Pr} \left[C f(m) \leq X \right] \geq 1-\epsilon. \nonumber \\
&X = \Theta_p \left[ f(m) \right]. \nonumber \\
&\Leftrightarrow \ 
\forall \epsilon>0, \ \exists C_1, C_2>0, M>0, \ \forall m \geq M, \nonumber \\
&\ \ \ \ \ \ \mathrm{Pr} \left[ C_1 f(m) \leq X \leq C_2 f(m) \right] \geq 1-\epsilon. \\
&\| A \|_{\mathrm{op}} = \sup_{\bm{u} \in \mathbb{R}^p} \frac{\| A \bm{u} \|}{\| \bm{u} \|}, \ \ \ \ \ 
\| A \|_{\mathrm{F}} = \sqrt{\sum_{i = 1}^n \sum_{j = 1}^p A_{ij}^2}. 
\end{align}

In the proofs in Sect.~\ref{sec:statistic}, we use the following \textbf{sample} mean matrix $\tilde{P}$ and standard deviation matrix $\tilde{\sigma}$ for the \textbf{correct} block structure, and matrix $\tilde{Z}$: 
\begin{align}
\label{eq:tilde_Z}
&\tilde{P} = (\tilde{P}_{ij})_{1\leq i \leq n, 1 \leq j \leq p}, \ \ \ \ \ \ \ \ \ \ \tilde{P}_{ij} = \tilde{b}_{g_{ij}}, \nonumber \\
&\tilde{\sigma} = (\tilde{\sigma}_{ij})_{1\leq i \leq n, 1 \leq j \leq p}, \ \ \ \ \ \ \ \ \ \ \tilde{\sigma}_{ij} = \tilde{s}_{g_{ij}}, \nonumber \\
&\tilde{Z} = (\tilde{Z}_{ij})_{1\leq i \leq n, 1 \leq j \leq p}, \ \ \ \ \ \ \ \ \ \ \tilde{Z}_{ij} = \frac{A_{ij} - \tilde{P}_{ij}}{\tilde{\sigma}_{ij}}. 
\end{align}
where $\tilde{b}_k$ and $\tilde{s}_k$, respectively, are the \textbf{sample} mean and standard deviation of the entries in the $k$th \textbf{null} group in observed matrix $A$. Let $\tilde{\lambda}_1$ and $\tilde{\bm{v}}_1$, respectively, be the maximum eigenvalue of matrix $\tilde{Z}^{\top} \tilde{Z}$ and the corresponding eigenvector whose Euclid norm is constrained to be one: 
\begin{align}
\label{eq:tilde_v_eigenvec}
\tilde{Z}^{\top} \tilde{Z} \tilde{\bm{v}}_1 = \tilde{\lambda}_1 \tilde{\bm{v}}_1, \ \ \ \| \tilde{\bm{v}}_1 \| = 1. 
\end{align}
Similarly, we denote the eigenvalues (in descending order) and the corresponding normalized eigenvectors of matrix $Z^{\top} Z$ as $\{ \lambda_j \}$ and $\{ \bm{v}_j \}$, respectively: 
\begin{align}
\label{eq:v_eigenvec}
&Z^{\top} Z \bm{v}_j = \lambda_j \bm{v}_j, \ \ \ \ \ \| \bm{v}_j \| = 1, \ \ \ \ \ j = 1, \dots, p. \nonumber \\
&\lambda_1 \geq \lambda_2 \geq \dots \geq \lambda_p. 
\end{align}

\paragraph{Assumptions.}
In Sect.~\ref{sec:statistic}, we develop a statistical test for the number of biclusters in an observed matrix based on the following assumptions: 
\begin{enumerate}
\renewcommand{\theenumi}{(\roman{enumi})}
\item We assume that the biclusters (and the background) are \textit{disjoint}\footnote{Note that in this paper, we use the term ``disjoint'' to indicate that each \textit{entry} belongs to a single group, not that the bicluster structure is bi-disjoint as in Figure \ref{fig:submatrix} (a) (i.e., each \textit{row} or \textit{column} is assigned to at most one bicluster).}, that is, there is no entry $(i, j)$ that is assigned to two or more biclusters (i.e., $\mathcal{I}_k \cap \mathcal{I}_{k'} = \emptyset$ if $k \neq k'$, $k, k' \in \{ 0, 1, \dots, K \}$). Moreover, we assume that each $(i, j)$th entry belongs to exactly one group. 
\item We assume that each entry $Z_{ij}$ of the standardized noise matrix has a sub-exponential decay (i.e., there exists some $\vartheta >0$ such that for $x>1$, $\mathrm{Pr} \left( \left| Z_{ij} \right| > x \right) \leq \vartheta^{-1}\exp (-x^{\vartheta})$). Furthermore, we assume the following conditions: 
\begin{itemize}
\item $\max_{k = 0, 1, \dots, K} s_k = O(1)$, and $\min_{k = 0, 1, \dots, K} s_k = \Omega (1)$. 
\item $\max_{k=1, \dots, K, k'=1, \dots, K} \left| b_k - b_{k'} \right| = O(1)$, and $\max_{k = 0, 1, \dots, K} |b_k| = O(K)$. 
We also assume that the minimum difference between a pair of different biclusters (including background) is lower bounded by some constant that does not depend on the matrix size: $\min_{k \neq k'} |b_k - b_{k'}| \geq C^{\bm{b}} > 0$. 
\item $C^{\mathrm{M}(4)} \equiv \max_{i=1, \dots, n, j=1, \dots, p} \mathbb{E} [Z_{ij}^4] = O(1)$. 
\end{itemize}
\label{asmp:Z_exp_S}
\item We assume that both the row and column sizes $n$ and $p$ of the observed matrix increase in proportion to an integer $m$ (i.e., $n, p \propto m$) and we consider an asymptotics of $m \to \infty$. 
\label{asmp:np_size}
\item As for the background, we assume that it can be divided into $H$ disjoint submatrices. Regarding the row and column sizes of each $k$th submatrix, ($|I_k|$ and $|J_k|$, respectively), we assume that they monotonically increase with $m$, where $k = 1, \dots, K+H$. 
\begin{itemize}
\item In Theorem \ref{th:realizable} for a \textbf{realizable} case, we assume that the minimum number of biclusters $K$ and that of background submatrices $H$ to represent the observed matrix $A$ satisfy the following conditions: 
  \begin{align}
  \label{eq:KH_condition_KH_sum}
  &K+H = O \left( m^{\frac{1}{42} - \epsilon_1} \right),\ \mathrm{for\ some}\ \epsilon_1 > 0. \\
  &n_{\mathrm{min}} \equiv \min_{k = 1, \dots, K+H} |I_k| = \Omega \left( m^{\frac{8}{21}} \right), \nonumber \\
  &p_{\mathrm{min}} \equiv \min_{k = 1, \dots, K+H} |J_k| = \Omega \left( m^{\frac{8}{21}} \right). 
  \end{align}
  Note that from these conditions, for some $\epsilon_1 > 0$, 
  \begin{align}
  \label{eq:KH_condition_KHminIk}
  &(K+H) \left( \min_{k = 1, \dots, K+H} |\mathcal{I}_k| \right)^{-\frac{1}{4}} 
  \leq \frac{K+H}{n_{\mathrm{min}}^{\frac{1}{4}} p_{\mathrm{min}}^{\frac{1}{4}}}
  = O \left( m^{-\frac{1}{6} - \epsilon_1} \right). 
  \end{align}
\item In Theorem \ref{th:unrealizable} for an \textbf{unrealizable} case, we assume the following stricter condition: 
  \begin{align}
  \label{eq:n_min_p_min_un}
  \frac{K+H}{\sqrt{n_{\mathrm{min}} p_{\mathrm{min}}}} = O \left( m^{-\frac{3}{4} - \epsilon_2} \right),\ \mathrm{for\ some}\ \epsilon_2 > 0. 
  \end{align}
\end{itemize}
\label{asmp:block_size}
\item In the realizable case, we assume that a submatrix localization algorithm for estimating the bicluster structure $g$ is \textit{consistent}, that is, $\mathrm{Pr} (\hat{g} = g) \to 1$ in the limit of $m \to \infty$, where $g$ and $\hat{g}$, respectively, are the null and estimated bicluster structures of the observed matrix $A$ (the precise definition of $\hat{g}$ is given in Sect.~\ref{sec:statistic})\footnote{Such consistency condition in submatrix localization has been considered by several previous studies, although most of them have assumed the existence of at most one bicluster in a given observed matrix \cite{Balakrishnan2011,Brennan2018,Butucea2015,Hajek2017,Hajek2018,Kolar2011,Luo2020}. Multiple biclusters may be localized by applying these methods to a given observed matrix multiple times; however, there is no guarantee for the consistency of such a heuristic approach. For bi-disjoint bicluster structures, some submatrix localization algorithms based on the maximum likelihood estimator for a known model parameter \cite{Chen2016} and singular value decomposition \cite{Cai2017} have been shown to be consistent. As another method, by imposing the regular grid constraint to the underlying bicluster structure, we can consider a special case of non-bi-disjoint bicluster structures, which can be represented as a result of row-column clustering (as shown in Figure \ref{fig:submatrix} (b-2)). As for the biclustering problem with such a regular grid structure, Flynn and Perry \cite{Flynn2020} have proposed a consistent algorithm based on the criterion of (generalized) profile likelihood. However, these algorithms cannot be directly applied to our case, where the localization problem cannot be formulated as row-column (hard) clustering \cite{Shabalin2009}. Although the proposed test itself can be applied without the bi-disjoint assumption, currently there is no way to consistently estimate non-bi-disjoint bicluster structures. Instead, we propose a heuristic submatrix localization algorithm in Appendix \ref{sec:consistency_approx} and use it the experiments. To develop a consistent submatrix localization algorithm that can be applied without the bi-disjoint assumption is beyond the scope of this paper.}. 
\label{asmp:consistency}
\end{enumerate}

%%%%%%%%%%%%%%%%%%%%%%%%%%%%%%%%%%%%%%%%%%%%%%%%%%%%%%%

\section{A test statistic for determining the number of biclusters}
\label{sec:statistic}

We develop the test statistic $T$ of the proposed test based on the estimated version of the standardized noise matrix $Z$ in (\ref{eq:Z_true}), given a hypothetical number of biclusters $K_0$. 
We denote the \textbf{estimated} group index of the $(i, j)$th entry of matrix $A$ as $\hat{g}_{ij} \in \{ 0, 1, \dots, K_0 \}$, where $\hat{g}_{ij} = k$ if the $(i, j)$th entry is estimated to be a member of the $k$th bicluster for some $k$ and $\hat{g}_{ij} = 0$ otherwise (i.e., the $(i, j)$th entry is estimated to be a member of background). 
We define the set of \textbf{estimated} group indices of all the entries as $\hat{g} \equiv (\hat{g}_{ij})_{1\leq i \leq n, 1 \leq j \leq p}$. We also define that $\hat{\mathcal{I}}_k \equiv \{ (i, j): \hat{g}_{ij} = k \}$, which represents the \textbf{estimated} set of entries in the $k$th group. 

Based on the above notations, the \textbf{estimated} mean, standard deviation and noise matrices $\hat{P}$, $\hat{\sigma}$ and $\hat{Z}$ are given by
\begin{align}
\label{eq:BPS_hat}
&\hat{\bm{b}} = (\hat{b}_k)_{1 \leq k \leq K_0}, \ \ \ \ \ \ \ \ \ \ 
\hat{b}_k = \frac{1}{|\hat{\mathcal{I}}_k|} \sum_{(i, j) \in \hat{\mathcal{I}}_k} A_{ij}, \nonumber \\
&\hat{P} = (\hat{P}_{ij})_{1\leq i \leq n, 1 \leq j \leq p}, \ \ \ \ \ \ \ \ \ \ 
\hat{P}_{ij} = \hat{b}_{\hat{g}_{ij}}, \nonumber \\
&\hat{\bm{s}} = (\hat{s}_k)_{1 \leq k \leq K_0}, \ \ \ \ \ \ \ \ \ \ 
\hat{s}_k = \sqrt{ \frac{1}{|\hat{\mathcal{I}}_k|} \sum_{(i, j) \in \hat{\mathcal{I}}_k} \left( A_{ij} - \hat{P}_{ij} \right)^2}, \nonumber \\
&\hat{\sigma} = (\hat{\sigma}_{ij})_{1\leq i \leq n, 1 \leq j \leq p}, \ \ \ \ \ \ \ \ \ \ 
\hat{\sigma}_{ij} = \hat{s}_{\hat{g}_{ij}}. \\
\label{eq:Z_hat}
&\hat{Z} = (\hat{Z}_{ij})_{1\leq i \leq n, 1 \leq j \leq p}, \ \ \ \ \ \ \ \ \ \ 
\hat{Z}_{ij} = \frac{A_{ij} - \hat{P}_{ij}}{\hat{\sigma}_{ij}}. 
\end{align}

To construct a statistical test on the number of biclusters $K$, we use the following result from \cite{Pillai2014}, which shows that the scaled maximum eigenvalue $T^*$ of sample covariance matrix $Z^\top Z$ converges in law to the Tracy-Widom distribution with index $1$ ($TW_1$) in the limit of $m \to \infty$: 
\begin{align}
\label{eq:T_true}
T^* = \frac{\lambda_1 - a^{\mathrm{TW}}}{b^{\mathrm{TW}}}, \ \ \ \ \ 
T^* \rightsquigarrow TW_1 \ \mathrm{(Convergence\ in\ law)}, 
\end{align}
where $\lambda_1$ is the maximum eigenvalue of matrix $Z^\top Z$ and 
\begin{align}
\label{eq:ab}
a^{\mathrm{TW}} = (\sqrt{n} + \sqrt{p})^2, \ \ \ \ \ 
b^{\mathrm{TW}} = (\sqrt{n} + \sqrt{p}) \left( \frac{1}{\sqrt{n}} + \frac{1}{\sqrt{p}} \right)^{\frac{1}{3}}. 
\end{align}

Based on the above fact, we define the test statistic $T$, which is an estimator of $T^*$ in (\ref{eq:T_true}), from the maximum eigenvalue $\hat{\lambda}_1$ of matrix $\hat{Z}^\top \hat{Z}$: 
\begin{align}
\label{eq:T_statistic}
T = \frac{\hat{\lambda}_1 - a^{\mathrm{TW}}}{b^{\mathrm{TW}}}. 
\end{align}
By using the test statistic $T$, we define the rule of the proposed test at the significance level of $\alpha$ as follows: 
\begin{align}
\label{eq:rejection}
\mathrm{Reject\ null\ hypothesis}\ (K = K_0),\ \ \ \ \ \mathrm{if}\ T \geq t(\alpha), 
\end{align}
where $t(\alpha)$ is the $\alpha$ upper quantile of the $TW_1$ distribution. We give the theoretical guarantees for the above test in both the null and alternative cases later in Theorems \ref{th:realizable} and \ref{th:unrealizable}, respectively. 

%====================================================================
\begin{theorem}[Realizable case]
Under the assumptions in Sect.~\ref{sec:method}, if $K = K_0$, 
\begin{align}
\label{eq:T_to_TW1}
T \rightsquigarrow TW_1 \ \mathrm{(Convergence\ in\ law)},
\end{align}
in the limit of $m \to \infty$, where $T$ is defined as in (\ref{eq:T_statistic}). 
\label{th:realizable}
\end{theorem}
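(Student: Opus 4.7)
The plan is a perturbation argument on top of the Pillai-Yin Tracy-Widom limit (\ref{eq:T_true}). Since $b^{\mathrm{TW}} = \Theta(m^{2/3})$, Slutsky's theorem reduces Theorem \ref{th:realizable} to showing $|\hat\lambda_1 - \lambda_1| = o_p(m^{2/3})$. Assumption \ref{asmp:consistency} makes the event $\{\hat g = g\}$ asymptotically certain, on which $\hat Z = \tilde Z$ and hence $\hat\lambda_1 = \tilde\lambda_1$; so the target becomes $|\tilde\lambda_1 - \lambda_1| = o_p(m^{2/3})$. Applying Weyl's inequality to the singular values of $\tilde Z$ and $Z$ gives $|\tilde\lambda_1^{1/2} - \lambda_1^{1/2}| \leq \|\tilde Z - Z\|_{\mathrm{op}}$, and the classical top-singular-value estimate for sub-exponential matrices gives $\|Z\|_{\mathrm{op}} = O_p(\sqrt m)$. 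The chain of reductions therefore leaves the single quantitative goal
\begin{equation*}
\|\tilde Z - Z\|_{\mathrm{op}} = o_p(m^{1/6}).
\end{equation*}

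For this I will decompose $\tilde Z - Z$ group by group: inside each $\mathcal I_k$ a direct calculation gives
\begin{equation*}
\tilde Z_{ij} - Z_{ij} \;=\; -\,\frac{\tilde b_k - b_k}{\tilde s_k} \;-\; \frac{\tilde s_k - s_k}{\tilde s_k}\, Z_{ij},
\end{equation*}
so that $\tilde Z - Z = E_1 + E_2$, where $E_1$ is a sum of constant rank-one corrections supported on each bicluster and on each of the $H$ background sub-rectangles, and $E_2$ is a sum of rescaled restrictions of $Z$. Sub-exponential concentration from Assumption \ref{asmp:Z_exp_S}, applied under a union bound over the $K+H$ groups permitted by (\ref{eq:KH_condition_KH_sum}), yields $|\tilde b_k - b_k|, |\tilde s_k - s_k| = O_p(|\mathcal I_k|^{-1/2})$ uniformly in $k$. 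Each rank-one piece of $E_1$ on an $|I_k|\times|J_k|$ rectangle has operator norm $O_p(1)$ because the factor $\sqrt{|I_k||J_k|}$ from the indicator matrix cancels $|\mathcal I_k|^{-1/2}$; by the triangle inequality $\|E_1\|_{\mathrm{op}} = O_p(K+H) = O_p(m^{1/42-\epsilon_1})$. For $E_2$, the restricted matrix $Z|_{\mathcal I_k}$ is a sub-exponential rectangular block and hence has operator norm $O_p(\sqrt m)$, so each rescaled piece contributes $O_p(\sqrt m/\sqrt{|\mathcal I_k|})$; summing over groups and using (\ref{eq:KH_condition_KHminIk}) gives $\|E_2\|_{\mathrm{op}} = O_p((K+H)\sqrt m/\sqrt{n_{\min}p_{\min}}) = O_p(m^{1/7-\epsilon_1})$. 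Since $1/7 < 1/6$, both bounds are $o_p(m^{1/6})$, which completes the reduction.

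The hard part is the $E_2$ bound. It is the only place where the ambient random-matrix scale $\sqrt m$ multiplies the per-group concentration error $|\mathcal I_k|^{-1/2}$, and the triangle inequality then costs an additional factor of $K+H$; this is exactly what forces the quantitative conditions $K+H = O(m^{1/42-\epsilon_1})$ and $n_{\min},p_{\min} = \Omega(m^{8/21})$ in Assumption \ref{asmp:block_size}. The other ingredients -- the Pillai-Yin limit, Weyl's inequality, the sub-exponential concentration of $\tilde b_k$ and $\tilde s_k$ on each group, and the top-eigenvalue estimates for sub-matrices of $Z$ -- are standard once this size bookkeeping is in place. A sharper use of the disjointness of the supports $\mathcal I_k$ could in principle relax the exponents, but is not needed for the stated theorem.
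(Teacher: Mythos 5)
There is a genuine gap here, in two parts. First, a scaling error at the very start: from (\ref{eq:ab}), $b^{\mathrm{TW}} = (\sqrt n + \sqrt p)\left(1/\sqrt n + 1/\sqrt p\right)^{1/3} = \Theta(m^{1/2}\cdot m^{-1/6}) = \Theta(m^{1/3})$, not $\Theta(m^{2/3})$. Slutsky therefore requires $|\hat\lambda_1 - \lambda_1| = o_p(m^{1/3})$, and after Weyl's inequality (which gives $|\tilde\lambda_1 - \lambda_1| \le \| \tilde Z - Z \|_{\mathrm{op}}\,(2\sqrt{\lambda_1} + \| \tilde Z - Z \|_{\mathrm{op}})$ with $\sqrt{\lambda_1} = O_p(m^{1/2})$) your quantitative target becomes $\| \tilde Z - Z \|_{\mathrm{op}} = o_p(m^{-1/6})$, not $o_p(m^{1/6})$. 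With the target you actually prove, $|T-T^*|$ is only shown to be $o_p(m^{1/3})$, which does not vanish.

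Second, and more fundamentally, the corrected target is unattainable by this route: your own estimate shows each rank-one piece of $E_1$ has operator norm $O_p(1)$, and in fact it is $\Theta_p(1)$, since the piece supported on block $k$ is proportional to $\eta_k\,\mathbbm{1}_{I_k}\mathbbm{1}_{J_k}^{\top}$ with $\eta_k = |\mathcal{I}_k|^{-1}\sum_{(i,j)\in\mathcal{I}_k}Z_{ij}$, and $\sqrt{|\mathcal{I}_k|}\,\eta_k$ is asymptotically a nondegenerate Gaussian, so the $\sqrt{|I_k||J_k|}$ factor exactly cancels the concentration gain. Hence $\| \tilde Z - Z \|_{\mathrm{op}} = \Theta_p(1)$, Weyl yields only $|\tilde\lambda_1-\lambda_1| = O_p(m^{1/2})$, and $|T - T^*| = O_p(m^{1/6})$ diverges. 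The paper's proof is built precisely to circumvent this obstruction: it never bounds the perturbation in operator norm, but instead controls the quadratic forms $\| \tilde Z \bm{v}_1 \|^2$ and $\| Z \tilde{\bm{v}}_1 \|^2$ directly (Lemmas \ref{lm:zop2_eq1} and \ref{lm:zop2_eq2}) and invokes the isotropic delocalization of the eigenvectors of $Z^{\top}Z$ in (\ref{eq:delocalization}) to show $\| \underline{Q}^{(k)} \bm{v}_1 \| = \sqrt{|\mathcal{I}_k|}\,|\eta_k|\,|\bm{v}_1^{\top}\bm{u}^{(k)}| = O_p(m^{-1/2+\epsilon})$: the rank-one corrections are $\Theta_p(1)$ in the worst direction but act essentially orthogonally to the eigenvectors that matter. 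The reverse inequality additionally requires expanding $\tilde{\bm{v}}_1$ in the eigenbasis of $Z^{\top}Z$, isolating the $t = O_p(m^{\frac{3}{2}d-\frac{1}{2}})$ eigenvalues within $n^d$ of $\lambda_1$ via a counting estimate from the local Marchenko--Pastur law, and completing a square; none of this is recoverable from an operator-norm perturbation bound. Your size bookkeeping for $K+H$, $n_{\mathrm{min}}$, and $p_{\mathrm{min}}$ is in the right spirit, but the mechanism that makes the eigenvalue error $o_p(m^{1/3})$ is eigenvector delocalization, not smallness of $\| \tilde Z - Z \|_{\mathrm{op}}$.
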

\begin{proof}
To apply the result reported in \cite{Pillai2014}, we consider the difference between $T^*$ and $T$. By definitions of (\ref{eq:T_true}) and (\ref{eq:T_statistic}), we have
\begin{align}
\label{eq:diff_T_Tstar}
|T - T^*| = \frac{| \hat{\lambda}_1 - \lambda_1 |}{b^{\mathrm{TW}}}. 
\end{align}
Next, we prove that the right side of (\ref{eq:diff_T_Tstar}) can be bounded by $\frac{| \lambda_1 - \hat{\lambda}_1 |}{b^{\mathrm{TW}}} = O_p \left( m^{-\epsilon} \right)$ for some $\epsilon>0$, which is given in Lemma \ref{lm:zop2_hat} later. If this bound holds, from Slutsky's theorem, (\ref{eq:T_to_TW1}) also holds. To show Lemma \ref{lm:zop2_hat}, we first state the following Lemmas \ref{lm:zop2_eq1} and \ref{lm:zop2_eq2}, which give the lower and upper bounds for the maximum eigenvalue $\tilde{\lambda}_1$ of matrix $\tilde{Z}^{\top} \tilde{Z}$. The proofs of the following lemmas are mainly based on those given in \cite{Watanabe2021}. The main differences between them are as follows: first, we assume a regular-grid bicluster structure in the previous study, whereas we consider a more generalized disjoint one. Second, unlike the previous study, where we assume that the null number of biclusters $K$ is a fixed constant that does not depend on the matrix size $m$, we consider a case in which $K$ might increase with $m$. 

\begin{lemma}
\label{lm:zop2_eq1}
Under the assumptions noted in Sect.~\ref{sec:method}, if $K = K_0$, 
\begin{align}
\lambda_1 \leq \tilde{\lambda}_1 + O_p \left( m^{\frac{1}{3} - \epsilon} \right), \ \ \ \mathrm{for\ some}\ \epsilon > 0. 
\end{align}
\end{lemma}
\begin{proof}
A proof is given in Appendix \ref{sec:lmd_1_upper}. 
\end{proof}
%%%%%%%%%%%%%%%%%%%%%%%%%%%%%%

\begin{lemma}
\label{lm:zop2_eq2}
Under the assumptions in Sect.~\ref{sec:method}, if $K = K_0$, 
\begin{align}
\label{eq:lemma_zop2_1_4}
\tilde{\lambda}_1 \leq \lambda_1 + O_p \left( m^{\frac{1}{3} - \epsilon} \right), \ \ \ \mathrm{for\ some}\ \epsilon > 0. 
\end{align}
\end{lemma}
\begin{proof}
A proof is given in Appendix \ref{sec:lmd_1_lower}. 
\end{proof}
%%%%%%%%%%%%%%%%

\begin{lemma}
\label{lm:zop2_hat}
Under the assumptions in Sect.~\ref{sec:method}, if $K = K_0$, 
\begin{align}
\label{eq:lmd_hatlmd_diff}
\frac{| \lambda_1 - \hat{\lambda}_1 |}{b^{\mathrm{TW}}} = O_p \left( m^{-\epsilon} \right), \ \ \ \mathrm{for\ some}\ \epsilon>0. 
\end{align}
\end{lemma}
\begin{proof}
By combining Lemmas \ref{lm:zop2_eq1}, \ref{lm:zop2_eq2}, and the definition of $b^{\mathrm{TW}}$ in (\ref{eq:ab}), we have
\begin{align}
\label{eq:lmd_tldlmd_diff}
\frac{| \lambda_1 - \tilde{\lambda}_1 |}{b^{\mathrm{TW}}} = O_p \left( m^{-\epsilon} \right), \ \ \ \mathrm{for\ some}\ \epsilon>0. 
\end{align}
We consider the following three events: 
\begin{itemize}
\item $\mathcal{E}^{(1)}_m$ represents the event that $\tilde{Z} = \hat{Z}$ holds. 
\item $\mathcal{E}^{(2)}_m$ represents the event that the solution given by the submatrix localization algorithm is correct (i.e., $\hat{g} = g$). 
\item $\mathcal{E}^{(3)}_{m, C}$ represents the event that $\frac{| \lambda_1 - \tilde{\lambda}_1 |}{b^{\mathrm{TW}}} \leq Cm^{-\epsilon}$ holds. 
\end{itemize}
The joint probability of events $\mathcal{E}^{(1)}_m$ and $\mathcal{E}^{(3)}_{m, C}$ can be lower bounded by
\begin{align}
\label{eq:event_ineq_tilde}
\mathrm{Pr} \left( \mathcal{E}^{(1)}_m \cap \mathcal{E}^{(3)}_{m, C} \right) 
&\geq \mathrm{Pr} \left( \mathcal{E}^{(2)}_m \cap \mathcal{E}^{(3)}_{m, C} \right) \nonumber \\
&\geq 1 - \mathrm{Pr} \left[ \left( \mathcal{E}^{(2)}_m \right)^{\mathrm{C}} \right] - \mathrm{Pr} \left[ \left( \mathcal{E}^{(3)}_{m, C} \right)^{\mathrm{C}} \right], 
\end{align}
where $\mathcal{E}^{\mathrm{C}}$ is the complement of event $\mathcal{E}$. From the consistency assumption \ref{asmp:consistency} in Sect.~\ref{sec:method}, if $K = K_0$, the second term on the right side of (\ref{eq:event_ineq_tilde}) satisfies that $\mathrm{Pr} \left[ \left( \mathcal{E}^{(2)}_m \right)^{\mathrm{C}} \right] \to 0$ in the limit of $m \to \infty$. As for the third term $\mathrm{Pr} \left[ \left( \mathcal{E}^{(3)}_{m, C} \right)^{\mathrm{C}} \right]$, we already have (\ref{eq:lmd_tldlmd_diff}). By combining these facts, we have
\begin{align}
\forall \tilde{\epsilon}>0, \ \exists C>0, M>0, \ \forall m \geq M, \ 
\mathrm{Pr} \left( \mathcal{E}^{(1)}_m \cap \mathcal{E}^{(3)}_{m, C} \right) \geq 1 - \tilde{\epsilon}, 
\end{align}
which results in (\ref{eq:lmd_hatlmd_diff}). 
\end{proof}
From Lemma \ref{lm:zop2_hat}, we finally obtain
\begin{align}
|T - T^*| = \frac{| \hat{\lambda}_1 - \lambda_1 |}{b^{\mathrm{TW}}} = O_p \left( m^{-\epsilon} \right), \ \mathrm{for\ some}\ \epsilon>0. 
\end{align}
By combining this fact with Slutsky's theorem, the convergence of the test statistic $T$ in law to $TW_1$ distribution in (\ref{eq:T_to_TW1}) holds. 
\end{proof} 

%=======================================================================
\begin{theorem}[Unrealizable case]
Under the assumptions in Sect.~\ref{sec:method}, if $K > K_0$, 
\begin{align}
\label{eq:T_un_up}
T = O_p \left( m^{\frac{5}{3}} \right), 
\end{align}
and 
\begin{align}
\label{eq:T_un_low}
T = \Omega_p \left( m^{\frac{2}{3}} \right), 
\end{align}
where $T$ is defined as in (\ref{eq:T_statistic}). 
\label{th:unrealizable}
\end{theorem}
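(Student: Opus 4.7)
The plan is to decompose $\hat{Z} = \hat{E} + \hat{M}$ into a stochastic piece $\hat{E}_{ij} = (A_{ij}-P_{ij})/\hat{\sigma}_{ij}$ and a bias piece $\hat{M}_{ij} = (P_{ij}-\hat{P}_{ij})/\hat{\sigma}_{ij}$ that encodes the mismatch between the estimated and the true bicluster structures, then bound each in operator norm.

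For the upper bound, I would exploit the identity built into the definition of $\hat{s}_k$ in (\ref{eq:BPS_hat}): by construction, $\sum_{(i,j)\in\hat{\mathcal{I}}_k} \hat{Z}_{ij}^2 = |\hat{\mathcal{I}}_k|$ for each estimated group $k$, so $\|\hat{Z}\|_{\mathrm{F}}^2 = np = \Theta(m^2)$. Since $\hat{\lambda}_1 \leq \|\hat{Z}\|_{\mathrm{F}}^2$ and $b^{\mathrm{TW}} = \Theta(m^{1/3})$, dividing gives $T \leq \|\hat{Z}\|_{\mathrm{F}}^2/b^{\mathrm{TW}} = O_p(m^{5/3})$. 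This part requires essentially no work beyond unwinding definitions.

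The lower bound demands $\hat{\lambda}_1 - a^{\mathrm{TW}} = \Omega_p(m)$. My plan is to pinpoint an estimated group that ``merges'' two true rectangles via a pigeonhole argument: treating the $K$ biclusters and the $H$ background submatrices as $K+H$ rectangles each of size at least $n_{\min}\times p_{\min}$, and using $K > K_0$ together with the minimality of $K$ (so the estimated structure cannot be a refinement of the true one without contradicting $K>K_0$), the map sending each true rectangle to its dominant estimated group cannot be injective, so some estimated index $\hat{k}$ is the dominant host for two distinct true rectangles $k_1 \neq k_2$ with
\[ |\hat{\mathcal{I}}_{\hat{k}} \cap \mathcal{I}_{k_j}| \geq \frac{n_{\min} p_{\min}}{K_0+1}, \quad j=1,2. \]
A straightforward concentration argument shows $\hat{b}_{\hat{k}}$ equals a convex combination of the $\{b_{k'}\}$ up to an $o_p(1)$ noise error. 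By the triangle inequality, $|b_{k_1}-\hat{b}_{\hat{k}}|+|b_{k_2}-\hat{b}_{\hat{k}}| \geq |b_{k_1}-b_{k_2}| \geq C^{\bm{b}}$, so (WLOG) $|b_{k_1}-\hat{b}_{\hat{k}}| \geq C^{\bm{b}}/2 - o_p(1)$. Since $\hat{\mathcal{I}}_{\hat{k}}$ and $\mathcal{I}_{k_1}$ are both rectangles, their intersection is itself a rectangle of area at least $n_{\min} p_{\min}/(K_0+1)$ on which $\hat{M}$ is constant with magnitude at least $(C^{\bm{b}}/2)/O(K)$ (the $O(K)$ being an upper bound for $\hat{\sigma}_{\hat{k}}$ obtained from the bounded-mean and bounded-variance assumptions on $b_k,s_k$). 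The submatrix-restriction lower bound on operator norm then yields
\[ \|\hat{M}\|_{\mathrm{op}} = \Omega_p\!\left( \frac{\sqrt{n_{\min} p_{\min}}}{K^{3/2}} \right). \]

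Finally, $\|\hat{E}\|_{\mathrm{op}} = O_p(\sqrt{m})$ follows by applying the random-matrix arguments underlying Lemmas \ref{lm:zop2_eq1}--\ref{lm:zop2_eq2} to $\hat{E}$ (its entries remain sub-exponential because $\hat{\sigma}$ is bounded away from $0$ with high probability). Combining via the reverse triangle inequality, assumption (\ref{eq:n_min_p_min_un}) together with $K+H = O(m^{1/42})$ makes $\|\hat{M}\|_{\mathrm{op}}$ dominate $\|\hat{E}\|_{\mathrm{op}}$, so $\hat{\lambda}_1 = \Omega_p(n_{\min} p_{\min}/K^3)$; this exceeds $a^{\mathrm{TW}} = \Theta(m)$ by $\Omega_p(m)$ (with substantial slack), and dividing by $b^{\mathrm{TW}} = \Theta(m^{1/3})$ produces the claimed $T = \Omega_p(m^{2/3})$. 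The main obstacles I foresee are (i) making the pigeonhole quantitative when the true background is split among multiple estimated groups or vice versa, which requires the minimality of $K$ to rule out the ``$\hat{g}$ refines $g$'' edge case, and (ii) controlling $\hat{\sigma}_{\hat{k}}$ in the unrealizable case, where it conflates genuine noise with inter-bicluster mean spread and thus requires a careful concentration argument for the sample variance of a mixture.
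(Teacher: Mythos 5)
Your overall strategy matches the paper's. The upper bound via $\hat{\lambda}_1 \le \|\hat{Z}\|_{\mathrm{F}}^2 = np$ is exactly the paper's argument, and for the lower bound the paper likewise uses a pigeonhole over the $K+H$ null rectangles to find an estimated group $\bar{k}$ that merges large pieces of two null groups with $|b_{k_1}-b_{k_2}|\ge C^{\bm{b}}$, then extracts a rank-one bias submatrix whose Frobenius ($=$ operator) norm is $\Omega_p\left(\sqrt{n_{\mathrm{min}}p_{\mathrm{min}}}/(K+H)\right)$ and dominates the noise. (A minor point: the non-injectivity you want follows from counting alone, since $K+H > K_0+1$; minimality of $K$ is not what rules out the refinement case.) However, two steps of your variant would not go through as stated. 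First, your global decomposition $\hat{Z}=\hat{E}+\hat{M}$ requires $\|\hat{E}\|_{\mathrm{op}}=O_p(\sqrt{m})$, but $\hat{E}_{ij}=(\sigma_{ij}/\hat{\sigma}_{ij})Z_{ij}$ does not have independent entries --- $\hat{\sigma}$ and the estimated partition $\hat{g}$ are functions of the whole matrix $A$ --- so the random-matrix results behind Lemmas \ref{lm:zop2_eq1} and \ref{lm:zop2_eq2} cannot simply be ``applied to $\hat{E}$''; you would also need $\min_k \hat{s}_k=\Omega_p(1)$ uniformly over whatever rectangles the (non-consistent, in the unrealizable case) localization algorithm outputs, which is itself a nontrivial claim. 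Second, your bound $\hat{\sigma}_{\hat{k}}=O(K)$ is asserted rather than proved, and the obstacle is again data-dependence: one cannot apply a law of large numbers to $\frac{1}{|\hat{\mathcal{I}}_{\hat{k}}|}\sum_{(i,j)\in\hat{\mathcal{I}}_{\hat{k}}}Z_{ij}^2$ over a random index set chosen by the algorithm.

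The paper sidesteps both issues by never bounding the noise part globally: it restricts $\hat{Z}$ to the single rectangle $\mathcal{I}$ contained in one null group and one estimated group, where both $\sigma$ and $\hat{\sigma}$ are scalar constants, and applies the reverse triangle inequality there, as in (\ref{eq:Z_hat_op_lower}). It then controls $\hat{s}_{\bar{k}}$ only from above, via the crude bound $\hat{s}_{\bar{k}} \le |\hat{\mathcal{I}}_{\bar{k}}|^{-1/2}\left(\max_k s_k\right)\|Z\|_{\mathrm{F}} + 2\max_k |b_k| = O_p\left[(K+H)m/\sqrt{n_{\mathrm{min}}p_{\mathrm{min}}}\right]$, which is valid for any index set of the guaranteed size and costs only a polynomial factor that assumption (\ref{eq:n_min_p_min_un}) absorbs, yielding $T=\Omega_p\left(m^{\frac{2}{3}+4\epsilon_2}\right)$. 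If you replace your two problematic steps with this localization-to-one-rectangle device and the Frobenius-norm upper bound on $\hat{s}_{\bar{k}}$, your argument becomes the paper's proof.
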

\begin{proof}
% Upper bound
We first prove the upper bound in (\ref{eq:T_un_up}). Let $\underline{X}^{\mathrm{E} (k)}$ be an $n \times p$ matrix whose entries in the $k$th \textbf{estimated} bicluster (including background) are the same as matrix $X$ and all the other entries are zero. Since the Frobenius norm upper bounds the operator norm, 
\begin{align}
\label{eq:unre_Z_op_up}
&\| \hat{Z} \|_{\mathrm{op}} \leq \| \hat{Z} \|_{\mathrm{F}} 
= \sqrt{\sum_{k = 0}^{K_0} \| \underline{\hat{Z}}^{\mathrm{E} (k)} \|_{\mathrm{F}}^2} 
= \sqrt{\sum_{k = 0}^{K_0} \frac{1}{\hat{s}_k^2} \| \underline{A}^{\mathrm{E} (k)} - \underline{\hat{P}}^{\mathrm{E} (k)} \|_{\mathrm{F}}^2} \nonumber \\
&= \sqrt{\sum_{k = 0}^{K_0} \frac{|\hat{\mathcal{I}}_k|}{\| \underline{A}^{\mathrm{E} (k)} - \underline{\hat{P}}^{\mathrm{E} (k)} \|_{\mathrm{F}}^2} \| \underline{A}^{\mathrm{E} (k)} - \underline{\hat{P}}^{\mathrm{E} (k)} \|_{\mathrm{F}}^2} 
= \sqrt{\sum_{k = 0}^{K_0} |\hat{\mathcal{I}}_k|} 
= \sqrt{np}. 
\end{align}

From the assumption \ref{asmp:np_size}, let $n = C_n m$ and $p = C_p m$, where $C_n$ and $C_p$ are positive constants. 
According to the definition in (\ref{eq:ab}), we have
\begin{align}
\label{eq:ab_m}
a^{\mathrm{TW}} = (\sqrt{C_n} + \sqrt{C_p})^2 m, \ \ \ \ \ 
b^{\mathrm{TW}} = (\sqrt{C_n} + \sqrt{C_p}) \left( \frac{1}{\sqrt{C_n}} + \frac{1}{\sqrt{C_p}} \right)^{\frac{1}{3}} m^{\frac{1}{3}}. 
\end{align}

By substituting (\ref{eq:unre_Z_op_up}) and (\ref{eq:ab_m}) into (\ref{eq:T_statistic}), we have
\begin{align}
T &= \frac{\| \hat{Z} \|_{\mathrm{op}}^2 - a^{\mathrm{TW}}}{b^{\mathrm{TW}}} \leq \frac{np - a^{\mathrm{TW}}}{b^{\mathrm{TW}}} 
= \frac{C_n C_p m^{\frac{5}{3}} - (\sqrt{C_n} + \sqrt{C_p})^2 m^{\frac{2}{3}}}{(\sqrt{C_n} + \sqrt{C_p}) \left( \frac{1}{\sqrt{C_n}} + \frac{1}{\sqrt{C_p}} \right)^{\frac{1}{3}}} \nonumber \\
&= O_p \left( m^{\frac{5}{3}} \right), 
\end{align}
which concludes the proof of (\ref{eq:T_un_up}). 

% Lower bound
We next show the lower bound in (\ref{eq:T_un_low}). As shown in Figure \ref{fig:unrealizable}, we define that $\bar{P}$ is a matrix that consists of the \textbf{estimated} bicluster structure and entries of the \textbf{population} means. For instance, if the $(i, j)$th entry of observed matrix $A$ belongs to an estimated bicluster that consists of $n^{(1)}$ entries of the $k_1$th null bicluster and $n^{(2)}$ entries of the $k_2$th null bicluster, its value in matrix $\bar{P}$ is given by $\bar{P}_{ij} = \left( n^{(1)} b_{k_1} + n^{(2)} b_{k_2} \right) / \left( n^{(1)} + n^{(2)} \right)$. 
%%%
\begin{figure}[t]
  \centering
  \includegraphics[width=\hsize]{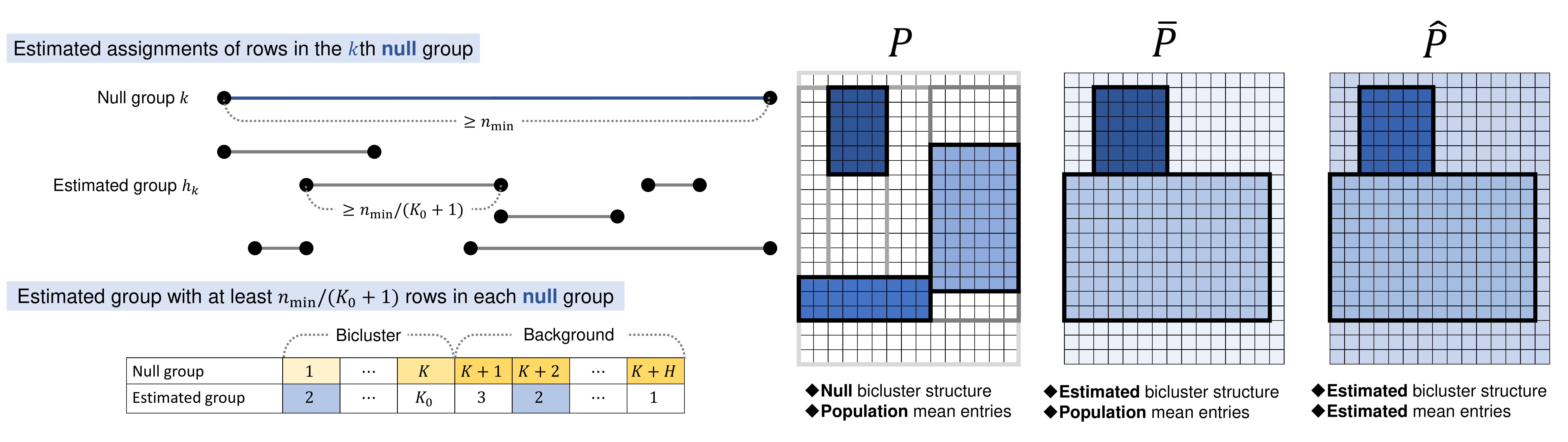} 
  \caption{Definition of matrices $P$, $\bar{P}$, and $\hat{P}$ in an unrealizable case. }
  \label{fig:unrealizable}
\end{figure}
%%%

From the assumption \ref{asmp:block_size}, for all $k \in \{ 1, \dots, K+H \}$, the $k$th \textbf{null} submatrix (i.e., bicluster or background submatrix) has a size of at least $n_{\mathrm{min}} \times p_{\mathrm{min}}$. Therefore, for all \textbf{null} group index $k \in \{ 1, \dots, K+H \}$, there exists at least one \textbf{estimated} group $h_k \in \{ 0, 1, \dots, K_0 \}$ that contains a submatrix of the $k$th \textbf{null} submatrix with the size of $\frac{n_{\mathrm{min}}}{K_0 + 1} \times \frac{p_{\mathrm{min}}}{K_0 + 1}$ or more (Figure \ref{fig:unrealizable}). Since $K_0 < K$, there exists at least one set of \textbf{null} group indices $(k_1, k_2)$, $k_1, k_2 \in \{ 1, \dots, K+H \}$ that satisfies $h_{k_1} = h_{k_2}$, $k_1 \neq k_2$, and $k_1 \in \{ 1, \dots, K \}$ (i.e., a pair of mutually different \textbf{null} groups that belong to the same group in the \textbf{estimated} bicluster structure). In other words, there exists at least one \textbf{estimated} group $\bar{k}$ ($= h_{k_1} = h_{k_2}$) that satisfies the following conditions: 
\begin{itemize}
\item The $\bar{k}$th \textbf{estimated} group contains two submatrices. We denote the sets of entries in these two submatrices as $\mathcal{I}^{(1)}$ and $\mathcal{I}^{(2)}$. 
\item The set of entries $\mathcal{I}^{(1)}$ forms a submatrix of the $k_1$th \textbf{null} group ($k_1 \in \{ 1, \dots, K \}$). The size of submatrix $\mathcal{I}^{(1)}$ is at least $\frac{n_{\mathrm{min}}}{K_0 + 1} \times \frac{p_{\mathrm{min}}}{K_0 + 1}$. 
\item The set of entries $\mathcal{I}^{(2)}$ forms a submatrix of the $k_2$th \textbf{null} group ($k_2 \in \{ 0, 1, \dots, K \}$). The size of submatrix $\mathcal{I}^{(2)}$ is at least $\frac{n_{\mathrm{min}}}{K_0 + 1} \times \frac{p_{\mathrm{min}}}{K_0 + 1}$. 
\item The \textbf{null} groups of $\mathcal{I}^{(1)}$ and $\mathcal{I}^{(2)}$ are mutually different (i.e., $k_1 \neq k_2$). 
\end{itemize}

The population means of submatrices $\mathcal{I}^{(1)}$ and $\mathcal{I}^{(2)}$, respectively, are $b_{k_1}$ and $b_{k_2}$. We assume $b_{k_1} > b_{k_2}$ without loss of generality. Let $\bar{b}$ be the constant value of the submatrices $\mathcal{I}^{(1)}$ and $\mathcal{I}^{(2)}$ in matrix $\bar{P}$. Here, we consider the following two patterns: 
\begin{itemize}
\item If $\bar{b} \geq (b_{k_1} + b_{k_2})/2$, we have $|\bar{b} - b_{k_2}| \geq (b_{k_1} - b_{k_2})/2$. 
\item If $\bar{b} < (b_{k_1} + b_{k_2})/2$, we have $|\bar{b} - b_{k_1}| > (b_{k_1} - b_{k_2})/2$. 
\end{itemize}
Therefore, for any case, there exists a submatrix $\mathcal{I}$ that satisfies the following two conditions (note that $\mathcal{I} = \mathcal{I}^{(2)}$ in the former case and $\mathcal{I} = \mathcal{I}^{(1)}$ in the latter case): 
\begin{itemize}
\item The size of submatrix $\mathcal{I}$ is at least $\frac{n_{\mathrm{min}}}{K_0 + 1} \times \frac{p_{\mathrm{min}}}{K_0 + 1}$. 
\item Let $b_{\mathcal{I}}$ and $\bar{b}_{\mathcal{I}}$, respectively, be the constant values of submatrix $\mathcal{I}$ in matrices $P$ and $\bar{P}$. Note that we have $\bar{b}_{\mathcal{I}} = \bar{b}_{\bar{k}}$. From the assumption \ref{asmp:Z_exp_S}, the following inequality holds: 
\begin{align}
\label{eq:bar_B_B_diff}
|\bar{b}_{\mathcal{I}} - b_{\mathcal{I}}| \geq \min_{k \neq k'} |b_k - b_{k'}|/2 \geq \frac{C^{\bm{b}}}{2}. 
\end{align}
\end{itemize}

The difference between $\hat{b}_{\bar{k}}$ and $\bar{b}_{\bar{k}}$ is given by
\begin{align}
\label{eq:o_block_un}
&|\hat{b}_{\bar{k}} - \bar{b}_{\bar{k}}| 
= \frac{1}{|\hat{\mathcal{I}}_{\bar{k}}|} \left| \sum_{(i, j) \in \hat{\mathcal{I}}_{\bar{k}}} \left( \hat{P}_{ij} - \bar{P}_{ij} \right) \right| 
= \frac{1}{|\hat{\mathcal{I}}_{\bar{k}}|} \left| \sum_{(i, j) \in \hat{\mathcal{I}}_{\bar{k}}} \left( A_{ij} - P_{ij} \right) \right| \nonumber \\
&\leq \frac{\max_{k = 0, 1, \dots, K} s_k}{|\hat{\mathcal{I}}_{\bar{k}}|} \left| \sum_{(i, j) \in \hat{\mathcal{I}}_{\bar{k}}} Z_{ij} \right|. 
\end{align}
To derive the upper bound of the right side of (\ref{eq:o_block_un}), we cannot take the same strategy as in the previous study \cite{Watanabe2021}, since it uses the assumption that the $\bar{k}$th estimated group can always be represented as a submatrix and it does not consider a general background structure. Therefore, we adopt an alternative approach to use the Lyapunov variant of the central limit theorem. 
Here, $Z_{ij}$ independently follows a distribution with zero mean and unit variance. From the sub-exponential condition \ref{asmp:Z_exp_S}, for any $n^{\mathrm{M}} \in \mathbb{N}$, we have $\mathbb{E} [Z_{ij}^{n^{\mathrm{M}}}] < \infty$. Let $\mathcal{I}^{\mathrm{CLT}}$ be a subset of entries in a $n \times p$ matrix. By defining $\delta \equiv 2$ and from the assumption \ref{asmp:Z_exp_S}, the following Lyapunov’s condition holds: 
\begin{align}
&\lim_{|\mathcal{I}^{\mathrm{CLT}}| \to \infty} \frac{1}{|\mathcal{I}^{\mathrm{CLT}}|^{1+\frac{1}{2}\delta}} \sum_{(i, j) \in \mathcal{I}^{\mathrm{CLT}}} \mathbb{E} \left[ |Z_{ij}|^{2+\delta} \right] 
\leq \lim_{|\mathcal{I}^{\mathrm{CLT}}| \to \infty} \frac{C^{\mathrm{M}(4)} |\mathcal{I}^{\mathrm{CLT}}|}{|\mathcal{I}^{\mathrm{CLT}}|^2} \nonumber \\
&= \lim_{|\mathcal{I}^{\mathrm{CLT}}| \to \infty} \frac{C^{\mathrm{M}(4)}}{|\mathcal{I}^{\mathrm{CLT}}|} = 0. 
\end{align}
Therefore, from the Lyapunov variant of the central limit theorem, 
\begin{align}
\frac{1}{\sqrt{|\mathcal{I}^{\mathrm{CLT}}|}} \sum_{(i, j) \in \mathcal{I}^{\mathrm{CLT}}} Z_{ij} \rightsquigarrow N(0, 1). 
\end{align}
From (\ref{eq:o_block_un}), Prokhorov's theorem \cite{Vaart1998}, and the fact that $|\hat{\mathcal{I}}_{\bar{k}}| \geq |\mathcal{I}| \geq n_{\mathrm{min}} p_{\mathrm{min}} / (K_0 + 1)^2$ (note that the right side monotonically increases with the matrix size $m$ from the assumption \ref{asmp:block_size}), we have
\begin{align}
\label{eq:o_block_un2}
&|\hat{b}_{\bar{k}} - \bar{b}_{\bar{k}}| \leq \frac{\max_{k = 0, 1, \dots, K} s_k}{\sqrt{|\hat{\mathcal{I}}_{\bar{k}}|}} O_p (1) 
\leq \left( \max_{k = 0, 1, \dots, K} s_k \right) \frac{K_0 + 1}{\sqrt{n_{\mathrm{min}} p_{\mathrm{min}}}} O_p (1) \nonumber \\
&\leq O_p \left( \frac{K+H}{\sqrt{n_{\mathrm{min}} p_{\mathrm{min}}}} \right) \ \ \ (\because K_0 + 1 < K + H\ \mathrm{and\ assumption\ \ref{asmp:Z_exp_S}}). 
\end{align}

From (\ref{eq:o_block_un2}), we have
\begin{align}
\left| |b_{\mathcal{I}} - \bar{b}_{\mathcal{I}}| - |b_{\mathcal{I}} - \hat{b}_{\bar{k}}| \right| 
\leq |\hat{b}_{\bar{k}} - \bar{b}_{\bar{k}}| 
\leq O_p \left( \frac{K+H}{\sqrt{n_{\mathrm{min}} p_{\mathrm{min}}}} \right). 
\end{align}
By combining this result with (\ref{eq:bar_B_B_diff}), 
\begin{align}
\frac{C^{\bm{b}}}{2} \leq |b_{\mathcal{I}} - \hat{b}_{\bar{k}}| + O_p \left( \frac{K+H}{\sqrt{n_{\mathrm{min}} p_{\mathrm{min}}}} \right). 
\end{align}
Therefore, from (\ref{eq:n_min_p_min_un}) in the assumption \ref{asmp:block_size}, we have
\begin{align}
\label{eq:B_I_B_hat_I}
|b_{\mathcal{I}} - \hat{b}_{\bar{k}}| = \Omega_p (1). 
\end{align}

Let $X^{\mathcal{I}}$ be a submatrix of $X$ with the set of entries $\mathcal{I}$. Since the operator norm of a submatrix is not larger than that of the original matrix, 
\begin{align}
\label{eq:Z_hat_op_lower}
\| \hat{Z} \|_{\mathrm{op}} &\geq \| \hat{Z}^{\mathcal{I}} \|_{\mathrm{op}} 
= \frac{1}{\hat{s}_{\bar{k}}} \| A^{\mathcal{I}} - \hat{P}^{\mathcal{I}} \|_{\mathrm{op}} 
\geq \frac{1}{\hat{s}_{\bar{k}}} \left| \| A^{\mathcal{I}} - P^{\mathcal{I}} \|_{\mathrm{op}} - \| P^{\mathcal{I}} - \hat{P}^{\mathcal{I}} \|_{\mathrm{op}} \right|. 
\end{align}

Let $\bar{k}^{\mathrm{N}}$ be the \textbf{null} bicluster index (including background) of submatrix $\mathcal{I}$. Note that $b_{\mathcal{I}} = b_{\bar{k}^{\mathrm{N}}}$. As for the first term in (\ref{eq:Z_hat_op_lower}), from the assumption \ref{asmp:Z_exp_S}, we have
\begin{align}
\label{eq:A_E_k_P_E_k}
\| A^{\mathcal{I}} - P^{\mathcal{I}} \|_{\mathrm{op}} &= s_{\bar{k}^{\mathrm{N}}} \| Z^{\mathcal{I}} \|_{\mathrm{op}} 
\leq s_{\bar{k}^{\mathrm{N}}} \| Z \|_{\mathrm{op}} 
\leq \left( \max_{k = 0, 1, \dots, K} s_k \right) O_p (\sqrt{m}) \nonumber \\
&= O_p (\sqrt{m}). 
\end{align}

In regard to the second term in (\ref{eq:Z_hat_op_lower}), since all the entries in matrix $(P^{\mathcal{I}} - \hat{P}^{\mathcal{I}})$ is $(b_{\mathcal{I}} - \hat{b}_{\bar{k}})$ and thus its rank is one, we have
\begin{align}
\label{eq:P_E_P_hat_E}
\| P^{\mathcal{I}} - \hat{P}^{\mathcal{I}} \|_{\mathrm{op}} &= \| P^{\mathcal{I}} - \hat{P}^{\mathcal{I}} \|_{\mathrm{F}} = \sqrt{|\mathcal{I}| (b_{\mathcal{I}} - \hat{b}_{\bar{k}})^2} 
\geq \frac{\sqrt{n_{\mathrm{min}} p_{\mathrm{min}}}}{K_0 + 1} |b_{\mathcal{I}} - \hat{b}_{\bar{k}}| \nonumber \\
&\geq \frac{\sqrt{n_{\mathrm{min}} p_{\mathrm{min}}}}{K+H} |b_{\mathcal{I}} - \hat{b}_{\bar{k}}| 
= \Omega_p \left( \frac{\sqrt{n_{\mathrm{min}} p_{\mathrm{min}}}}{K+H} \right). 
\end{align}
To derive the last equation, we used the assumption \ref{asmp:block_size} and (\ref{eq:B_I_B_hat_I}). 

Finally, we can derive an upper bound of $\hat{s}_{\bar{k}}$ by
\begin{align}
\label{eq:S_hat_bar_k_up}
\hat{s}_{\bar{k}} &= \frac{1}{\sqrt{|\hat{\mathcal{I}}_{\bar{k}}|}} \| \underline{A}^{\mathrm{E} (\bar{k})} - \underline{\hat{P}}^{\mathrm{E} (\bar{k})} \|_{\mathrm{F}} \nonumber \\
&\leq \frac{1}{\sqrt{|\hat{\mathcal{I}}_{\bar{k}}|}} \left( \| \underline{A}^{\mathrm{E} (\bar{k})} - \underline{P}^{\mathrm{E} (\bar{k})} \|_{\mathrm{F}} + \| \underline{P}^{\mathrm{E} (\bar{k})} - \underline{\hat{P}}^{\mathrm{E} (\bar{k})} \|_{\mathrm{F}} \right) \nonumber \\
&\leq \frac{1}{\sqrt{|\hat{\mathcal{I}}_{\bar{k}}|}} \left( \| A - P \|_{\mathrm{F}} + \| \underline{P}^{\mathrm{E} (\bar{k})} - \underline{\hat{P}}^{\mathrm{E} (\bar{k})} \|_{\mathrm{F}} \right) \nonumber \\
&= \frac{1}{\sqrt{|\hat{\mathcal{I}}_{\bar{k}}|}} \left( \sqrt{ \sum_{k=1}^{K+H} s_k^2 \| Z^{\mathrm{N} (k)} \|_{\mathrm{F}}^2 } + \| \underline{P}^{\mathrm{E} (\bar{k})} - \underline{\hat{P}}^{\mathrm{E} (\bar{k})} \|_{\mathrm{F}} \right) \nonumber \\
&\leq \frac{1}{\sqrt{|\hat{\mathcal{I}}_{\bar{k}}|}} \left[ \left( \max_{k = 0, 1, \dots, K} s_k \right) \| Z \|_{\mathrm{F}} + \| \underline{P}^{\mathrm{E} (\bar{k})} - \underline{\hat{P}}^{\mathrm{E} (\bar{k})} \|_{\mathrm{F}} \right] \nonumber \\
&= \frac{1}{\sqrt{|\hat{\mathcal{I}}_{\bar{k}}|}} \left[ \left( \max_{k = 0, 1, \dots, K} s_k \right) \| Z \|_{\mathrm{F}} + \sqrt{ \sum_{(i, j) \in \hat{\mathcal{I}}_{\bar{k}}} \left( P_{ij} - \hat{b}_{\bar{k}} \right)^2 } \right] \nonumber \\
&\leq \frac{1}{\sqrt{|\hat{\mathcal{I}}_{\bar{k}}|}} \left( \max_{k = 0, 1, \dots, K} s_k \right) \| Z \|_{\mathrm{F}} + \max_{k = 0, 1, \dots, K} \left| b_k - \hat{b}_{\bar{k}} \right| \nonumber \\
&\leq \frac{K_0 + 1}{\sqrt{n_{\mathrm{min}} p_{\mathrm{min}}}} \left( \max_{k = 0, 1, \dots, K} s_k \right) \| Z \|_{\mathrm{F}} + \max_{k = 0, 1, \dots, K} \left| b_k - \hat{b}_{\bar{k}} \right| \nonumber \\
&\leq \frac{K+H}{\sqrt{n_{\mathrm{min}} p_{\mathrm{min}}}} \left( \max_{k = 0, 1, \dots, K} s_k \right) \| Z \|_{\mathrm{F}} + \max_{k = 0, 1, \dots, K} \left| b_k - \hat{b}_{\bar{k}} \right|. 
\end{align}

The second term in (\ref{eq:S_hat_bar_k_up}) can be upper bounded as follows: 
\begin{align}
\label{eq:max_B_k_B_hat_k}
&\max_{k = 0, 1, \dots, K} \left| b_k - \hat{b}_{\bar{k}} \right| \nonumber \\
&\leq |\hat{b}_{\bar{k}}| + \max_{k = 0, 1, \dots, K} |b_k| 
= \frac{1}{|\hat{\mathcal{I}}_{\bar{k}}|} \left| \sum_{(i, j) \in \hat{\mathcal{I}}_{\bar{k}}} (\sigma_{ij} Z_{ij} + P_{ij}) \right| + \max_{k = 0, 1, \dots, K} |b_k| \nonumber \\
&\leq \frac{1}{|\hat{\mathcal{I}}_{\bar{k}}|} \sum_{(i, j) \in \hat{\mathcal{I}}_{\bar{k}}} \left( \sigma_{ij} \left| Z_{ij} \right| + \left| P_{ij} \right| \right) + \max_{k = 0, 1, \dots, K} |b_k| \nonumber \\
&\leq \frac{1}{|\hat{\mathcal{I}}_{\bar{k}}|} \sum_{(i, j) \in \hat{\mathcal{I}}_{\bar{k}}} \sigma_{ij} \left| Z_{ij} \right| + 2 \max_{k = 0, 1, \dots, K} |b_k| \nonumber \\
&\leq \frac{1}{|\hat{\mathcal{I}}_{\bar{k}}|} \left( \max_{k = 0, 1, \dots, K} s_k \right) \sum_{(i, j) \in \hat{\mathcal{I}}_{\bar{k}}} \left| Z_{ij} \right| + 2 \max_{k = 0, 1, \dots, K} |b_k| \nonumber \\
&\leq \frac{1}{|\hat{\mathcal{I}}_{\bar{k}}|} \left( \max_{k = 0, 1, \dots, K} s_k \right) \sqrt{\sum_{(i, j) \in \hat{\mathcal{I}}_{\bar{k}}} \left| Z_{ij} \right|^2} \sqrt{|\hat{\mathcal{I}}_{\bar{k}}|} + 2 \max_{k = 0, 1, \dots, K} |b_k| \nonumber \\
&\leq \frac{1}{\sqrt{|\hat{\mathcal{I}}_{\bar{k}}|}} \left( \max_{k = 0, 1, \dots, K} s_k \right) \| Z \|_{\mathrm{F}} + 2 \max_{k = 0, 1, \dots, K} |b_k| \nonumber \\
&\leq \frac{K_0 + 1}{\sqrt{n_{\mathrm{min}} p_{\mathrm{min}}}} \left( \max_{k = 0, 1, \dots, K} s_k \right) \| Z \|_{\mathrm{F}} + 2 \max_{k = 0, 1, \dots, K} |b_k| \nonumber \\
&\leq \frac{K+H}{\sqrt{n_{\mathrm{min}} p_{\mathrm{min}}}} \left( \max_{k = 0, 1, \dots, K} s_k \right) \| Z \|_{\mathrm{F}} + 2 \max_{k = 0, 1, \dots, K} |b_k|. 
\end{align}

Since $\mathbb{E} [Z_{ij}^2] = \mathbb{V} [Z_{ij}] + \mathbb{E} [Z_{ij}]^2 = 1$ and $\mathbb{V} [Z_{ij}^2] = \mathbb{E} [Z_{ij}^4] - \mathbb{E} [Z_{ij}^2]^2 = \mathbb{E} [Z_{ij}^4] - 1 < \infty$ from the assumption \ref{asmp:Z_exp_S}, from the central limit theorem and Prokhorov's theorem \cite{Vaart1998}, we have
\begin{align}
\frac{1}{\sqrt{np}} \sum_{i=1}^n \sum_{j=1}^p (Z_{ij}^2 - 1) = O_p (1) 
\iff \sum_{i=1}^n \sum_{j=1}^p Z_{ij}^2 = \| Z \|_{\mathrm{F}}^2 = np + O_p (m), 
\end{align}
which results in that 
\begin{align}
\label{eq:Z_F}
\| Z \|_{\mathrm{F}} = O_p (m). 
\end{align}

By substituting (\ref{eq:max_B_k_B_hat_k}) and (\ref{eq:Z_F}) into (\ref{eq:S_hat_bar_k_up}), and using the assumption \ref{asmp:Z_exp_S}, 
\begin{align}
\label{eq:S_hat_bar_k_up2}
\hat{s}_{\bar{k}} &\leq \frac{K+H}{\sqrt{n_{\mathrm{min}} p_{\mathrm{min}}}} O_p (m) + O (K) = O_p \left[ \frac{(K+H)m}{\sqrt{n_{\mathrm{min}} p_{\mathrm{min}}}} \right]. 
\end{align}

By substituting (\ref{eq:A_E_k_P_E_k}), (\ref{eq:P_E_P_hat_E}), and (\ref{eq:S_hat_bar_k_up2}) into (\ref{eq:Z_hat_op_lower}), and using (\ref{eq:n_min_p_min_un}) in the assumption \ref{asmp:block_size}, we finally have
\begin{align}
\label{eq:Z_hat_op_lower2}
&\| \hat{Z} \|_{\mathrm{op}} \geq \Omega_p \left[ \frac{\sqrt{n_{\mathrm{min}} p_{\mathrm{min}}}}{(K+H)m} \right] \left| \Omega_p \left( \frac{\sqrt{n_{\mathrm{min}} p_{\mathrm{min}}}}{K+H} \right) - O_p (\sqrt{m}) \right| \nonumber \\
&= \Omega_p \left[ \frac{n_{\mathrm{min}} p_{\mathrm{min}}}{(K+H)^2 m} \right] \left| \Omega_p (1) - O_p \left[ \frac{(K+H)\sqrt{m}}{\sqrt{n_{\mathrm{min}} p_{\mathrm{min}}}} \right] \right| \nonumber \\
&= \Omega_p \left[ \frac{n_{\mathrm{min}} p_{\mathrm{min}}}{(K+H)^2 m} \right] 
= \Omega_p \left( m^{\frac{1}{2} + 2 \epsilon_2} \right) \nonumber \\
\iff& \| \hat{Z} \|_{\mathrm{op}}^2 = \Omega_p \left( m^{1 + 4 \epsilon_2} \right),\ \mathrm{for\ some}\ \epsilon_2 > 0. 
\end{align}

By substituting (\ref{eq:Z_hat_op_lower2}) and (\ref{eq:ab_m}) into (\ref{eq:T_statistic}), we have
\begin{align}
T = \frac{\| \hat{Z} \|_{\mathrm{op}}^2 - a^{\mathrm{TW}}}{b^{\mathrm{TW}}} = \Omega_p \left( m^{\frac{2}{3} + 4 \epsilon_2} \right) 
\geq \Omega_p \left( m^{\frac{2}{3} } \right), 
\end{align}
which concludes the proof. 
\end{proof}

%=========================================================================================. 

%%%%%%%%%%%%%%%%%%%%%%%%%%%%%%%%%%%%%%%%%%%%%%%%%%%%%%%

\section{Experiments}
\label{sec:experiments}

As we explained in the footnote in Sect.~\ref{sec:method}, currently, we do not have any consistent submatrix localization method that can be applied to general disjoint block structure. Instead, in all the following experiments, we used Algorithm \ref{algo:max_PL_SA2} in Appendix \ref{sec:consistency_approx} for estimating the bicluster structure of a given matrix, although it is not guaranteed to be consistent. 

\subsection{The convergence of test statistic $T$ in law to $TW_1$ distribution in the realizable case}
\label{sec:exp_null}

We first checked the asymptotic behavior of the proposed test statistic $T$ in the null case (i.e., $K = K_0$) by using synthetic data matrices, which were generated from the Gaussian, Bernoulli, and Poisson distributions. In this case, from the Theorem \ref{th:realizable}, $T$ converges in law to the $TW_1$ distribution in the limit of $m \to \infty$. 

We set the null number of biclusters at $K = 3$ in all the settings of distributions, and tried $10$ sets of matrix sizes: $(n, p) = (500 \times i, 375 \times i)$, $i = 1, \dots, 10$. For each distribution, we defined the null set of parameters and the relative entropy function $f$ of the generalized profile likelihood in (\ref{eq:gen_PL_def})\footnote{These experimental settings of the relative entropy function $f$ follow those in \cite{Flynn2020}. } as follows. 
\begin{itemize}
\item \textbf{Gaussian case}: Each entry in the $k$th group ($k = 0, 1, \dots, K$) of observed matrix $A$ was generated independently from the Gaussian distribution $\mathcal{N} (b_k, s_k)$, where 
\begin{align}
\label{eq:b_gauss}
&\bm{b} = 
\begin{pmatrix}
0.2 & 0.5 & 0.6 & 0.7
\end{pmatrix}^{\top}, \ 
\bm{s} = 
\begin{pmatrix}
0.03 & 0.04 & 0.06 & 0.07
\end{pmatrix}^{\top}. \\
&f(x) \equiv x^2/2. 
\end{align}
\item \textbf{Bernoulli case}: Each entry in the $k$th group of observed matrix $A$ was generated independently from the Bernoulli distribution $\mathrm{Bernoulli} (b_k)$, where 
\begin{align}
\label{eq:b_bernoulli}
&\bm{b} = 
\begin{pmatrix}
0.2 & 0.5 & 0.6 & 0.7
\end{pmatrix}^{\top}. \\
\label{eq:f_bernoulli}
&f(x) \equiv x \log \left( \max \{x, 10^{-5}\} \right) + (1-x) \log \left( \max \{1 - x, 10^{-5}\} \right). 
\end{align}
\item \textbf{Poisson case}: Each entry in the $k$th group of observed matrix $A$ was generated independently from the Poisson distribution $\mathrm{Pois} (b_k)$, where 
\begin{align}
\label{eq:b_poisson}
&\bm{b} = 
\begin{pmatrix}
2 & 5 & 6 & 7
\end{pmatrix}^{\top}. \\
&f(x) \equiv x \log \left( \max \{x, 10^{-5}\} \right) - x. 
\end{align}
\end{itemize}
For each combination of the distribution and matrix size settings, we randomly generated $5,000$ data matrices $A$ based on the \textbf{null} (non-bi-disjoint) bicluster structure, which was defined as follows. Let $K_1 \equiv (3K + 4 + K \bmod 2)/2$, $K_2 \equiv (3K + 4 - K \bmod 2)/2$, $n_1 \equiv \lfloor n / K_1 \rfloor$, and $p_1 \equiv \lfloor p / K_2 \rfloor$. 
For each $k$th bicluster ($k = 1, \dots, K$), we also define $k_1 \equiv (3k -2 - k \bmod 2) / 2$ and $k_2 \equiv (3k - 4 + k \bmod 2) / 2$. Based on these variables, the set of rows and columns of matrix $A$ belonging to the $k$th bicluster is given by $I_k = \{ k_1 n_1 + 1, \dots, (k_1 + 2) n_1 \}$ and $J_k = \{ k_2 p_1 + 1, \dots, (k_2 + 2) p_1 \}$, respectively. Figure \ref{fig:A_example_pre} shows the examples of Gaussian, Bernoulli, and Poisson data matrices. 

%-----
\begin{figure}[t]
  \centering
  \includegraphics[width=0.25\hsize]{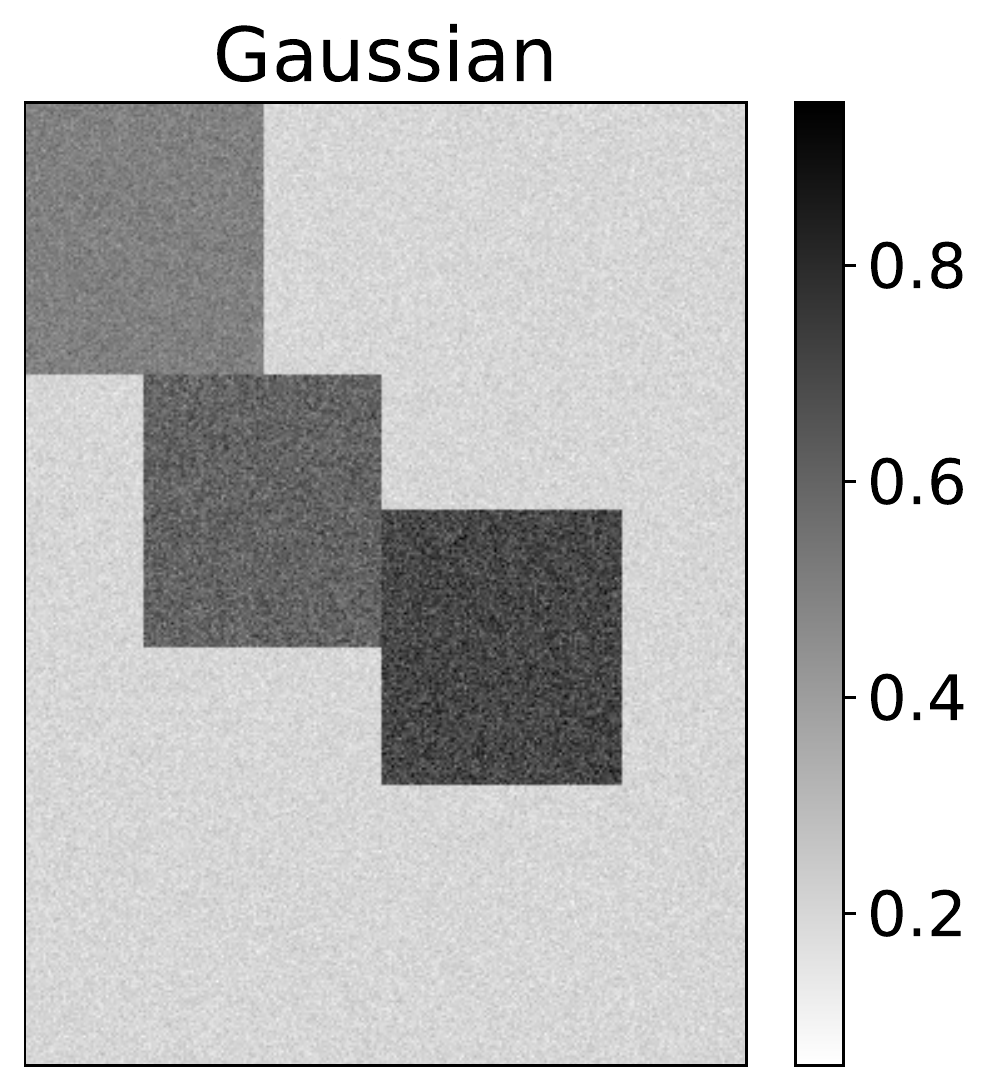}
  \includegraphics[width=0.25\hsize]{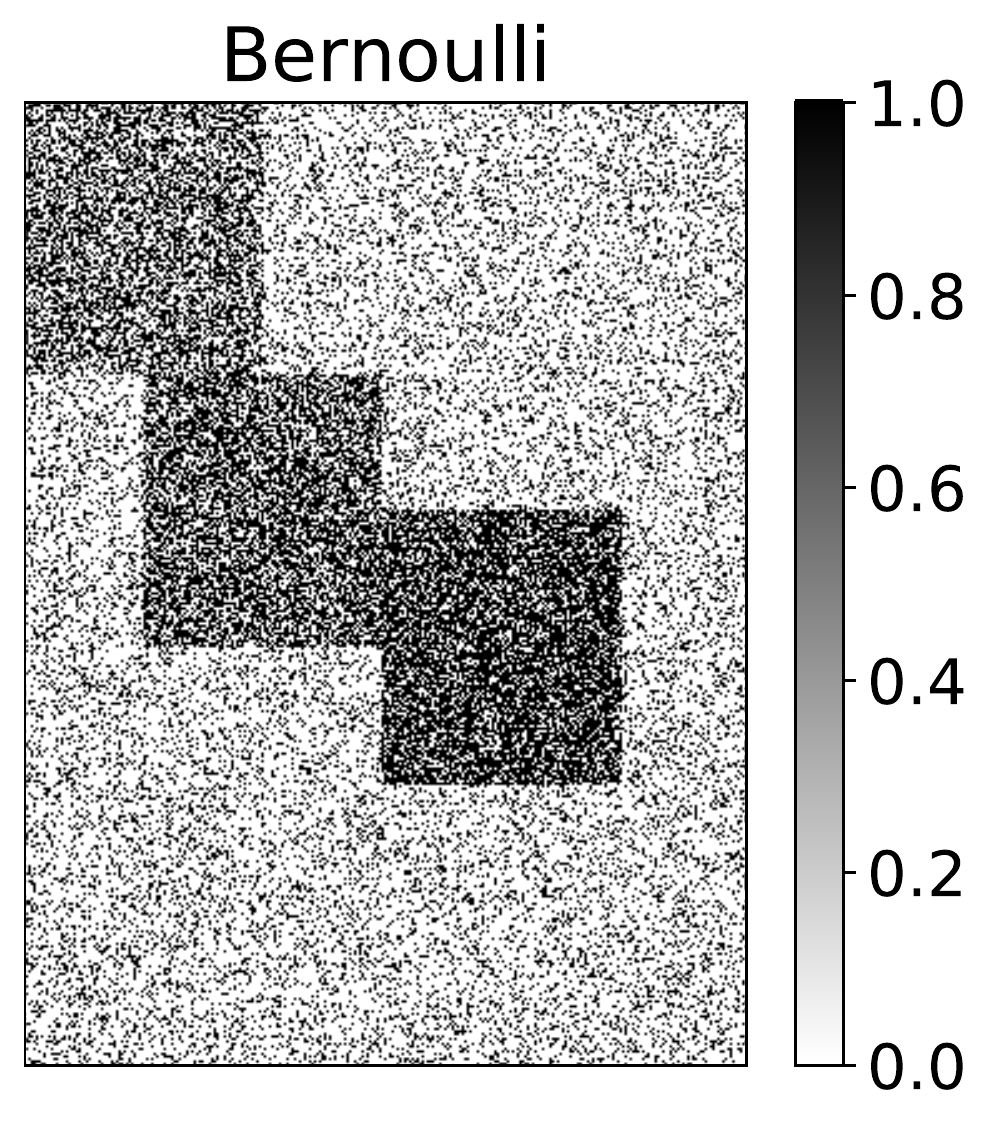}
  \includegraphics[width=0.25\hsize]{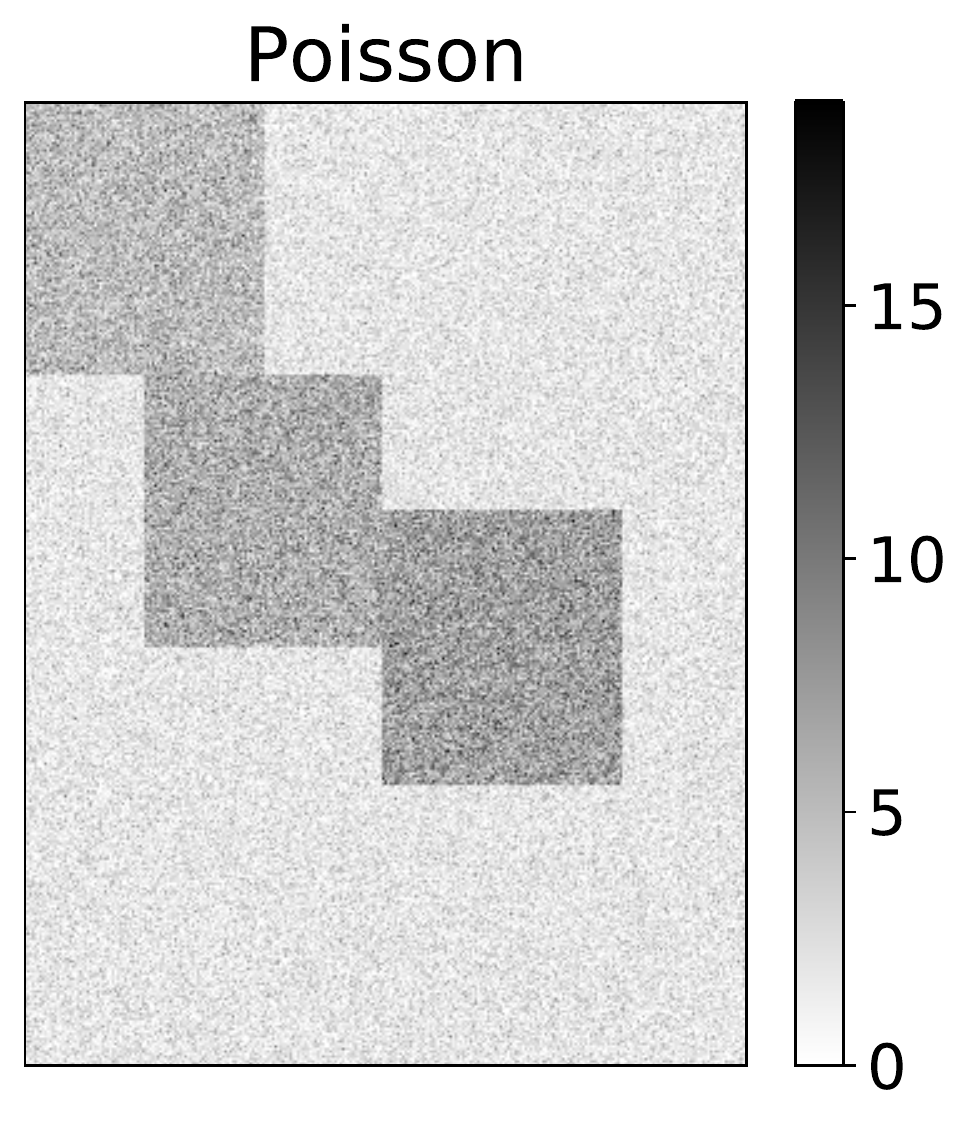}
  \caption{Examples of the observed data matrices. The left, center, and right figures show the Gaussian, Bernoulli, and Poisson cases, respectively.}
  \label{fig:A_example_pre}
\end{figure}
%-----

After generating the observed data matrices, we estimated their bicluster structures by the proposed submatrix localization algorithm (i.e., Algorithm \ref{algo:max_PL_SA2} in Sect.~\ref{sec:compress_algo}). To compress the original data matrix $A$, we applied Ward's hierarchical clustering method \cite{Ward1963} to the rows and columns of matrix $A$ with the number of clusters $L_1 = \min \{2^K, n\}$ and $L_2 = \min \{2^K, p\}$, respectively. Initial bicluster structures in Algorithm \ref{algo:max_PL_SA2} were given as follows: for each $k$th bicluster, it contains a (uniformly randomly chosen) single entry $A_{i_k j_k}$ in $A$, where $(i_k, j_k) \neq (i_{k'}, j_{k'})$ for $k \neq k'$. 
In Sect.~\ref{sec:naive_sa}, we describe a sufficient condition regarding the cooling schedule $\{ T_t \}$ for the SA algorithm to converge in probability to the global optimal solution. However, such a setting requires too many iterations to converge. In our experiments, we used the following cooling schedule instead: $T_t = 0.999^t$ for all $t \geq 0$. We also defined a threshold of temperature at $\epsilon^{\mathrm{SA}} = 10^{-5}$. Since this setting no longer guaranteed a convergence in probability to the global optimal solution, we applied the Algorithm \ref{algo:max_PL_SA2} to each observed matrix five times and adopted the best solution that achieved the maximum profile likelihood in the last step of the algorithm (this procedure was also performed in all the subsequent experiments in Sections \ref{sec:exp_alternative}, \ref{sec:accuracy}, and \ref{sec:exp_practical}). Based on the estimated bicluster structure, we finally applied the proposed statistical test by setting the hypothetical number of biclusters at $K$. 

Figures \ref{fig:preliminary_hist_normal}, \ref{fig:preliminary_hist_bernoulli}, and \ref{fig:preliminary_hist_poisson}, respectively, show the histograms of the proposed test statistic $T$ with different matrix sizes under the Gaussian, Bernoulli, and Poisson settings. 
Figure \ref{fig:preliminaryT} illustrates the empirical tail probabilities of the proposed test statistic $T$ (i.e., the ratios of the trials where $T \geq t(0.01)$, $T \geq t(0.05)$, and $T \geq t(0.1)$, where $t(\alpha)$ is the $\alpha$ upper quantile of the $TW_1$ distribution) under the three settings of distributions. As in the previous study \cite{Watanabe2021}, we used the approximated values $t(0.01) \approx 2.02345$, $t(0.05) \approx 0.97931$, and $t(0.1) \approx 0.45014$, based on Table $2$ in \cite{Tracy2009}. To check the convergence of $T$ to the $TW_1$ distribution, we also applied the Kolmogorov-Smirnov (KS) test \cite{Conover1999} to the test statistics $T$ of the $5,000$ trials, and plotted the results in Figure \ref{fig:KStest}. Let $D$ be the maximum absolute difference between the empirical distribution function of $T$ and the cumulative distribution function of the $TW_1$ distribution. The test statistic of the KS test is $D\sqrt{r}$, where $r$ is the number of trials (i.e., $5,000$ in this case). 

From Figures \ref{fig:preliminary_hist_normal}, \ref{fig:preliminary_hist_bernoulli}, \ref{fig:preliminary_hist_poisson}, \ref{fig:preliminaryT} and \ref{fig:KStest}, we see that the proposed test statistic $T$ converges in law to the $TW_1$ distribution in each setting of distributions. In the Bernoulli case, however, the convergence of $T$ in law to the $TW_1$ distribution is slow, compared to the other two cases (i.e., Gaussian and Poisson). To investigate the cause of this, we also computed $\tilde{T}$ with the \textbf{null} bicluster structure [i.e., $\tilde{T} \equiv (\tilde{\lambda}_1 - a^{\mathrm{TW}})/b^{\mathrm{TW}}$] and plotted its empirical tail probabilities and the test statistic of the KS test in Figures \ref{fig:preliminaryT_null} and \ref{fig:KStest_null}, respectively. From these figures, we see that the convergence of $\tilde{T}$ in law to the $TW_1$ distribution is still slow in the Bernoulli case. Therefore, the slow convergence of $T$ would not have been caused by the low accuracy of the biclustering algorithm, but it would have been a problem specific to a Bernoulli random matrix. 

%-----
\begin{figure}[p]
  \centering
  \includegraphics[width=0.2\hsize]{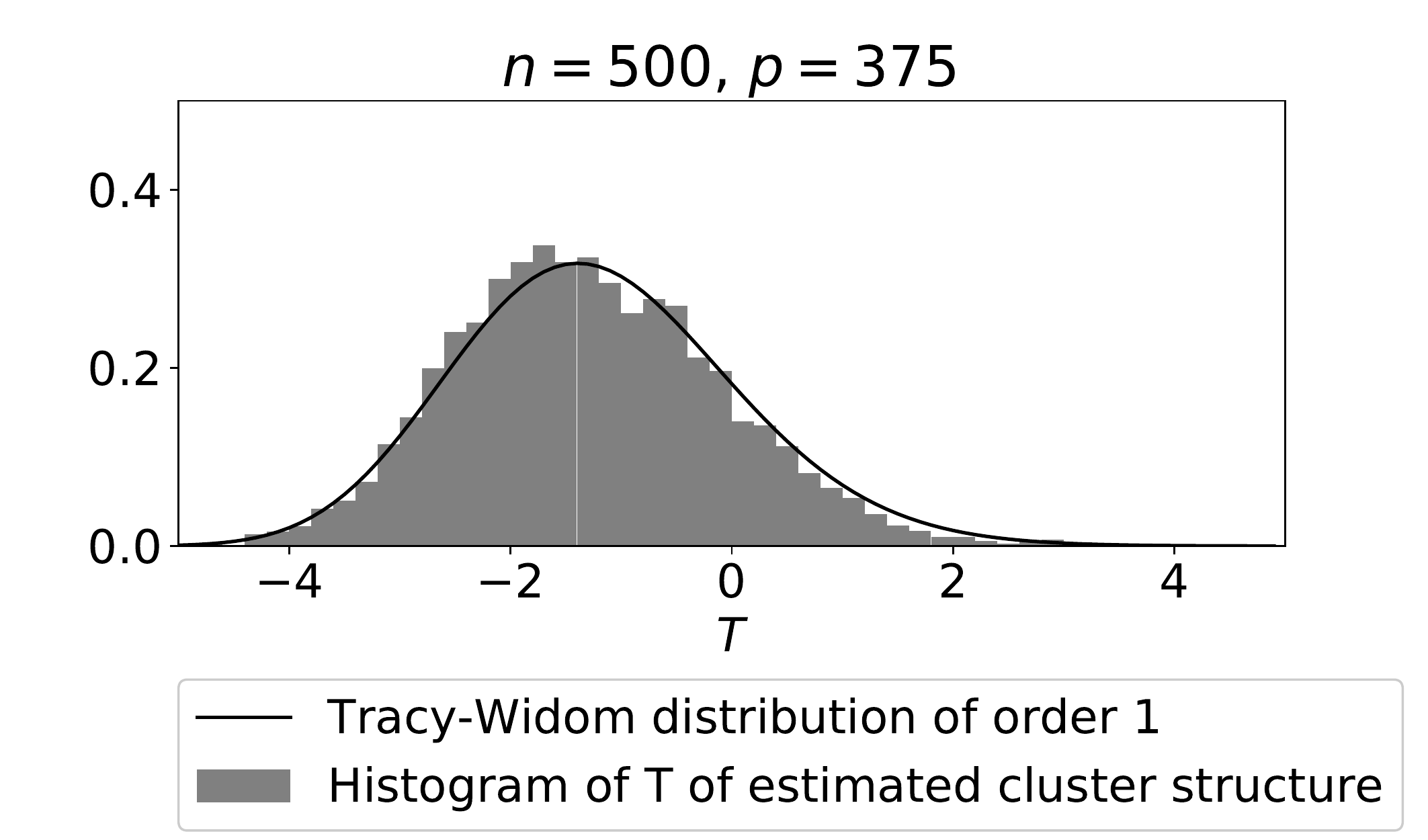}
  \includegraphics[width=0.2\hsize]{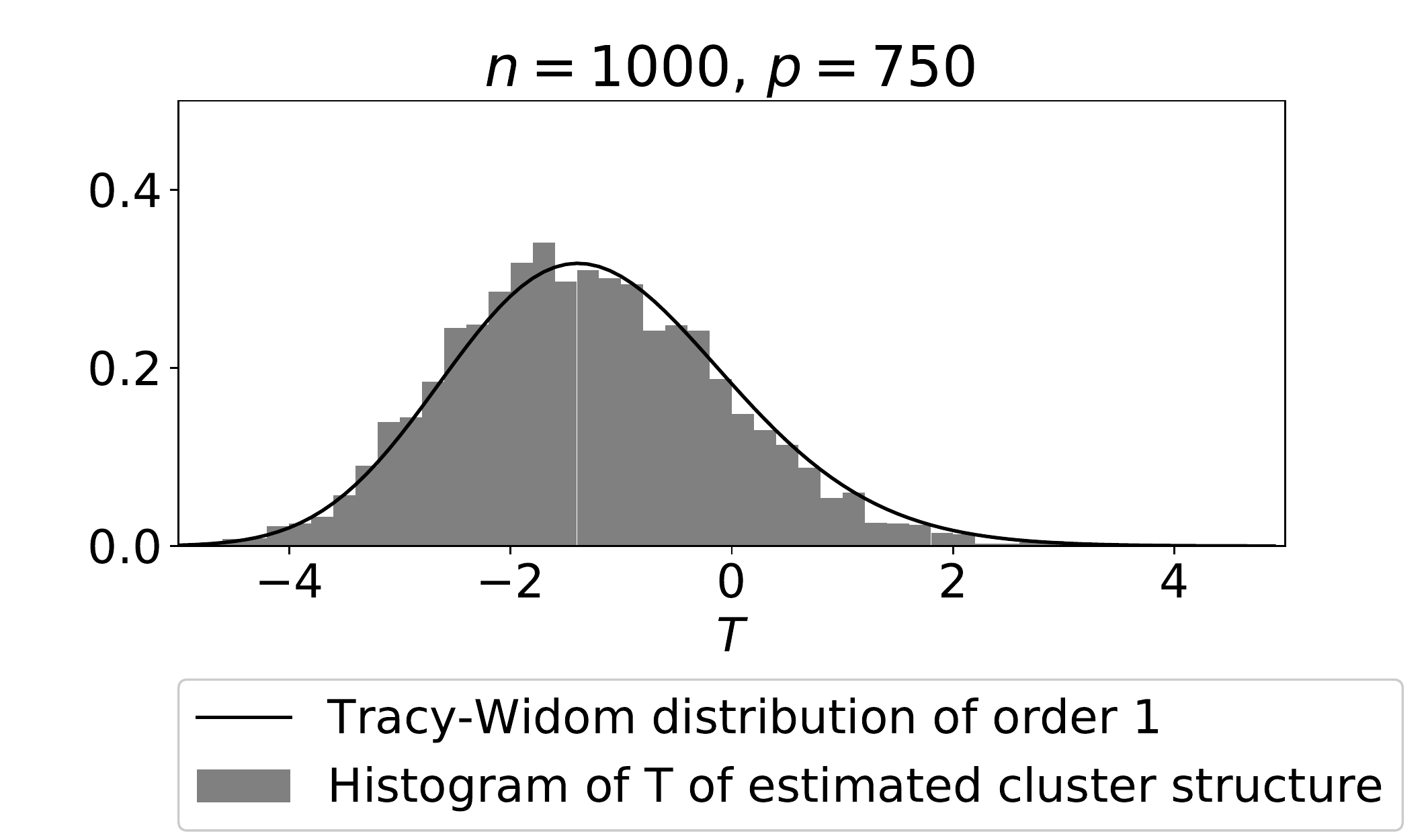}
  \includegraphics[width=0.2\hsize]{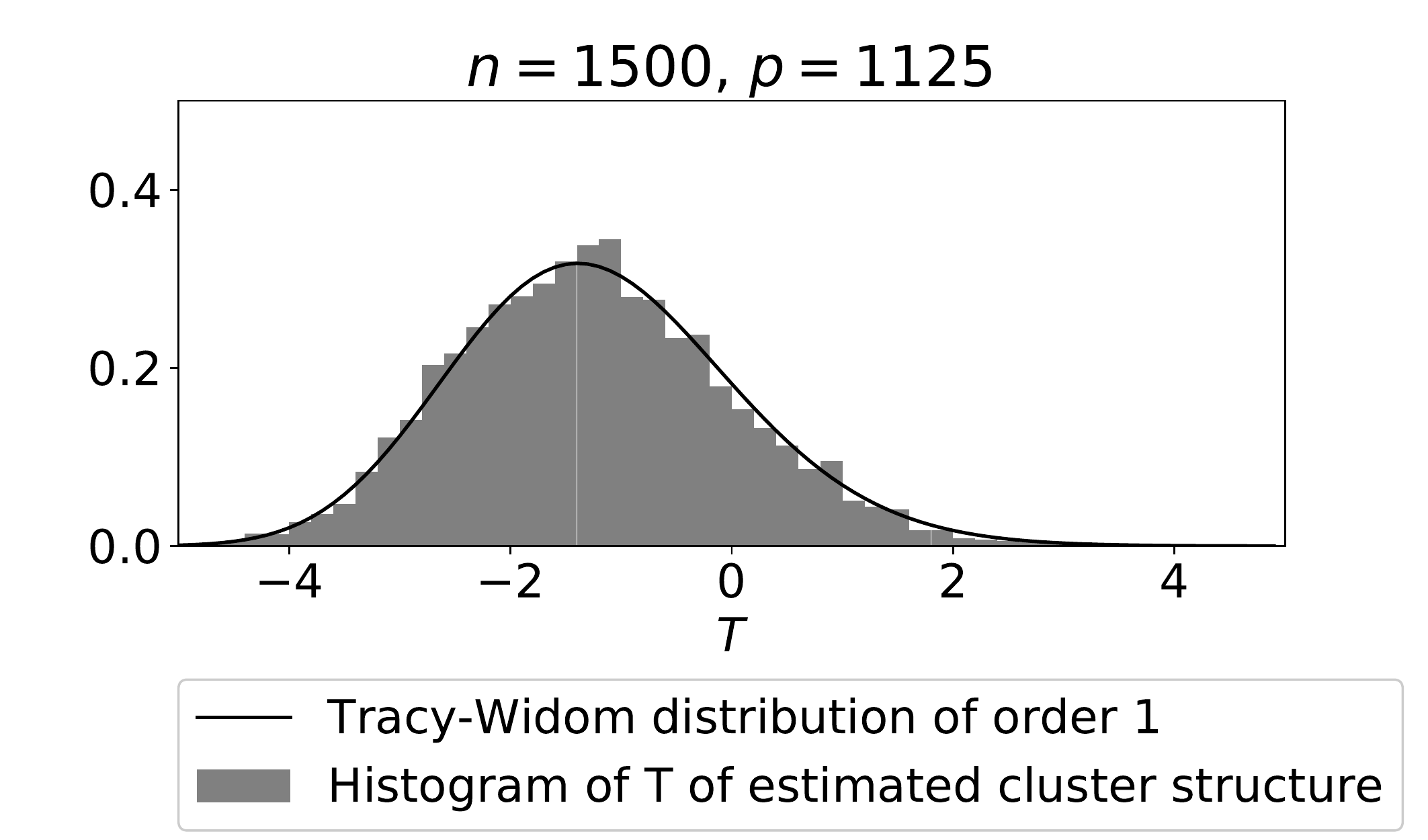}
  \includegraphics[width=0.2\hsize]{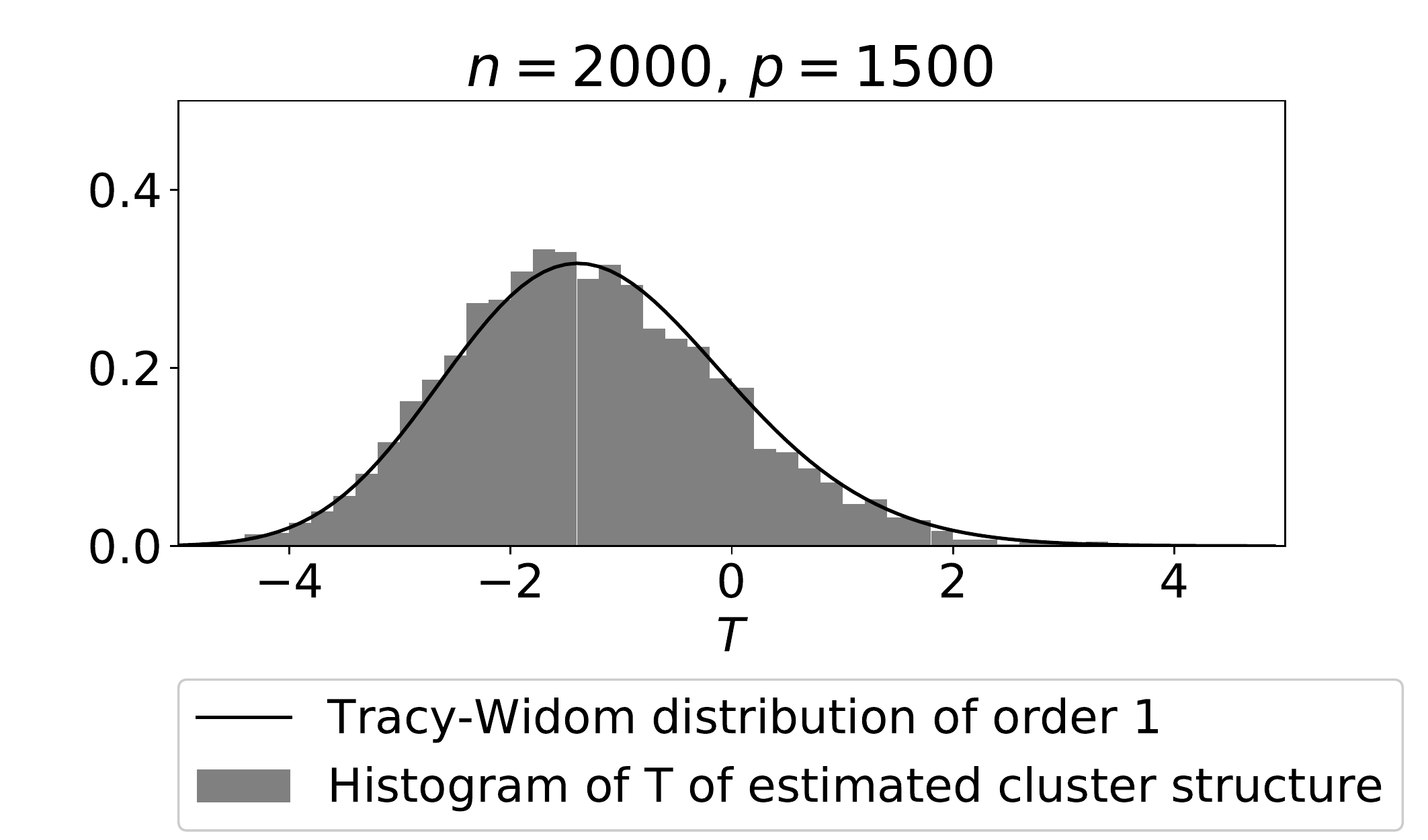}
  \includegraphics[width=0.2\hsize]{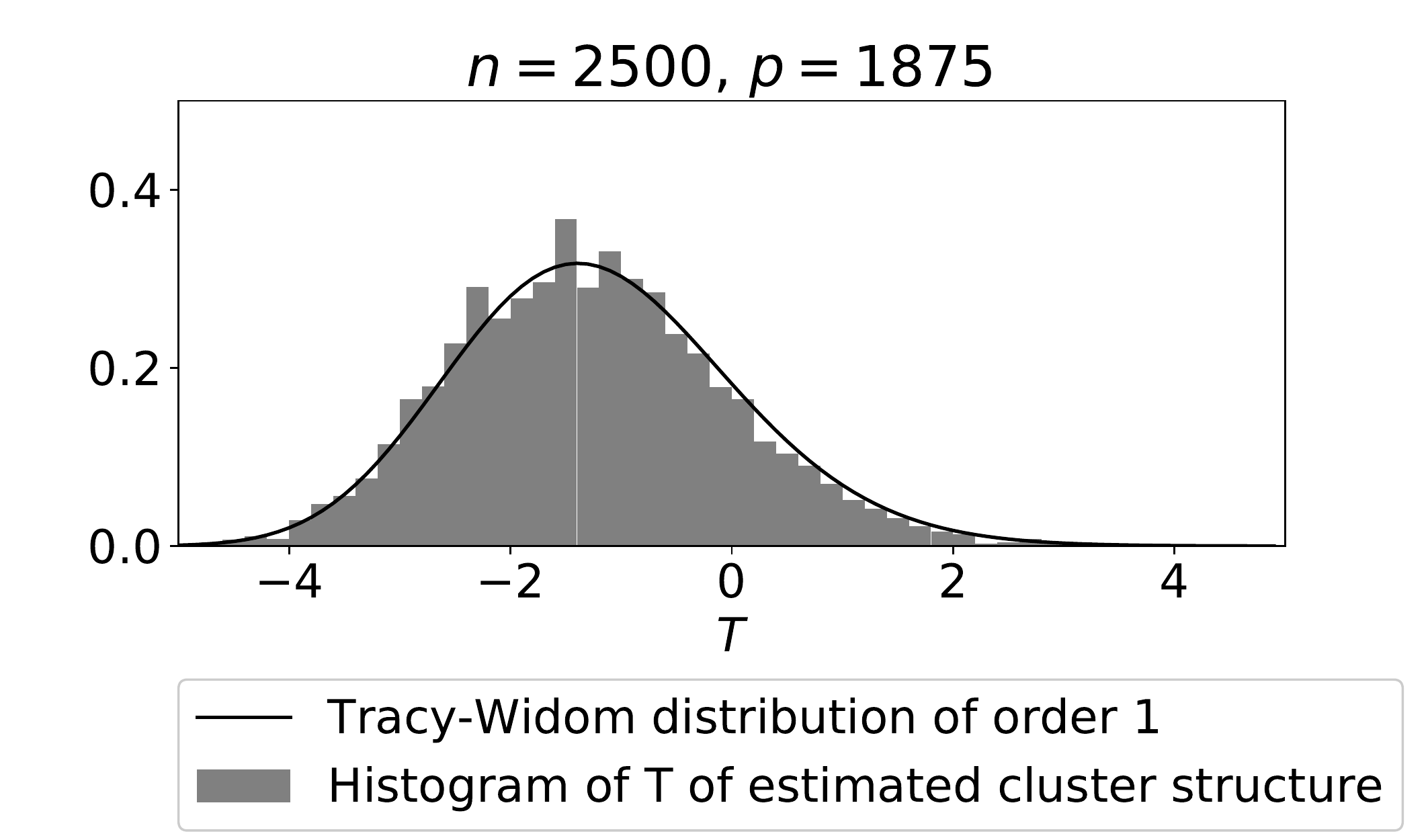}
  \includegraphics[width=0.2\hsize]{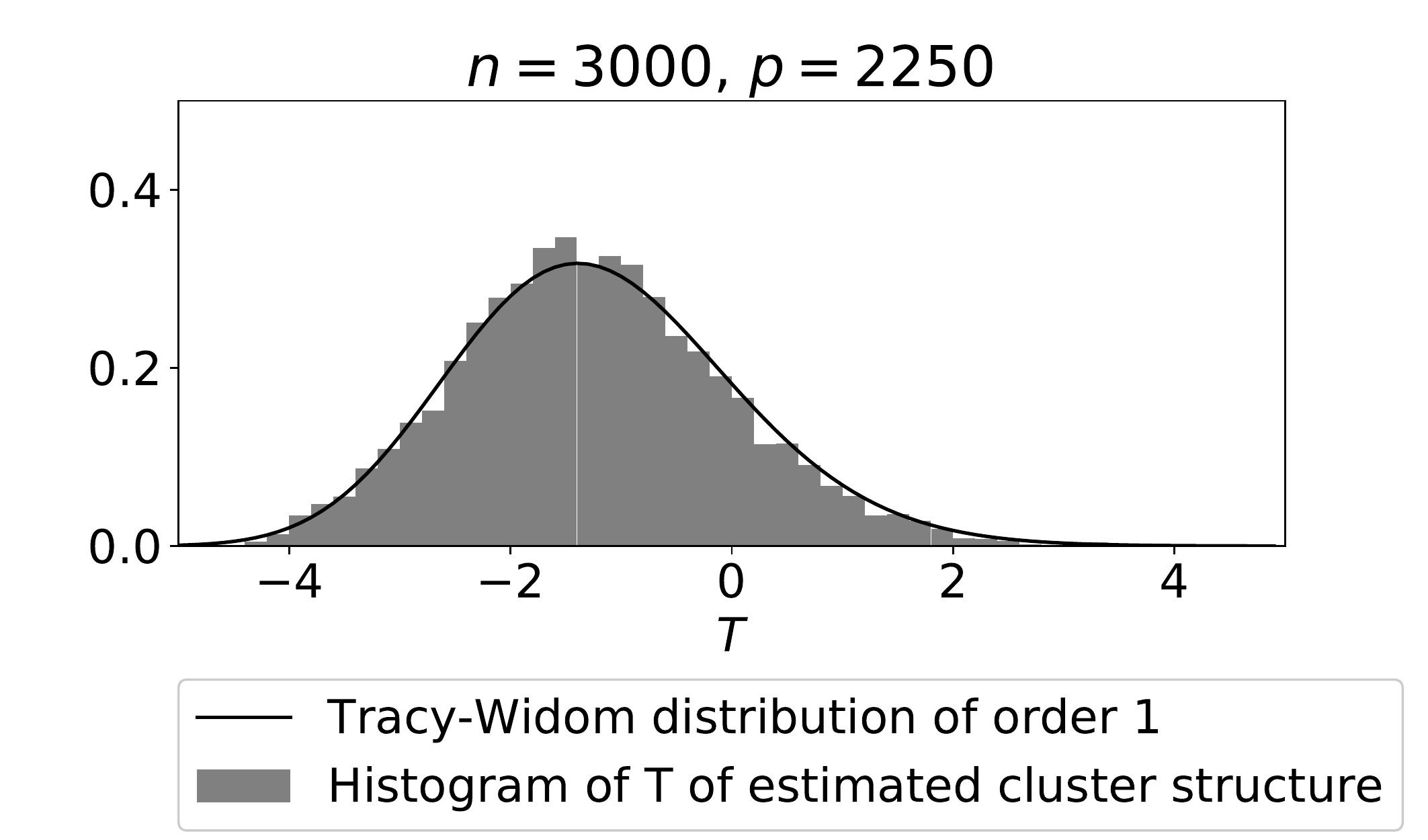}
  \includegraphics[width=0.2\hsize]{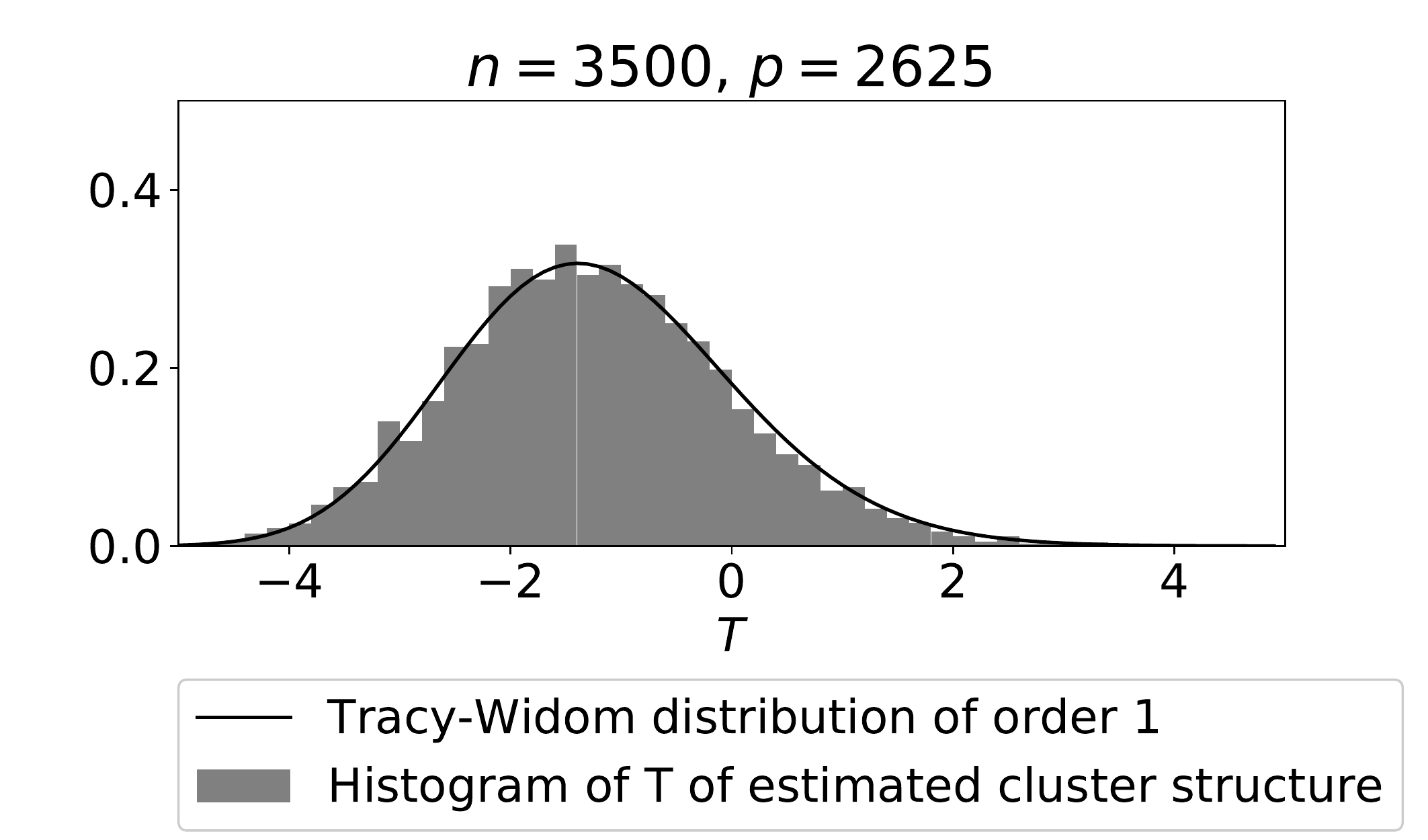}
  \includegraphics[width=0.2\hsize]{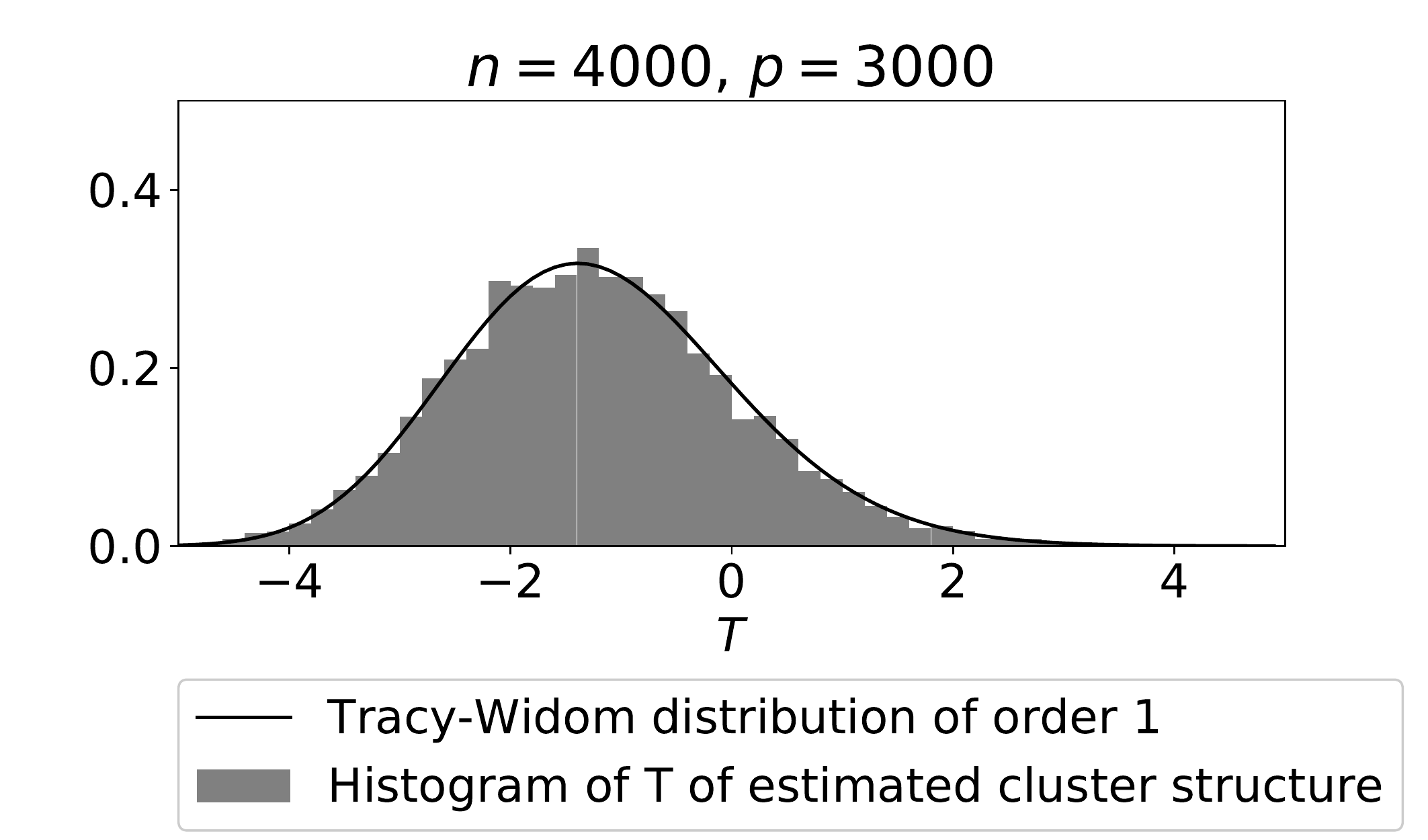}
  \includegraphics[width=0.2\hsize]{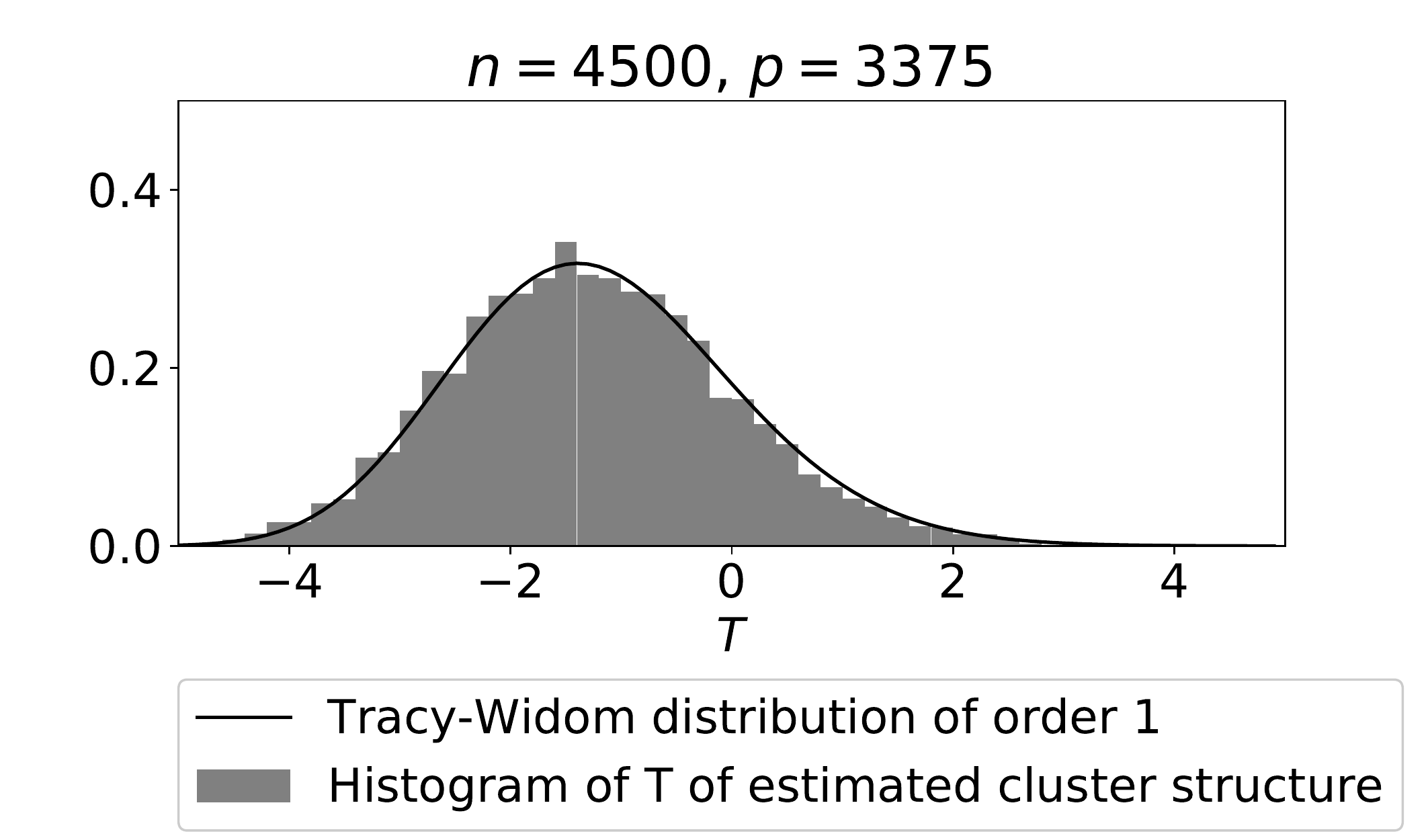}
  \includegraphics[width=0.2\hsize]{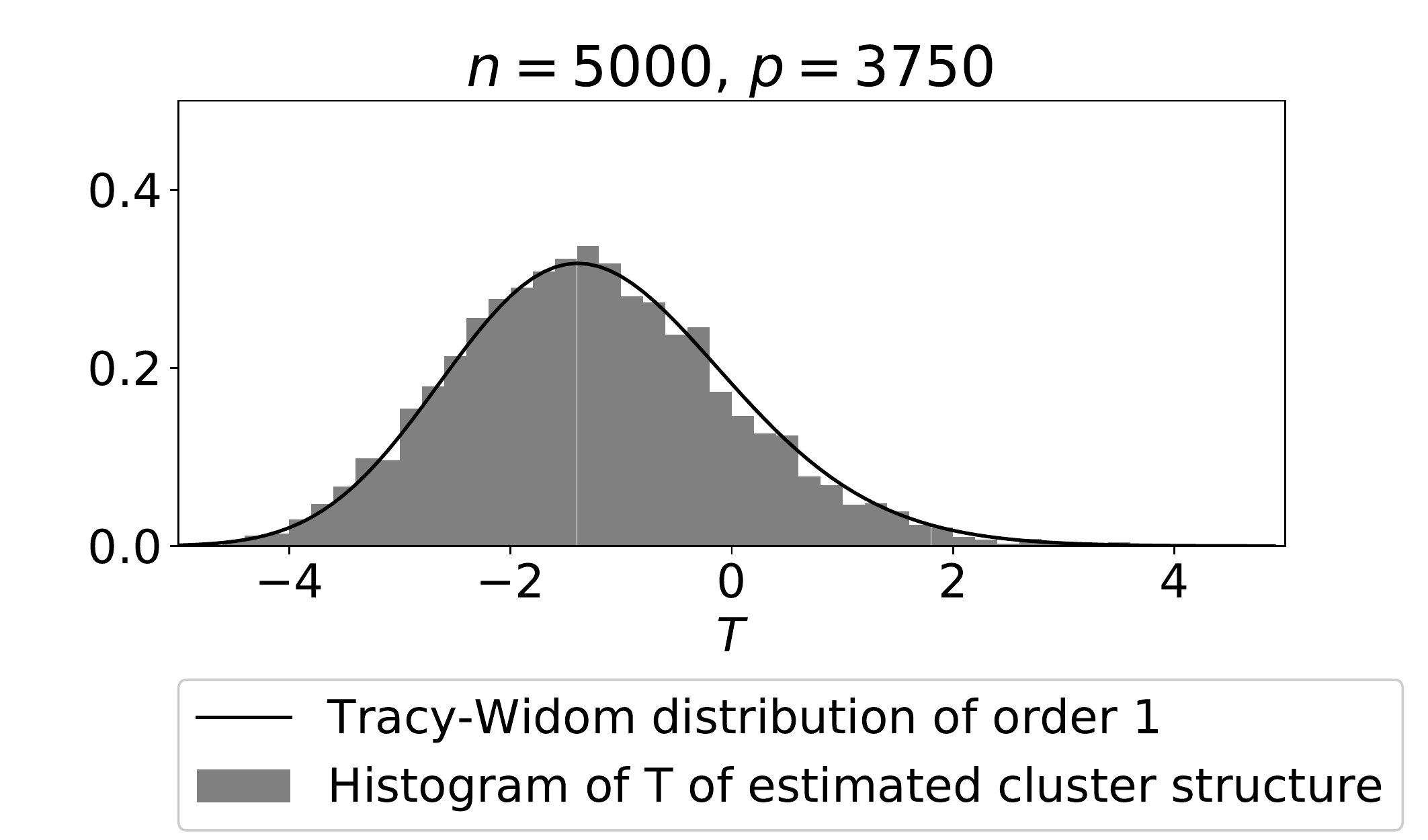}%\vspace{-3mm}
  \caption{Histogram of the proposed test statistic $T$, which was computed with \textbf{estimated} bicluster structure (\textbf{Gaussian case}). The titles of the figures show the different matrix sizes.}\vspace{3mm}
  \label{fig:preliminary_hist_normal}
  \includegraphics[width=0.2\hsize]{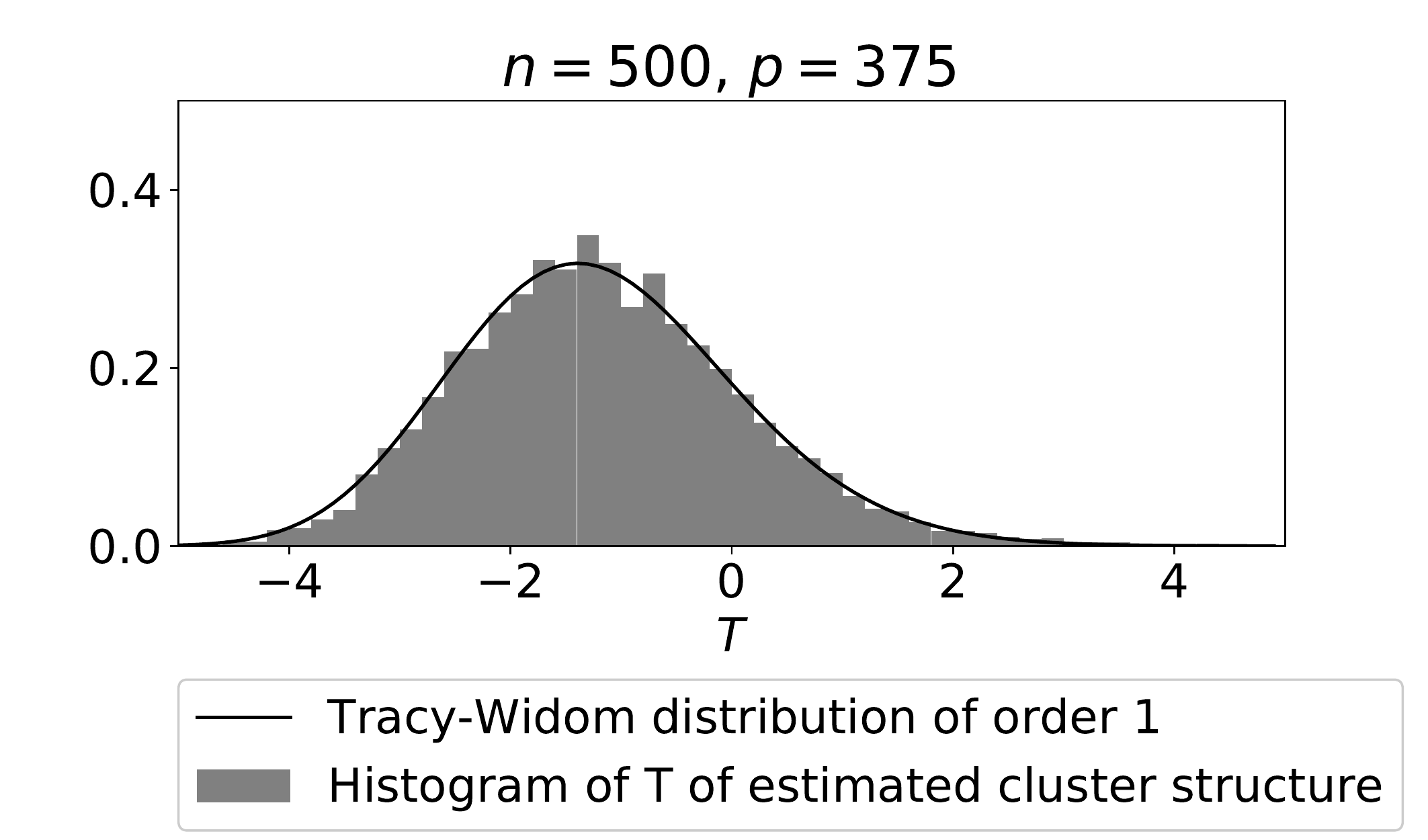}
  \includegraphics[width=0.2\hsize]{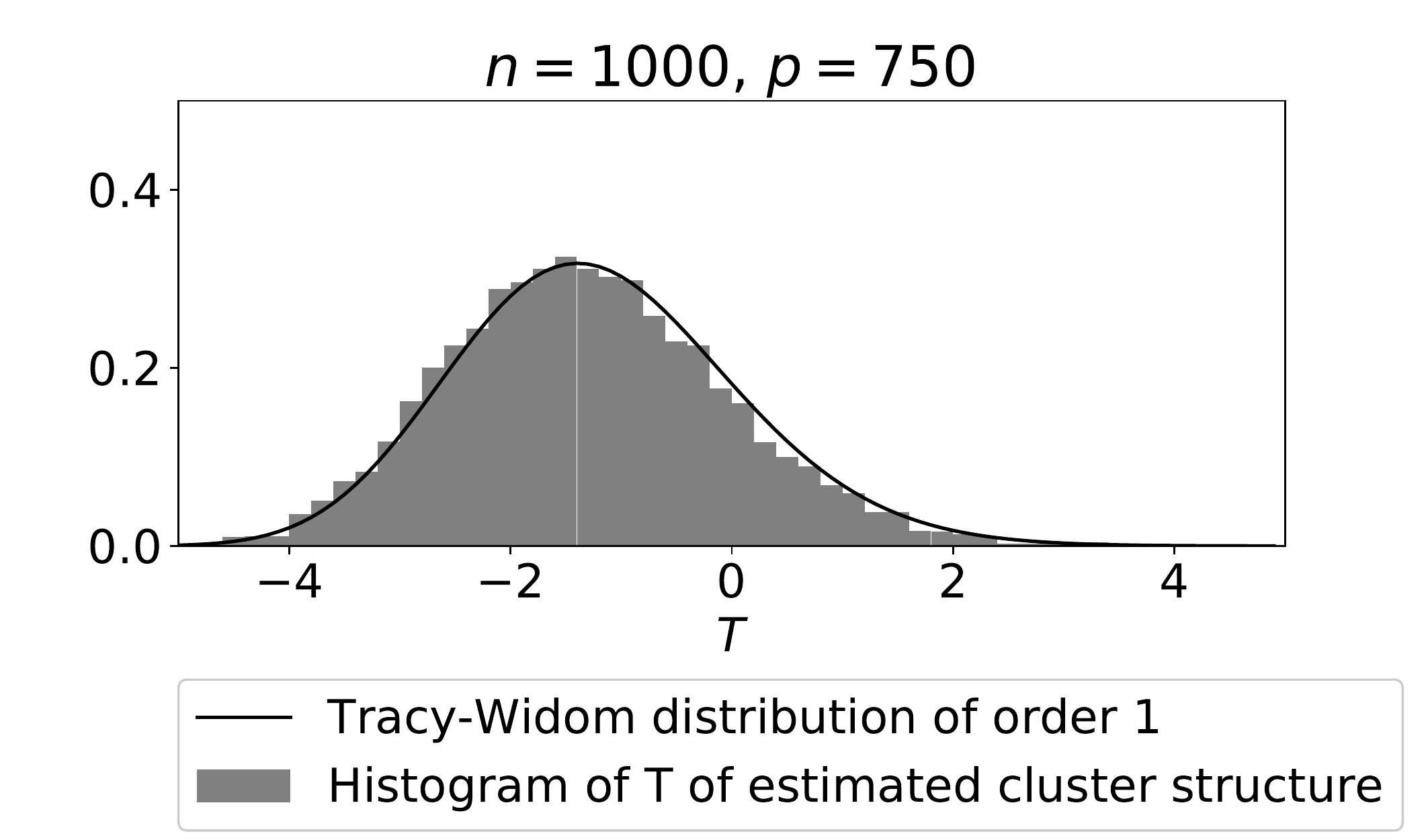}
  \includegraphics[width=0.2\hsize]{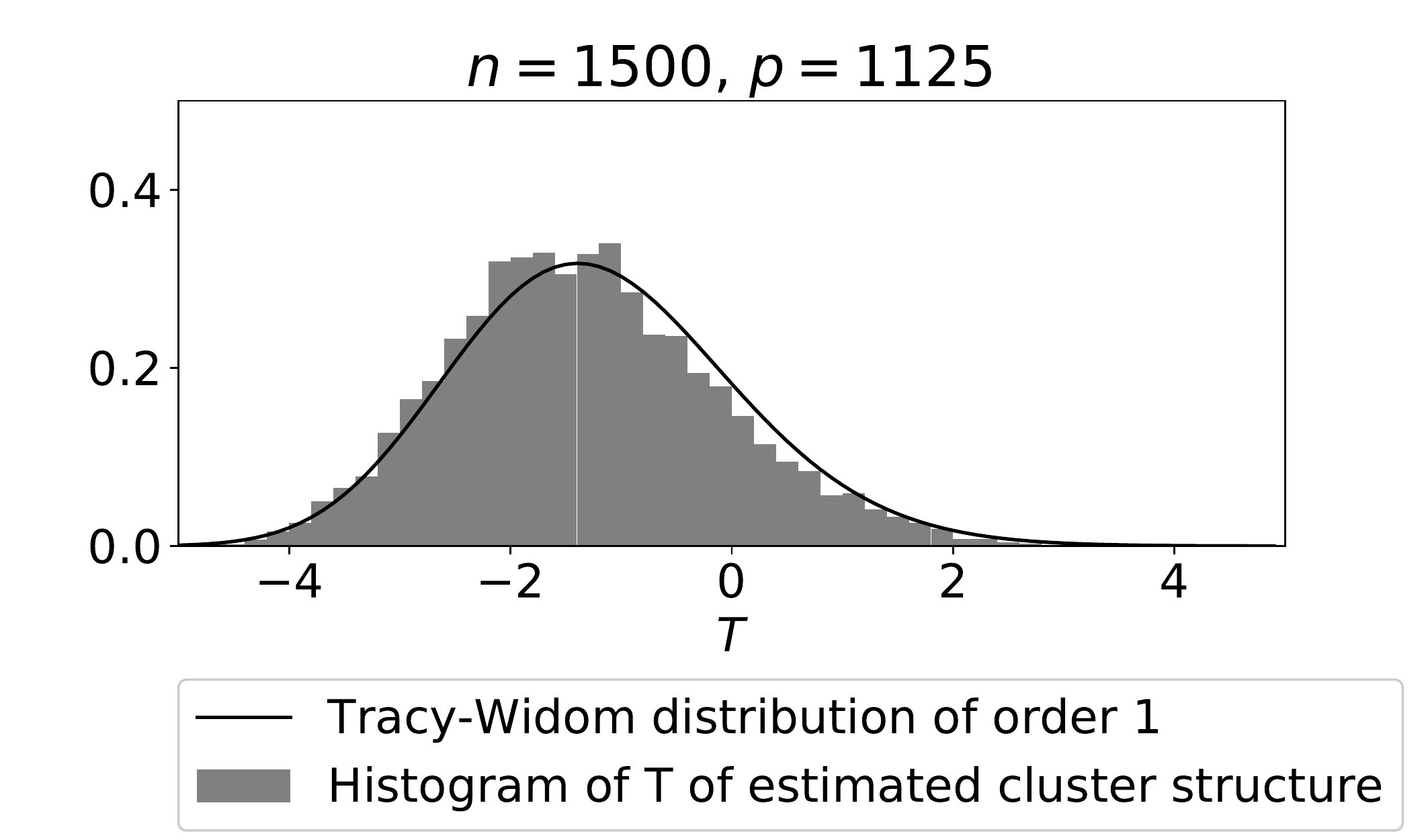}
  \includegraphics[width=0.2\hsize]{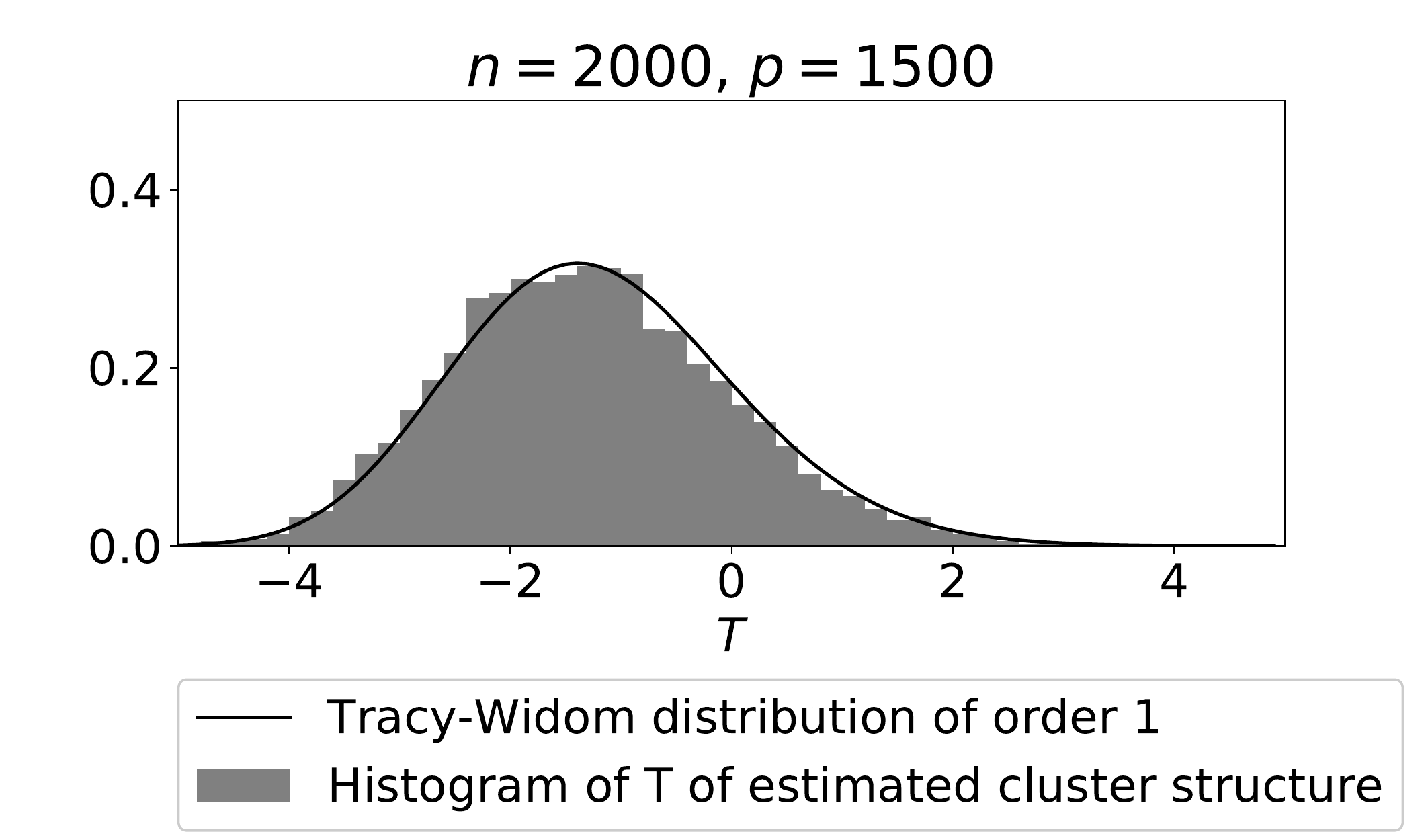}
  \includegraphics[width=0.2\hsize]{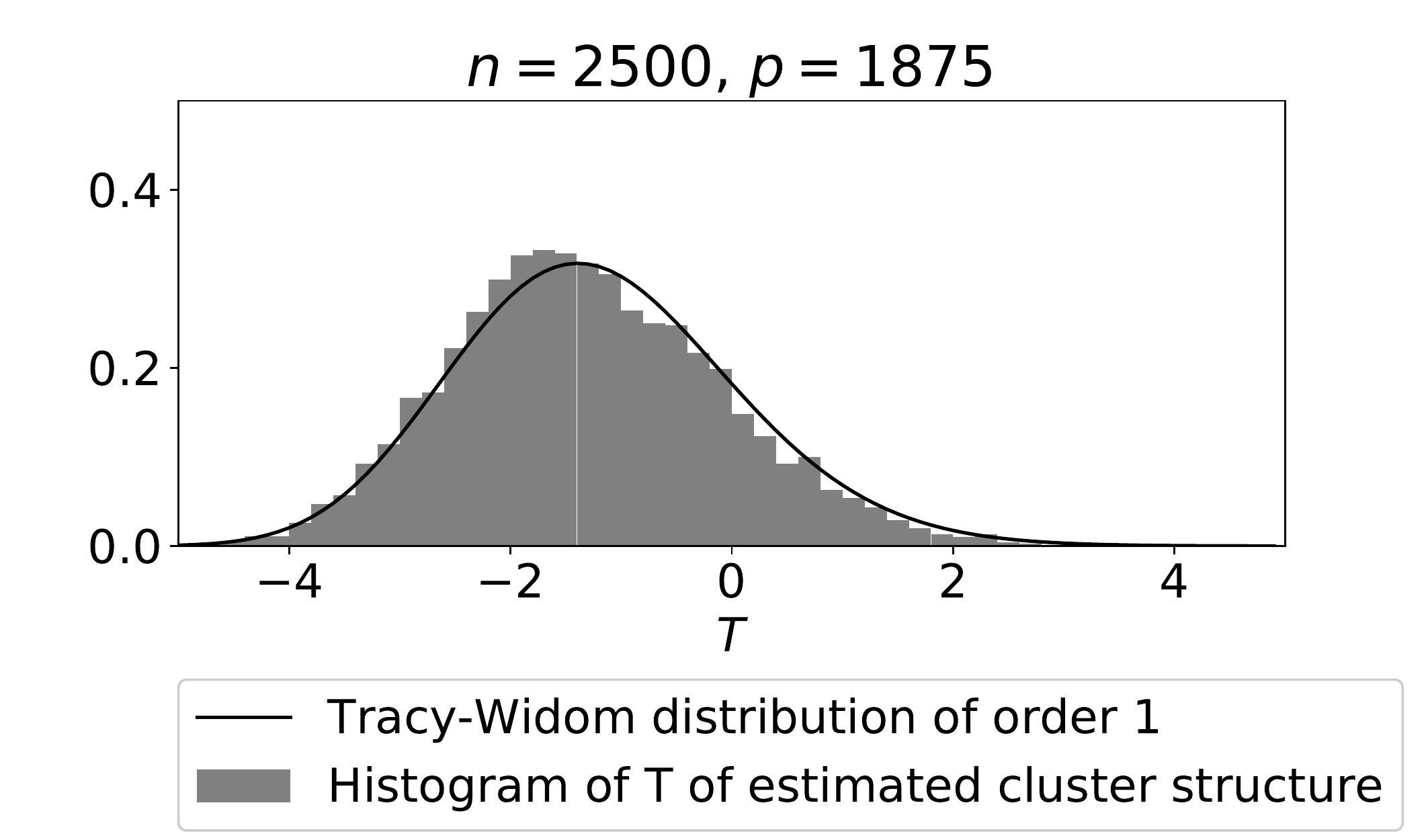}
  \includegraphics[width=0.2\hsize]{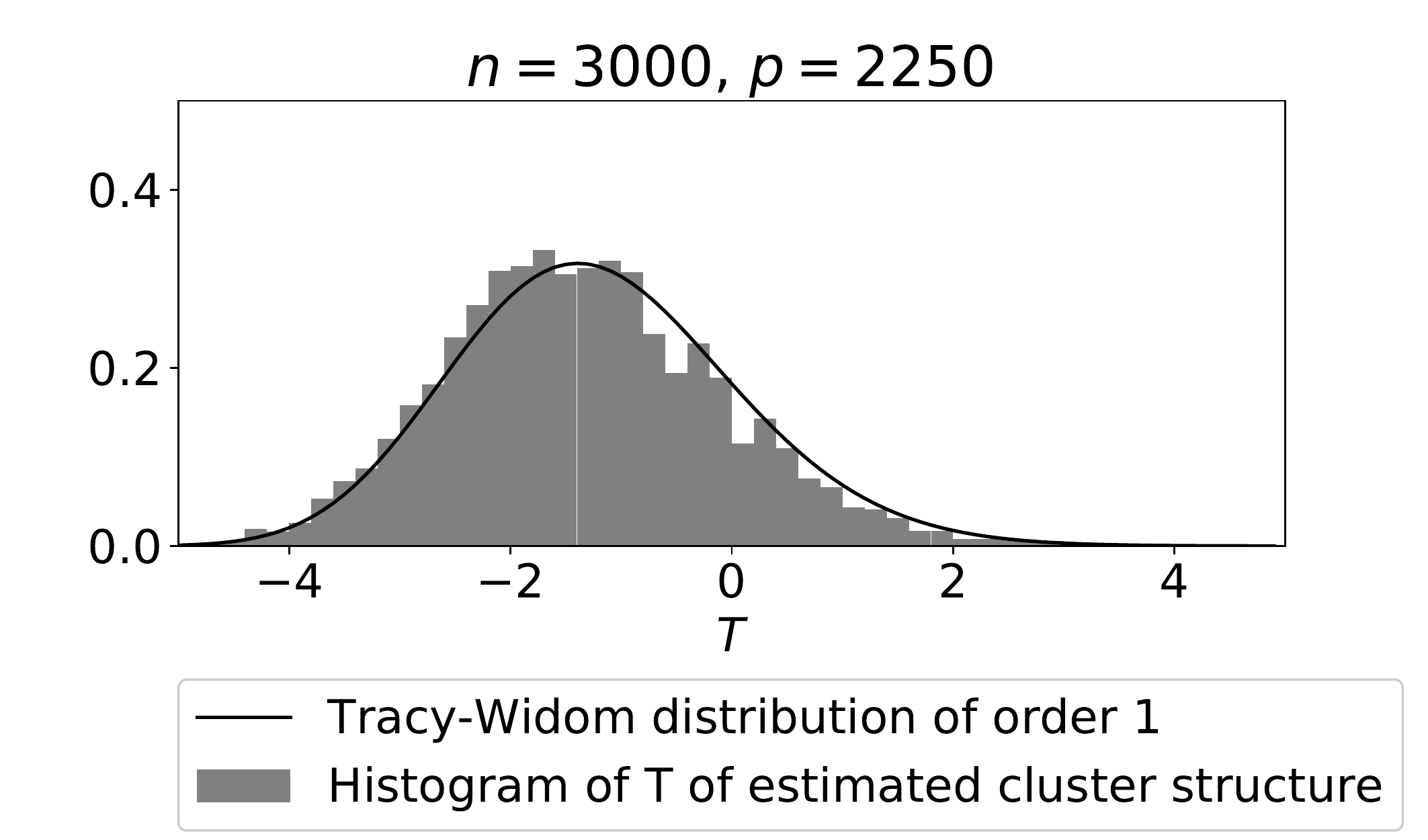}
  \includegraphics[width=0.2\hsize]{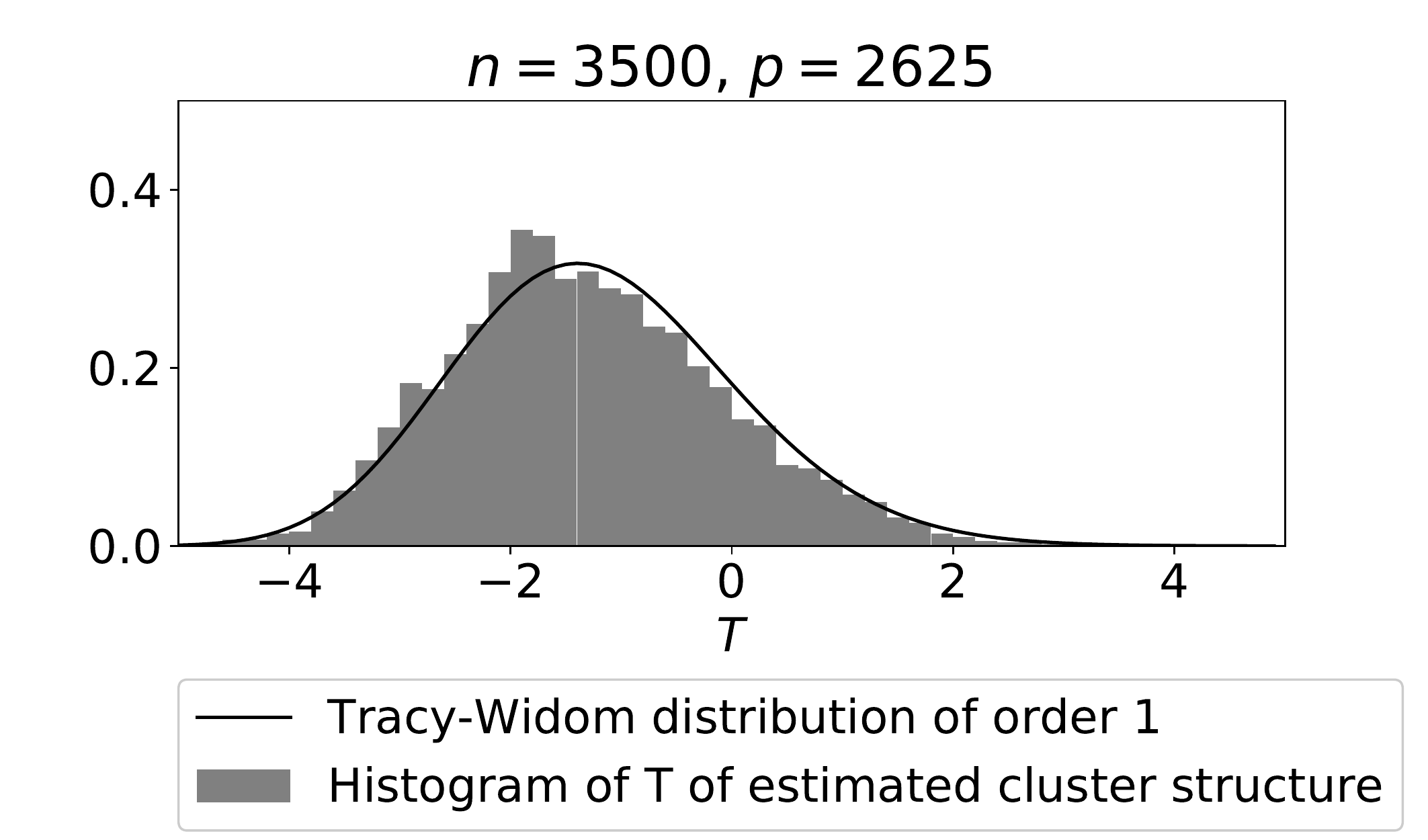}
  \includegraphics[width=0.2\hsize]{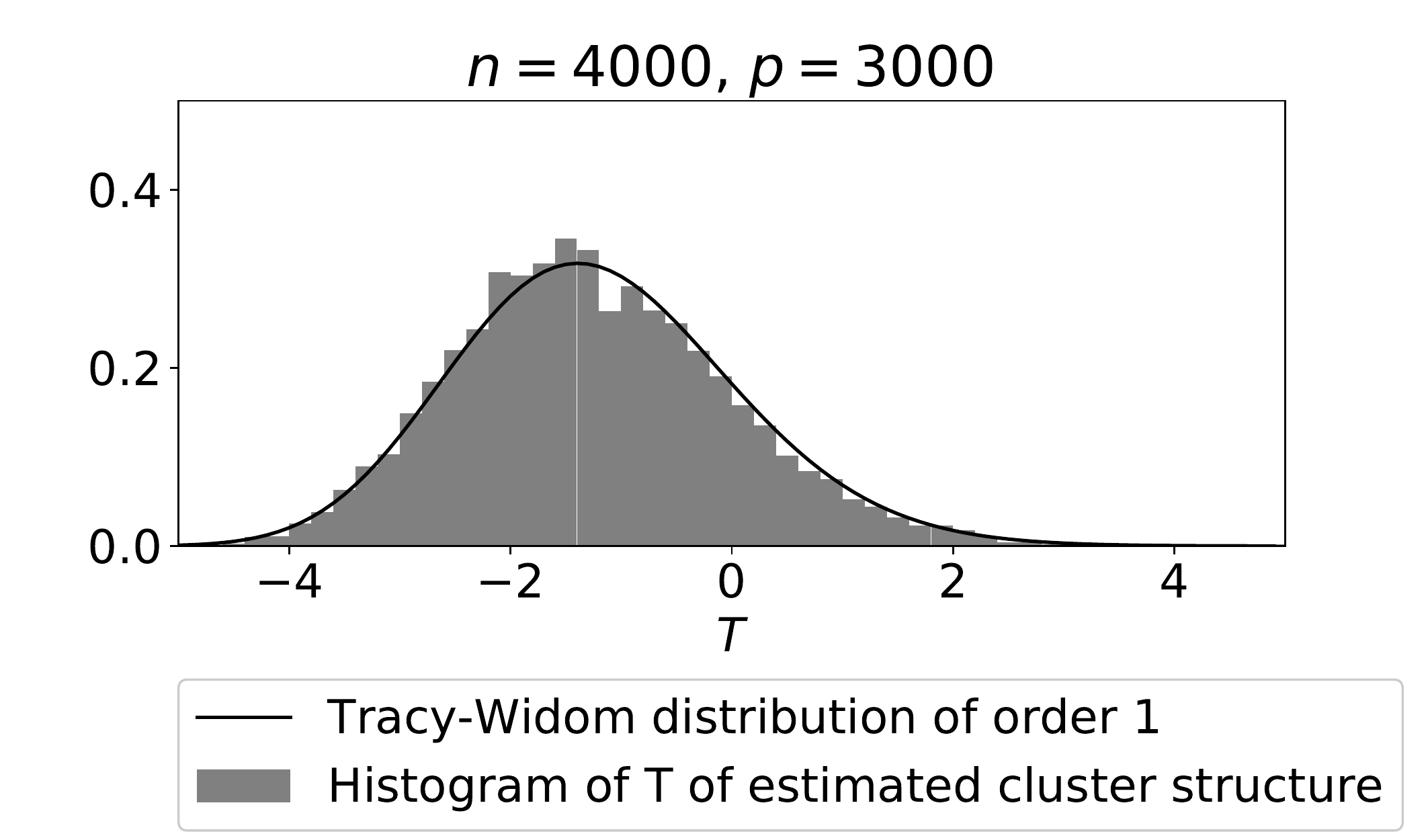}
  \includegraphics[width=0.2\hsize]{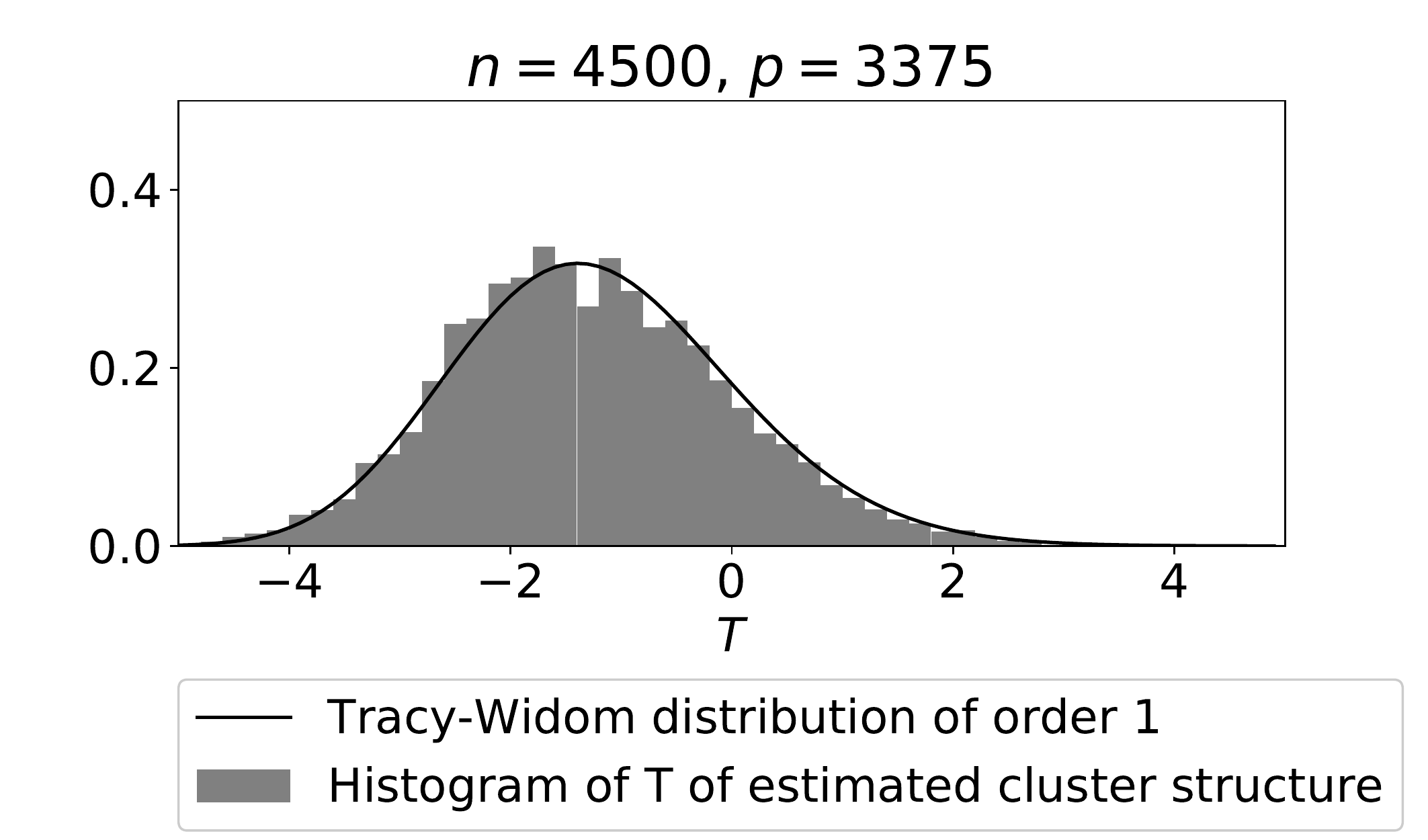}
  \includegraphics[width=0.2\hsize]{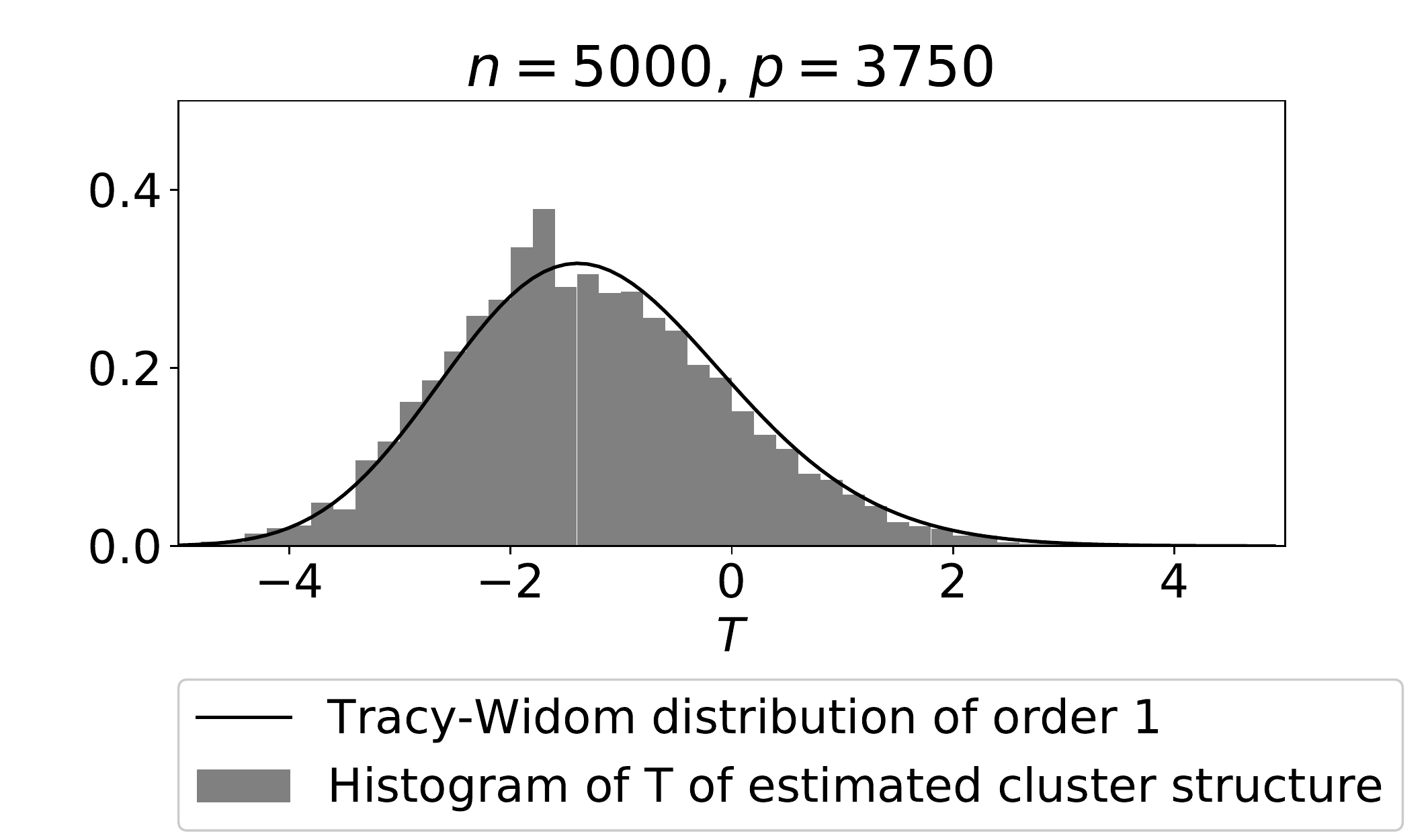}%\vspace{-3mm}
  \caption{Histogram of the proposed test statistic $T$, which was computed with \textbf{estimated} bicluster structure (\textbf{Bernoulli case}). The titles of the figures show the different matrix sizes.}\vspace{3mm}
  \label{fig:preliminary_hist_bernoulli}
  \centering
  \includegraphics[width=0.2\hsize]{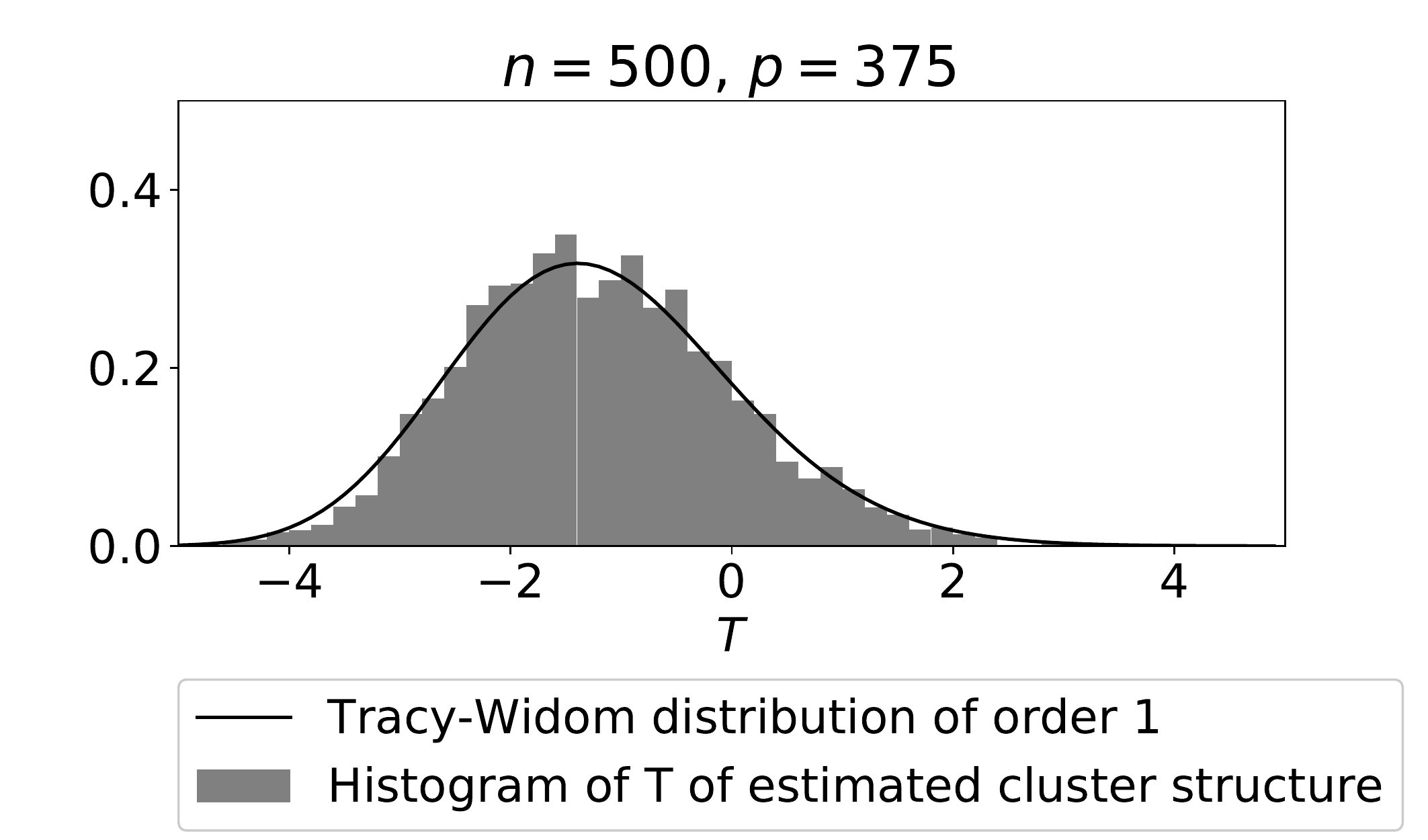}
  \includegraphics[width=0.2\hsize]{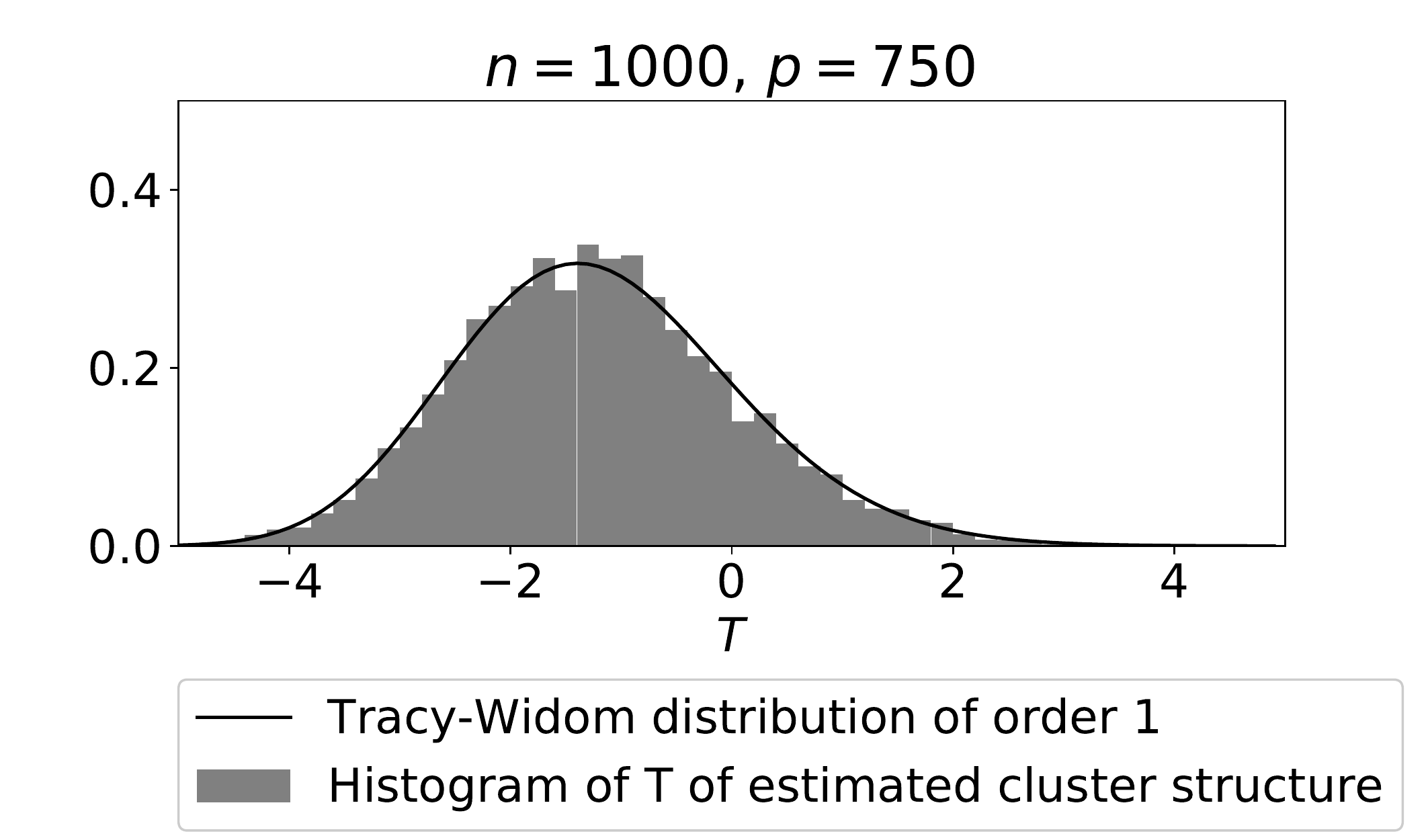}
  \includegraphics[width=0.2\hsize]{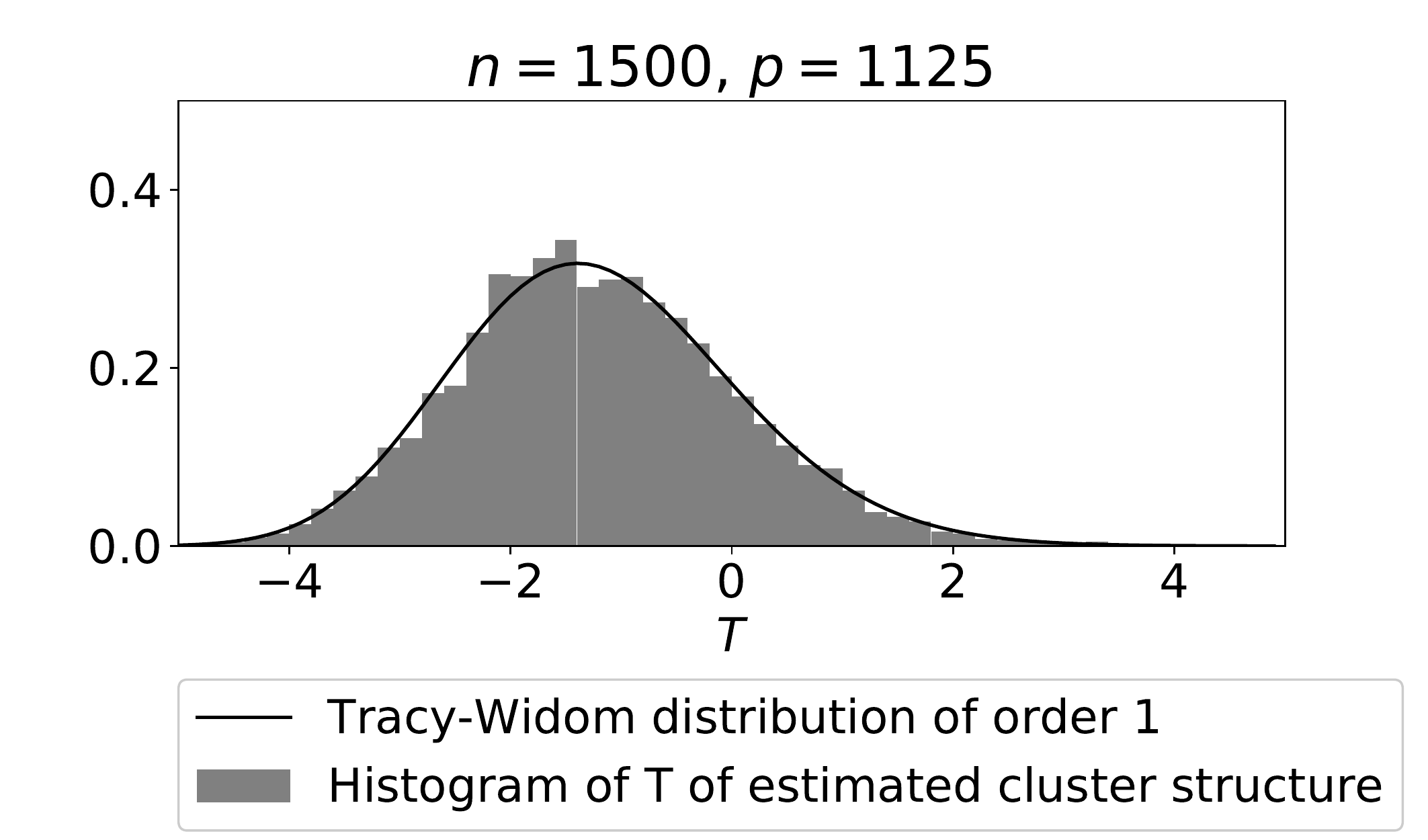}
  \includegraphics[width=0.2\hsize]{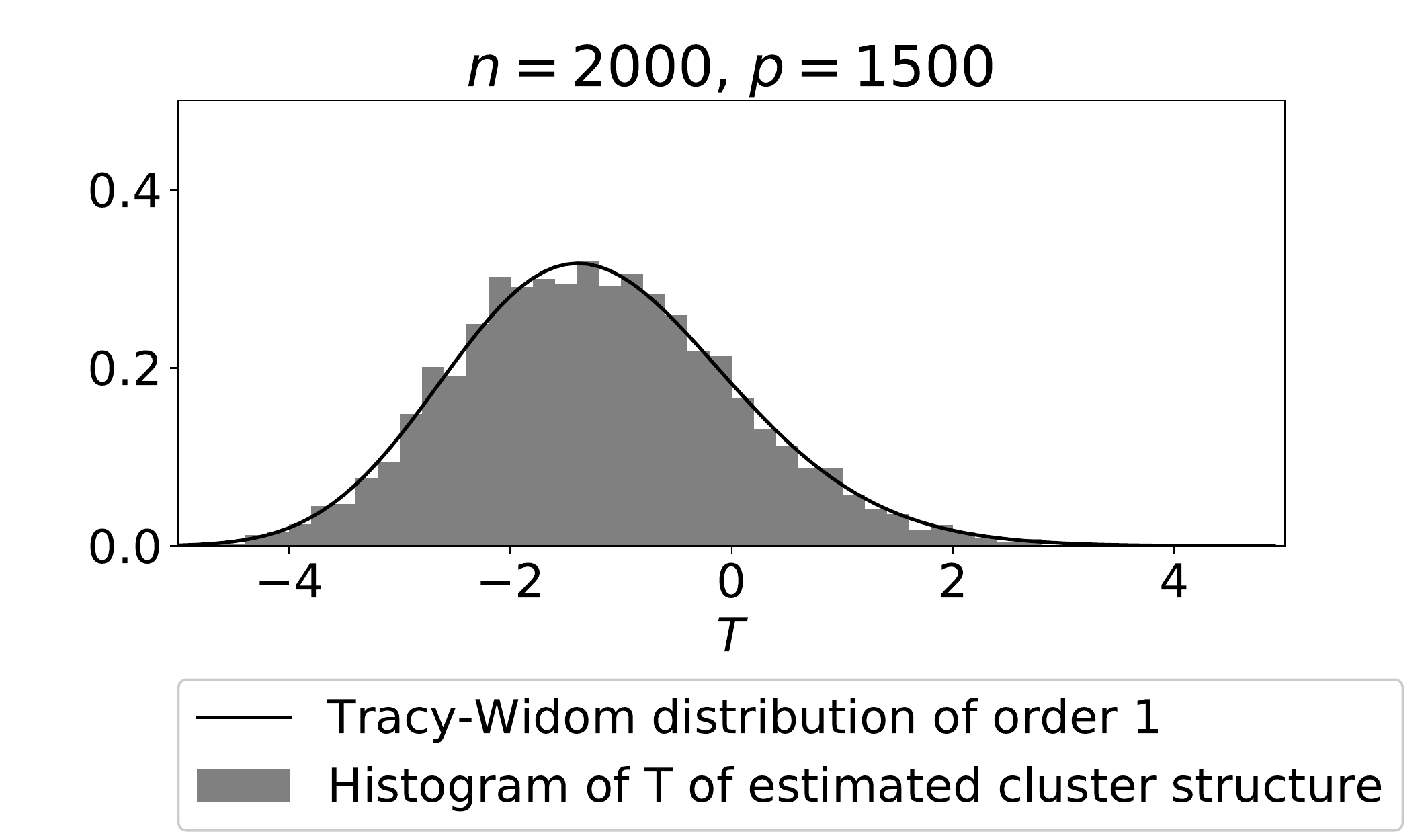}
  \includegraphics[width=0.2\hsize]{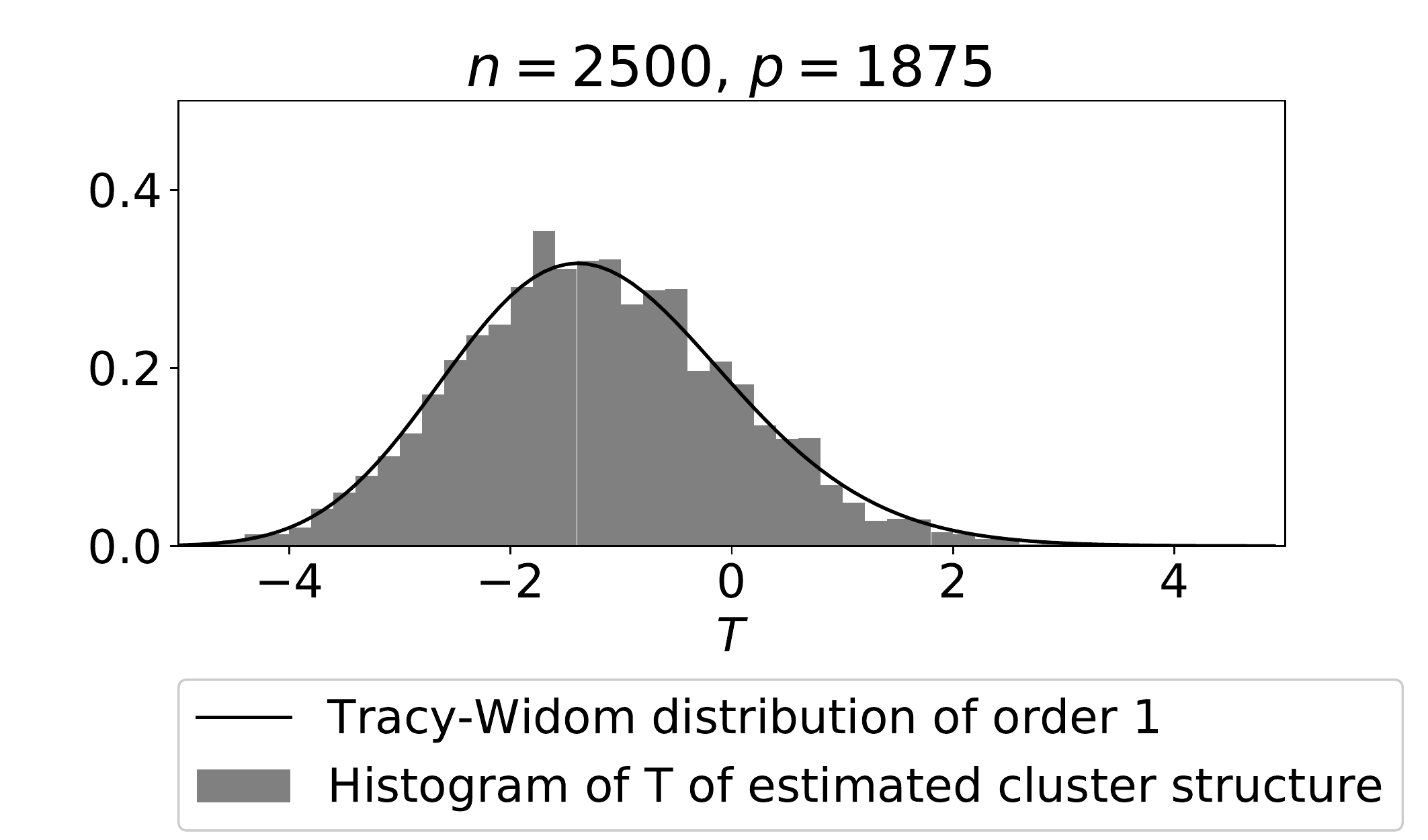}
  \includegraphics[width=0.2\hsize]{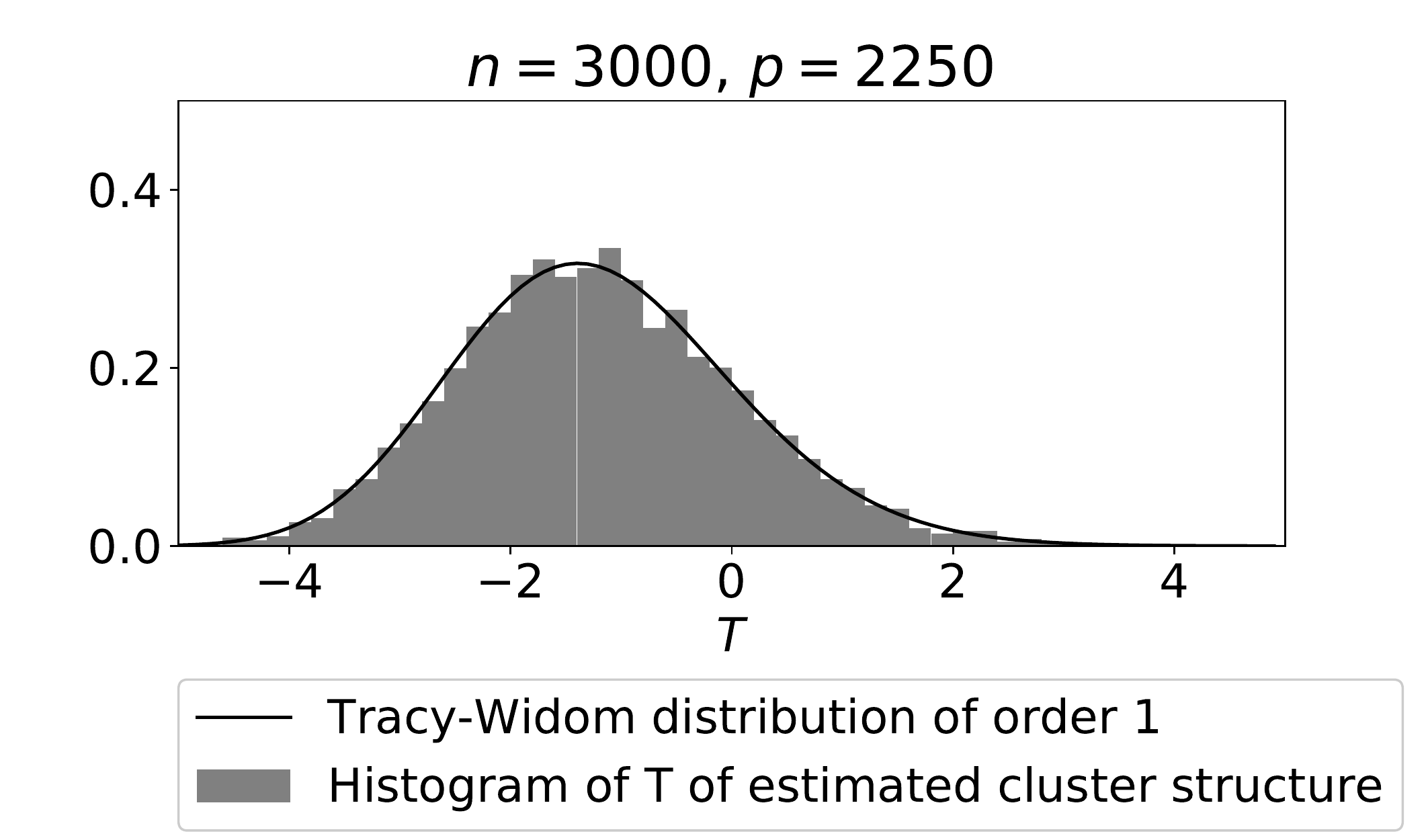}
  \includegraphics[width=0.2\hsize]{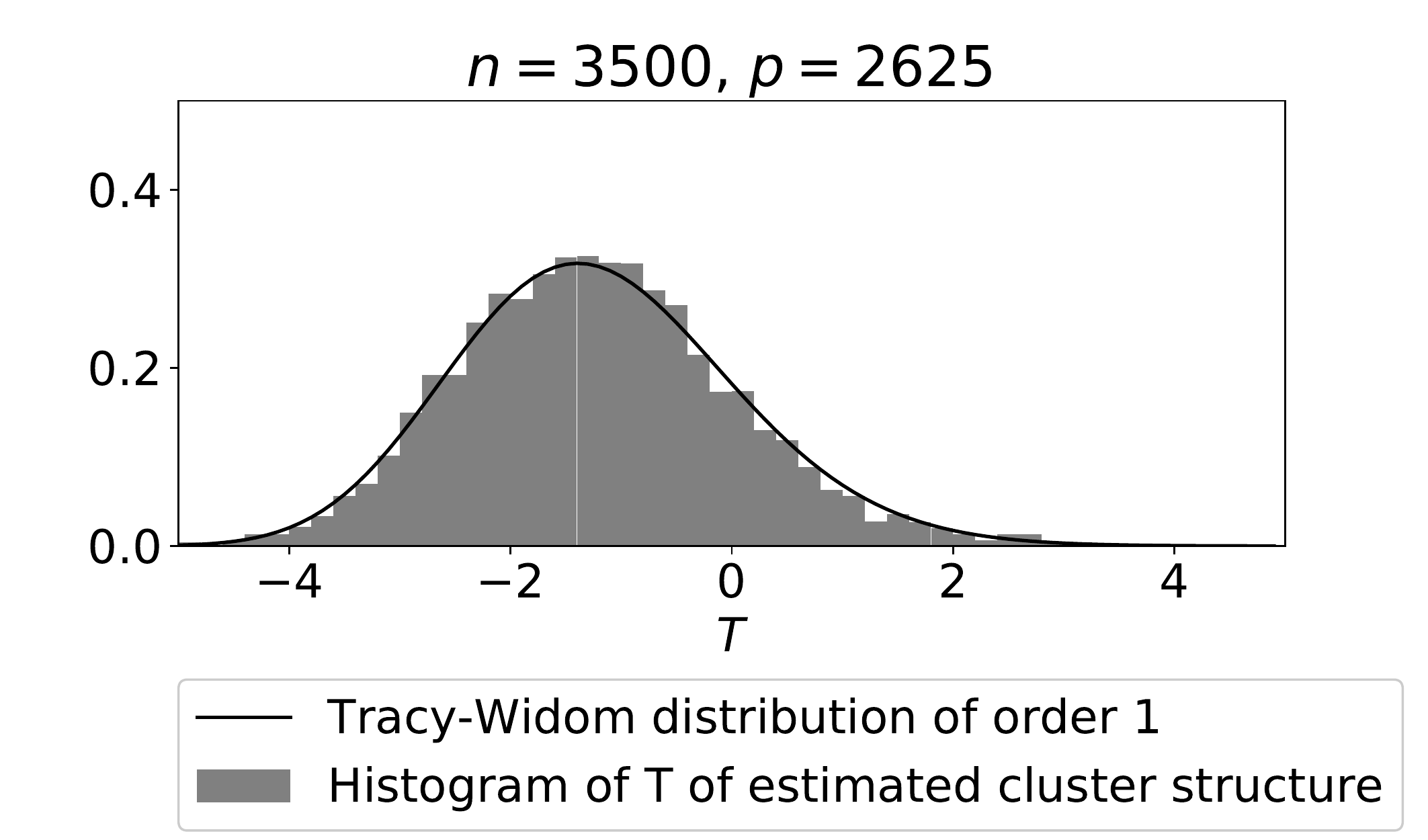}
  \includegraphics[width=0.2\hsize]{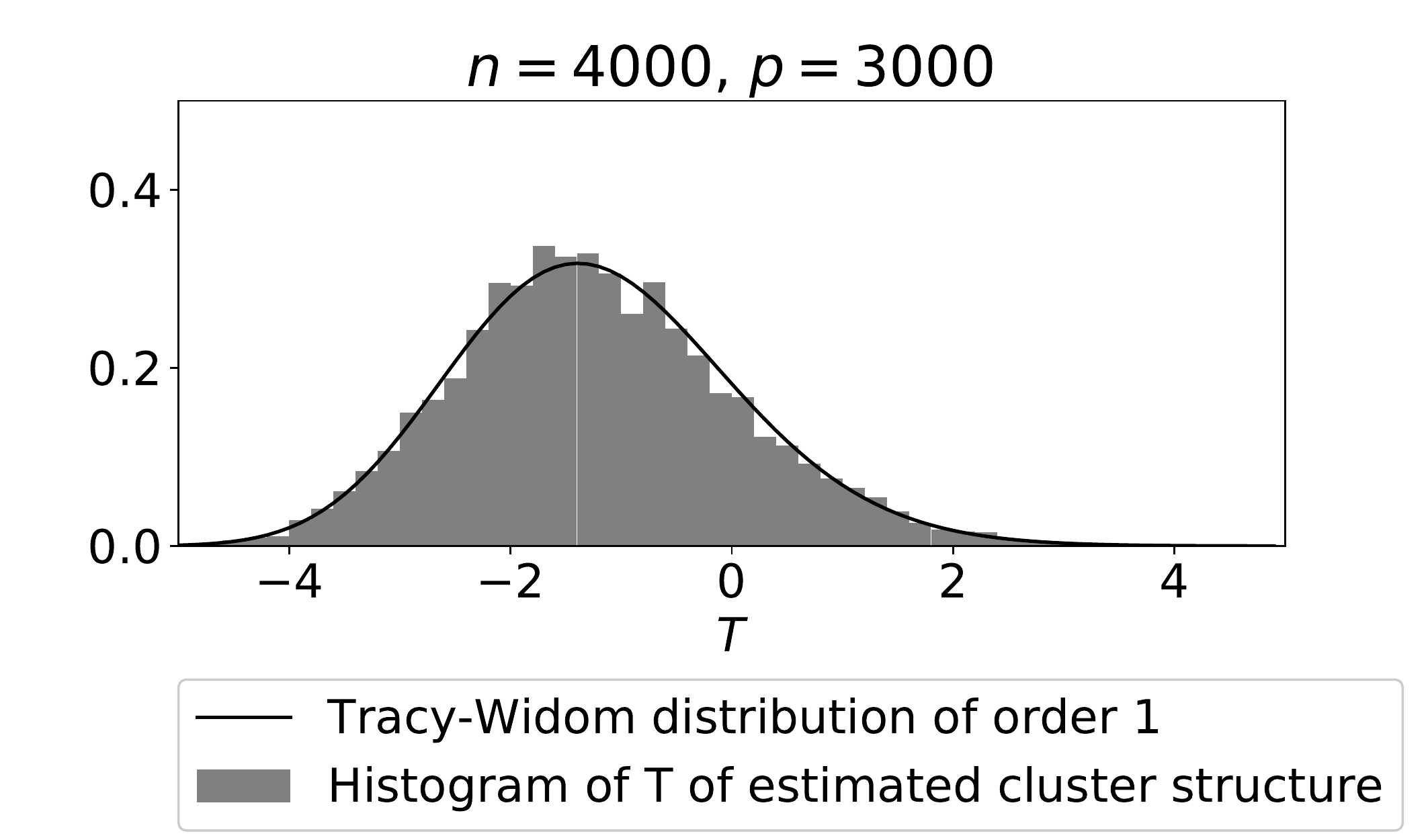}
  \includegraphics[width=0.2\hsize]{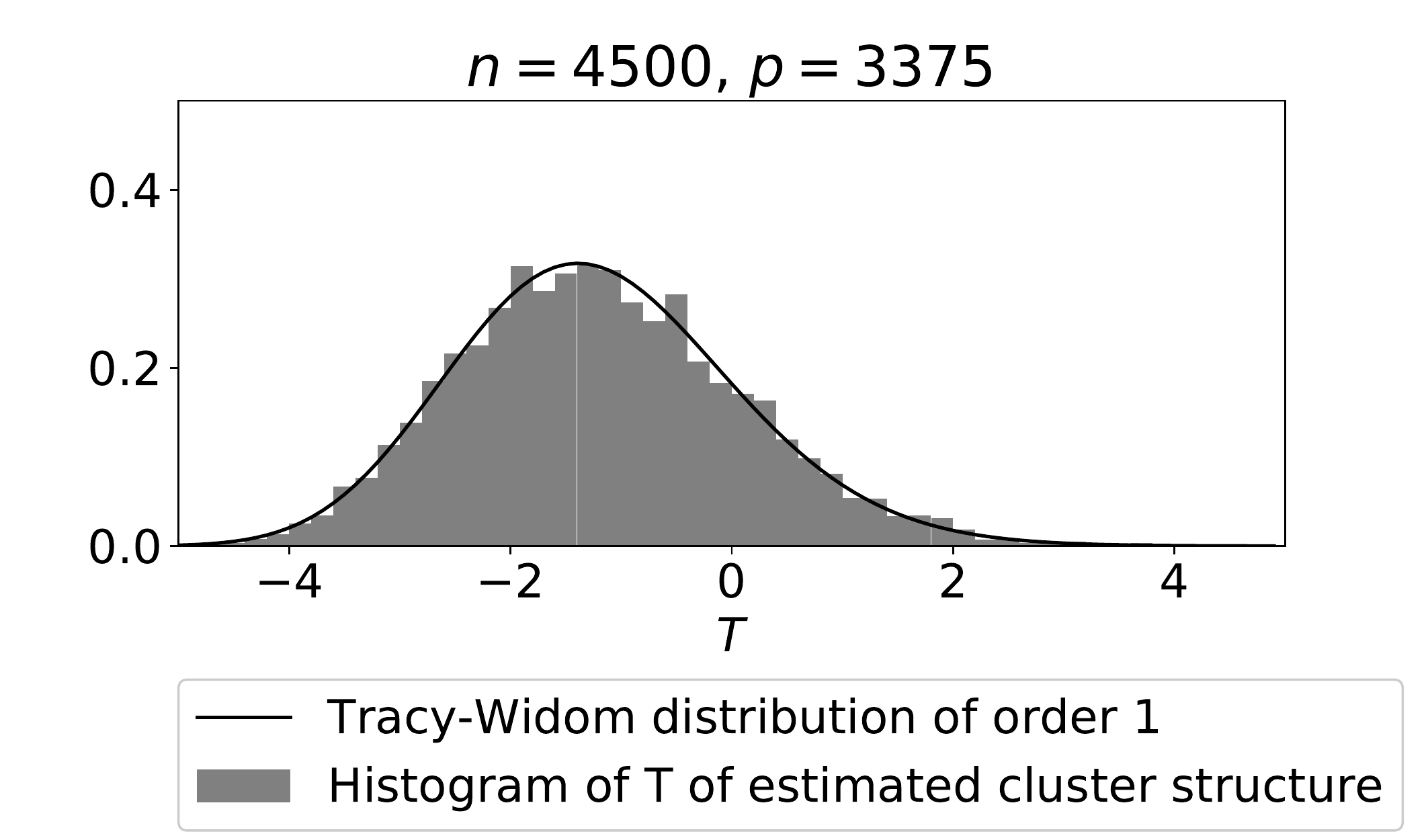}
  \includegraphics[width=0.2\hsize]{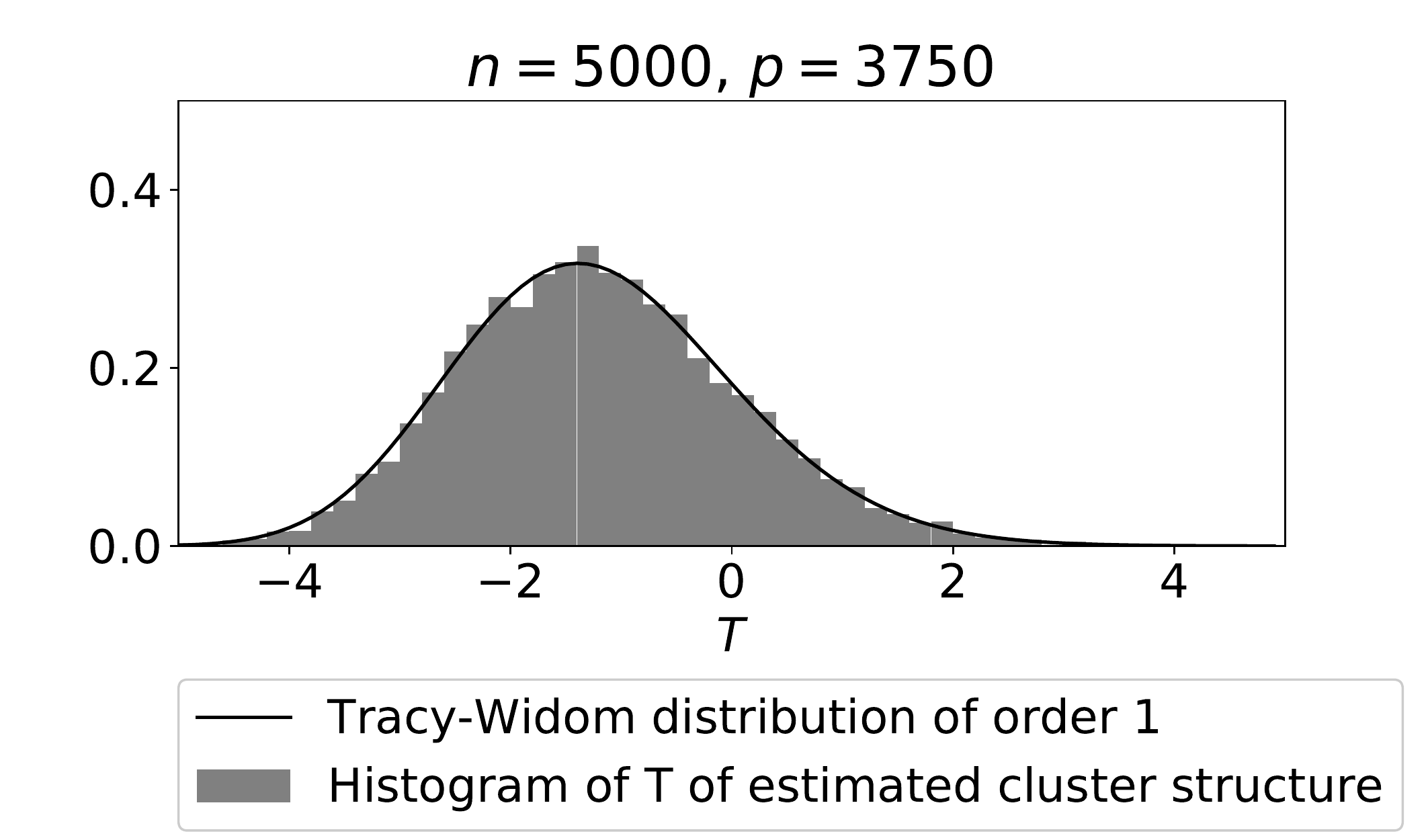}%\vspace{-3mm}
  \caption{Histogram of the proposed test statistic $T$, which was computed with \textbf{estimated} bicluster structure (\textbf{Poisson case}). The titles of the figures show the different matrix sizes.}%\vspace{3mm}
  \label{fig:preliminary_hist_poisson}
\end{figure}
%-----
\begin{figure}[p]
  \centering
  \includegraphics[width=0.31\hsize]{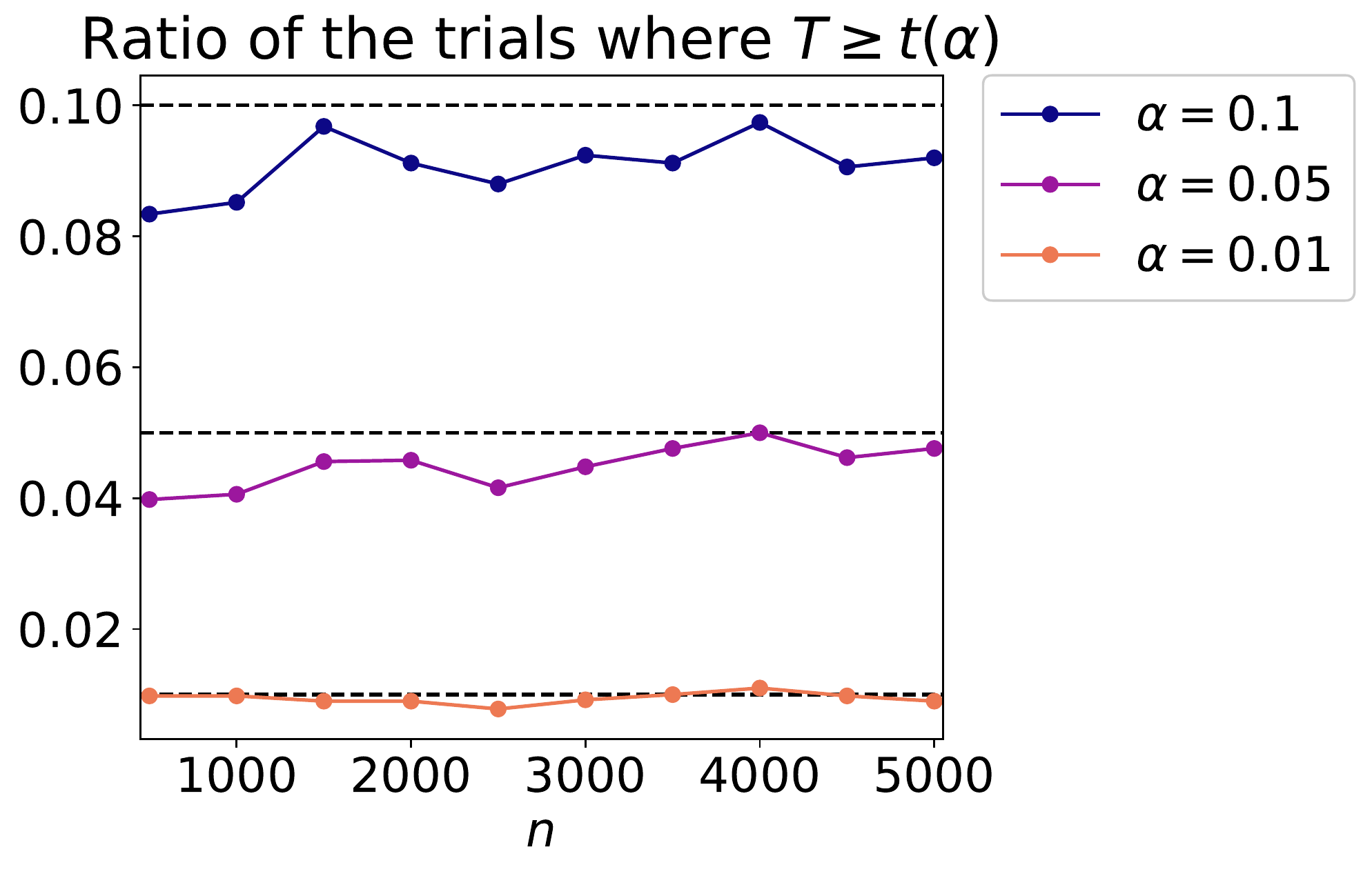}
  \includegraphics[width=0.31\hsize]{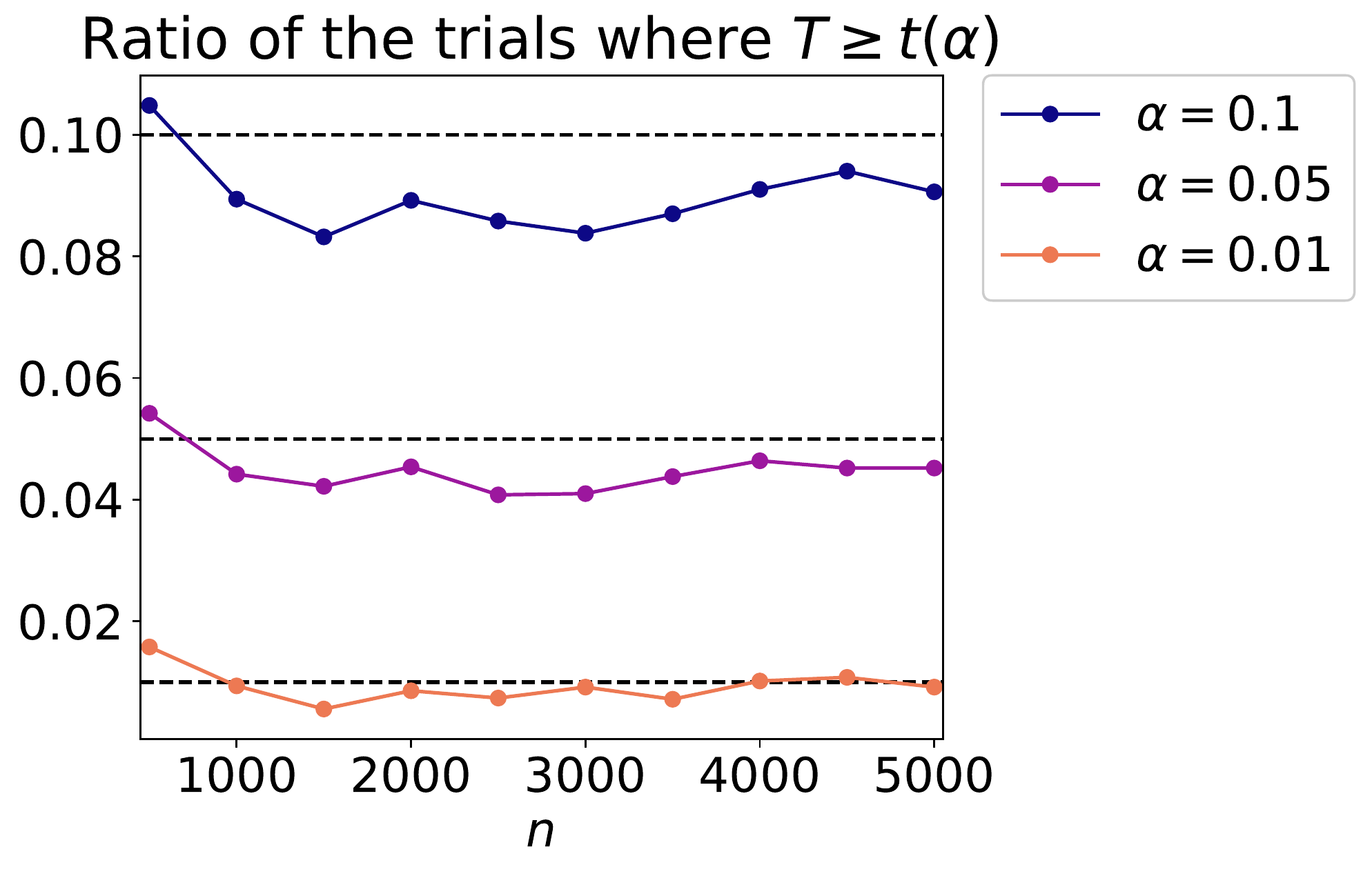}
  \includegraphics[width=0.31\hsize]{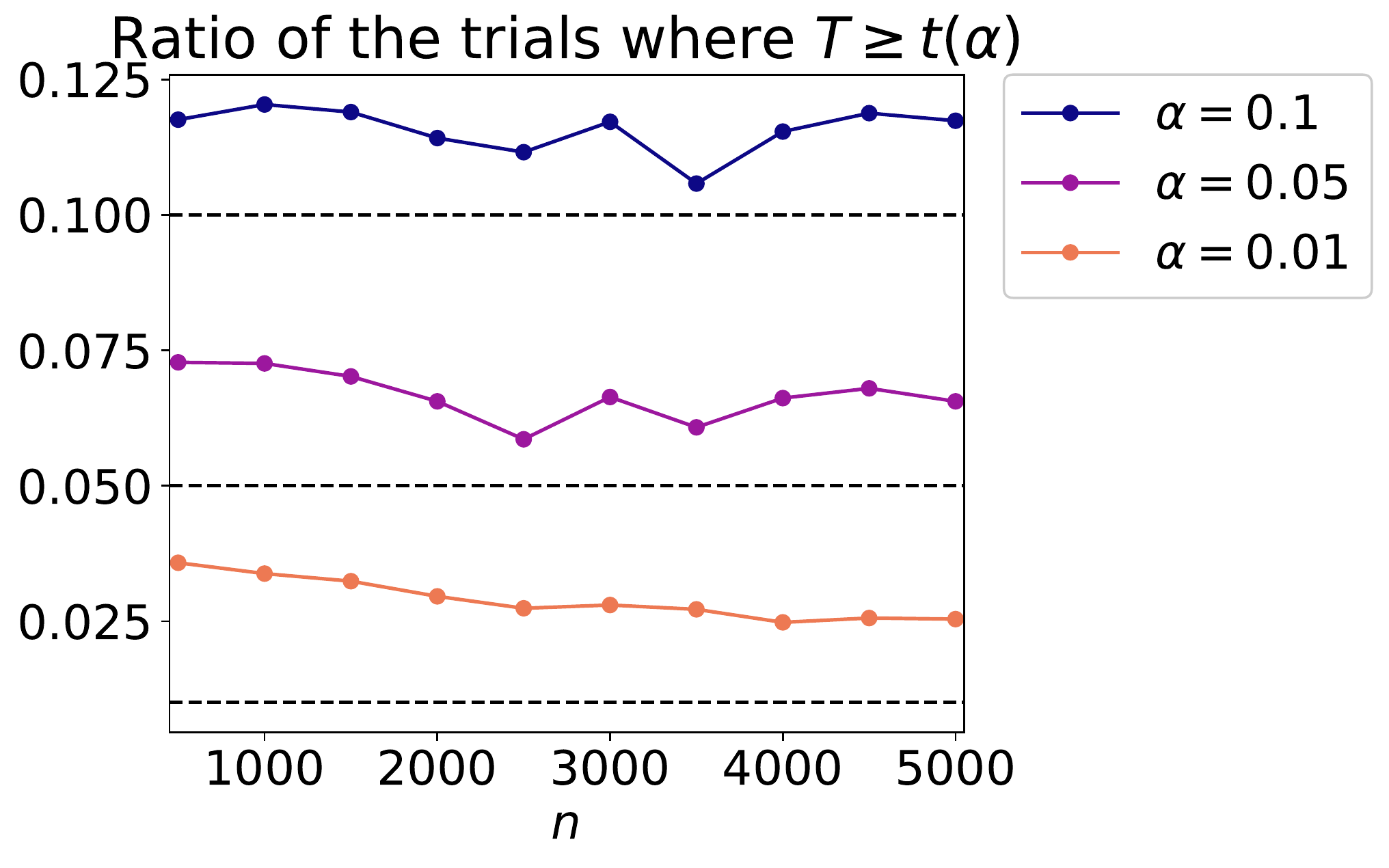}%\vspace{-3mm}
  \caption{The empirical tail probabilities of the proposed test statistic $T$ under the three settings of distributions, which was computed with \textbf{estimated bicluster structure}. The left, center, and right figures, respectively, show the results where each entry of the observed matrix $A$ was generated using Gaussian, Bernoulli, and Poisson distributions. The horizontal line indicates the row size $n$ of matrix $A$, and the dashed lines indicate the three significance levels.}\vspace{3mm}
  \label{fig:preliminaryT}
  \includegraphics[width=0.31\hsize]{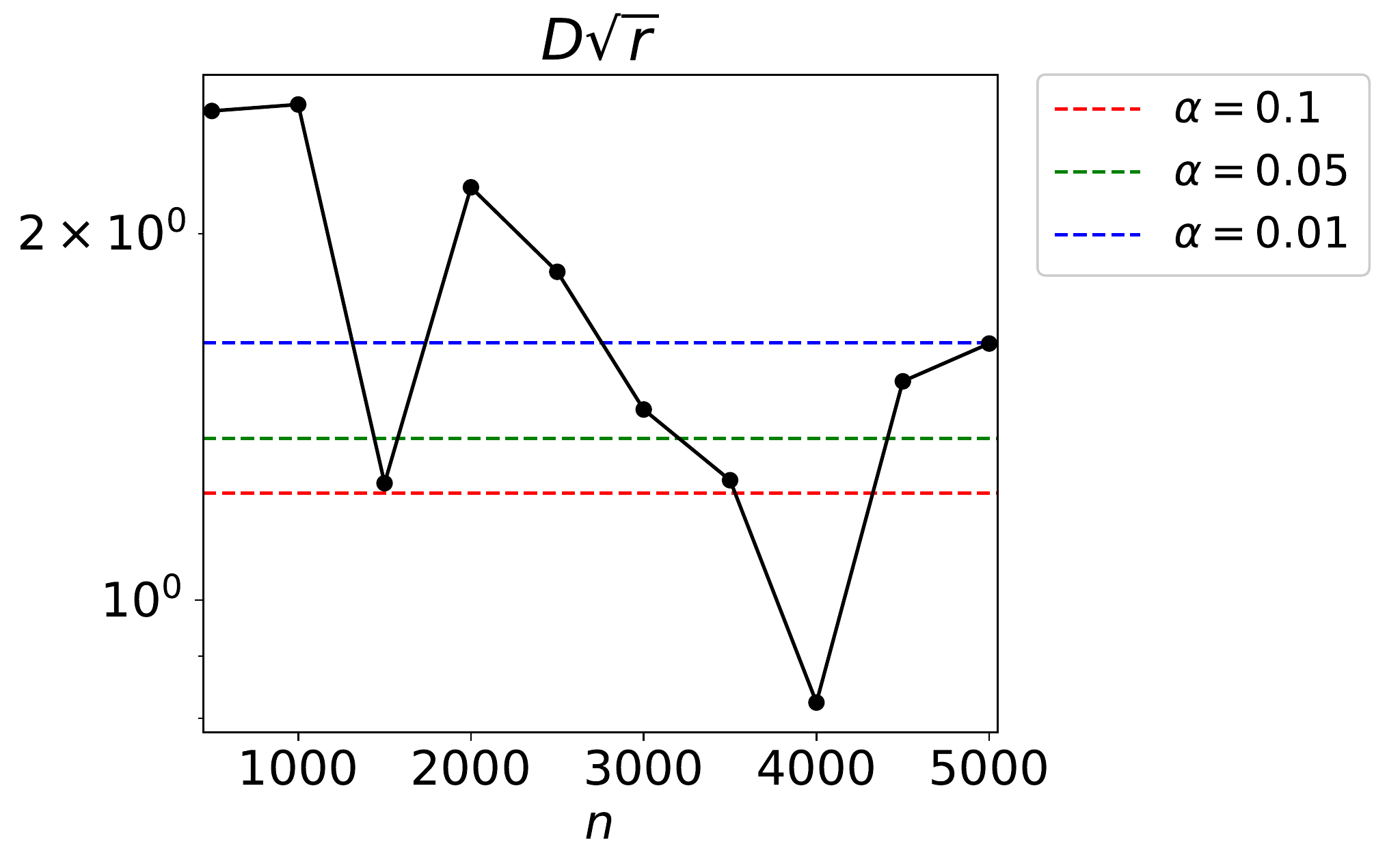}
  \includegraphics[width=0.31\hsize]{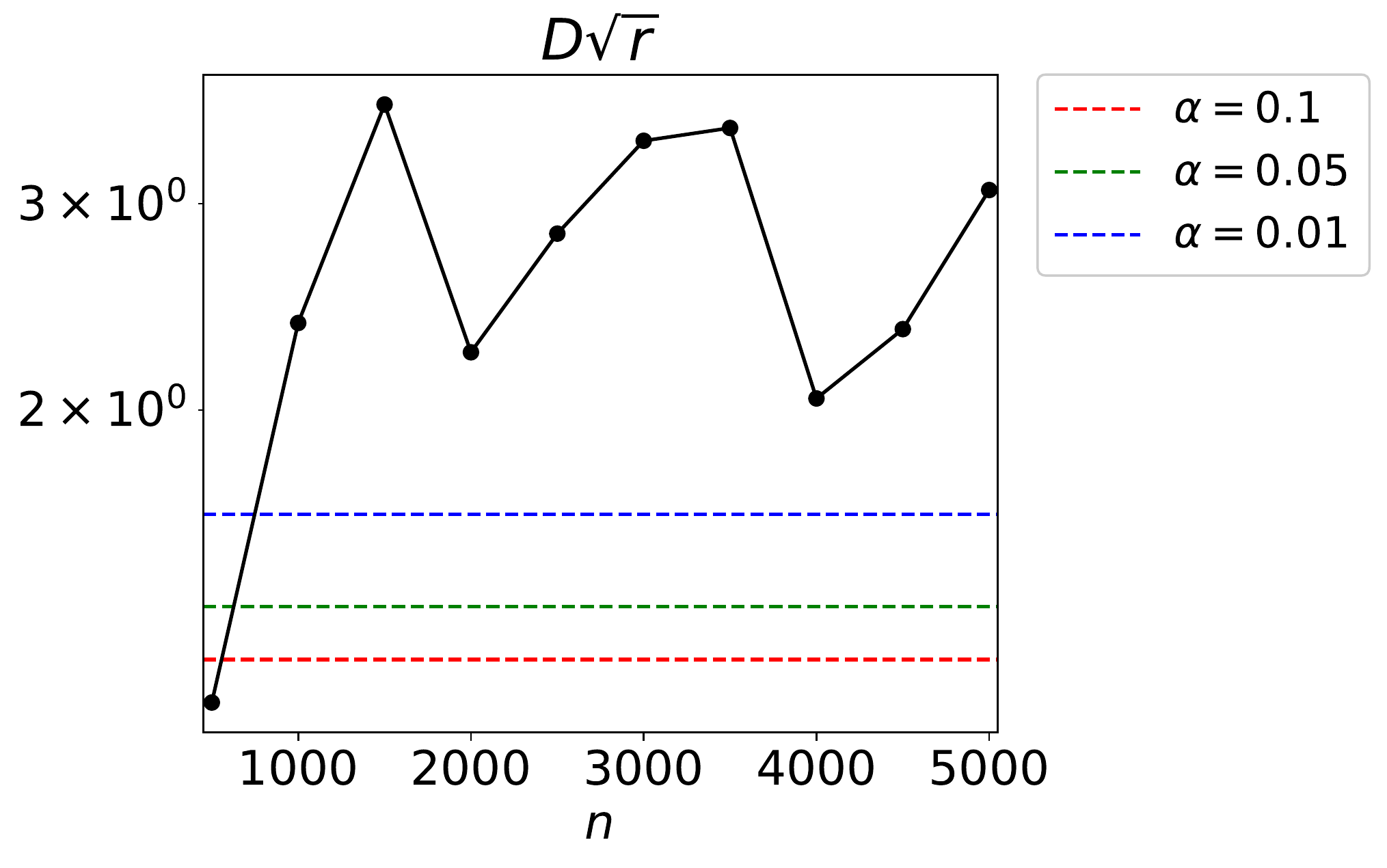}
  \includegraphics[width=0.31\hsize]{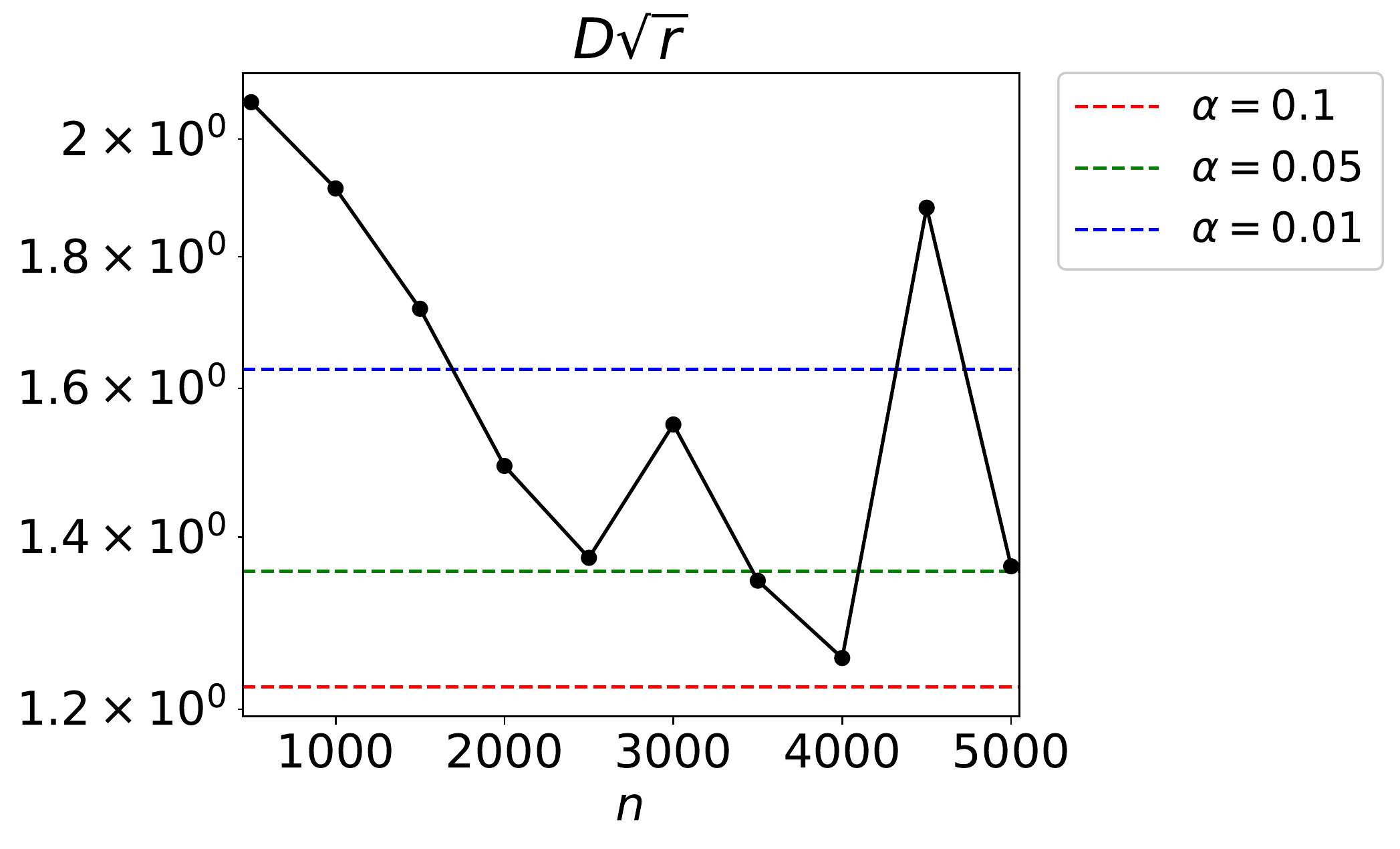}%\vspace{-3mm}
  \caption{Test statistics $D\sqrt{r}$ of the KS test \cite{Conover1999}, which was computed using an \textbf{estimated bicluster structure}. The left, center, and right figures, respectively, depict the results where each entry of the observed matrix $A$ was generated using Gaussian, Bernoulli, and Poisson distributions. Given a significance level $\alpha^{\mathrm{KS}}$ for the KS test, iff $D\sqrt{r} > \alpha^{\mathrm{KS}}$, then the null hypothesis that $T$ follows the $TW_1$ distribution is rejected.}\vspace{3mm}
  \label{fig:KStest}
  \includegraphics[width=0.31\hsize]{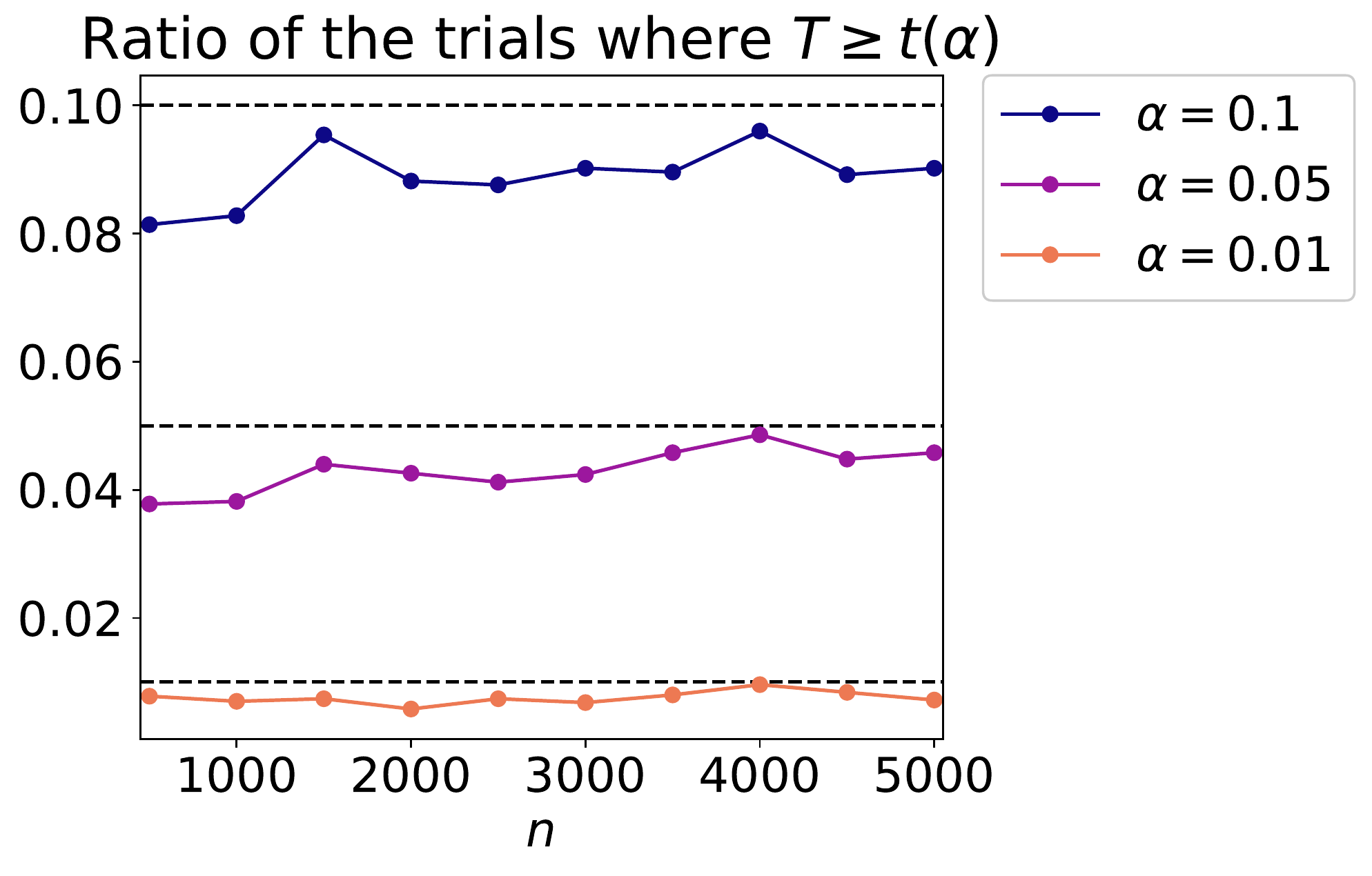}
  \includegraphics[width=0.31\hsize]{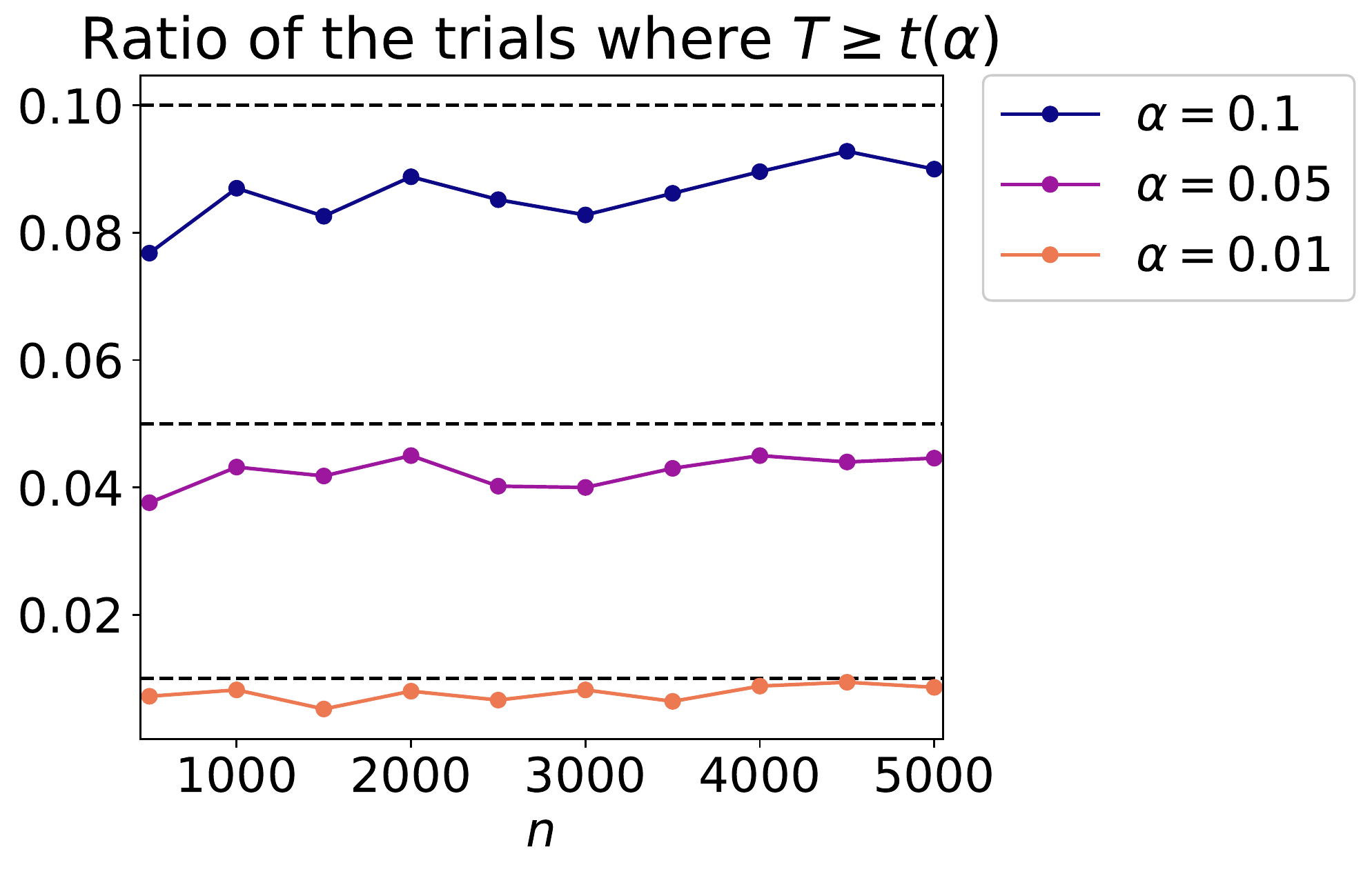}
  \includegraphics[width=0.31\hsize]{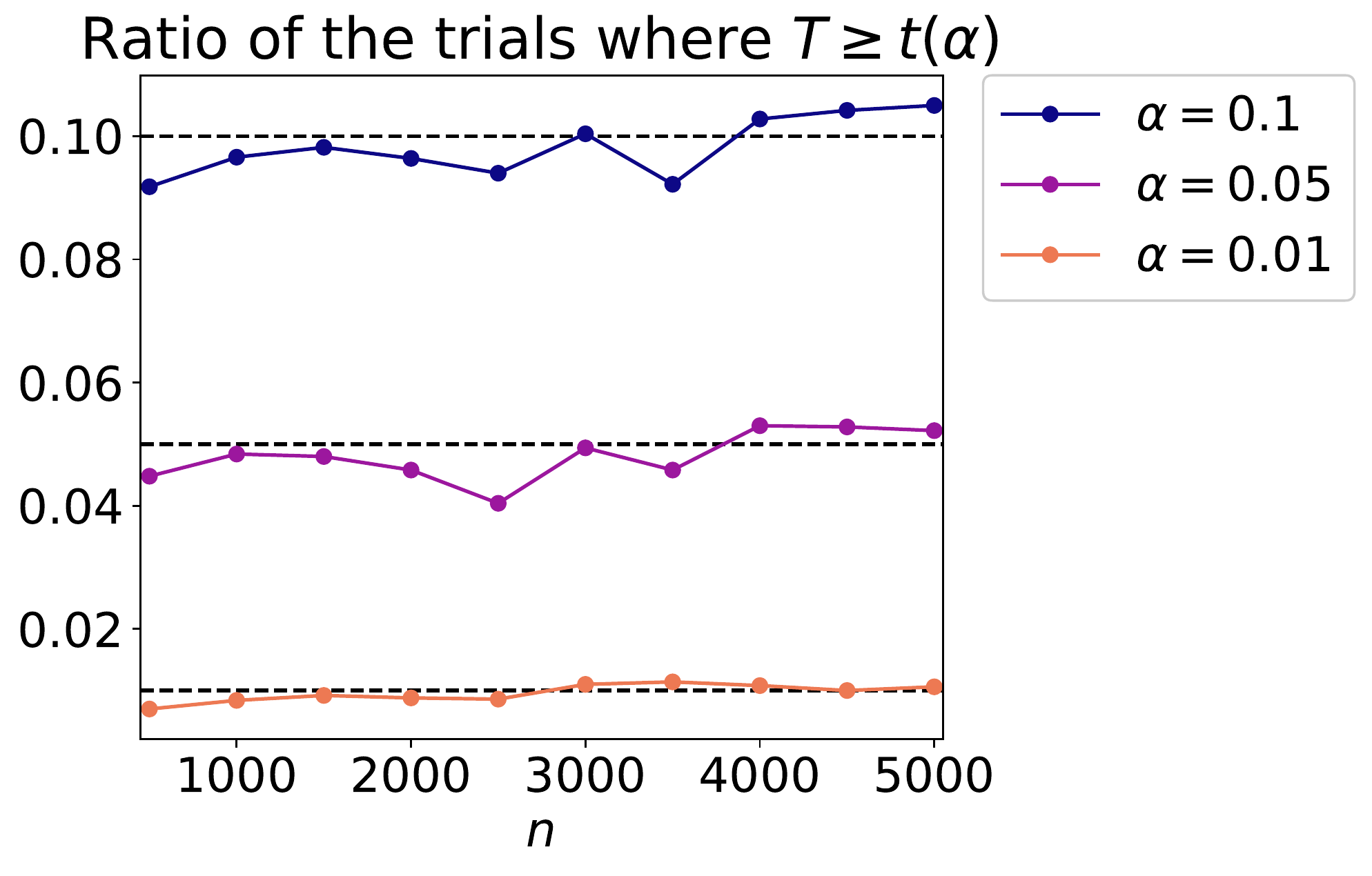}%\vspace{-3mm}
  \caption{The empirical tail probabilities of the proposed test statistic $T$ under the three settings of distributions, which was computed with \textbf{null bicluster structure}. The left, center, and right figures, respectively, represent the results where each entry of the observed matrix $A$ was generated using Gaussian, Bernoulli, and Poisson distributions.}\vspace{3mm}
  \label{fig:preliminaryT_null}
  \includegraphics[width=0.31\hsize]{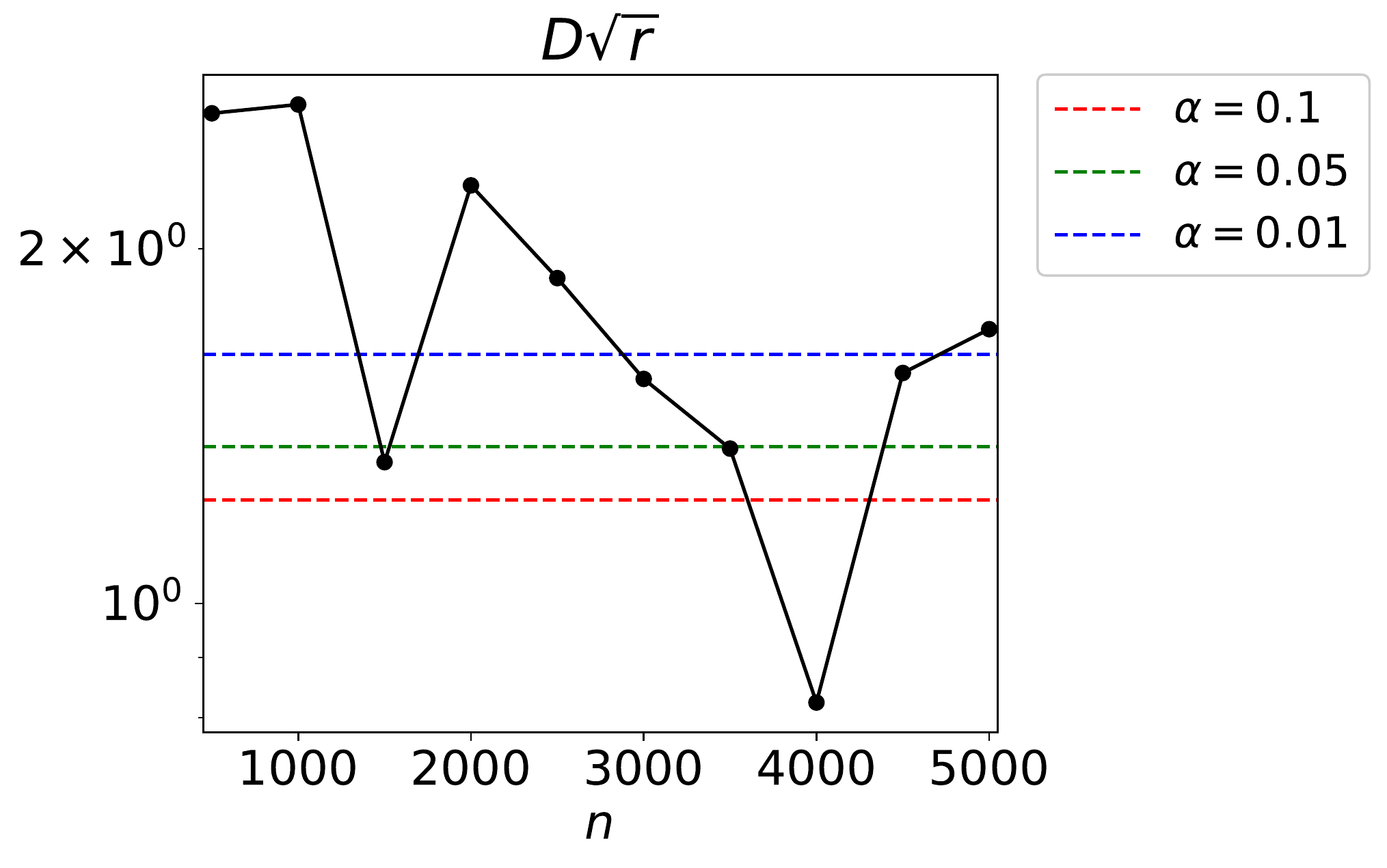}
  \includegraphics[width=0.31\hsize]{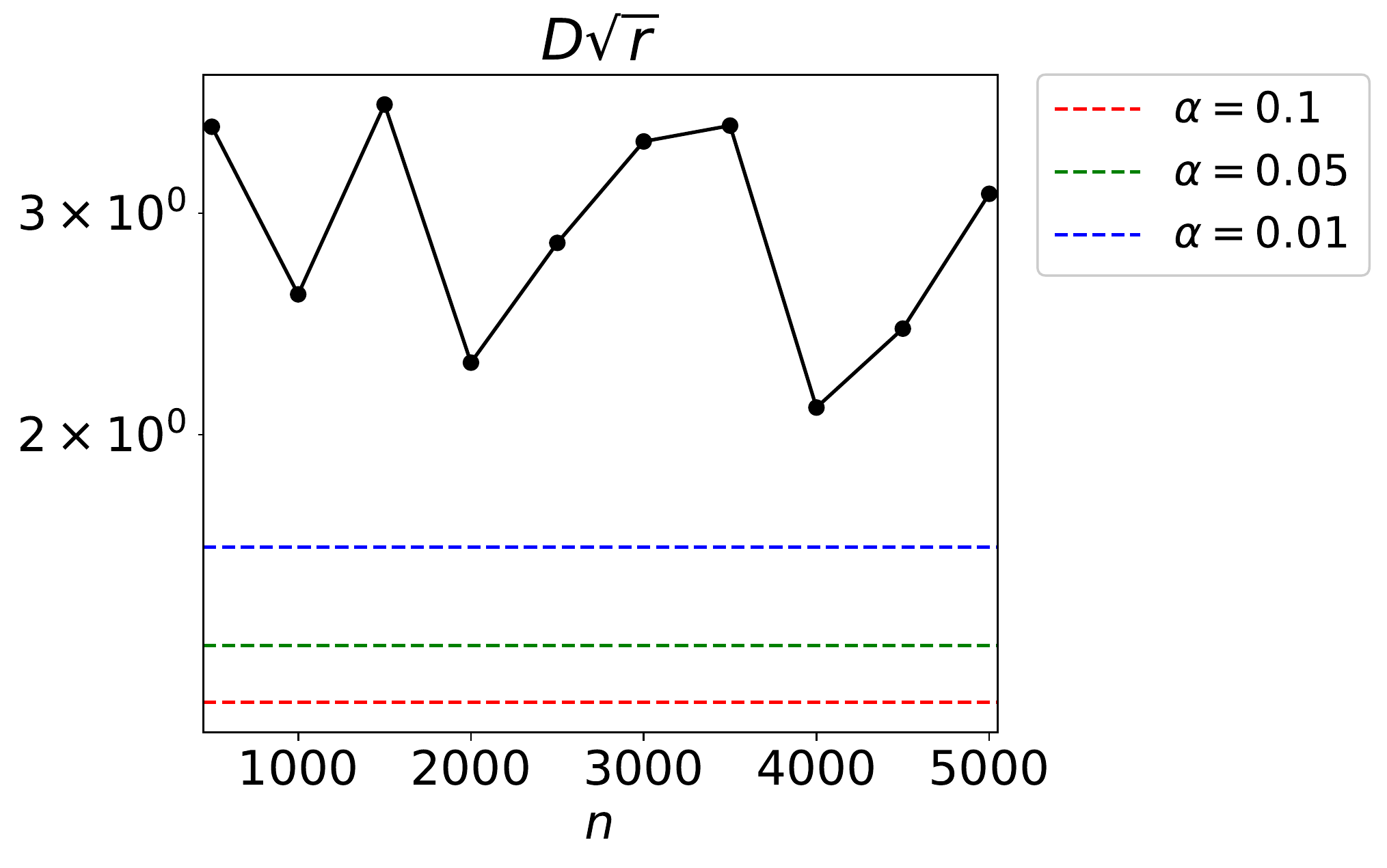}
  \includegraphics[width=0.31\hsize]{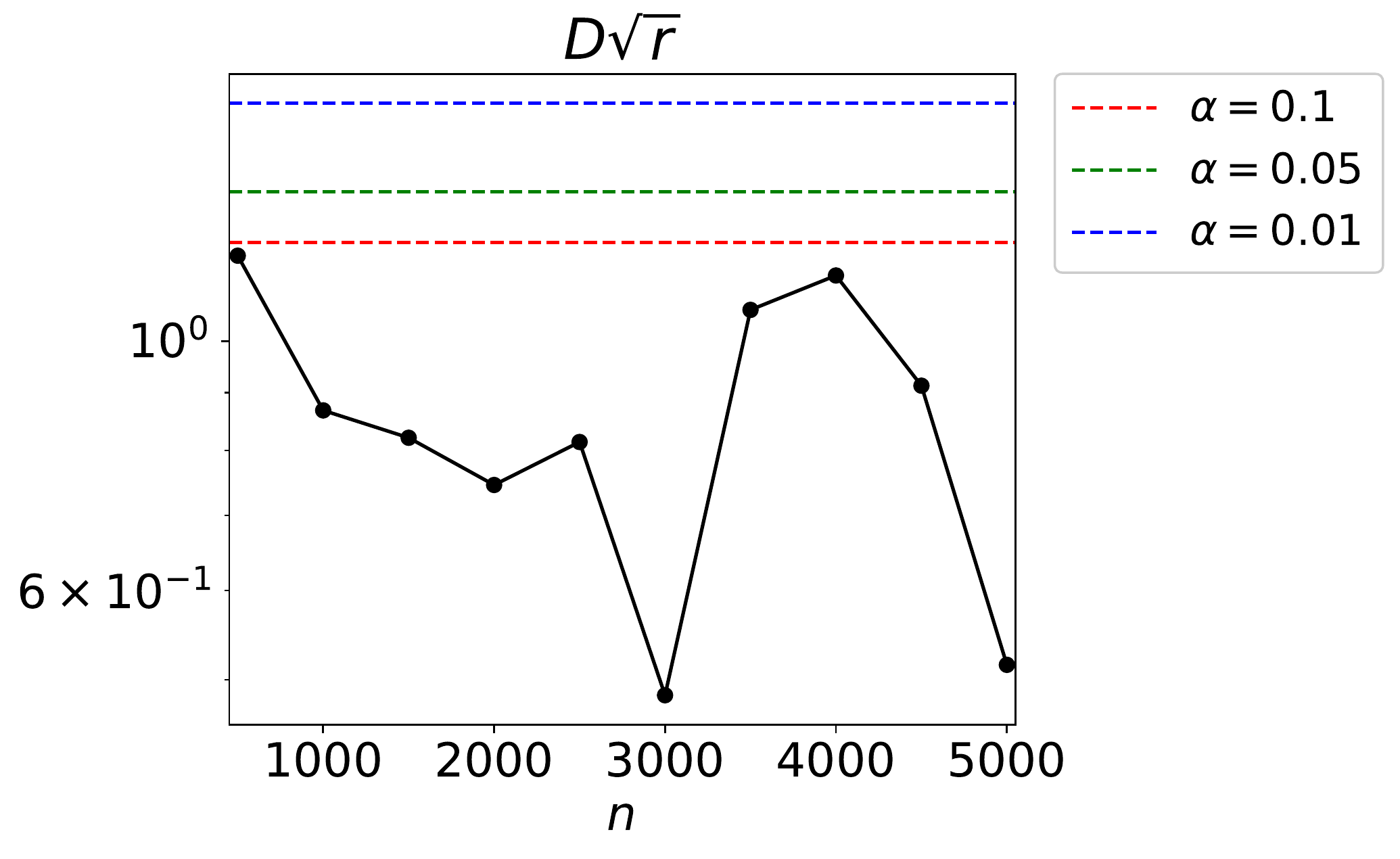}%\vspace{-3mm}
  \caption{Test statistics $D\sqrt{r}$ of the KS test \cite{Conover1999}, which was computed with \textbf{null bicluster structure}. The left, center, and right figures, respectively, show the results where each entry of the observed matrix $A$ was generated using Gaussian, Bernoulli, and Poisson distributions.}
  \label{fig:KStest_null}
\end{figure}
%-----

\subsection{The asymptotic behavior of test statistic $T$ in the unrealizable case}
\label{sec:exp_alternative}

Second, we consider the unrealizable cases (i.e., $K > K_0$). Specifically, under the assumptions firstly that the total number of biclusters and background submatrices $(K+H)$ is a fixed constant that does not depend on the matrix size and secondly that the minimum row and column sizes of these submatrices ($n_{\mathrm{min}}$ and $p_{\mathrm{min}}$, respectively) satisfy $n_{\mathrm{min}} = \Omega_p (m)$ and $p_{\mathrm{min}} = \Omega_p (m)$, from (\ref{eq:Z_hat_op_lower2}) and (\ref{eq:T_un_up}), we have $T = \Theta_p \left( m^{\frac{5}{3}} \right)$. 

Based on the same procedure outlined in Sect.~\ref{sec:exp_null}, we generated Gaussian, Bernoulli, and Poisson random matrices with three biclusters (i.e., $K=3$), estimated their bicluster structures, and computed the test statistics. We used the same settings as in Sect.~\ref{sec:exp_null} for (1) the null parameters of three distributions (\ref{eq:b_gauss}), (\ref{eq:b_bernoulli}), and (\ref{eq:b_poisson}), (2) the procedure to generate the observed matrices, and (3) the SA-based submatrix localization algorithm. In this experiment, we tried the following $10$ sets of matrix sizes: $(n, p) = (200 \times i, 150 \times i)$, $i = 1, \dots, 10$. For each combination of the distribution and matrix size settings, we randomly generated $100$ data matrices $A$, estimated their bicluster structures with $K_0 = 0, 1, \dots, K - 1$, and checked the average behavior of test statistic $T$. 

Figure \ref{fig:unrealizableT2} represents the asymptotic behavior of the mean of the proposed test statistic $T$ divided by $n^{\frac{5}{3}}$ under unrealizable settings. This figure illustrates that $T$ increases in proportion to $m^{\frac{5}{3}}$ in all the settings of distributions, as shown in the first paragraph of this section. 

%-----
\begin{figure}[t]
  \centering
  \includegraphics[width=0.31\hsize]{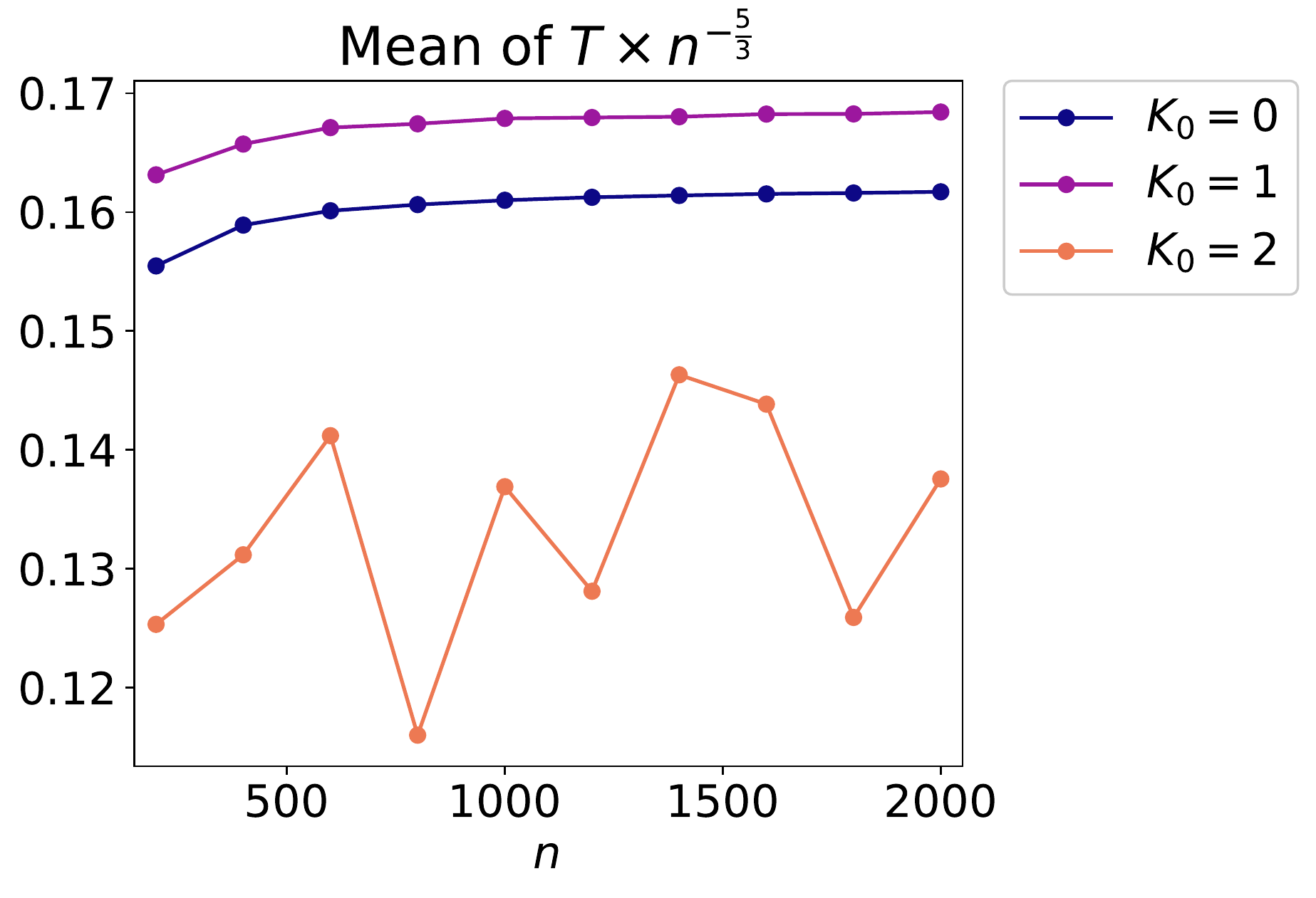}
  \includegraphics[width=0.31\hsize]{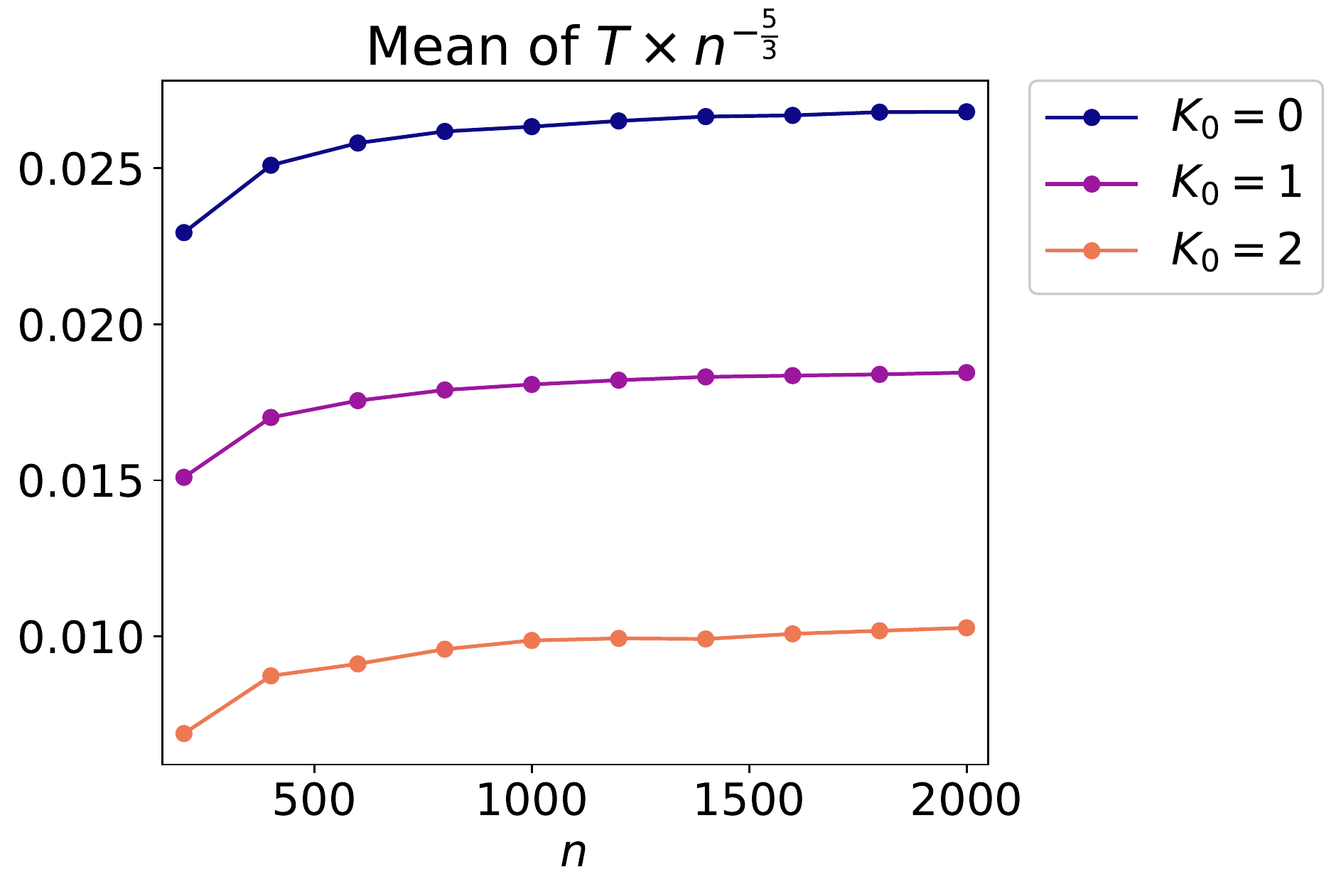}
  \includegraphics[width=0.31\hsize]{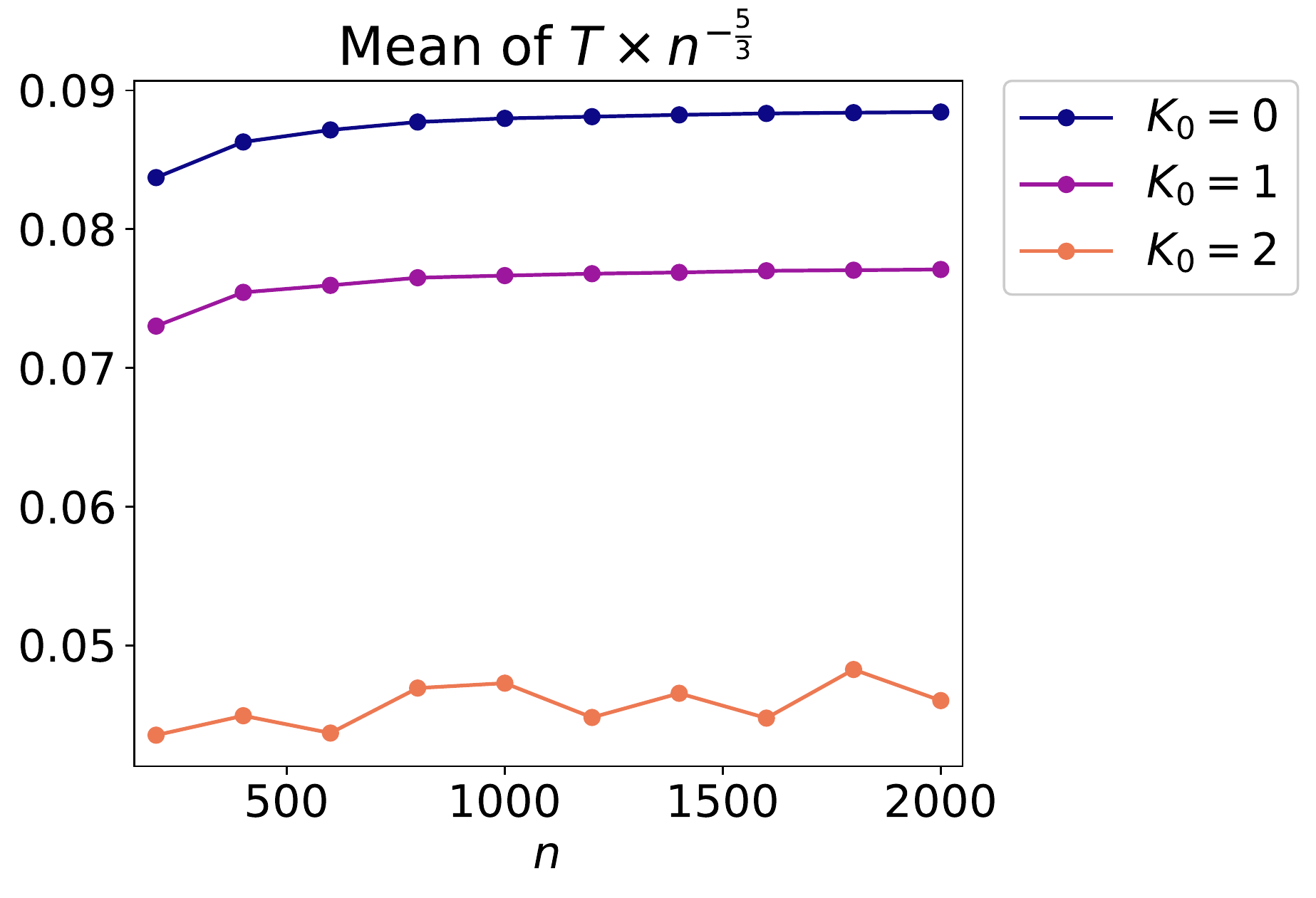}
  \caption{Mean of the proposed test statistics $T$ divided by $n^{\frac{5}{3}}$ in the unrealizable case for $100$ trials. The null number of biclusters was set at $K=3$. The left, center, and right figures, respectively, represent the results where each entry of observed matrix $A$ was generated using Gaussian, Bernoulli, and Poisson distributions. The horizontal line represents the row size $n$ of the observed matrix. Each plotted line represents a result for a given hypothetical number of biclusters $K_0$.}
  \label{fig:unrealizableT2}
\end{figure}
%-----

\subsection{The accuracy of the proposed test in selecting the number of biclusters $K$}
\label{sec:accuracy}

Third, we checked the accuracy of the proposed test in selecting the number of biclusters $K$, by using the synthetic Gaussian, Bernoulli, and Poisson data matrices that were generated using the procedure outlined in Sect.~\ref{sec:exp_null}. We set the null number of biclusters at $K=3$. 
As for the null mean parameters of the three distributions, we tried the ten settings $\{\bm{b}^{(1)}, \dots, \bm{b}^{(10)}\}$, where 
\begin{align}
&\bm{b}^{(t)} \equiv \left( 1 - \frac{t}{10} \right) \left( \bm{b} - 0.5 \begin{bmatrix}
1 & \dots & 1
\end{bmatrix}^{\top} \right) + 0.5 \begin{bmatrix}
1 & \dots & 1
\end{bmatrix}^{\top}\ \mathrm{(Gaussian\ and\ Bernoulli\ cases)}, \\
&\bm{b}^{(t)} \equiv \left( 1 - \frac{t}{10} \right) \left( \bm{b} - 5 \begin{bmatrix}
1 & \dots & 1
\end{bmatrix}^{\top} \right) + 5 \begin{bmatrix}
1 & \dots & 1
\end{bmatrix}^{\top}\ \mathrm{(Poisson\ case)}, 
\end{align}
for all $t = 1, \dots, 10$. In the above settings, we set $\bm{b}$ at the same vector as given in (\ref{eq:b_gauss}), (\ref{eq:b_bernoulli}), and (\ref{eq:b_poisson}) for each setting of distributions. For the standard deviation parameter $\bm{s}$, we used the same setting as in (\ref{eq:b_gauss}). 
Aside from these model parameters, we used the same settings as in Sect.~\ref{sec:exp_null} for (1) the procedure to generate the observed matrices, and (2) the SA-based submatrix localization algorithm. Figures \ref{fig:A_example_acc_g}, \ref{fig:A_example_acc_b}, and \ref{fig:A_example_acc_p}, respectively, depict the examples of generated data matrices in Gaussian, Bernoulli, and Poisson cases. In this experiment, we tried the following $10$ sets of matrix sizes: $(n, p) = (40 \times i, 30 \times i)$, $i = 1, \dots, 10$. For each combination of the distribution and matrix size settings, we randomly generated $1,000$ data matrices $A$ and applied the proposed sequential test with a significance level of $\alpha = 0.01$ and the hypothetical number of biclusters $K_0 = 0, 1, 2, \dots$. 

%-----
\begin{figure}[t]
  \centering
  \includegraphics[width=0.17\hsize]{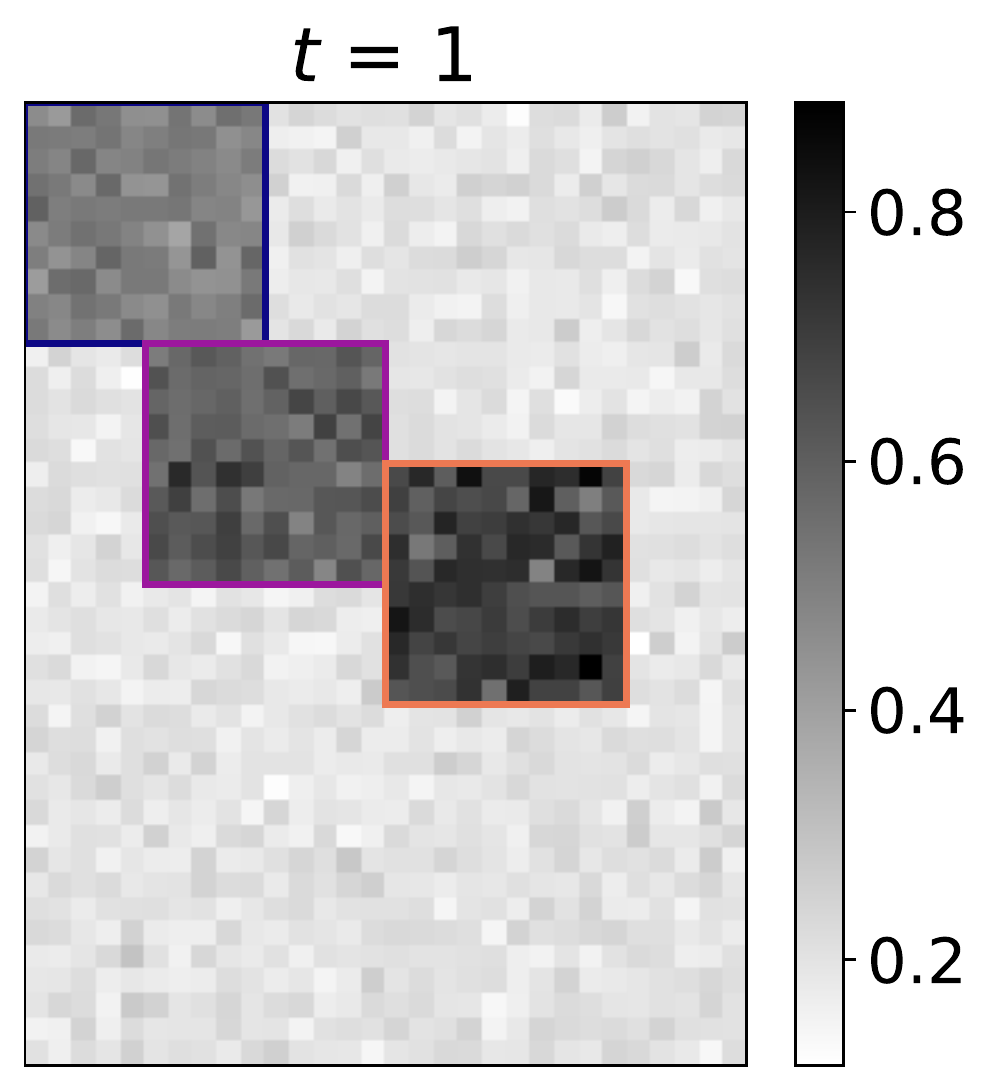}
  \includegraphics[width=0.17\hsize]{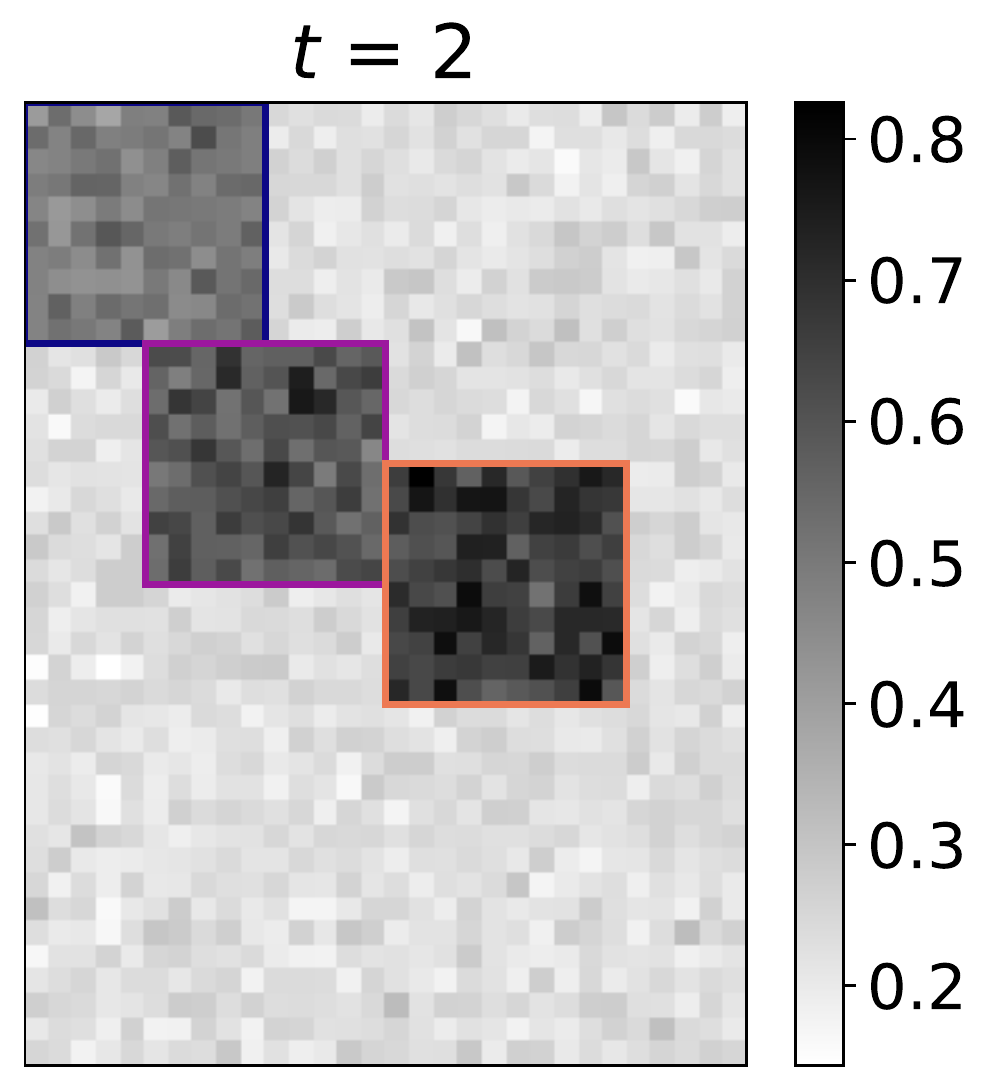}
  \includegraphics[width=0.17\hsize]{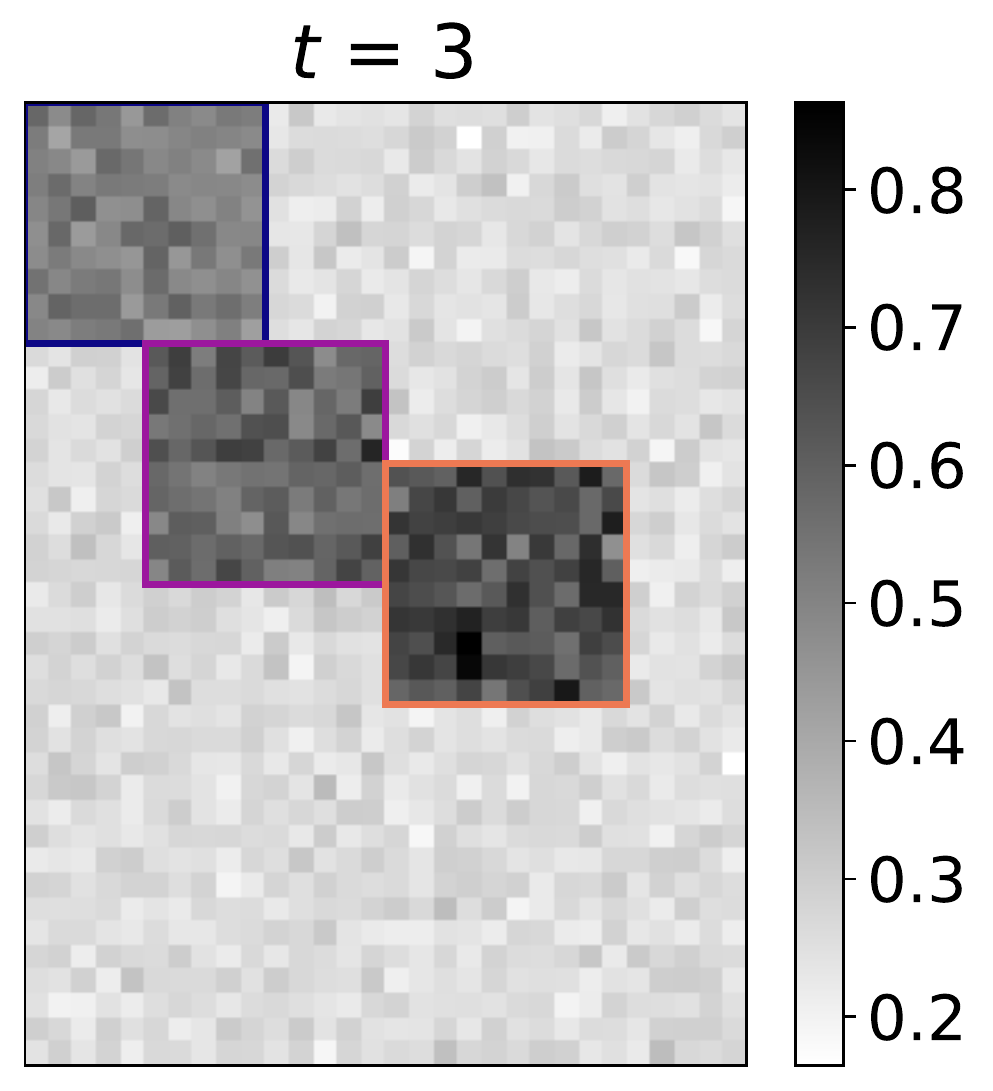}
  \includegraphics[width=0.17\hsize]{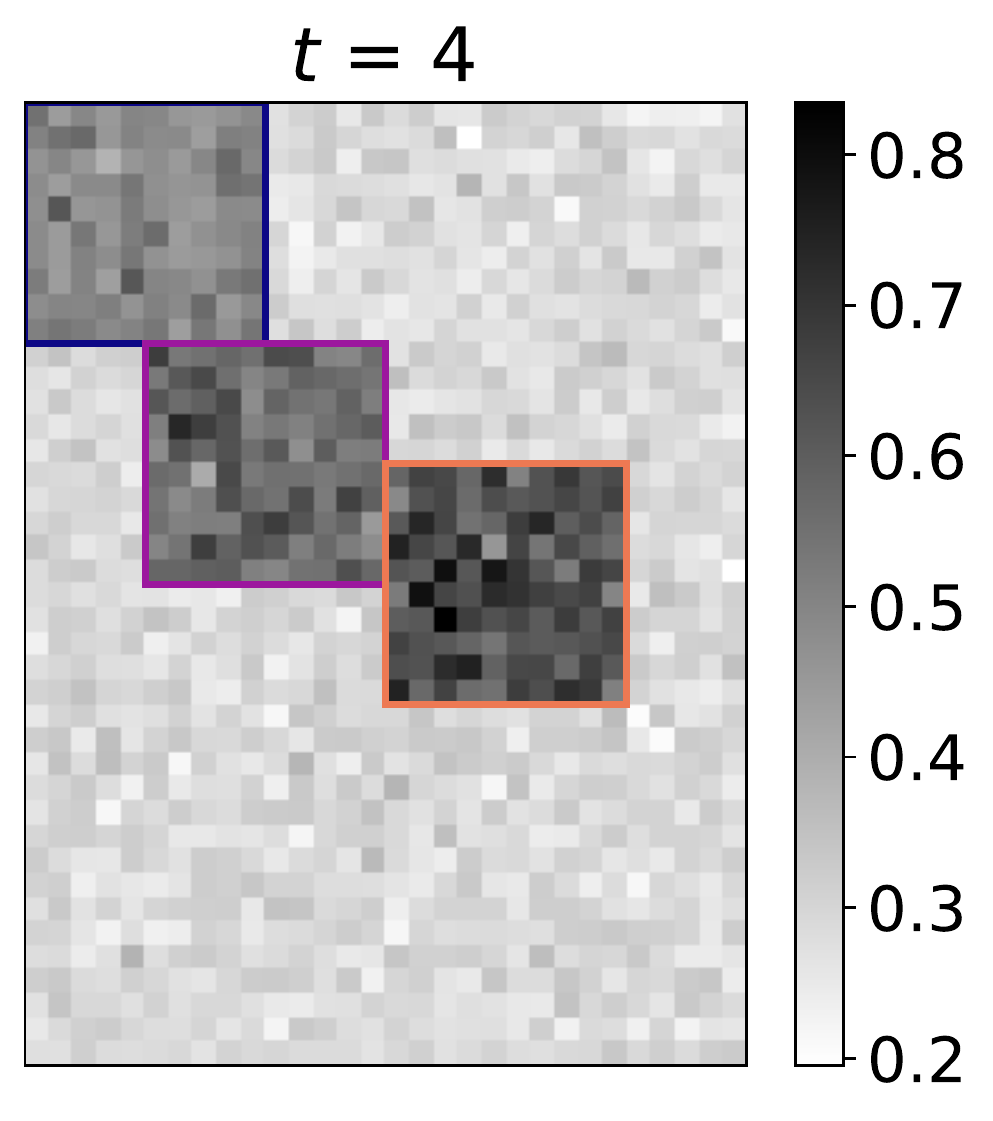}
  \includegraphics[width=0.17\hsize]{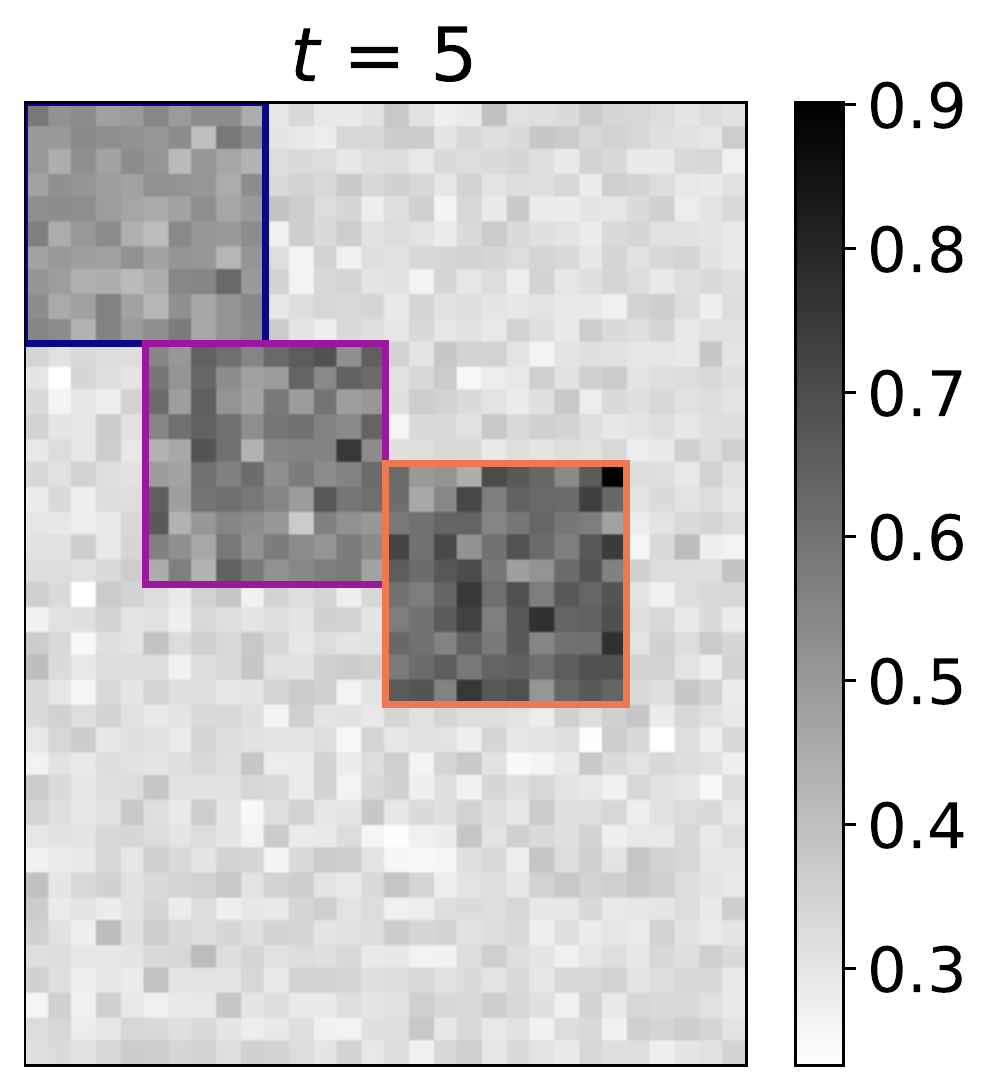}\\
  \includegraphics[width=0.17\hsize]{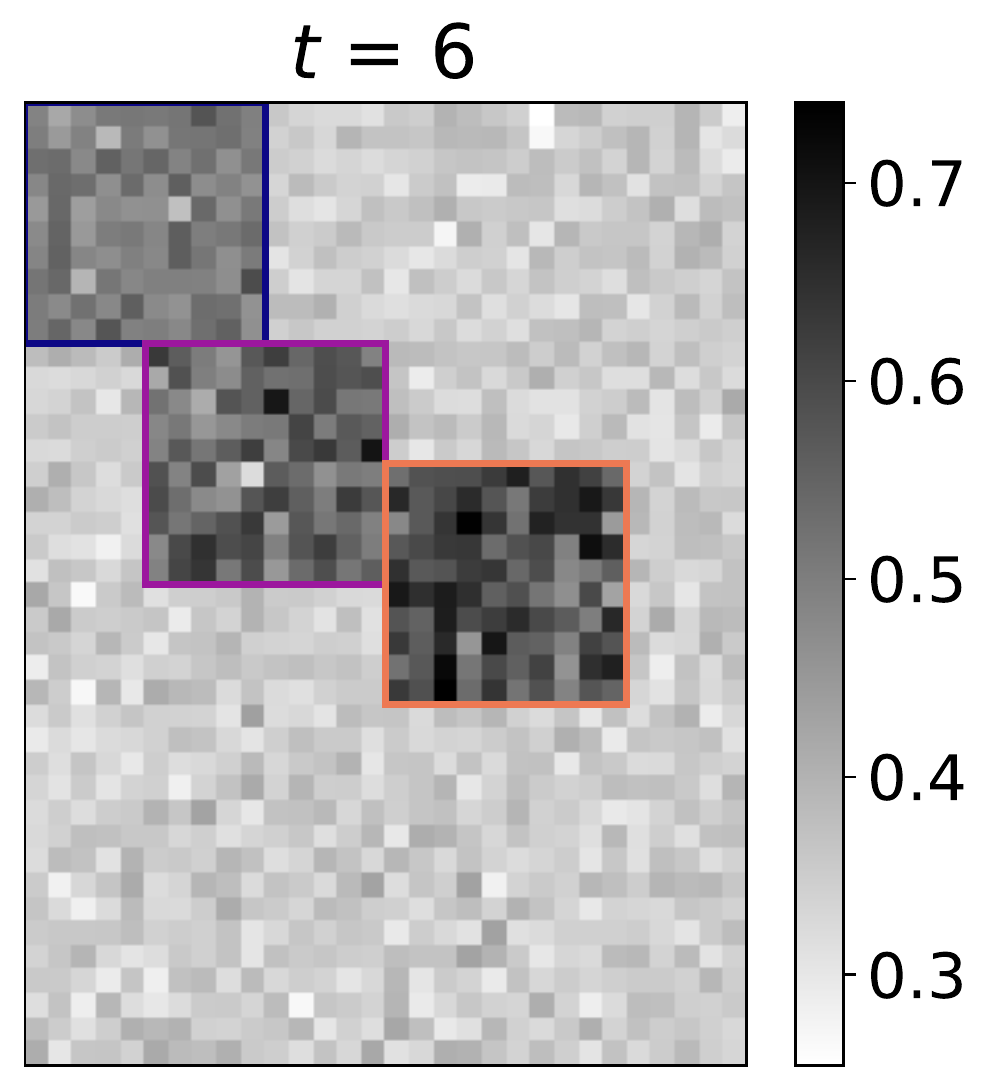}
  \includegraphics[width=0.17\hsize]{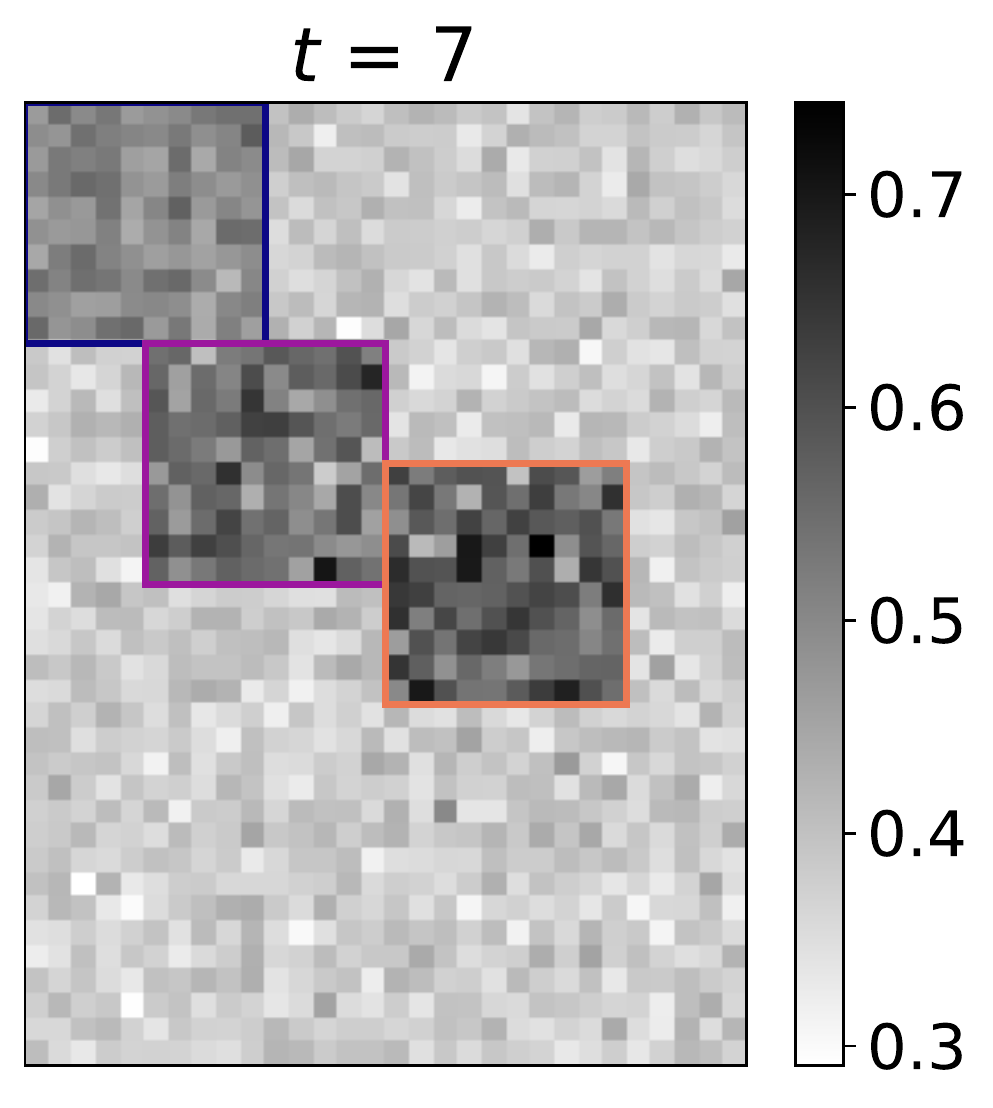}
  \includegraphics[width=0.17\hsize]{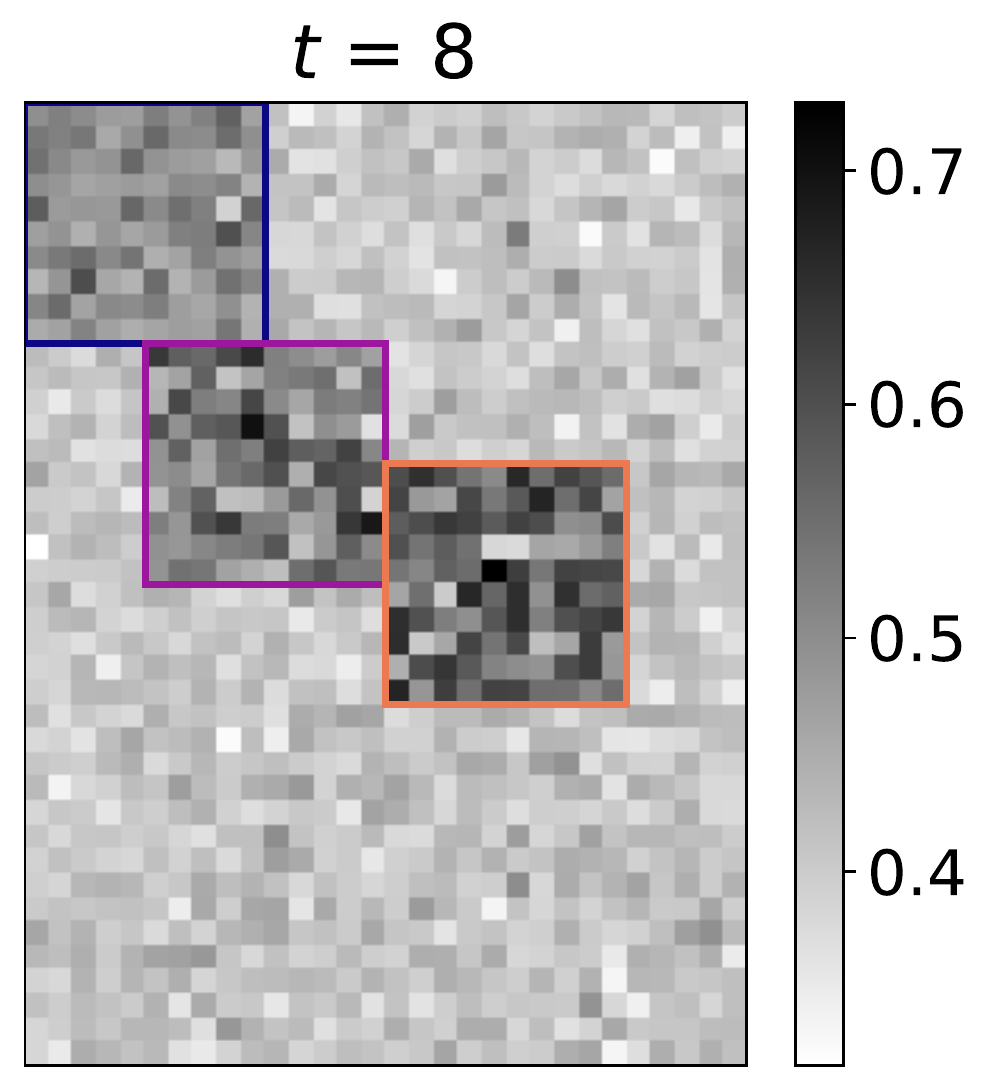}
  \includegraphics[width=0.17\hsize]{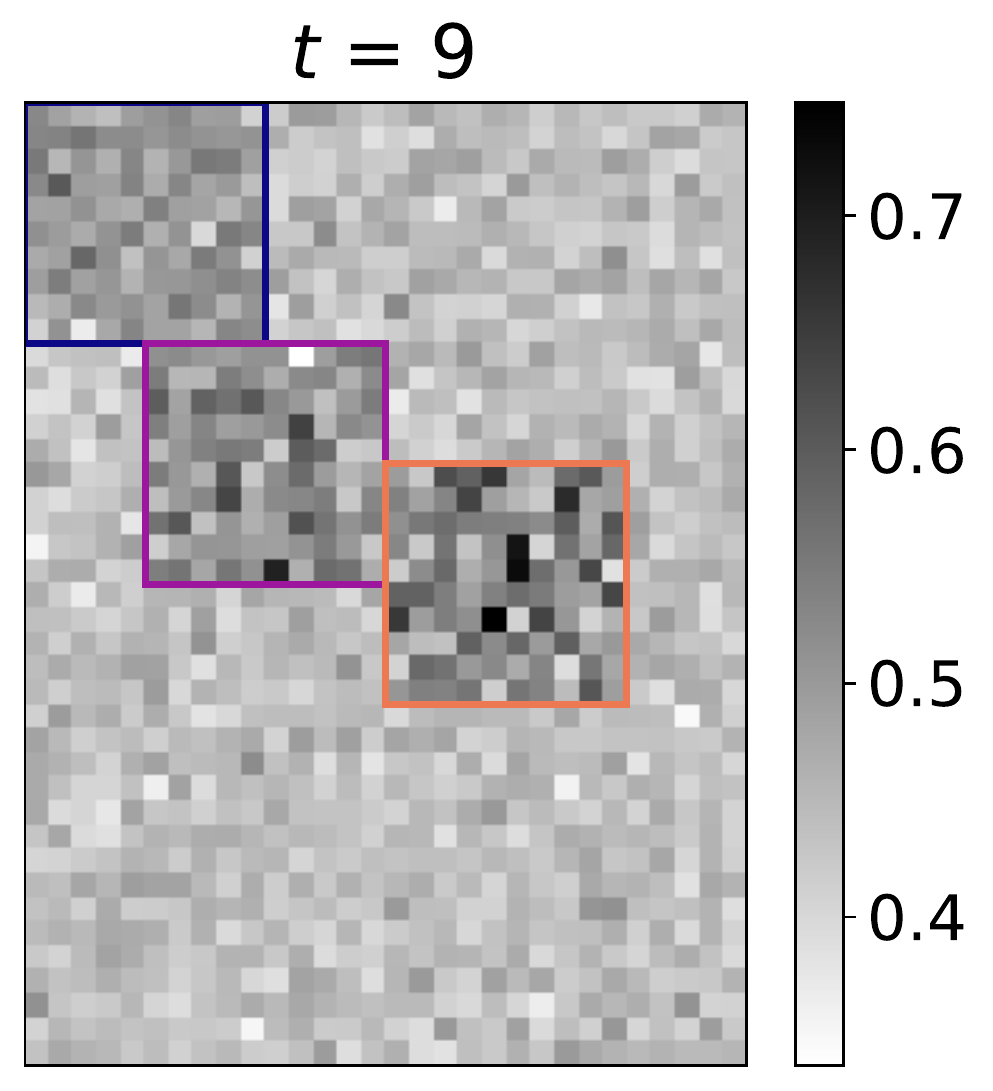}
  \includegraphics[width=0.17\hsize]{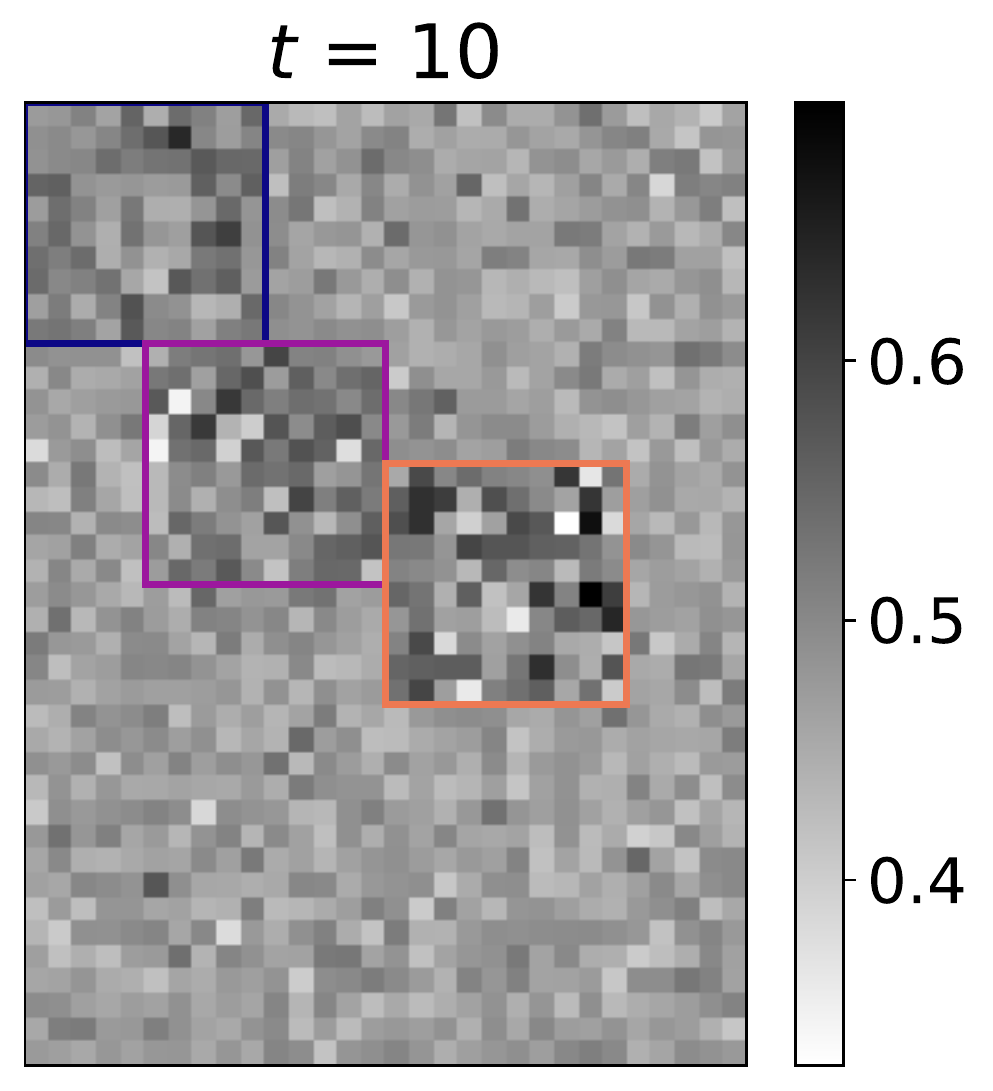}
  \caption{Examples of the observed data matrices for $t = 1, \dots, 10$ (\textbf{Gaussian case}). The colored boxes show the null bicluster structure.}\vspace{3mm}
  \label{fig:A_example_acc_g}
  \includegraphics[width=0.17\hsize]{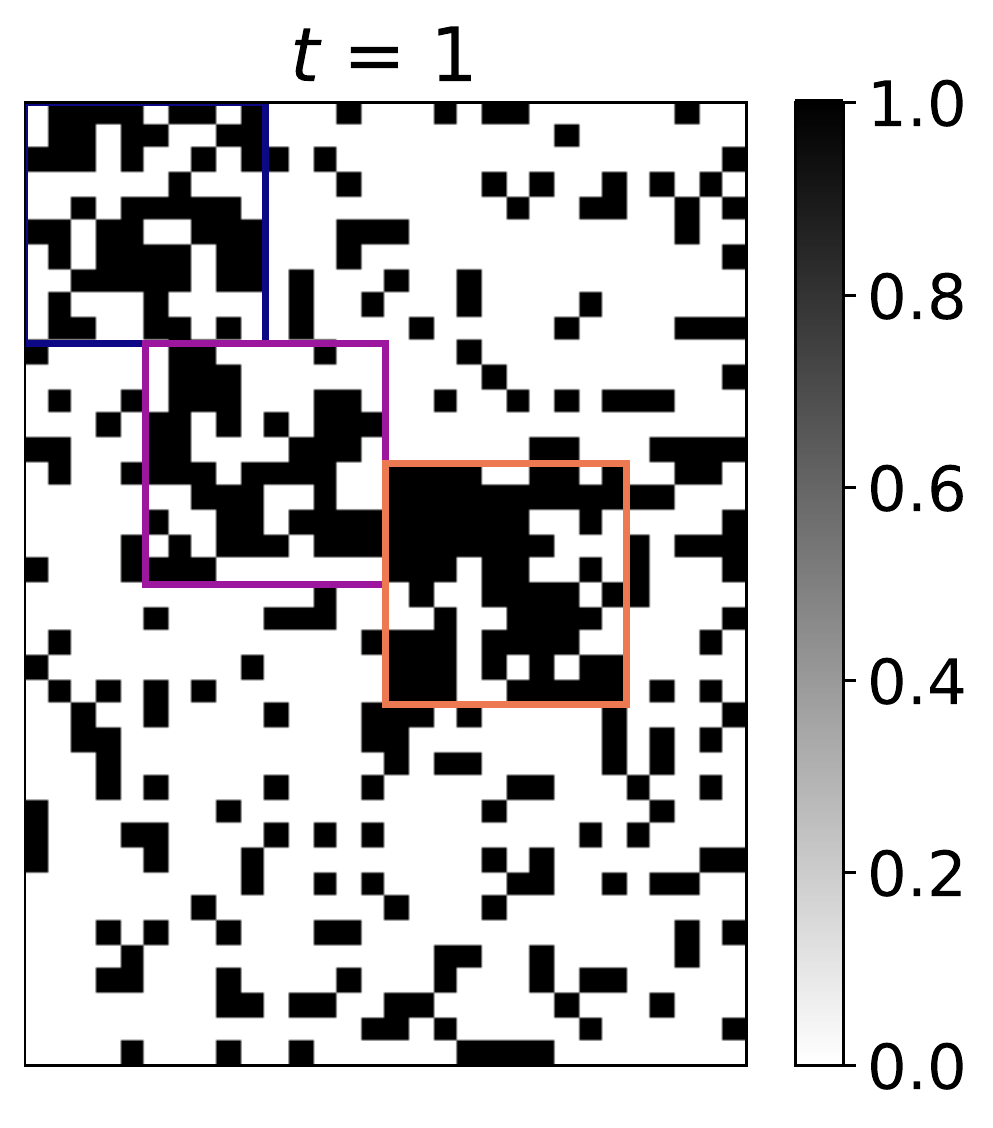}
  \includegraphics[width=0.17\hsize]{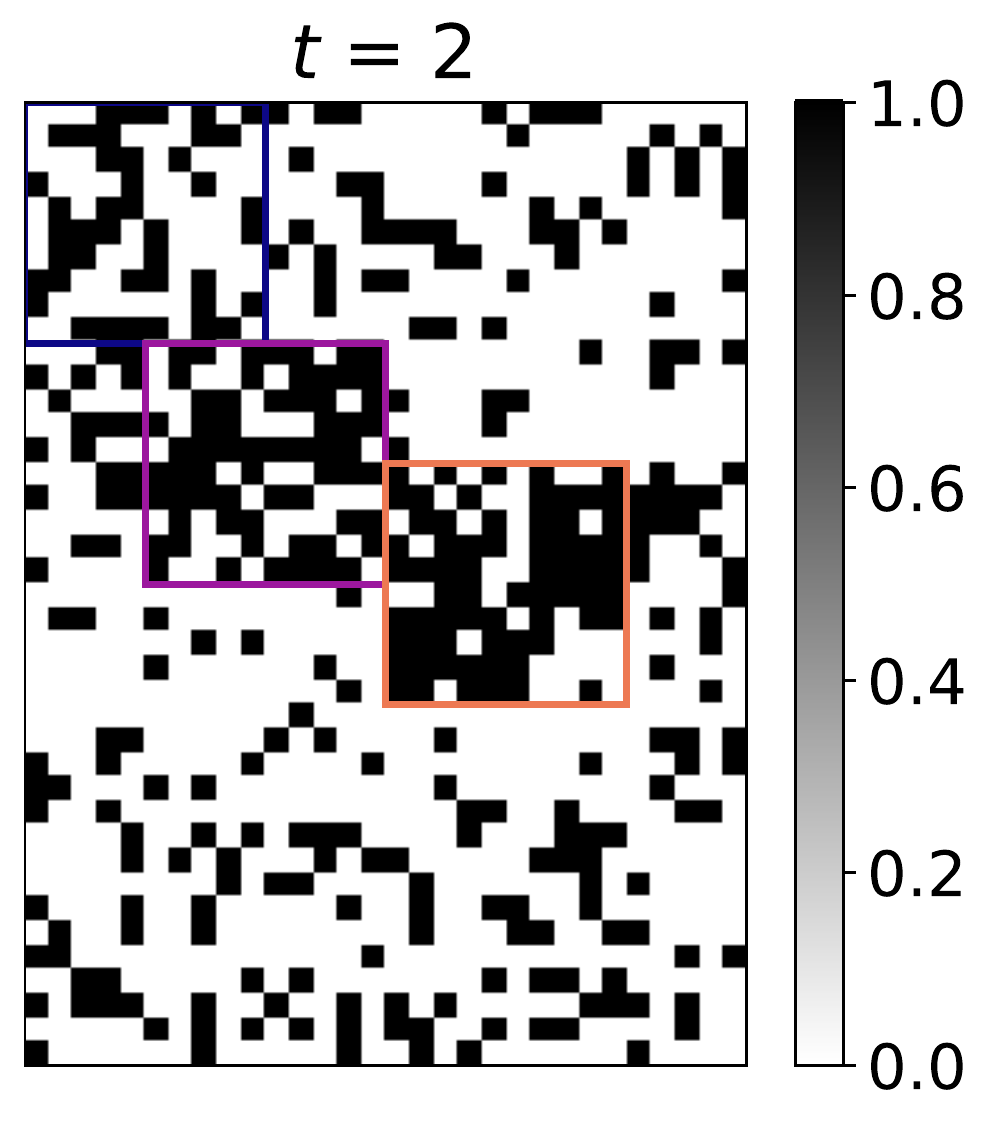}
  \includegraphics[width=0.17\hsize]{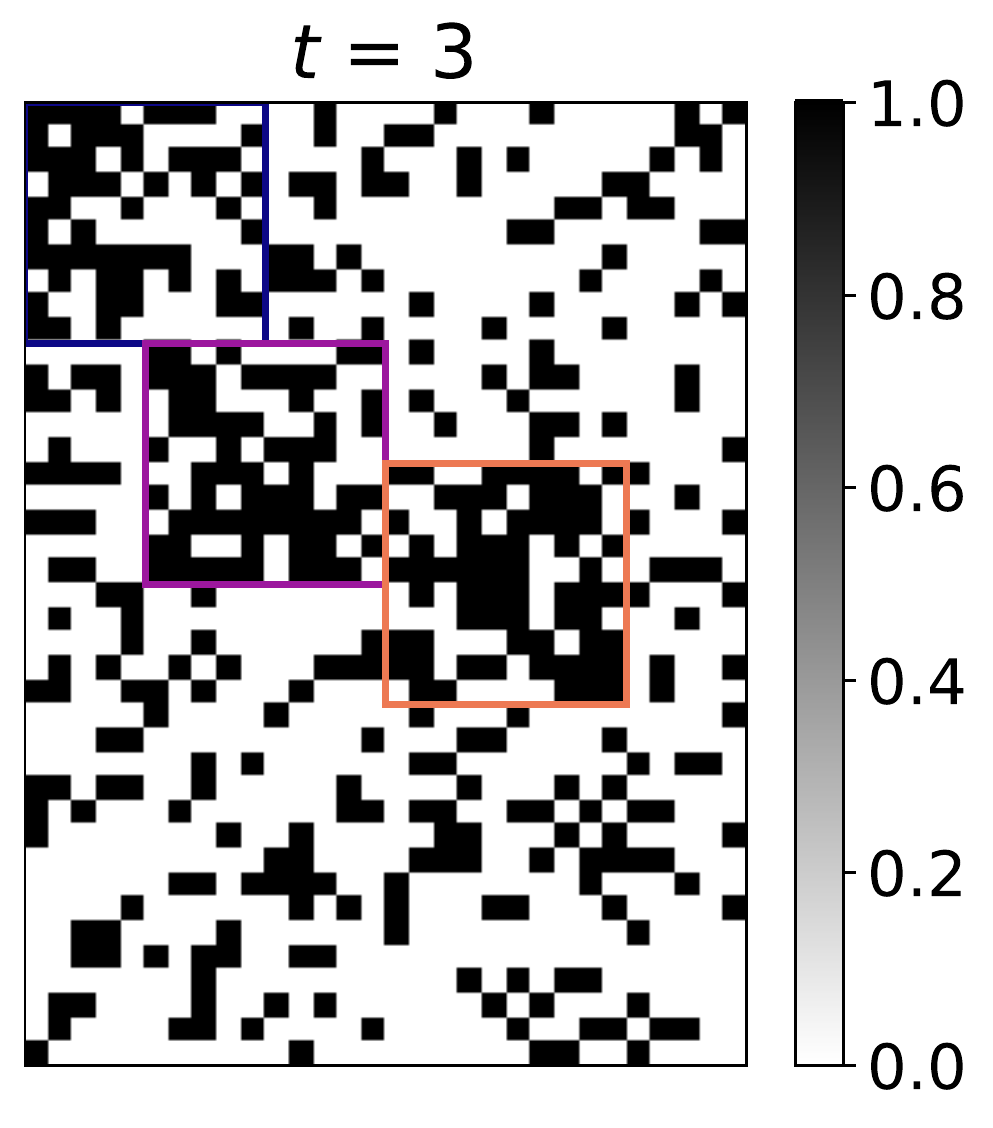}
  \includegraphics[width=0.17\hsize]{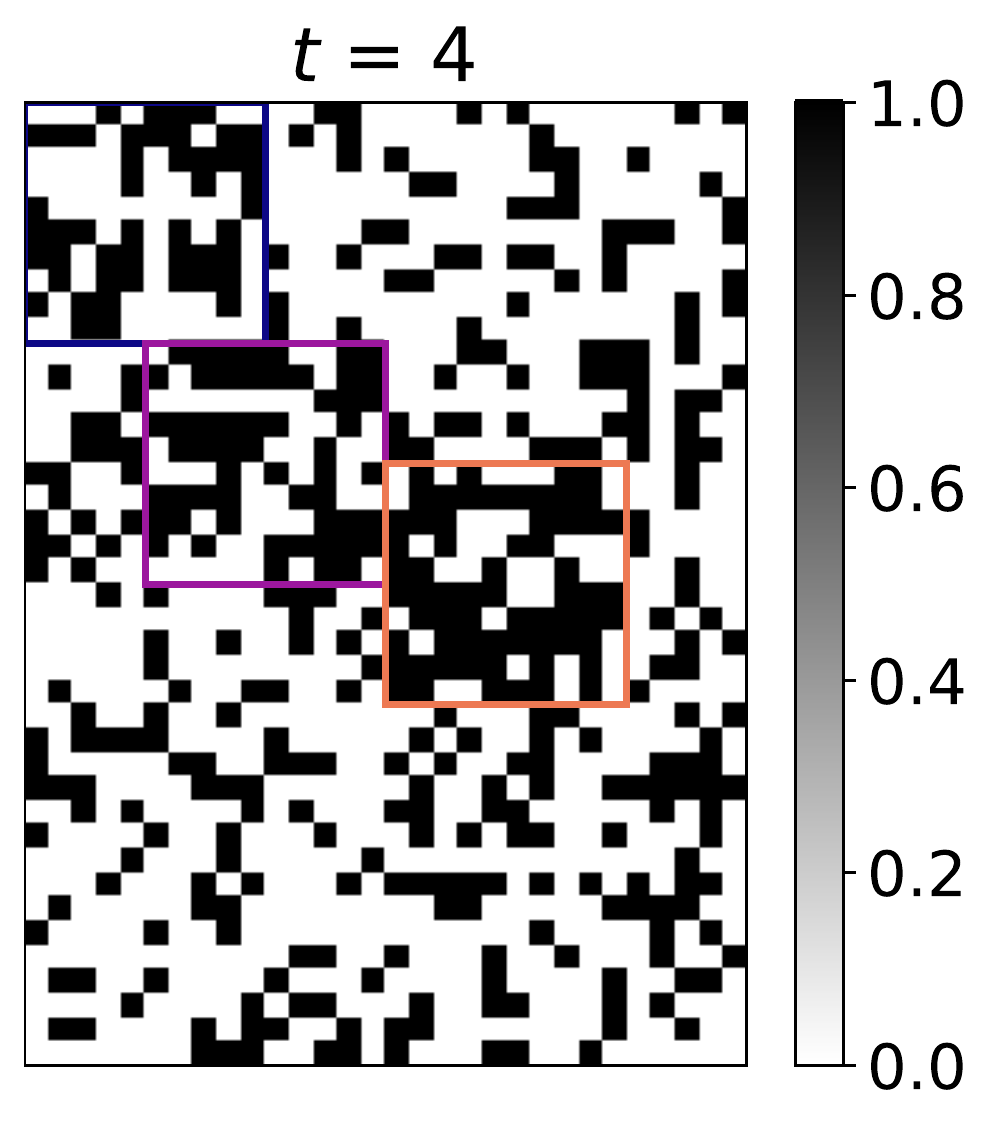}
  \includegraphics[width=0.17\hsize]{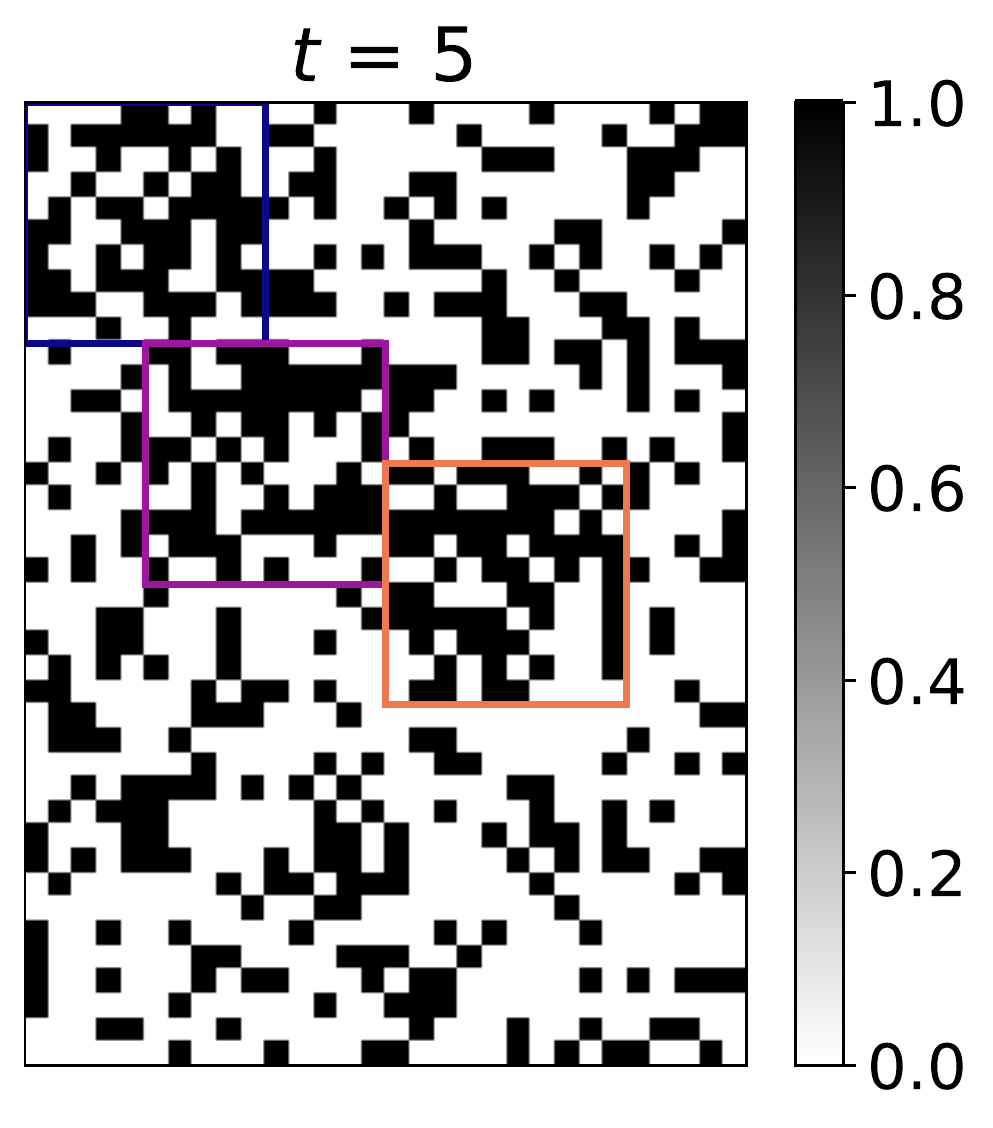}\\
  \includegraphics[width=0.17\hsize]{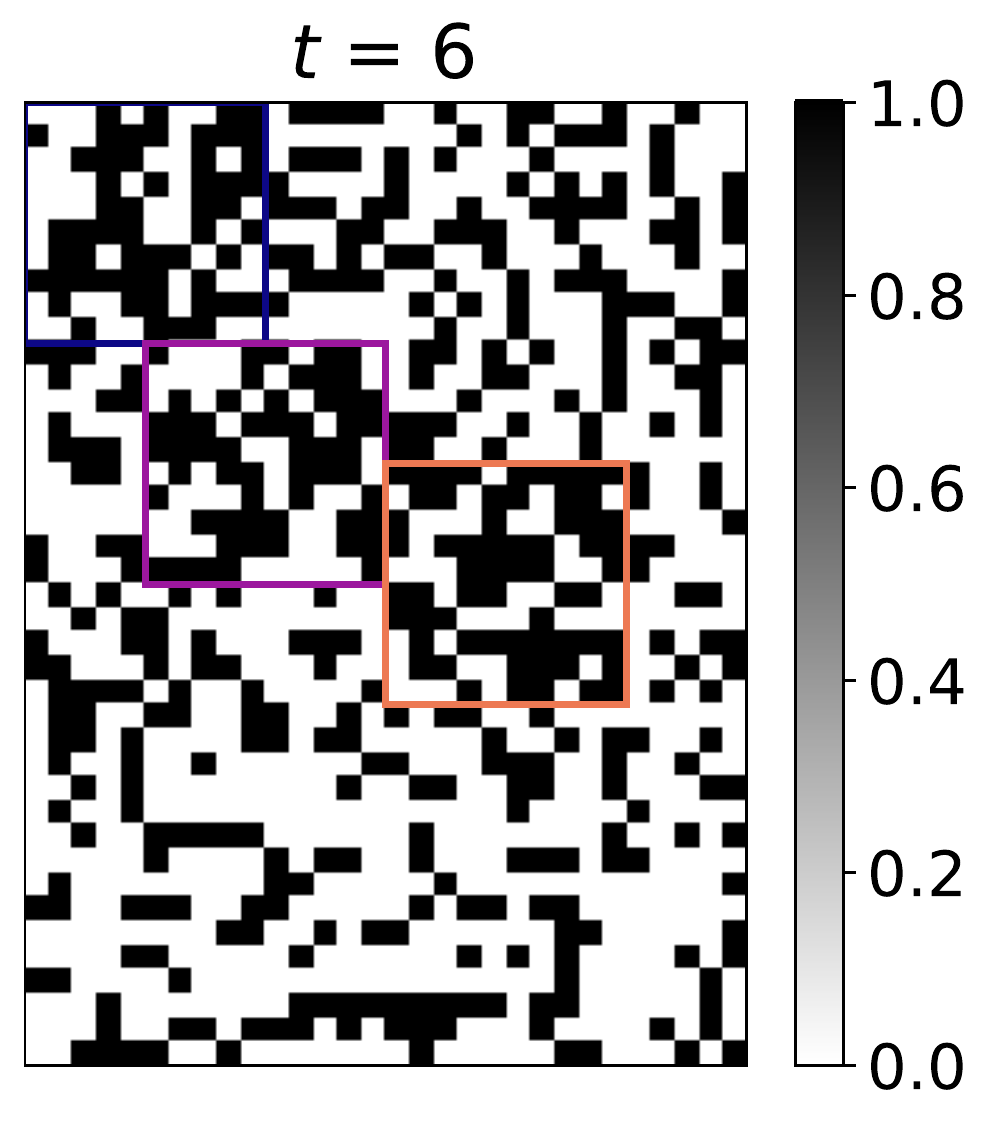}
  \includegraphics[width=0.17\hsize]{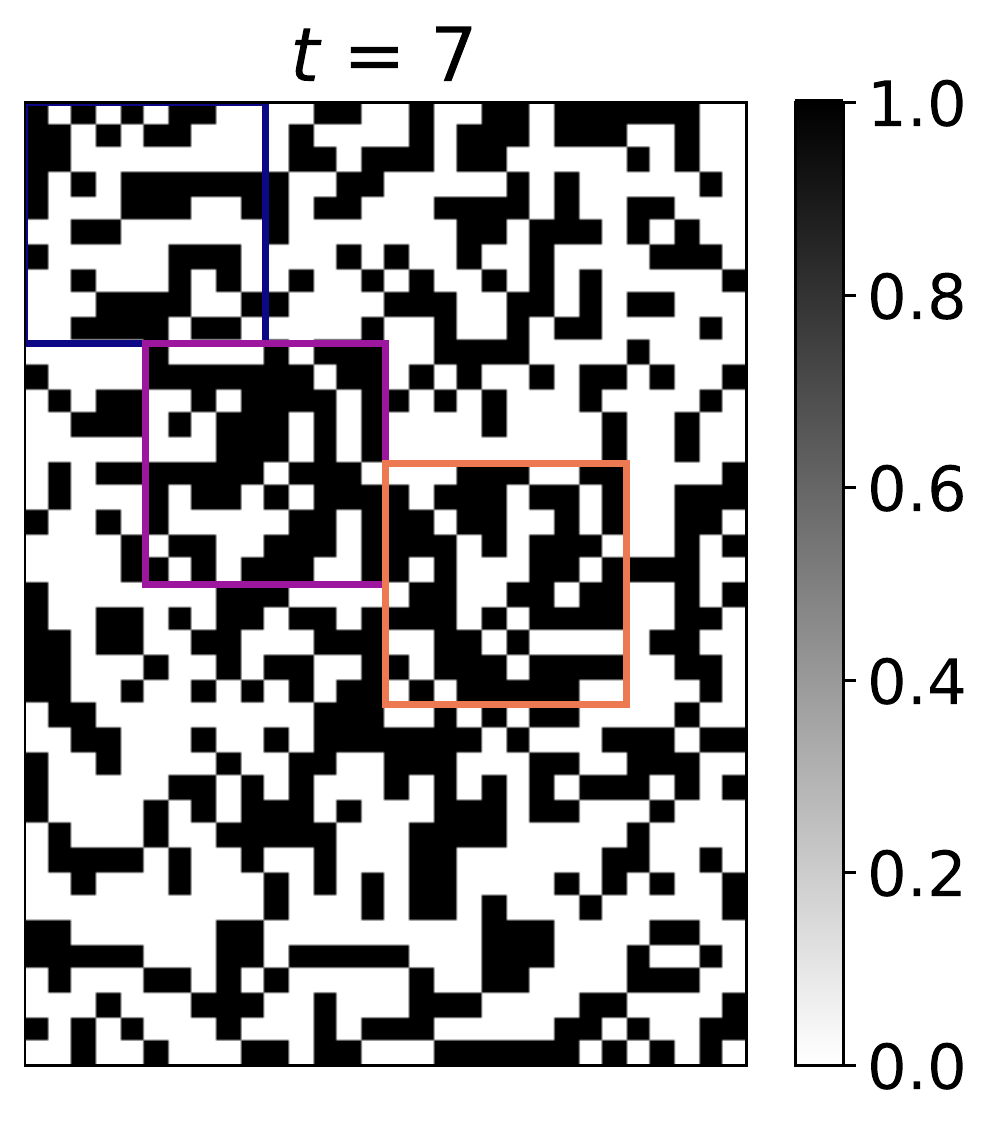}
  \includegraphics[width=0.17\hsize]{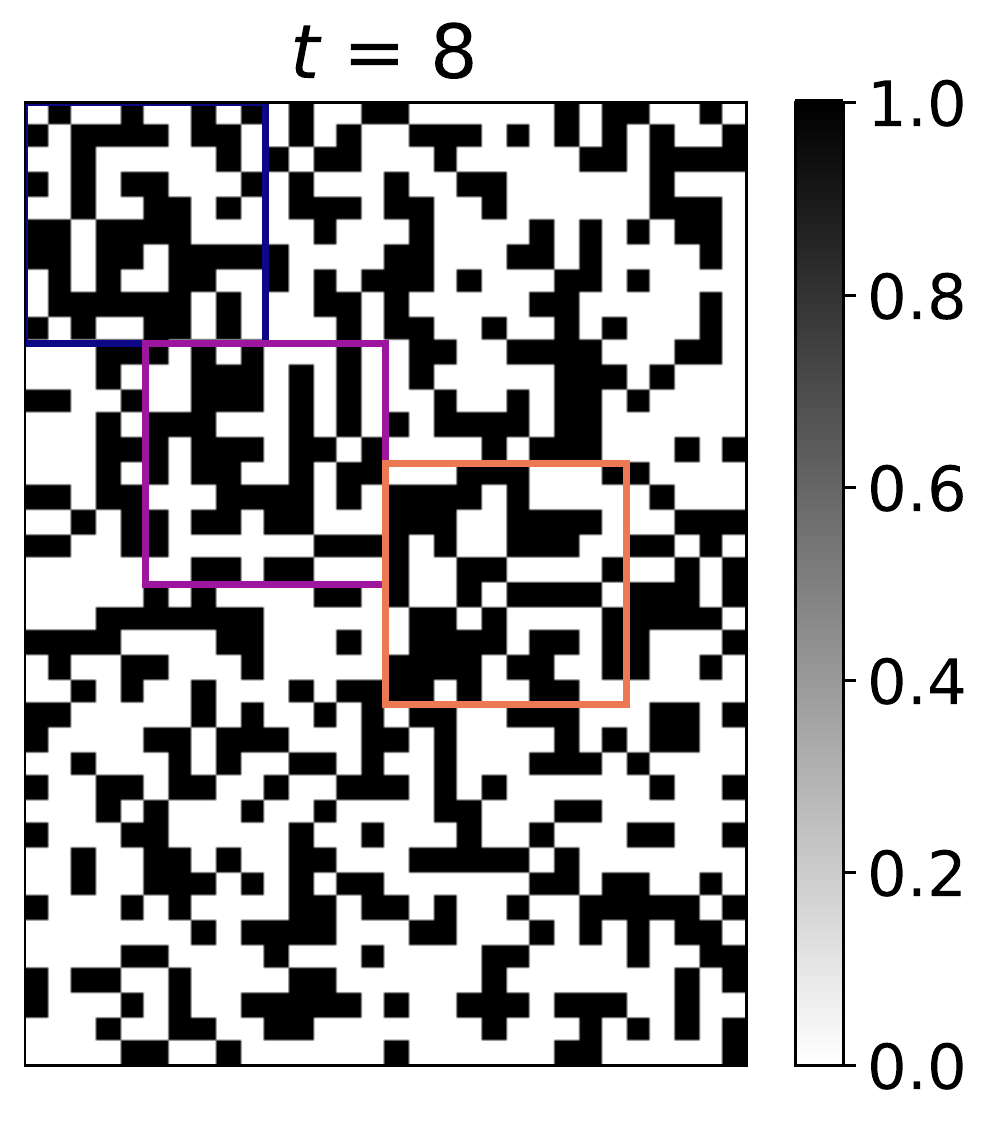}
  \includegraphics[width=0.17\hsize]{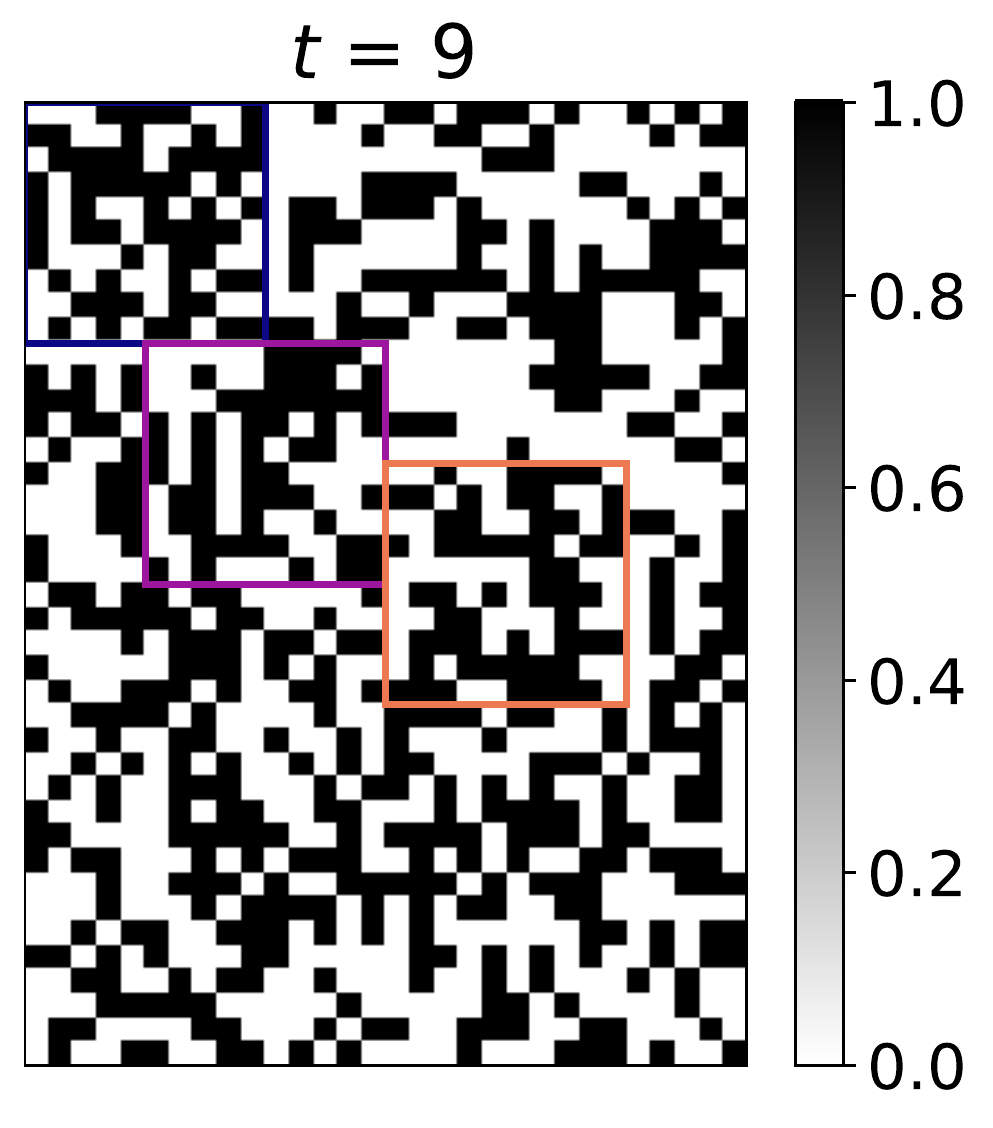}
  \includegraphics[width=0.17\hsize]{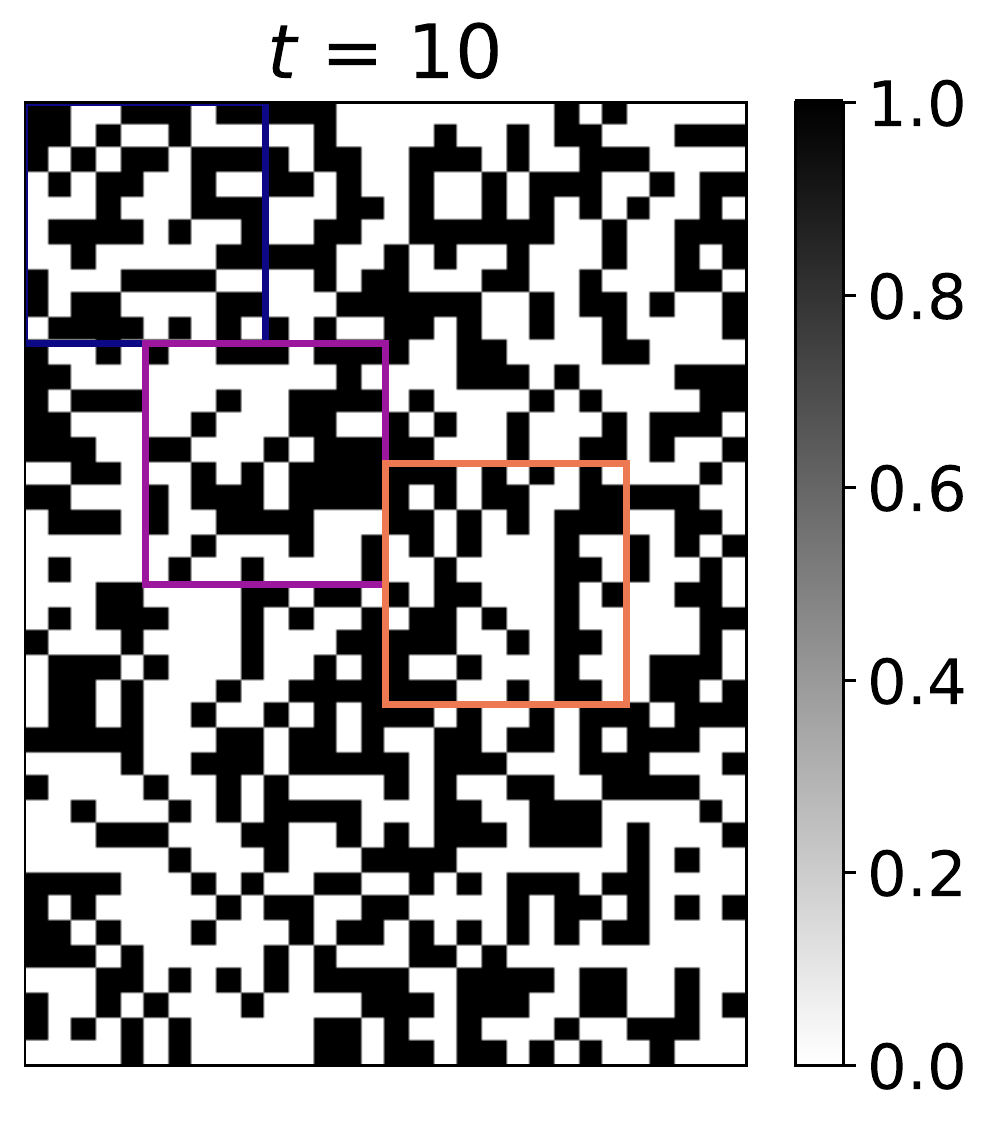}
  \caption{Examples of the observed data matrices for $t = 1, \dots, 10$ (\textbf{Bernoulli case}).}\vspace{3mm}
  \label{fig:A_example_acc_b}
  \includegraphics[width=0.17\hsize]{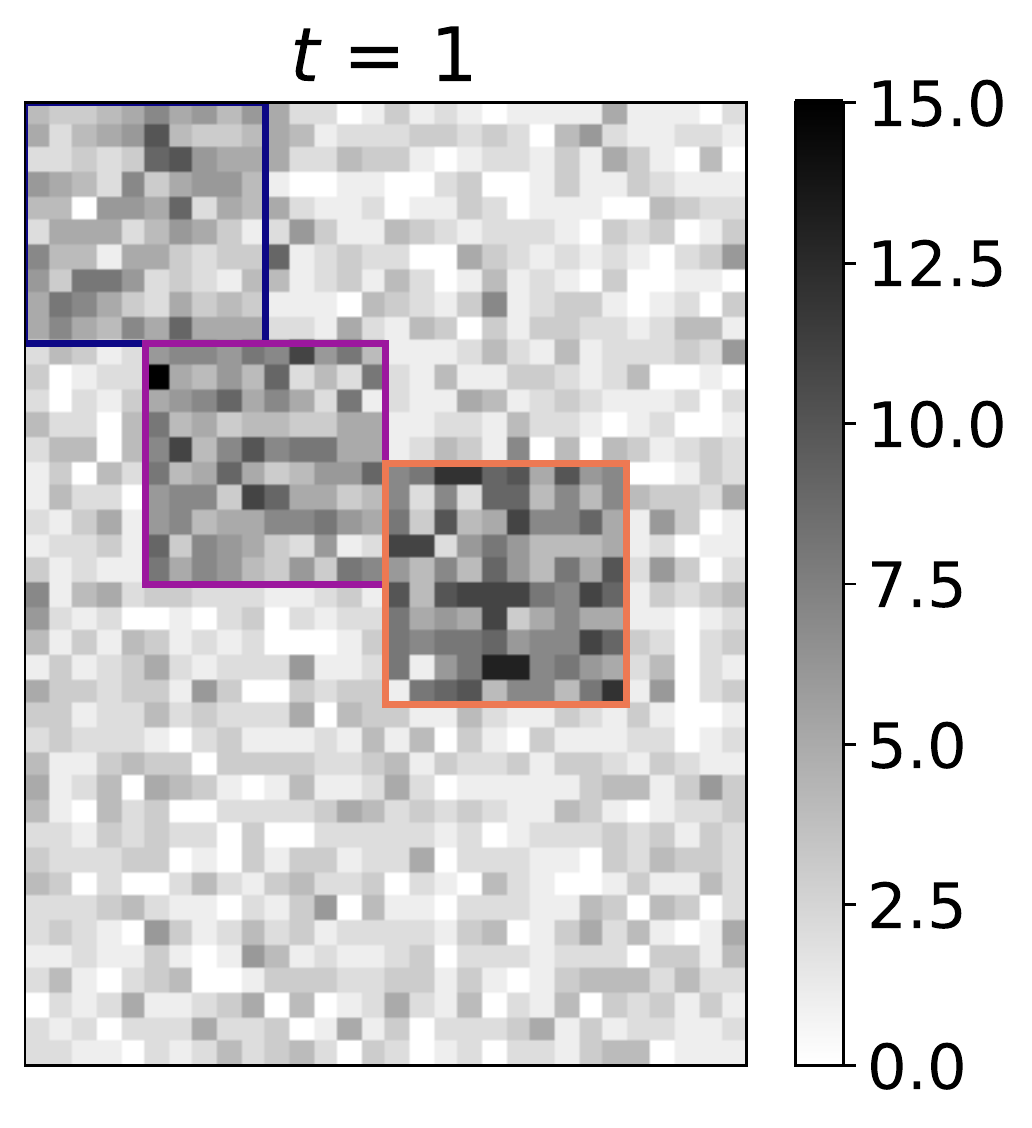}
  \includegraphics[width=0.17\hsize]{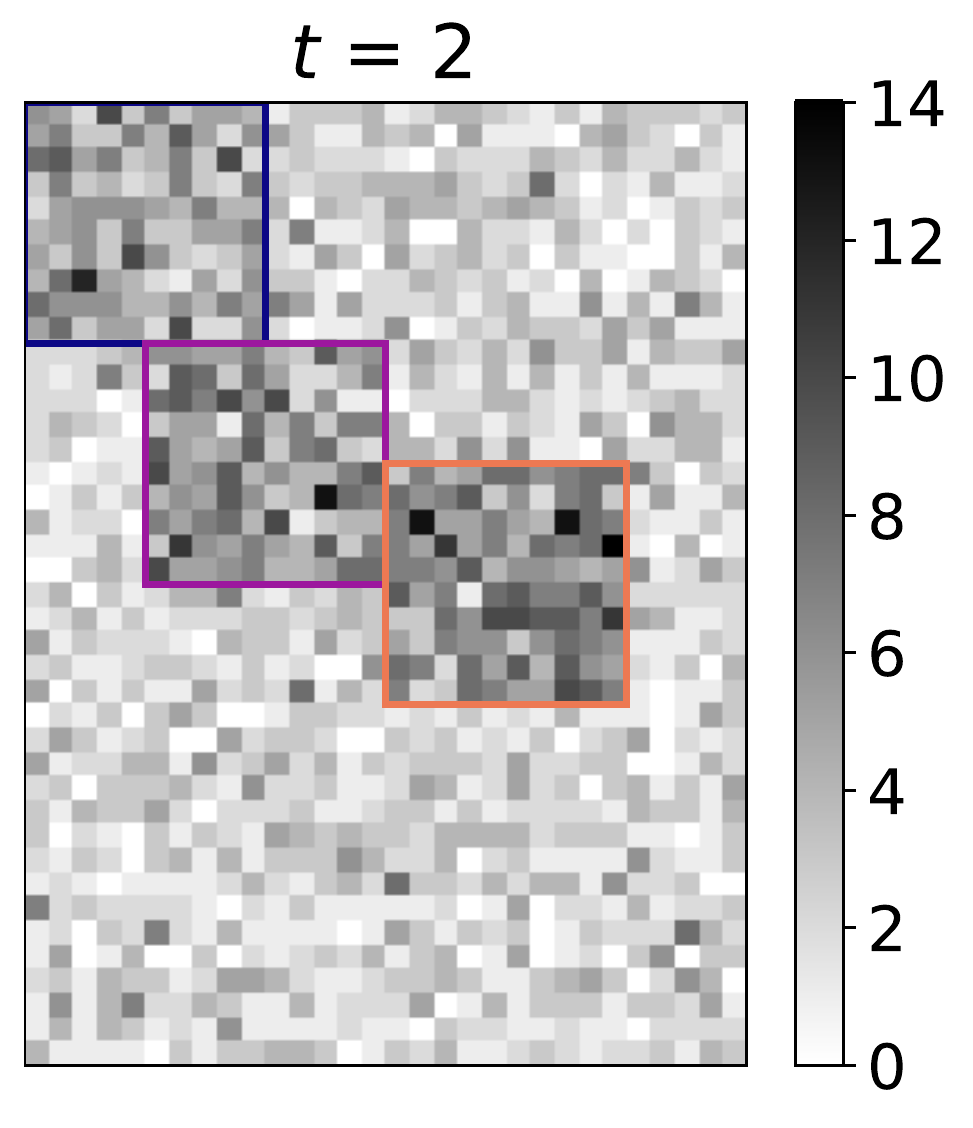}
  \includegraphics[width=0.17\hsize]{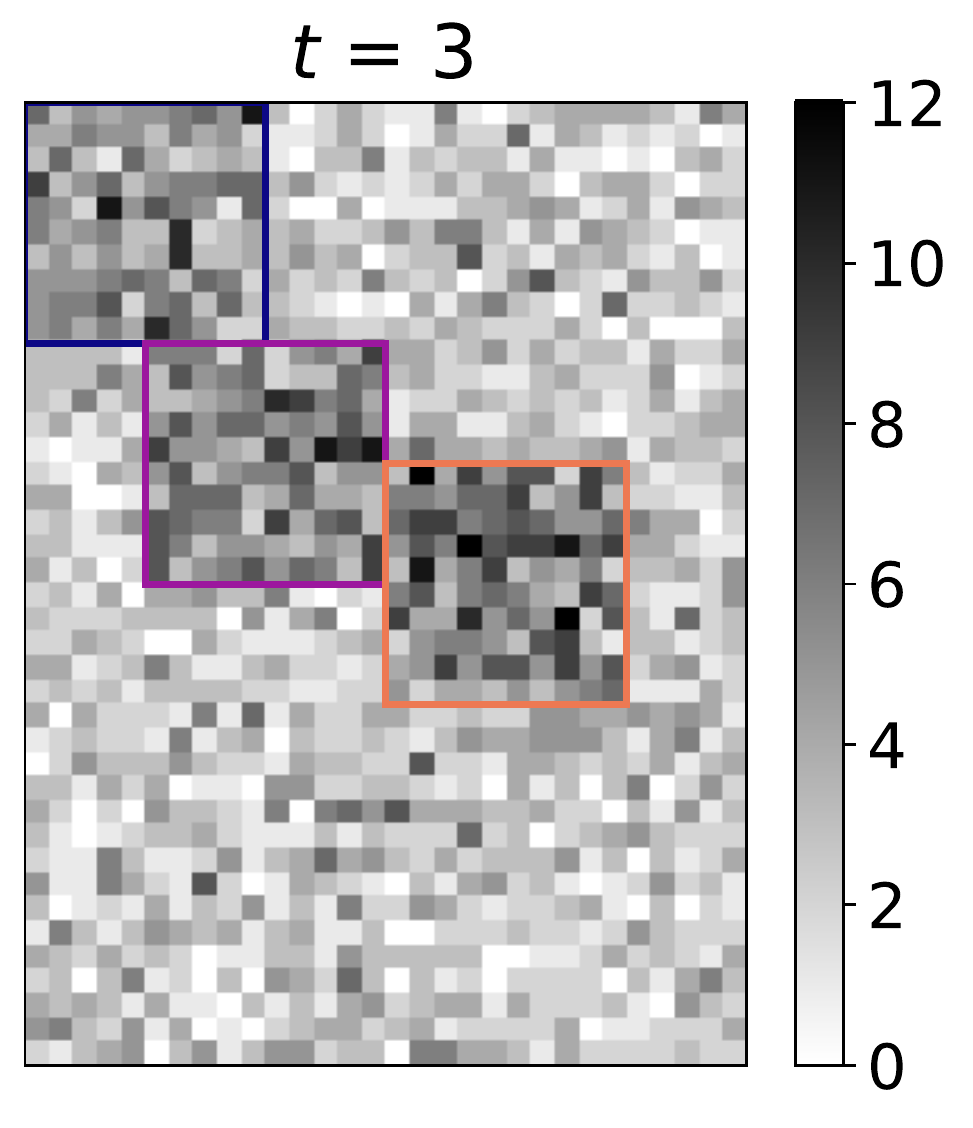}
  \includegraphics[width=0.17\hsize]{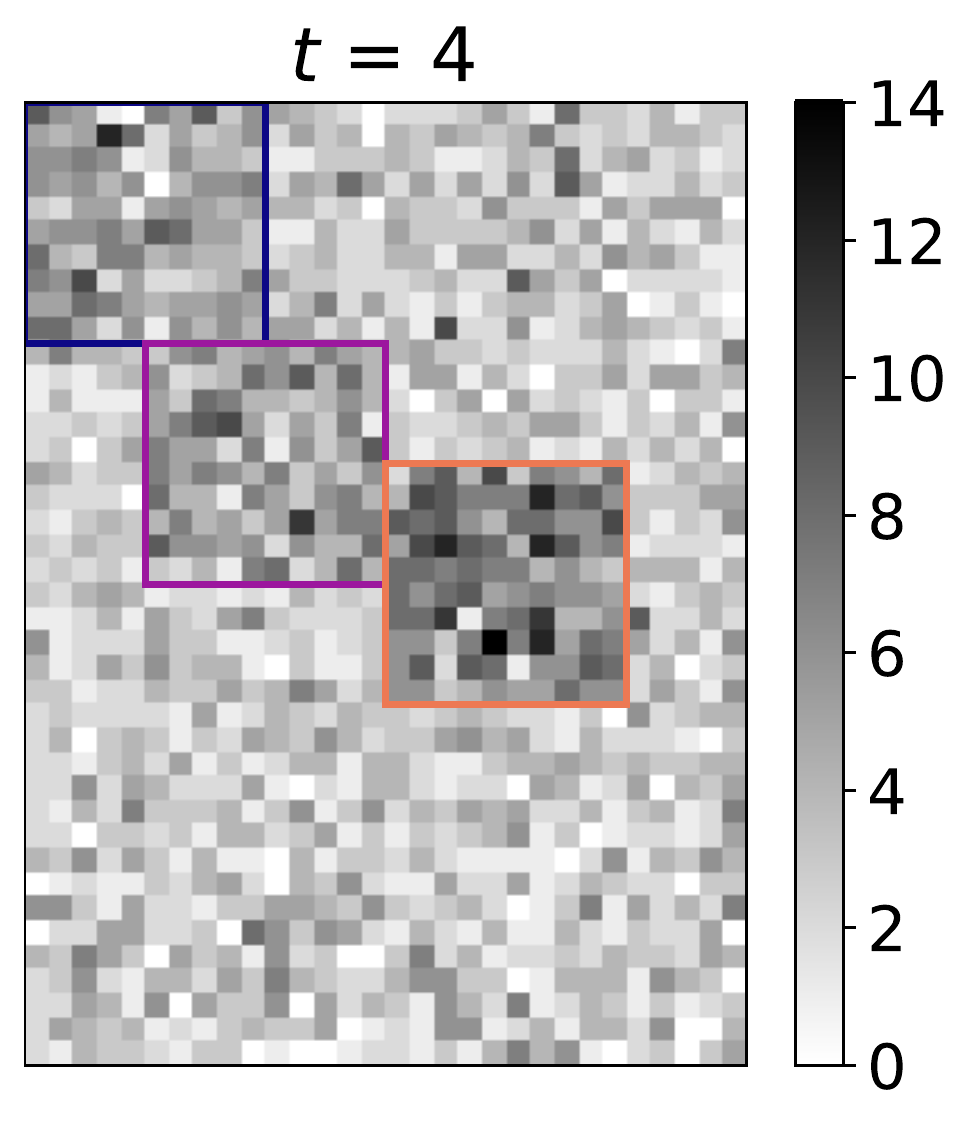}
  \includegraphics[width=0.17\hsize]{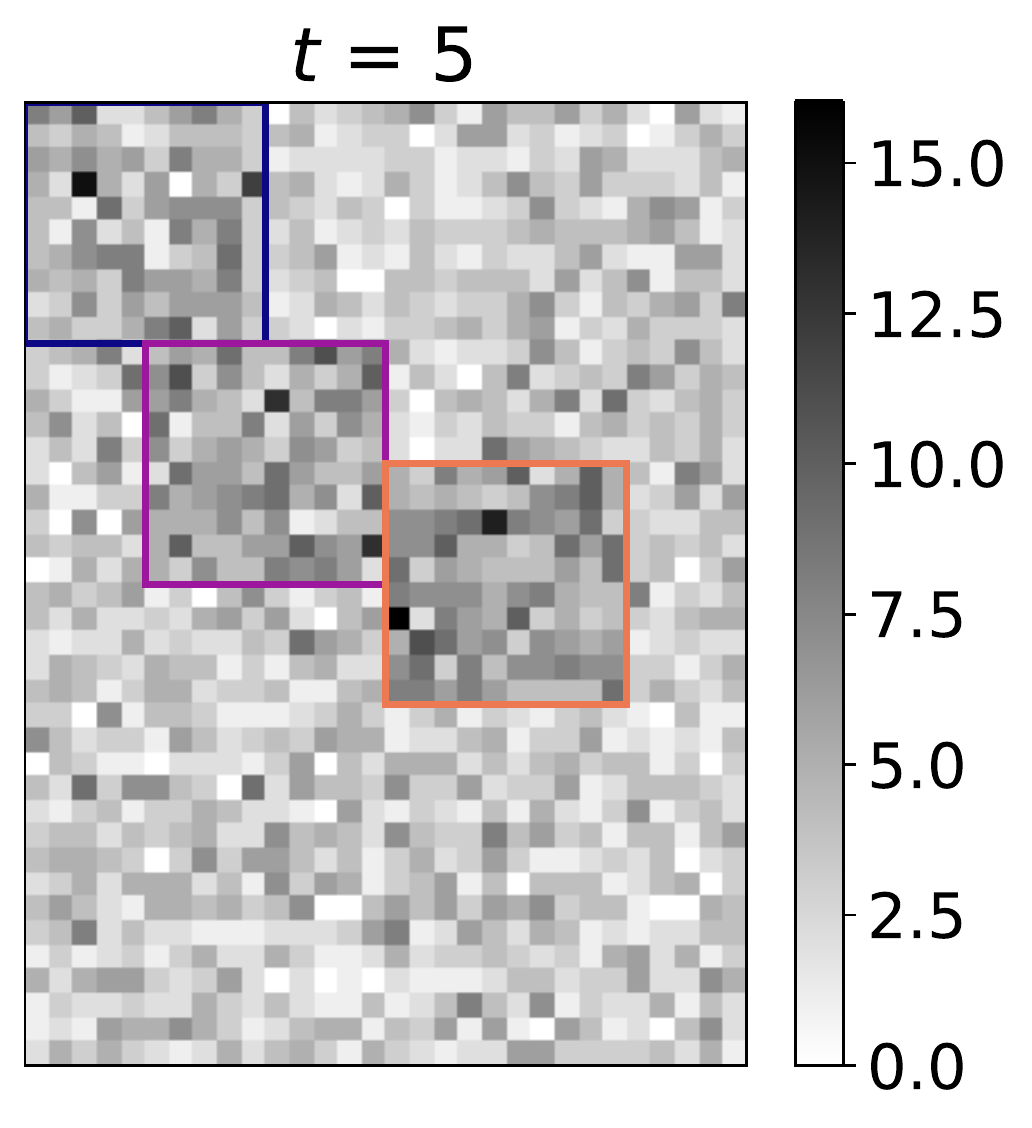}\\
  \includegraphics[width=0.17\hsize]{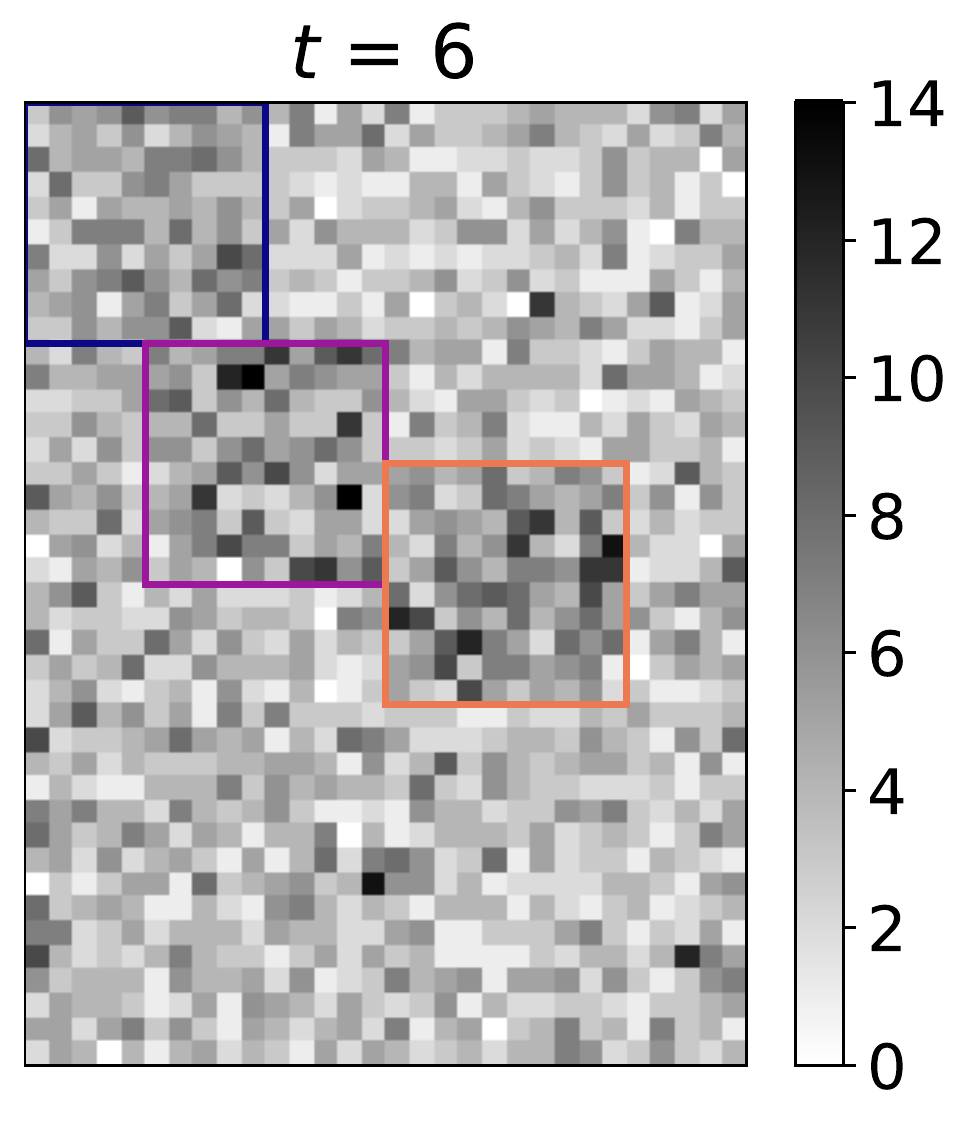}
  \includegraphics[width=0.17\hsize]{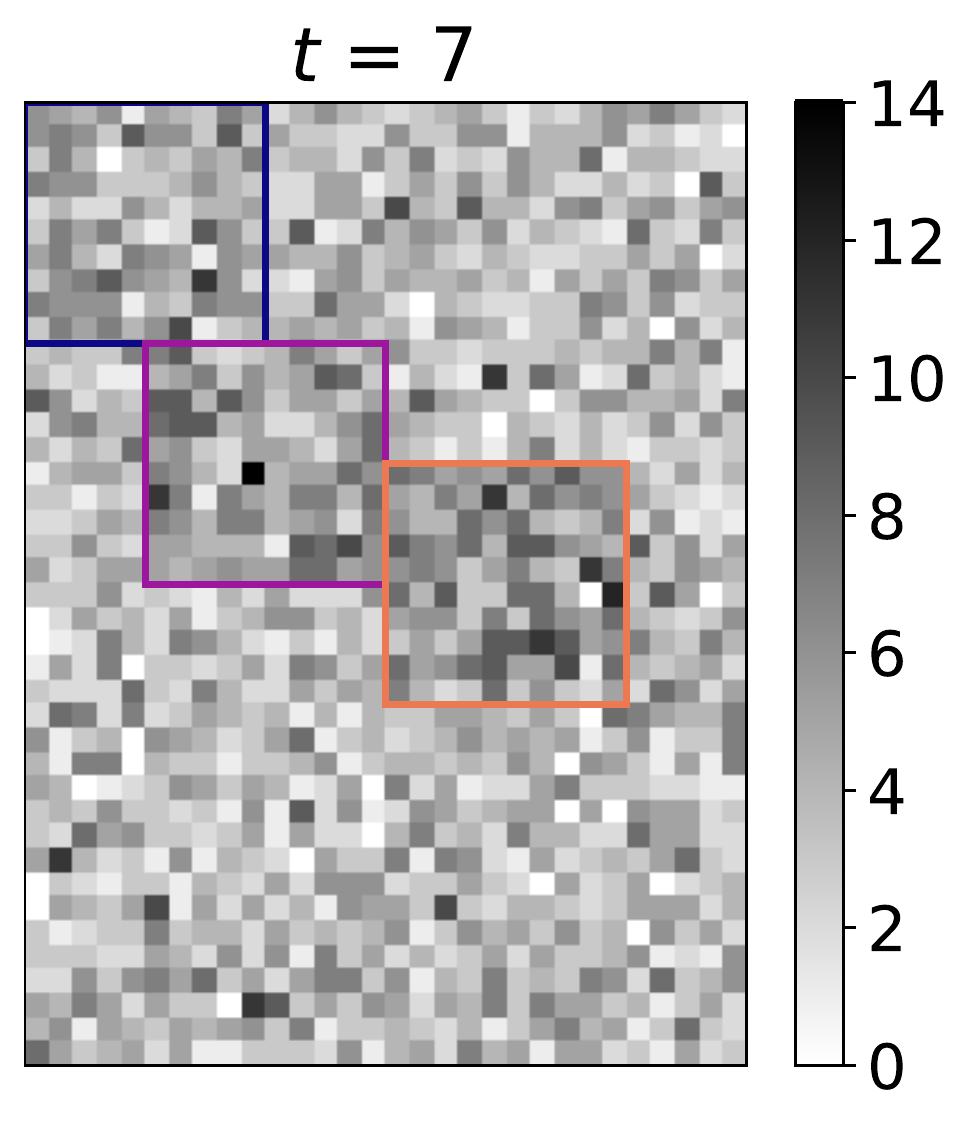}
  \includegraphics[width=0.17\hsize]{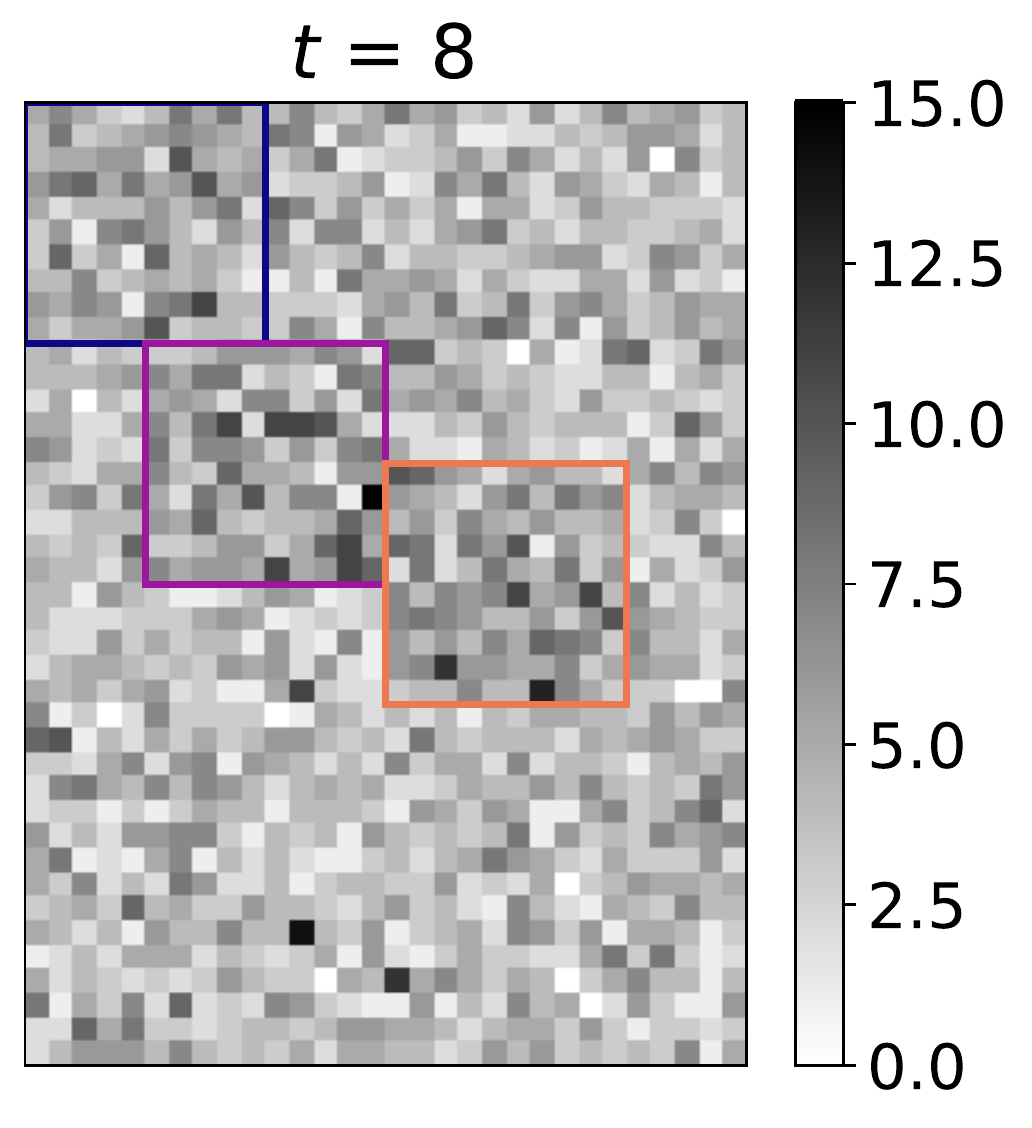}
  \includegraphics[width=0.17\hsize]{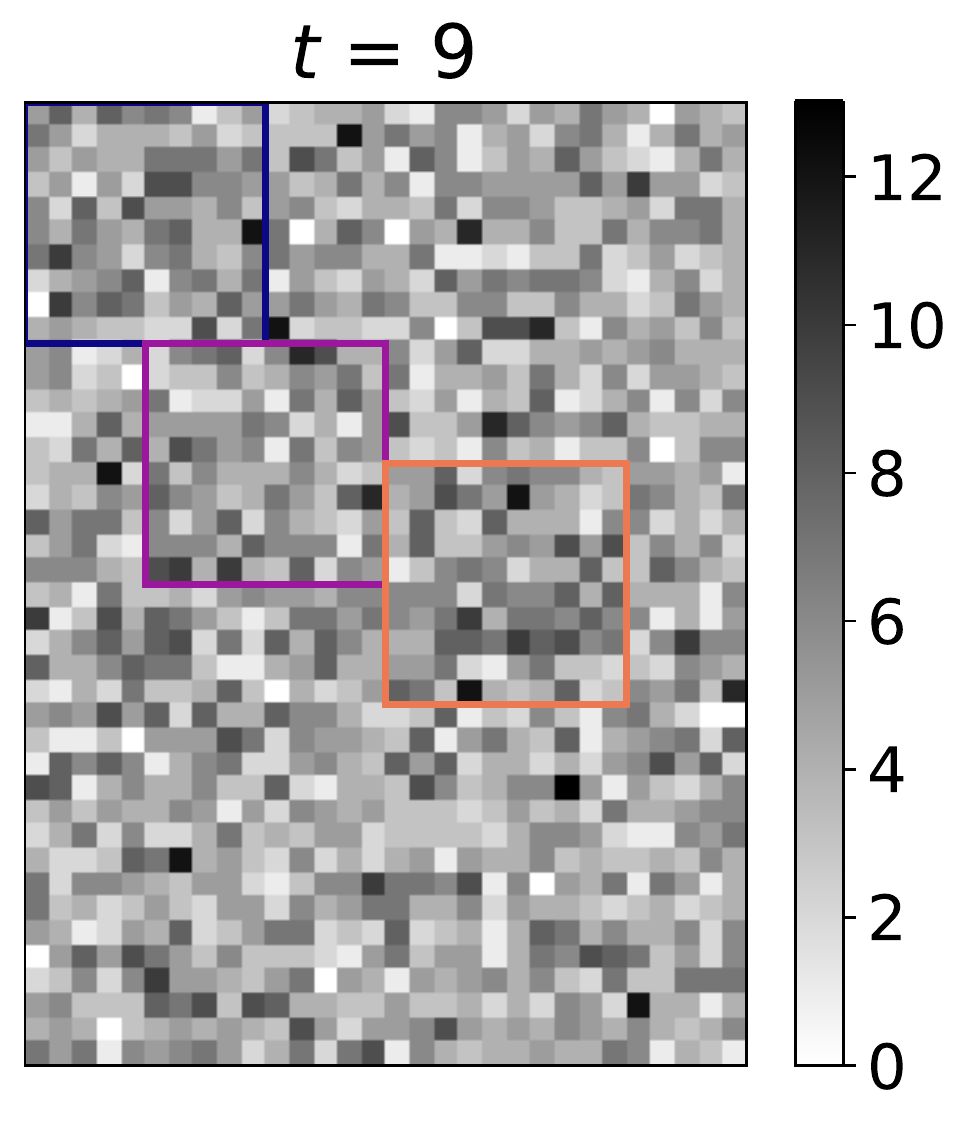}
  \includegraphics[width=0.17\hsize]{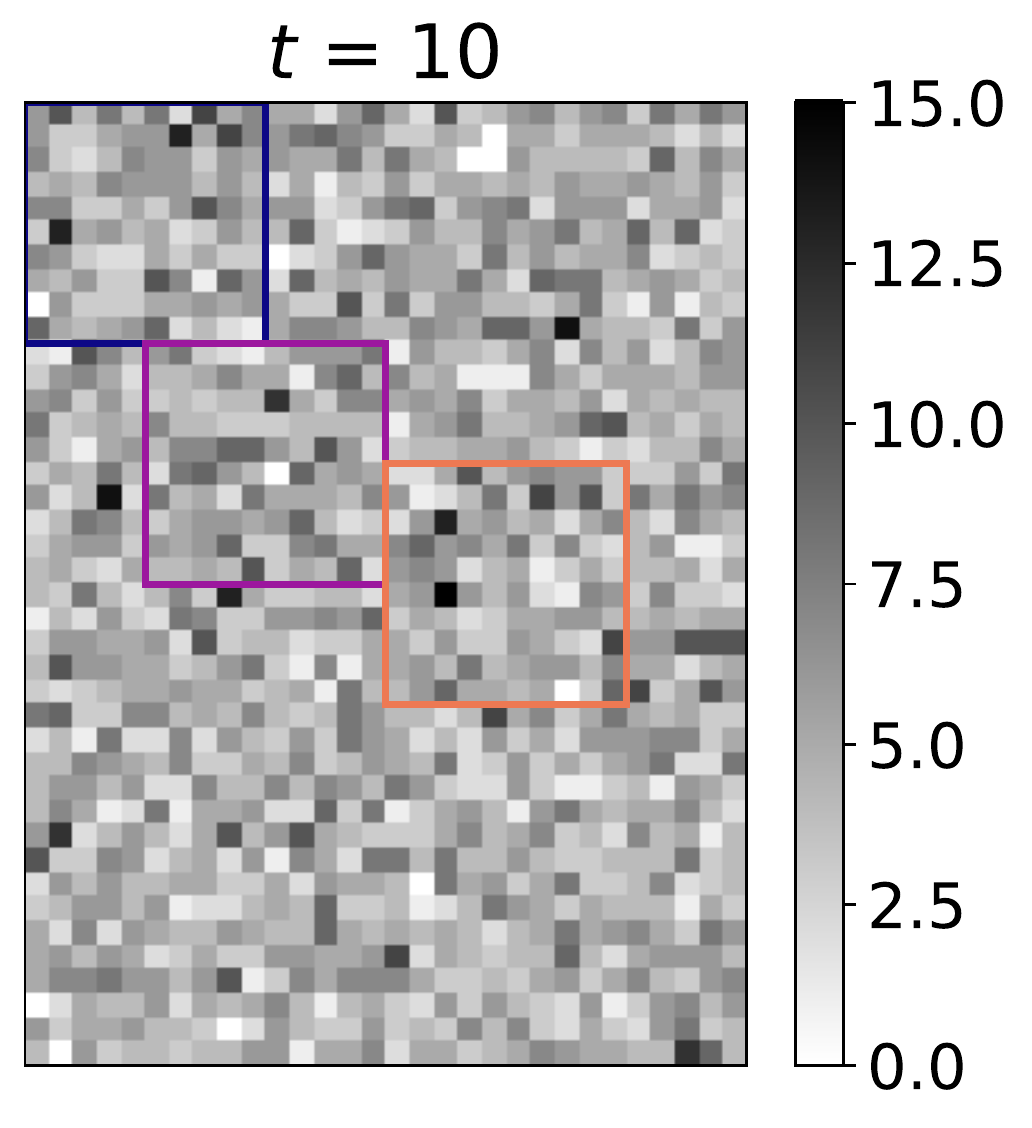}
  \caption{Examples of the observed data matrices for $t = 1, \dots, 10$ (\textbf{Poisson case}).}
  \label{fig:A_example_acc_p}
\end{figure}
%-----

Figure \ref{fig:accuracy} shows the accuracy of the proposed test, that is, the ratio of trials where the selected number of biclusters $\hat{K}$ was equal to the null one $K$. From Figure \ref{fig:accuracy}, it is clear that the proposed test achieved higher accuracy with the larger matrix sizes and with the smaller differences between the group-wise means. This result is consistent with our intuition, since larger matrix sizes and smaller differences between the elements in mean vector tend to make it more difficult to correctly estimate the underlying bicluster structure of matrix $A$, based on which we computed the test statistic $T$. 

%-----
\begin{figure}[t]
  \centering
  \includegraphics[width=0.325\hsize]{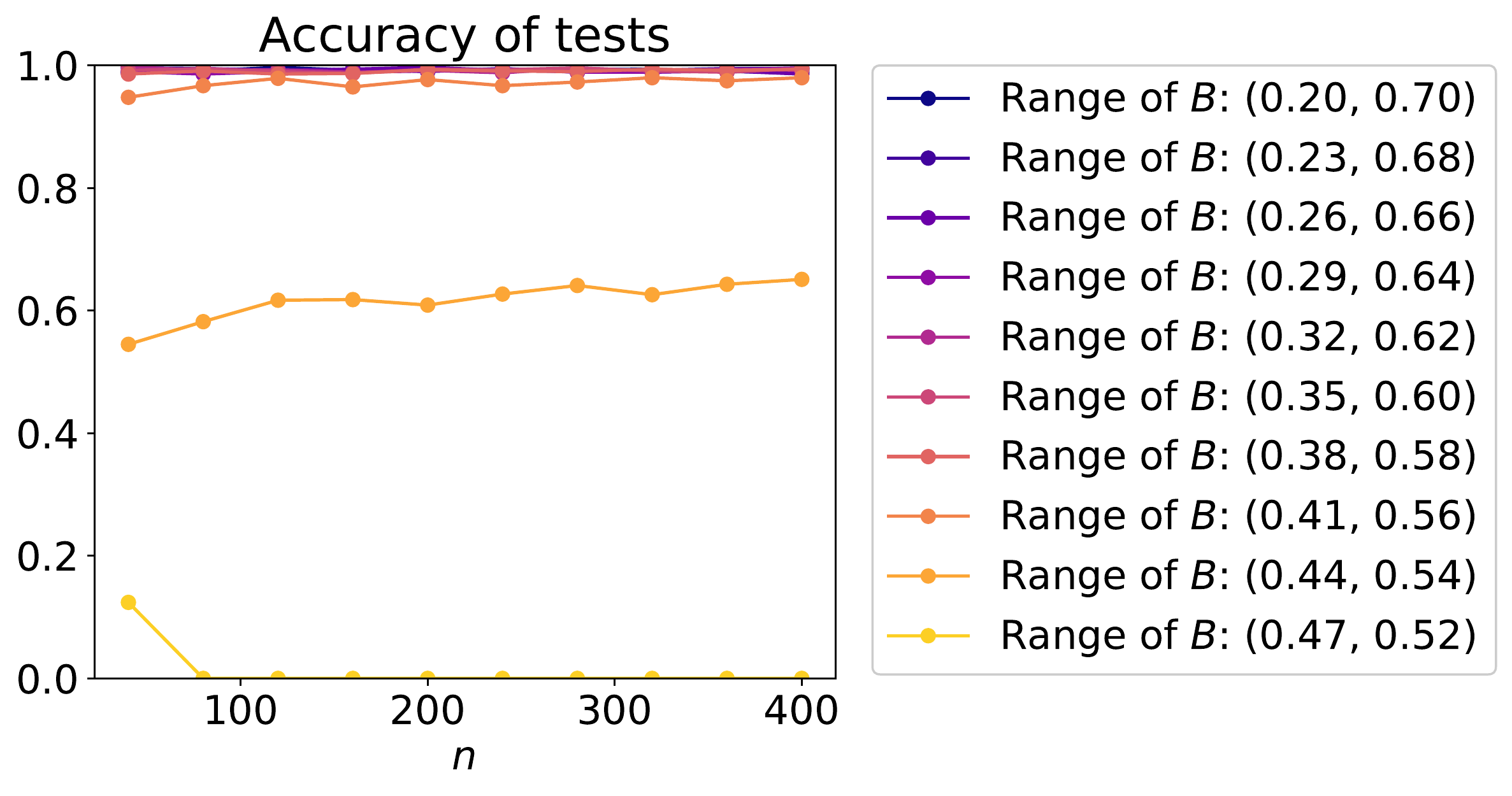}
  \includegraphics[width=0.325\hsize]{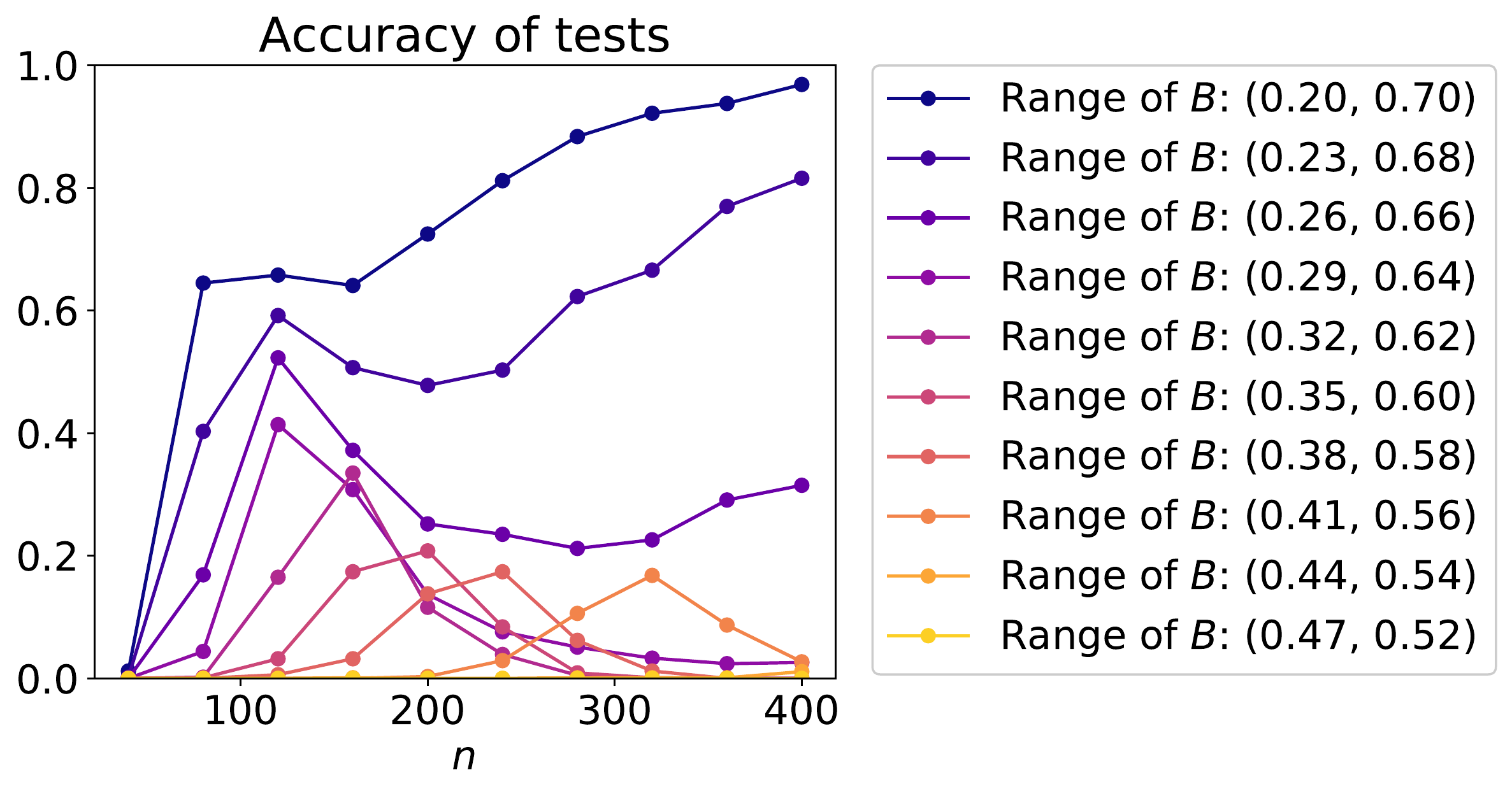}
  \includegraphics[width=0.325\hsize]{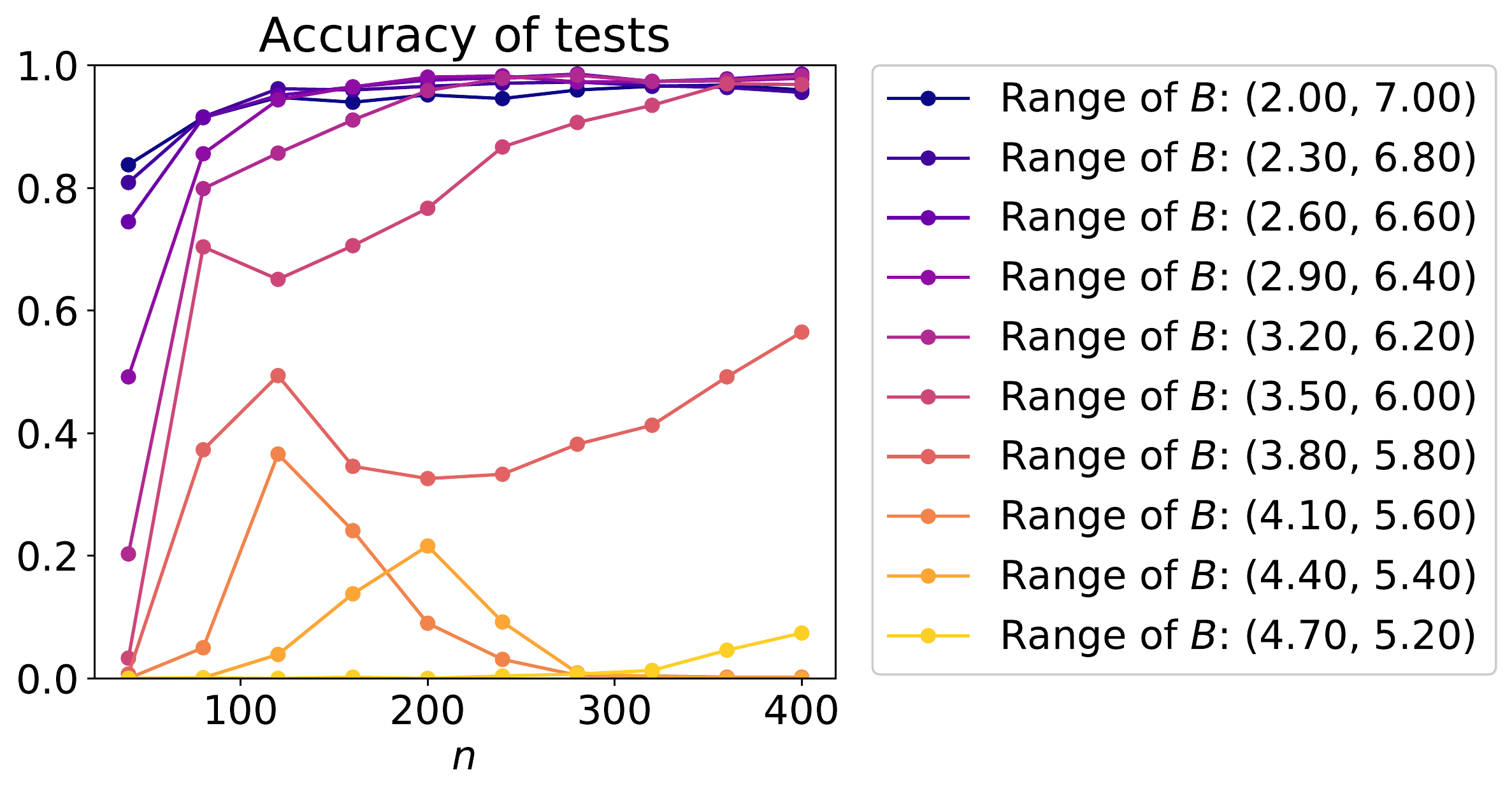}
  \caption{The accuracy of the proposed test in selecting the number of biclusters $K$, under $10$ different mean parameter settings $\{\bm{b}^{(1)}, \dots, \bm{b}^{(10)}\}$. The left, center, and right figures, respectively, illustrate the results where each entry of observed matrix $A$ was generated using Gaussian, Bernoulli, and Poisson distributions.}
  \label{fig:accuracy}
\end{figure}
%-----

\subsection{Goodness-of-fit test and model selection with practical data set}
\label{sec:exp_practical}

Finally, we applied the proposed test and the conventional LBM-based one \cite{Watanabe2021} to the Divorce Predictors data set \cite{Yontem2019} from the UCI Machine Learning Repository \cite{Dua2017}, and compared the results. The rows and columns of the original observed matrix $\check{A} \in \mathbb{R}^{170 \times 54}$ represent the $170$ participants and $54$ attributes, respectively, and each $(i, j)$th entry shows the Divorce Predictors Scale (DPS), which takes values of $0, 1, \dots, 4$. According to \cite{Yontem2017}, the original questionnaire was done based on the following five-factor scale: $0$: ``Never,'' $1$: ``Rarely,'' $2$: ``Occasionally,'' $3$: ``Often,'' and $4$: ``Always,'' which was used as a score for Attributes $31$ to $54$. As for Attributes $1$ to $30$, this scale was reversed (i.e., $0$ meant ``Always'' and $4$ meant ``Never'') so that higher values indicated a higher divorce risk in all the attributes. 
Based on the original matrix $\check{A}$, we defined a binary data matrix $A$ by setting $A_{ij} = 1$ if $\check{A}_{ij} \geq 2$ for the pair of $i$th participant and the $j$th attribute, and $A_{ij} = 0$ otherwise. The upper left section of Figure \ref{fig:practical_A} depicts the observed data matrix, where the meaning of each attribute index is shown in Table \ref{tb:divorce_attribute}. 

As for the proposed test, we applied it sequentially as in Sect.~\ref{sec:accuracy} with a significance level of $\alpha = 0.01$ until some hypothetical number of biclusters was accepted. In the SA algorithm, we used the relative entropy function $f$ in (\ref{eq:f_bernoulli}) and the cooling schedule of $T_t = 0.9999^t$ for all $t \geq 0$. For each hypothetical number of biclusters $K_0$, we set the threshold at $\epsilon^{\mathrm{SA}} = 10^{-K_0/2.5 - 2}$. 
Based on these settings, we applied the Algorithm \ref{algo:max_PL_SA2} $30$ times and adopted the best solution that achieved the maximum profile likelihood in the last step. Based on the estimated bicluster structure, we applied the proposed statistical test. 

Regarding the conventional LBM-based test, we used the same settings as those employed by Watanabe and Suzuki \cite{Watanabe2021}. That is, for each hypothetical set of row and column cluster numbers $(K_0, H_0)$, we estimated the regular-grid bicluster structure by applying Ward's hierarchical clustering method \cite{Ward1963} to the rows and columns of observed matrix. Based on the estimated row and column cluster assignments, we applied the test in \cite{Watanabe2021} with a significance level of $\alpha = 0.01$. We tried multiple combinations of $K_0$ and $H_0$ in the following order: 
\begin{align}
(K_0, H_0) = (1, 1), (1, 2), (2, 1), (1, 3), (2, 2), (3, 1), \dots, 
\end{align}
until the null hypothesis was accepted. 

Based on the above settings, the estimated number of biclusters by the proposed test was $30$, while the estimated set of row and column cluster numbers by the conventional one \cite{Watanabe2021} was $(14, 46)$ (i.e., the estimated number of biclusters was $644$, aside from the background). The upper right and bottom sections of Figure \ref{fig:practical_A} show, respectively, the estimated bicluster structure when the null hypotheses were accepted by the proposed and previous LBM-based tests. From these results, we see that that the proposed test could capture the bicluster structure more flexibly than the previous regular-grid based one, and thereby accepted the smaller hypothetical number of biclusters. 

More specifically, Figure \ref{fig:practical_bic_proposed} shows each estimated bicluster when the null hypothesis was accepted by the proposed test. Most of the estimated biclusters contained constant values (i.e., $0$ or $1$), except for Biclusters $0$, $3$, $18$, $19$, $23$, and $28$, and many biclusters consisted of a small number of attributes (i.e., one or two). Except Biclusters $6$ and $27$, most of the rows (i.e., participants) in each bicluster belonged to the same class (i.e., divorced or married). Each estimated bicluster was composed of some homogeneous sets of rows and columns: for example, Bicluster $9$ shows that there exists a (mostly) married group of participants who gave small DPS (i.e., $\check{A}_{ij} \leq 1$) to the attributes $24$ and $27$, both of which were related to knowledge about the stress of their spouse. From Bicluster $16$, we also see that there existed a divorced group of participants who gave large DPS (i.e., $\check{A}_{ij} \geq 2$) to many attributes, including their similarity to the spouse (e.g., attributes from $12$ to $20$) and their awareness of the spouse (e.g., attributes from $21$ to $30$). 

%-----
\begin{figure}[t]
  \centering
  \includegraphics[width=0.495\hsize]{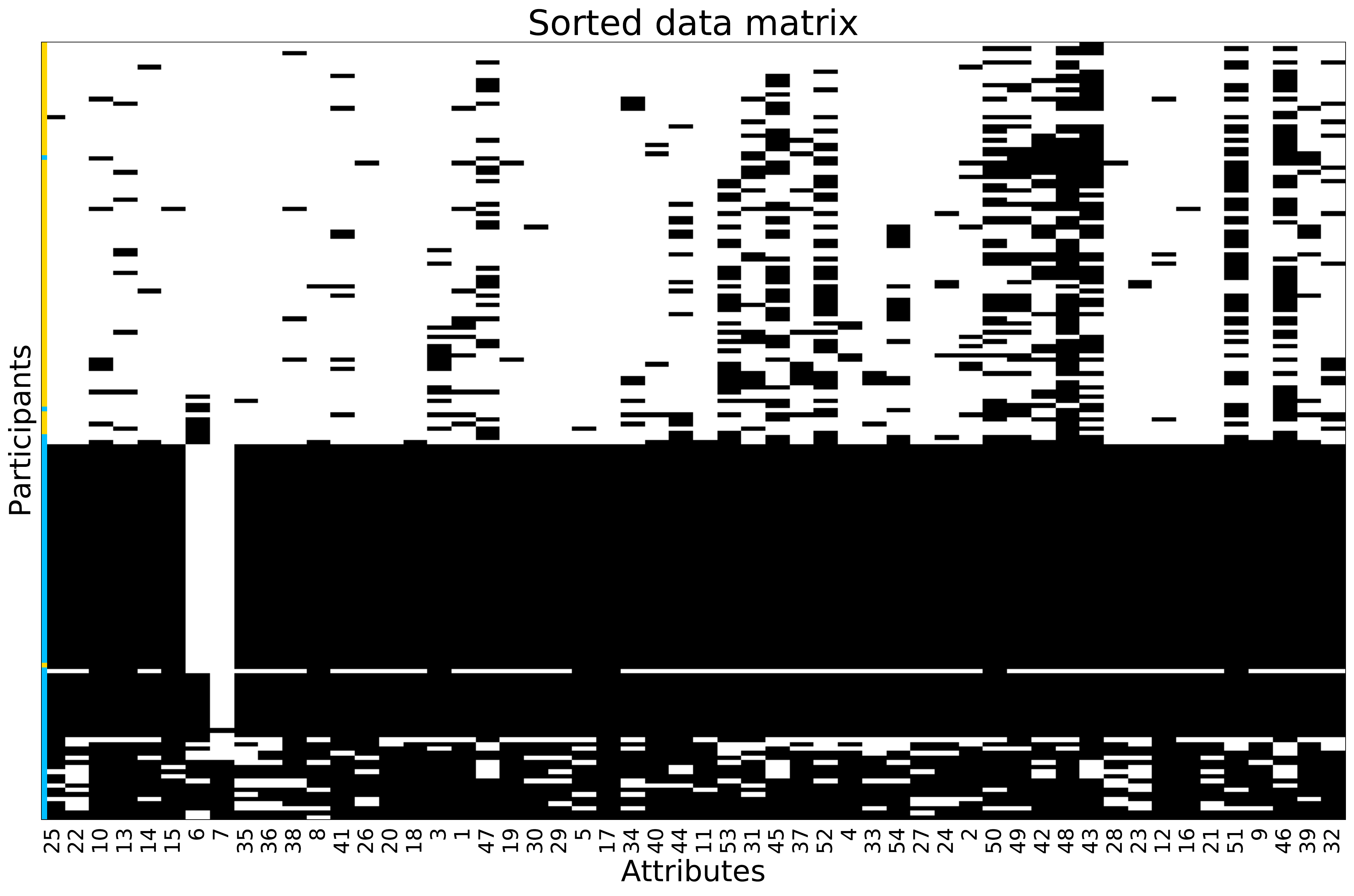}
  \includegraphics[width=0.495\hsize]{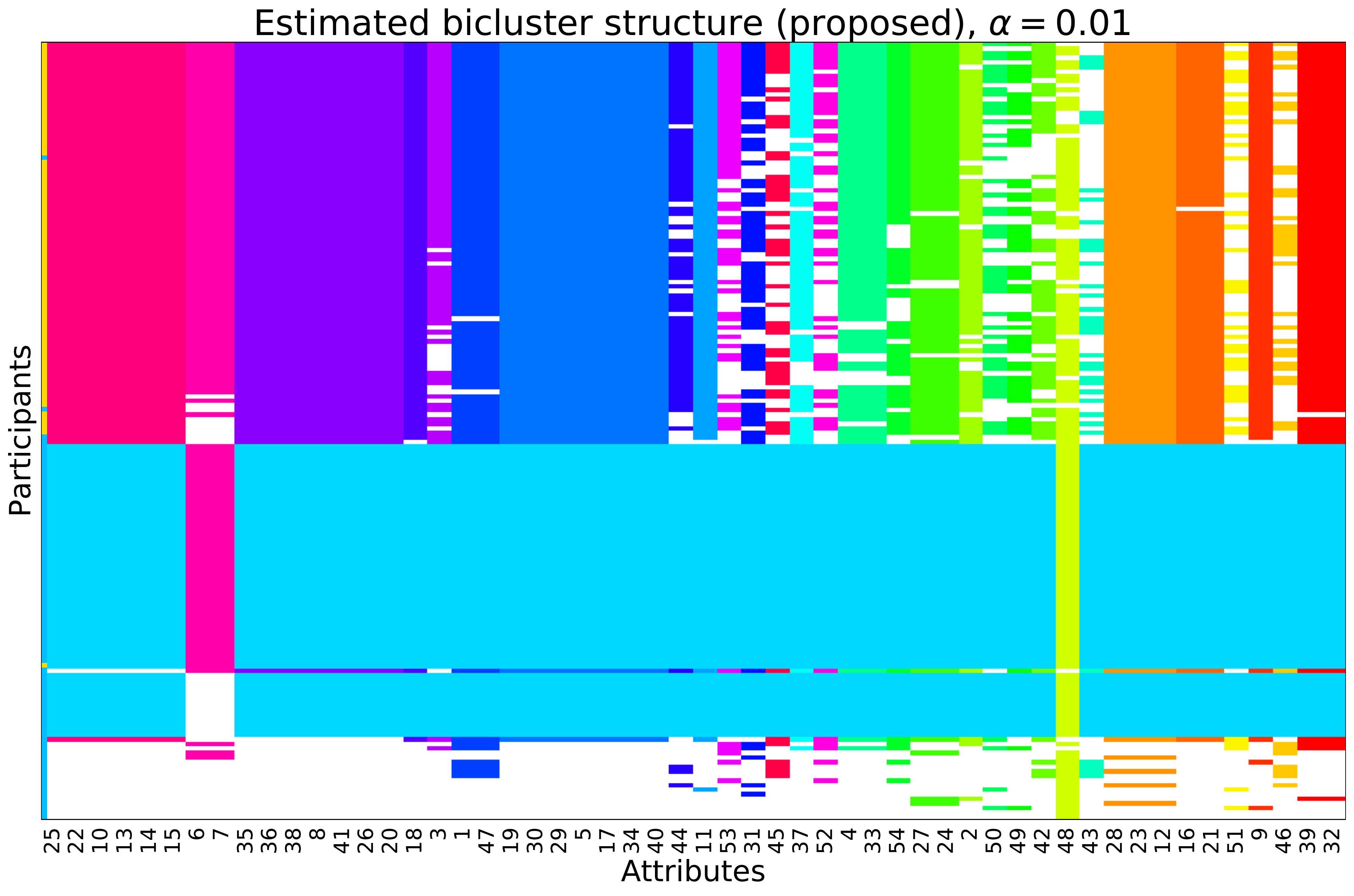}
  \includegraphics[width=0.495\hsize]{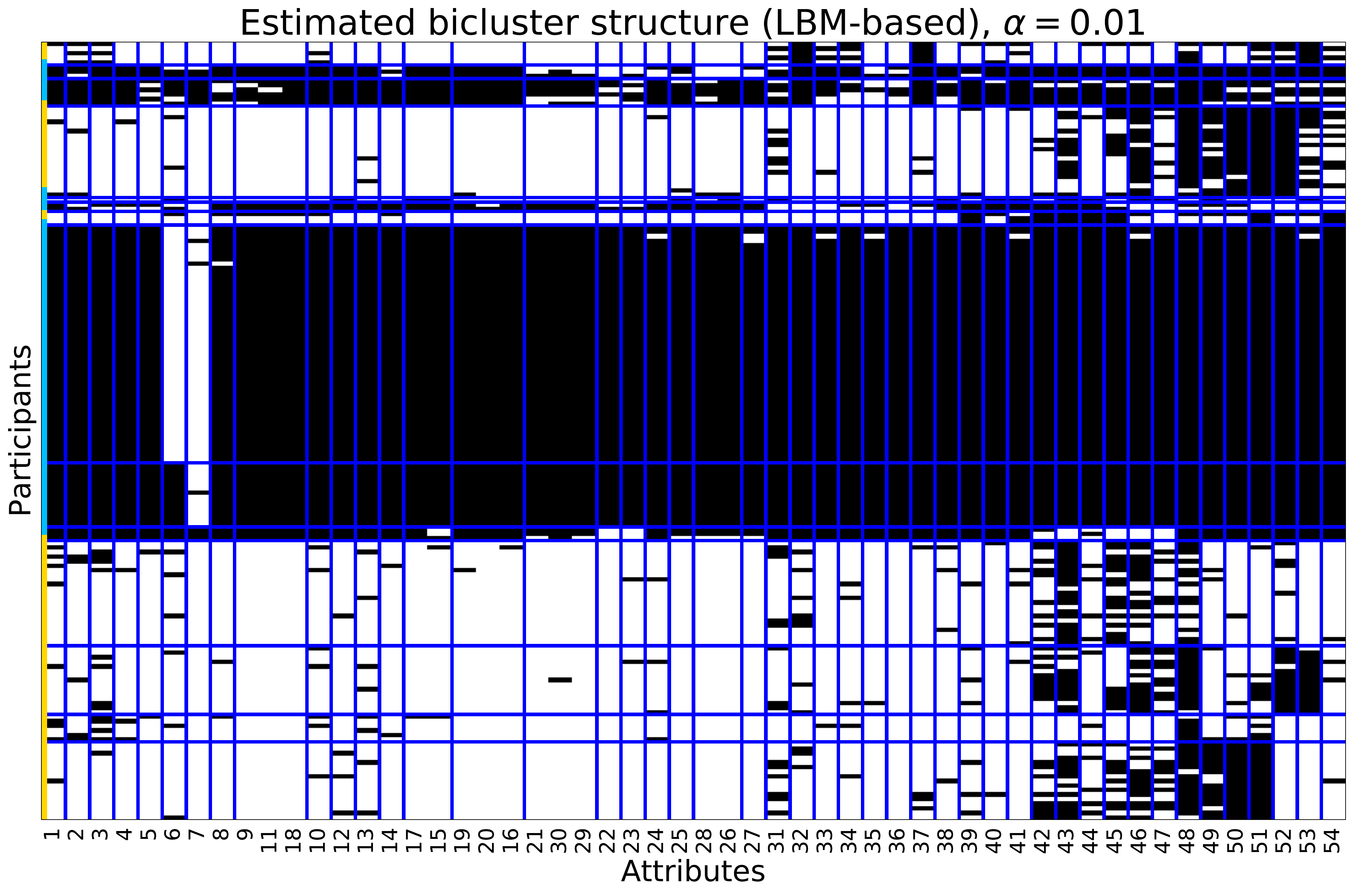}
  \includegraphics[width=0.2\hsize]{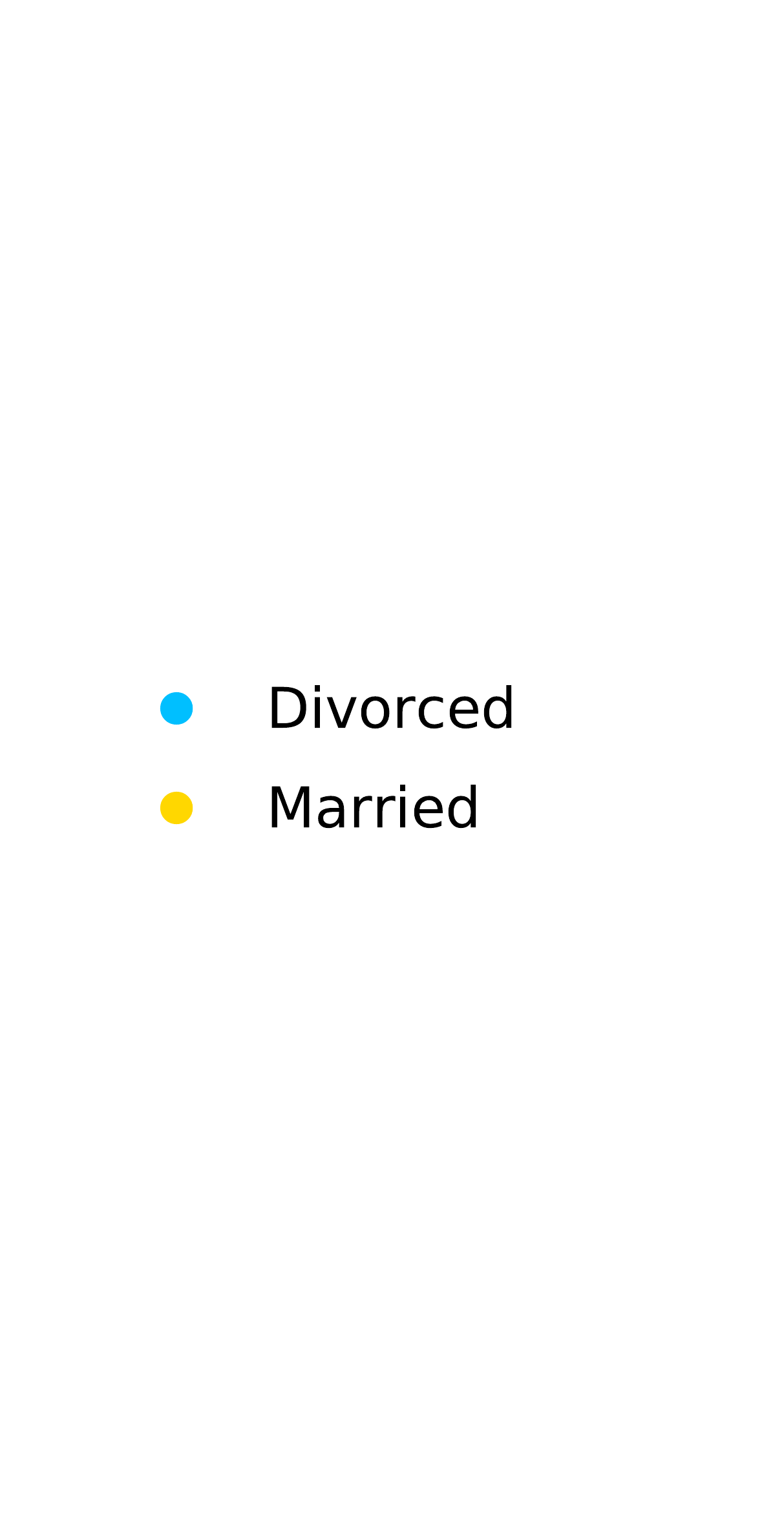}
  \caption{The sorted observed data matrix of the Divorce Predictors data set \cite{Yontem2019} (upper left) and its estimated bicluster structures when the null hypotheses were accepted by the proposed test (upper right) and the previous one (bottom). In the upper left and bottom figures, the black and white elements represent one and zero, respectively. In the upper right figure, the sorting orders of the rows and columns are the same as in the upper left figure, and the color of each element indicates its group index (the white elements were estimated as background), regardless of its value. In the bottom figure, the blue lines represent the regular grid bicluster structure (note that in this figure, the sorting orders of the rows and columns are different from those of the upper left figure. The meaning of each attribute index is shown in Table \ref{tb:divorce_attribute}.}
  \label{fig:practical_A}
\end{figure}
\begin{figure}[t]
  \centering
  \includegraphics[height=0.2\hsize]{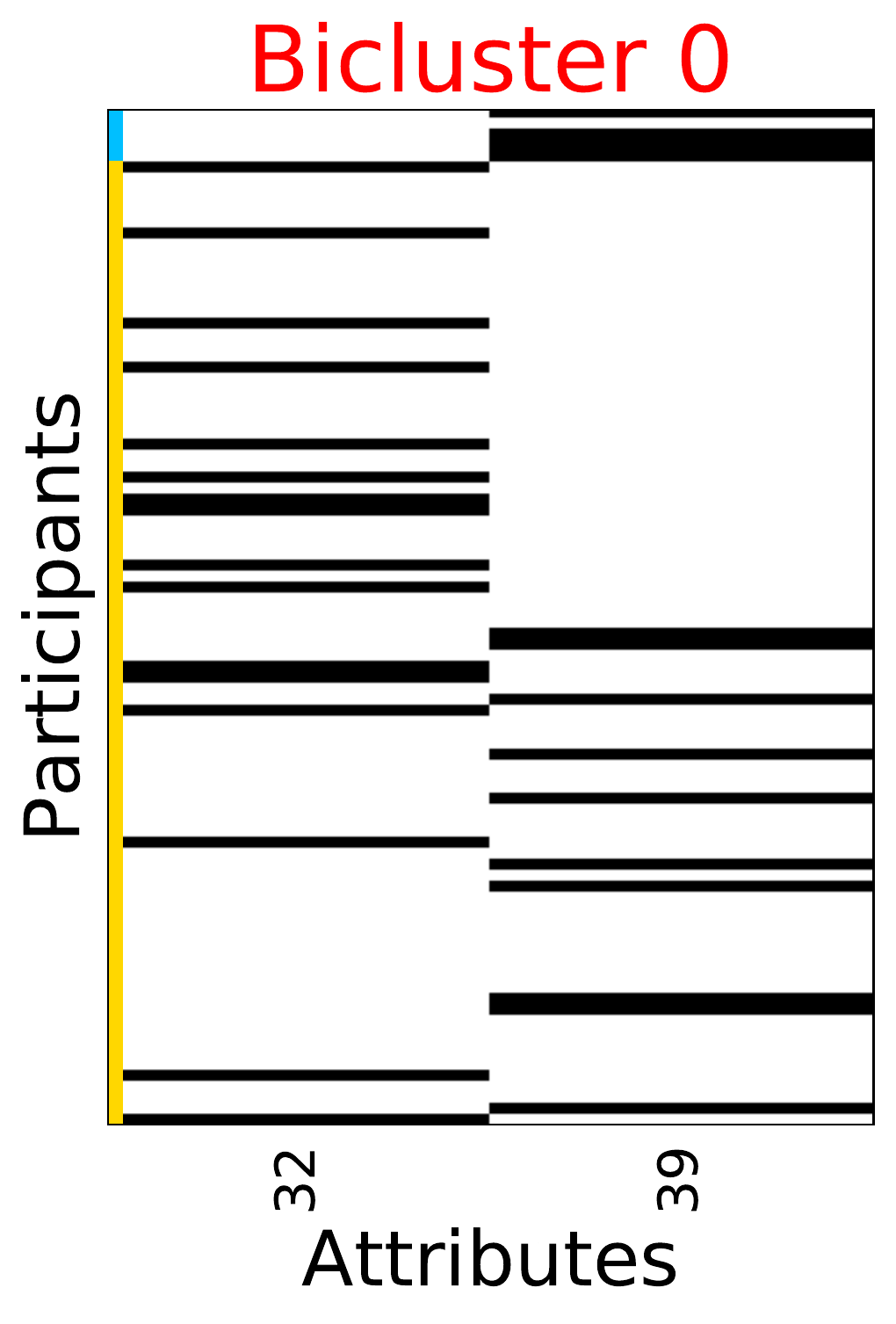}
  \includegraphics[height=0.2\hsize]{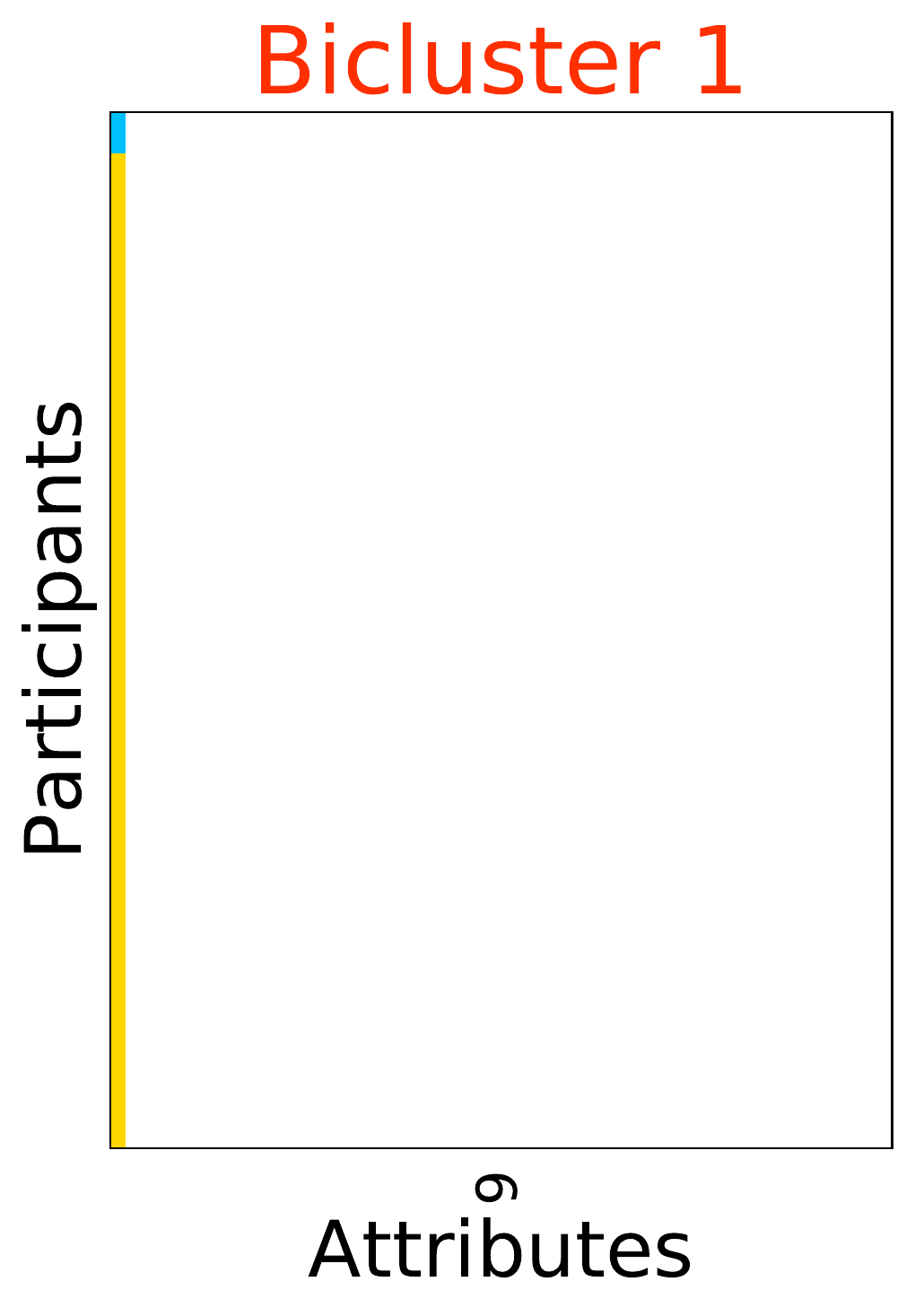}
  \includegraphics[height=0.2\hsize]{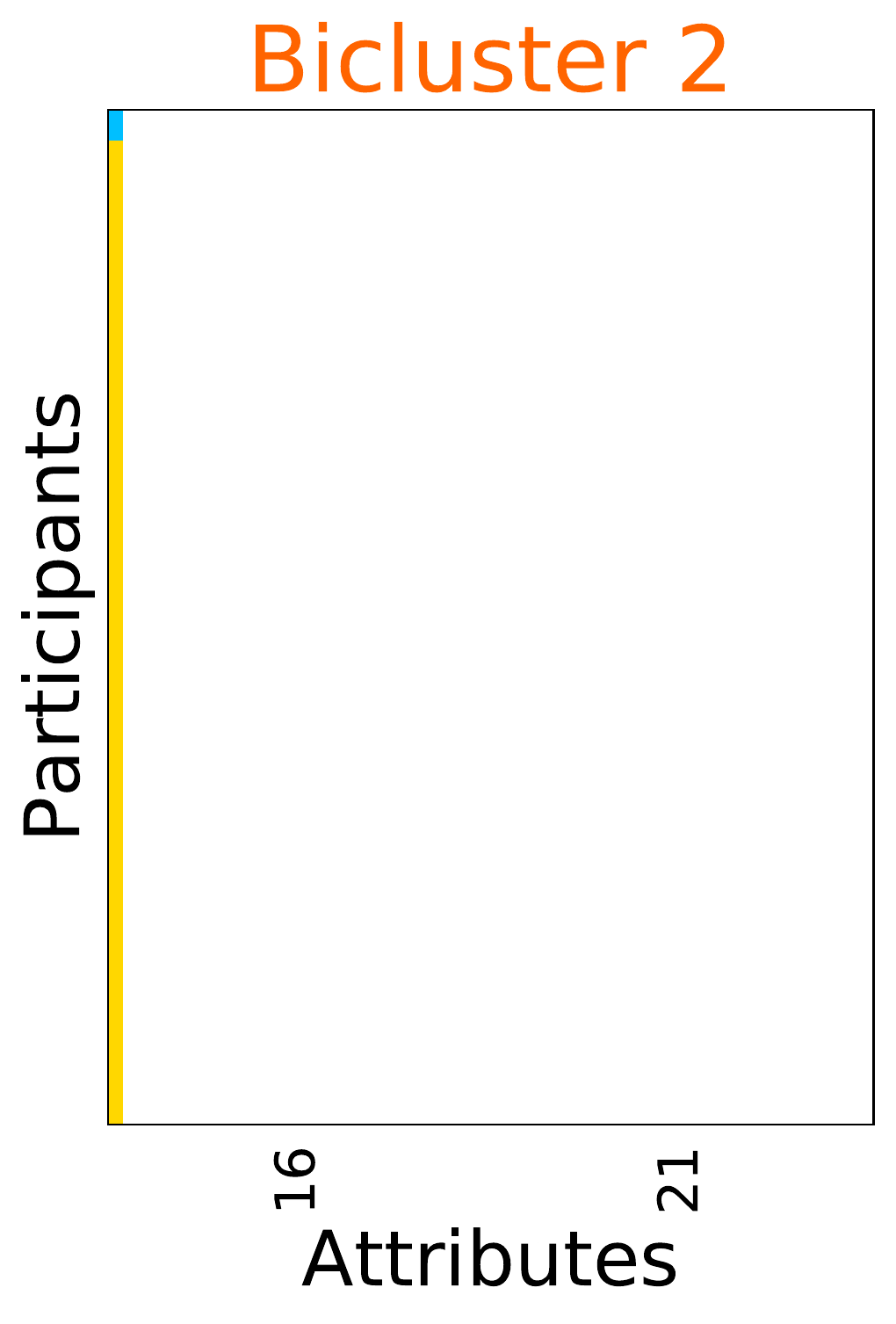}
  \includegraphics[height=0.2\hsize]{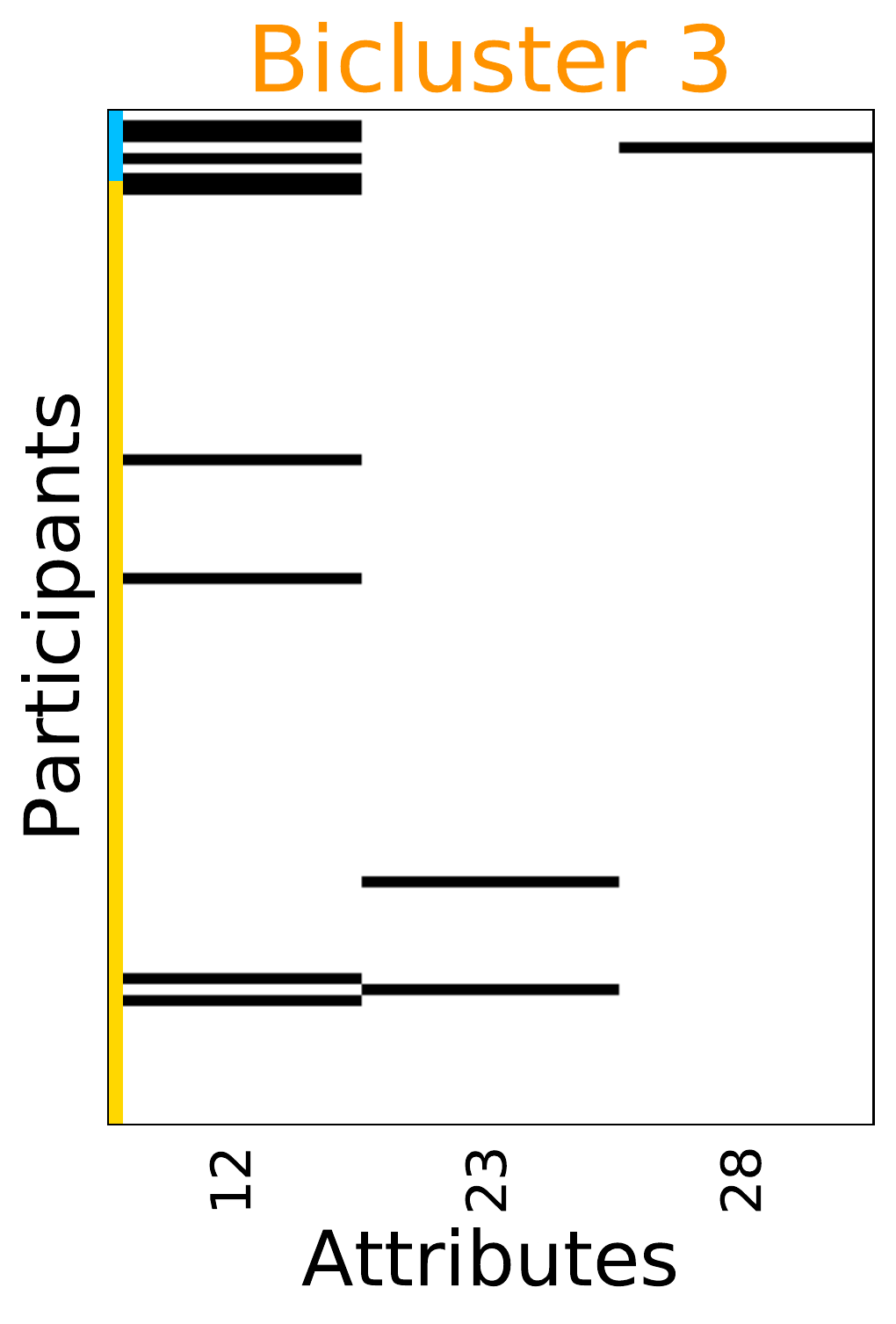}
  \includegraphics[height=0.2\hsize]{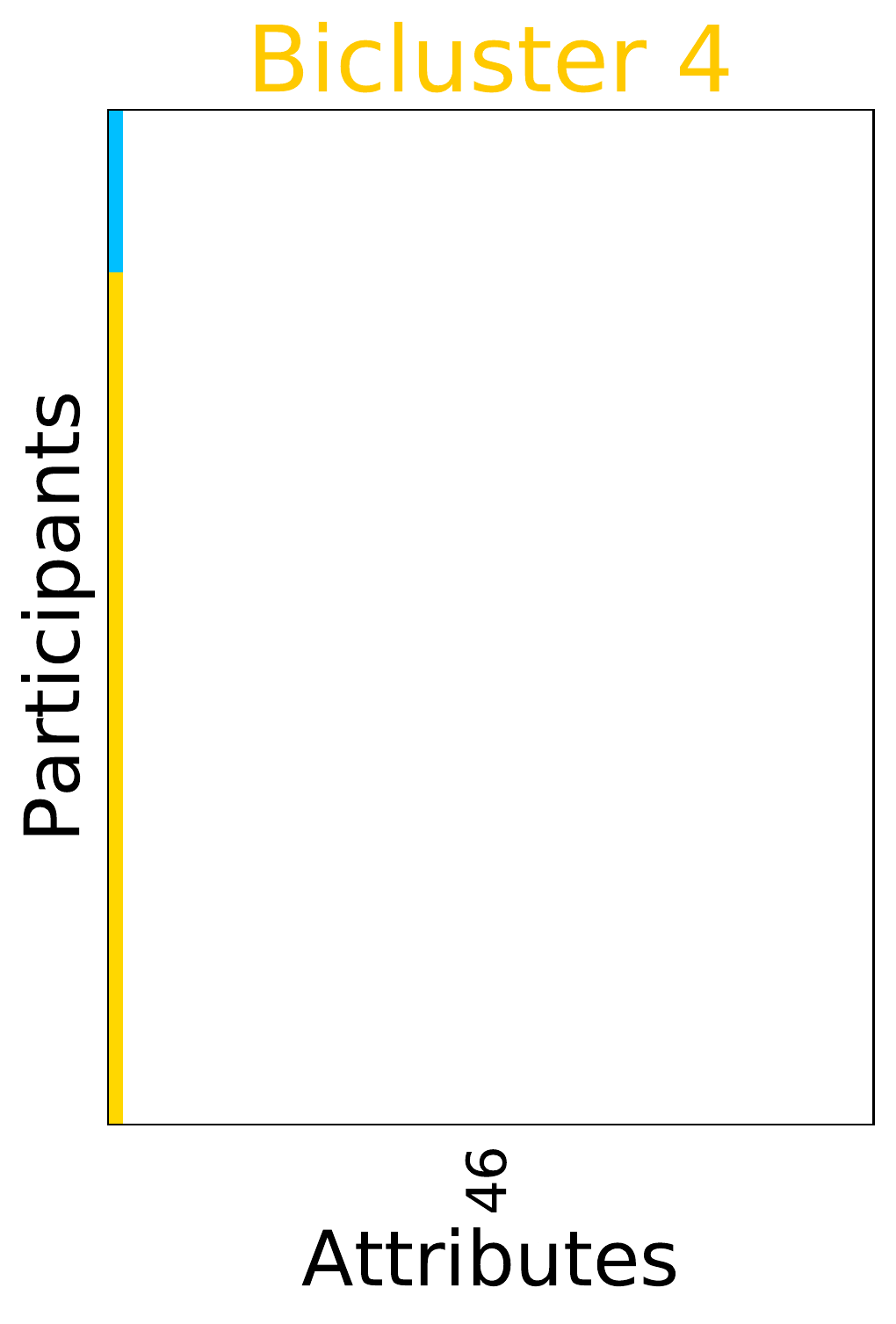}
  \includegraphics[height=0.2\hsize]{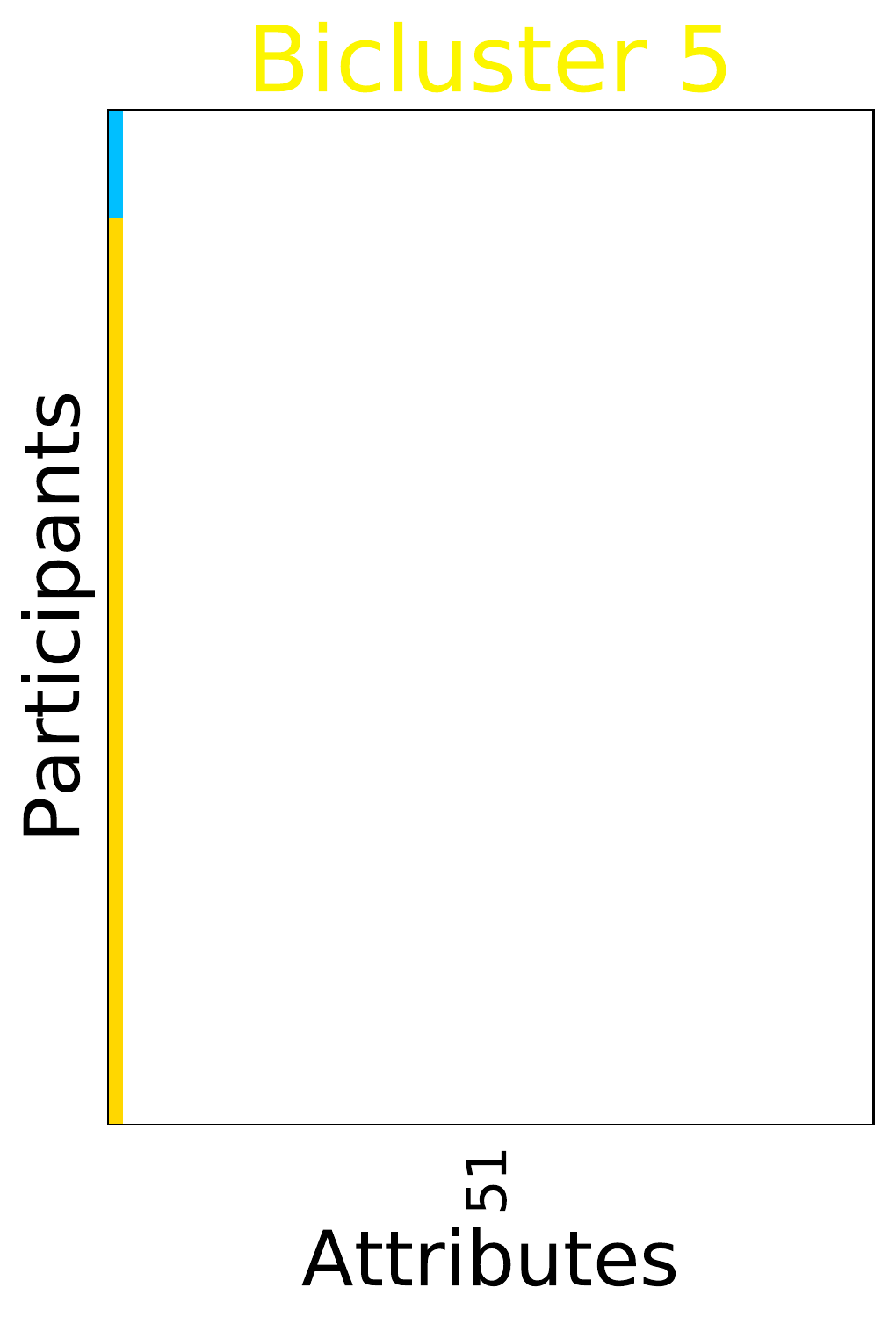}
  \includegraphics[height=0.2\hsize]{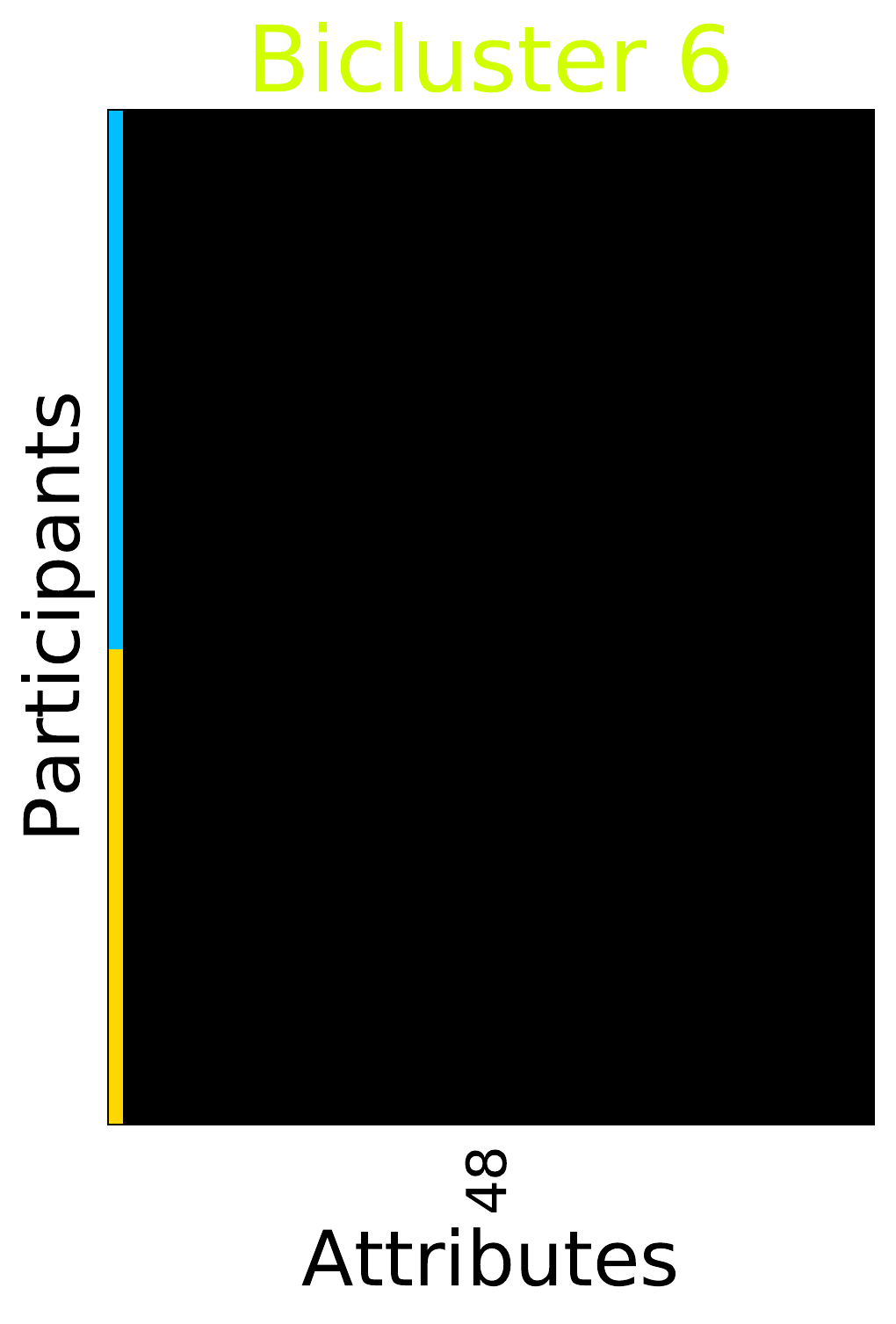}
  \includegraphics[height=0.2\hsize]{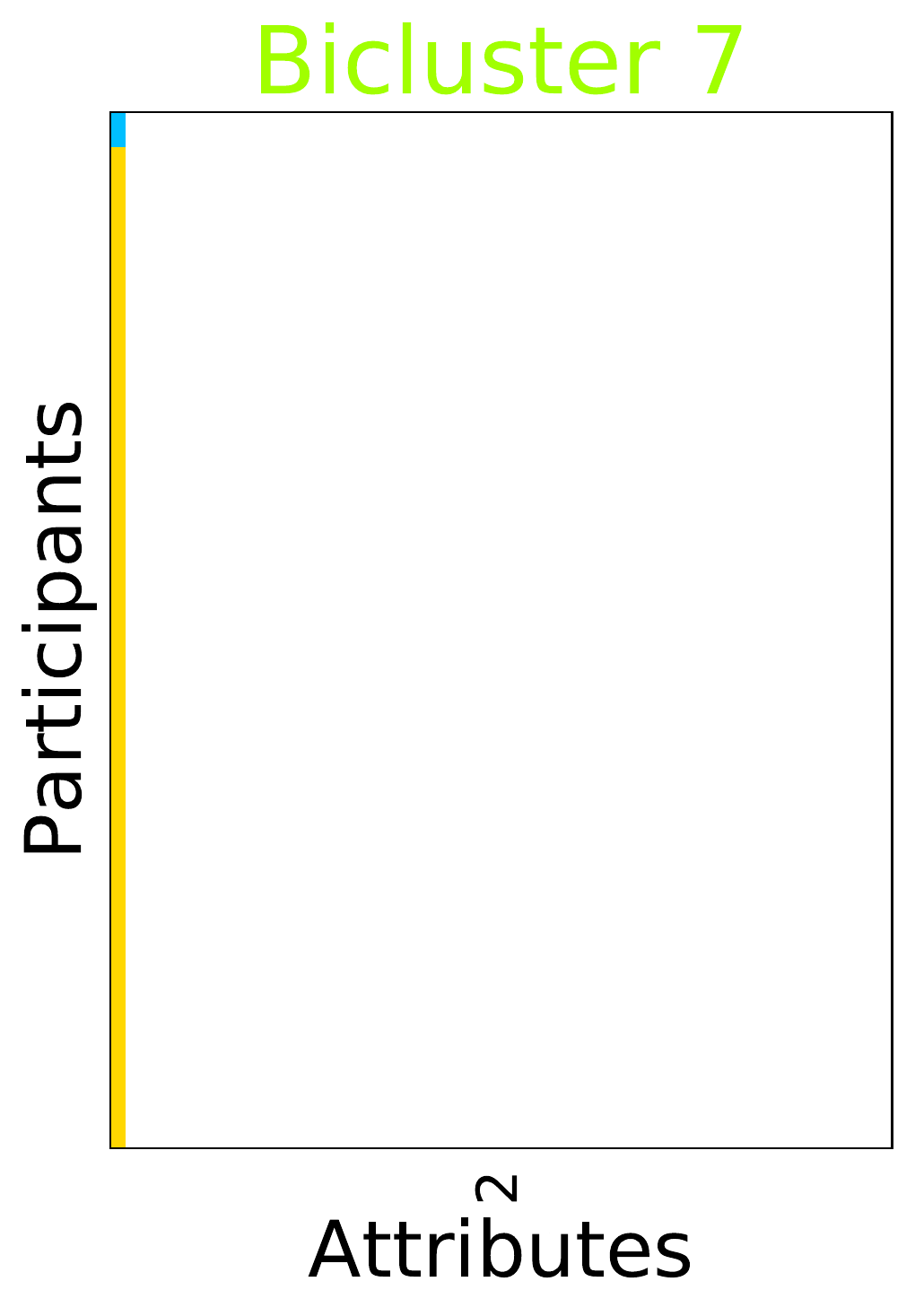}
  \includegraphics[height=0.2\hsize]{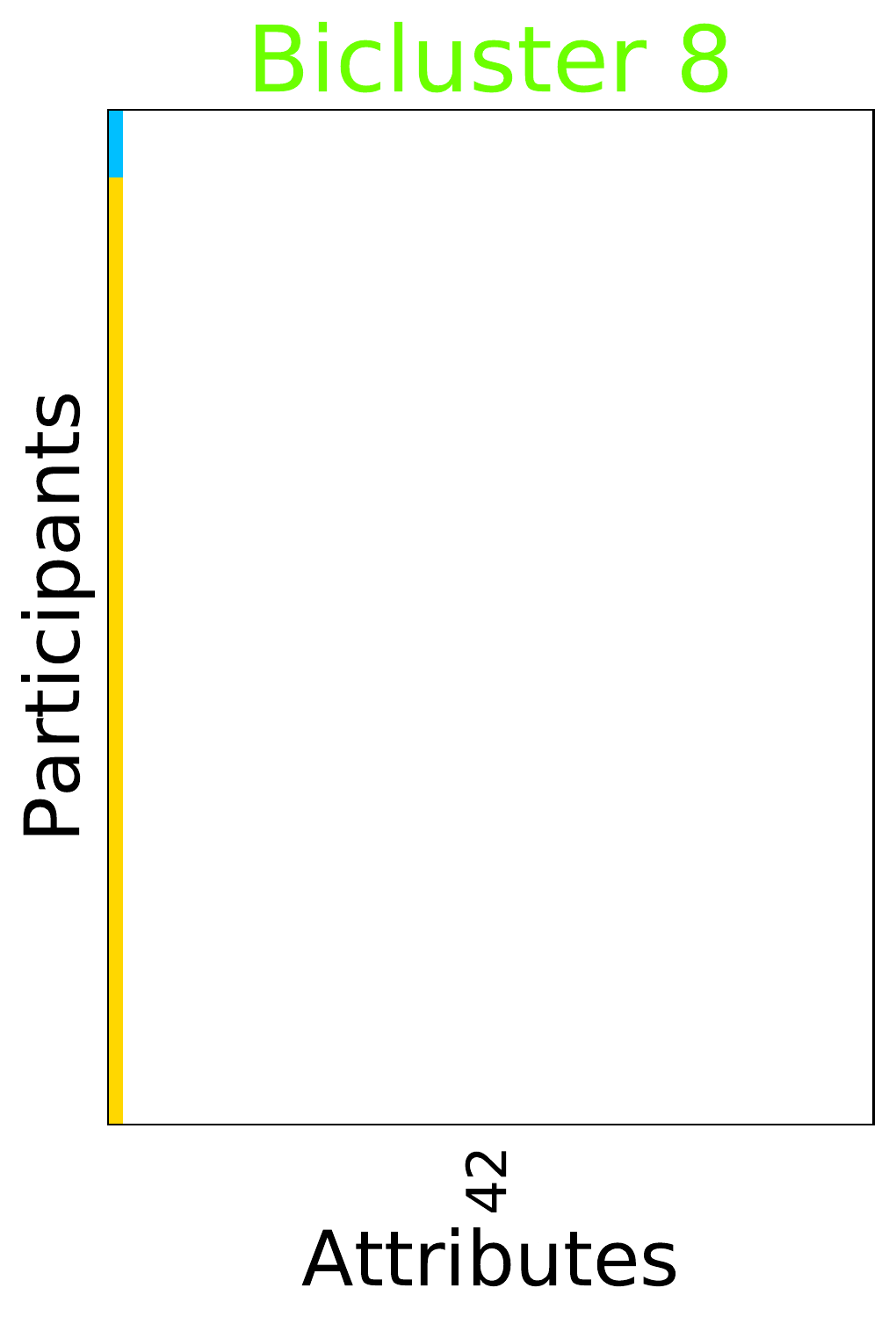}
  \includegraphics[height=0.2\hsize]{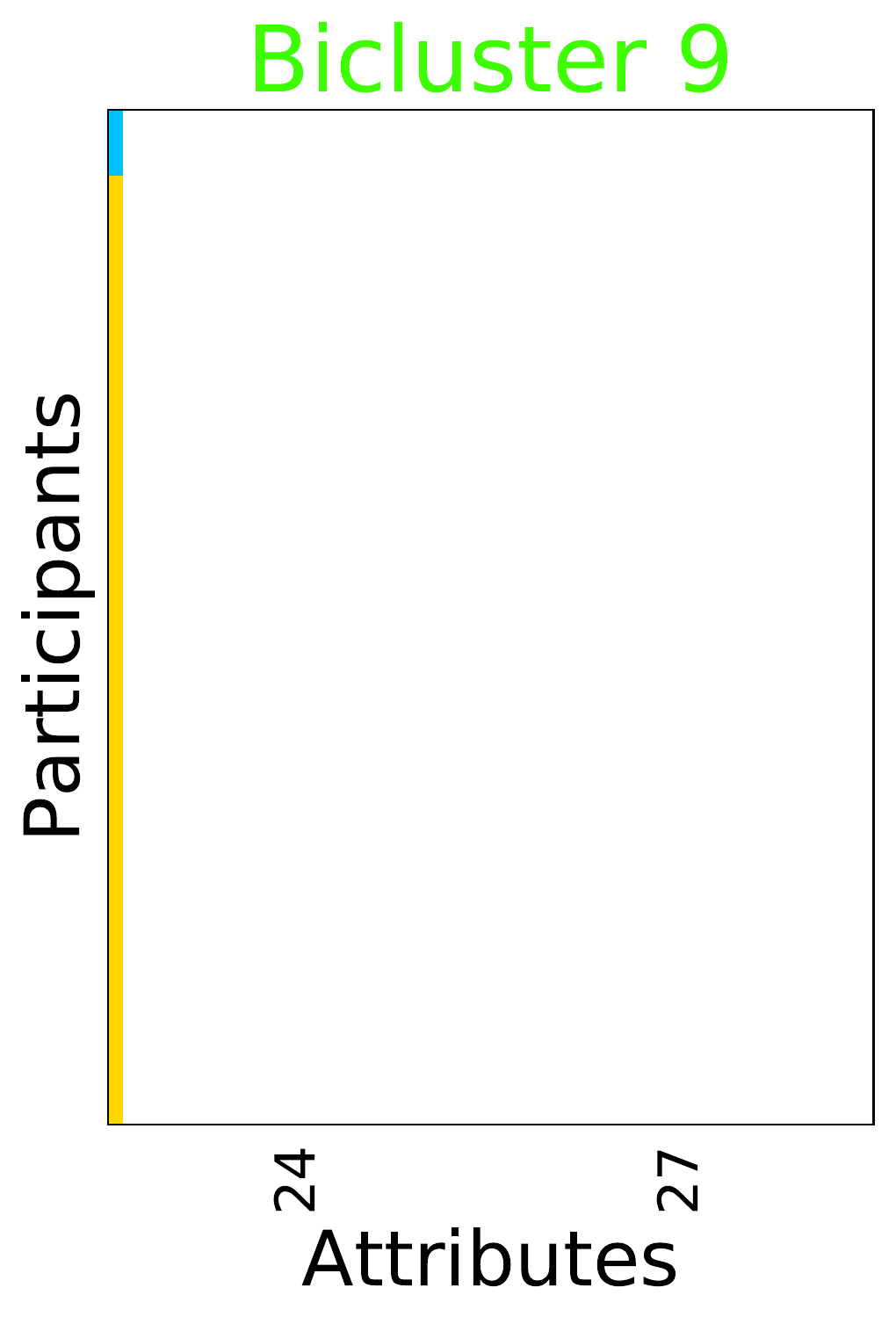}
  \includegraphics[height=0.2\hsize]{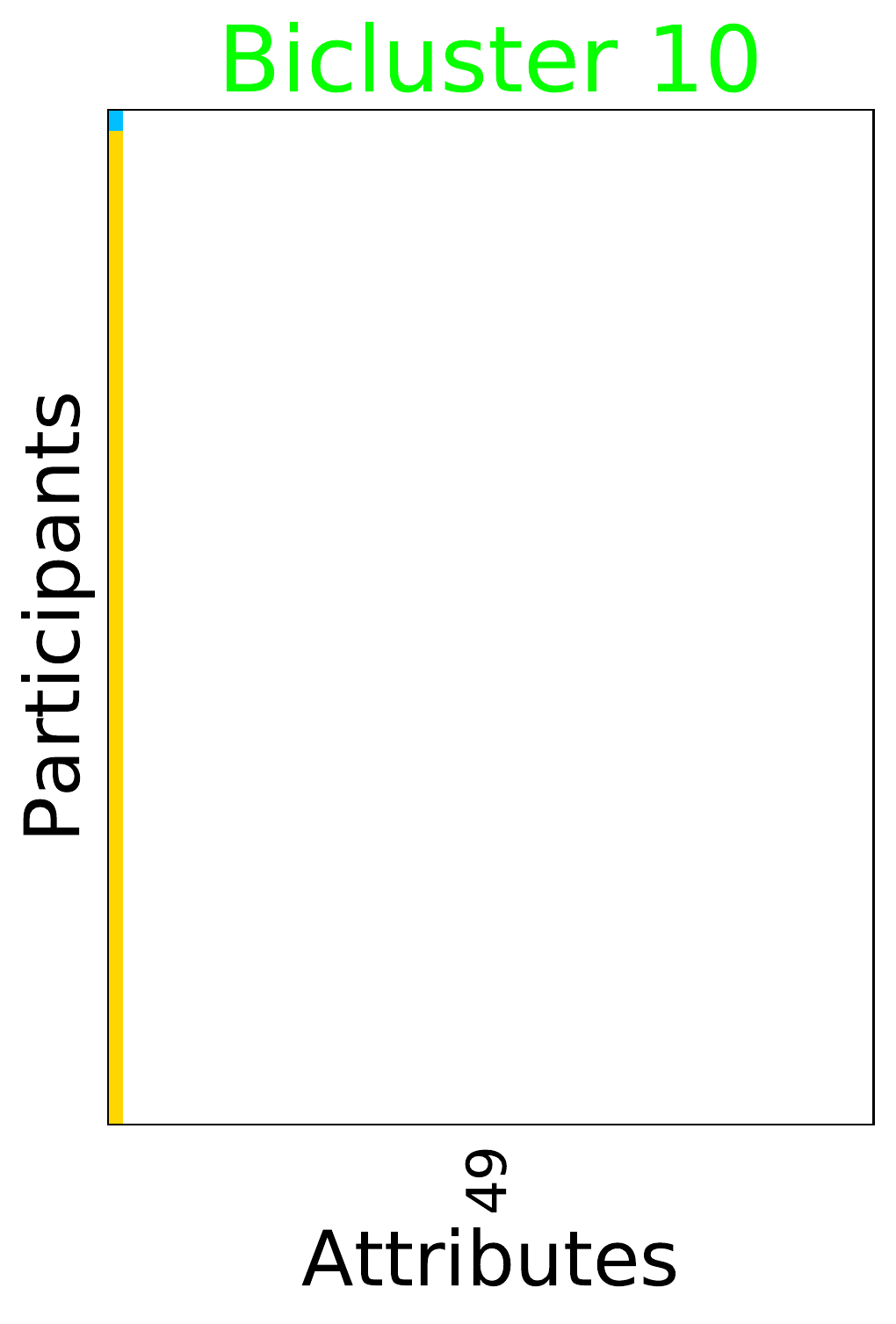}
  \includegraphics[height=0.2\hsize]{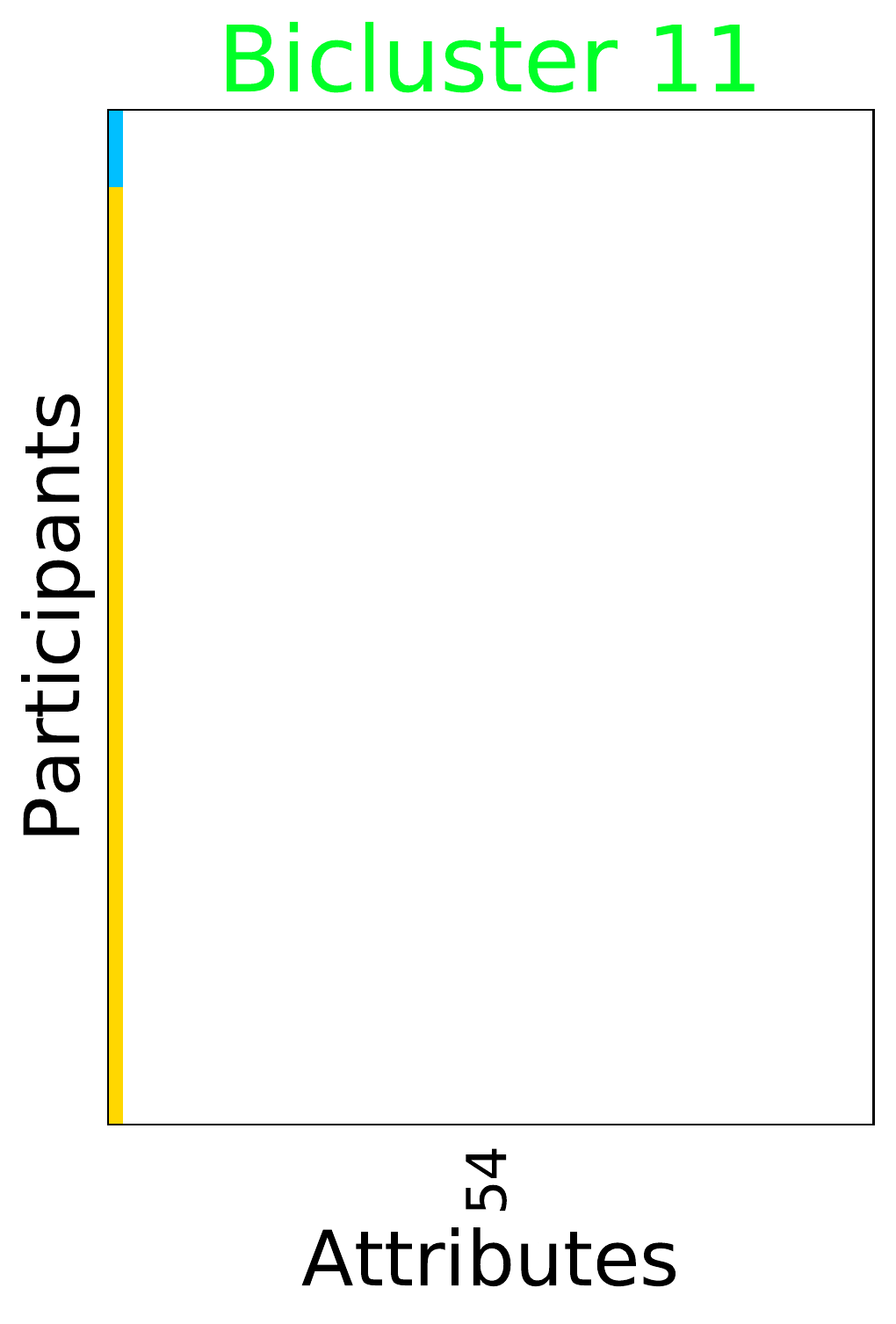}
  \includegraphics[height=0.2\hsize]{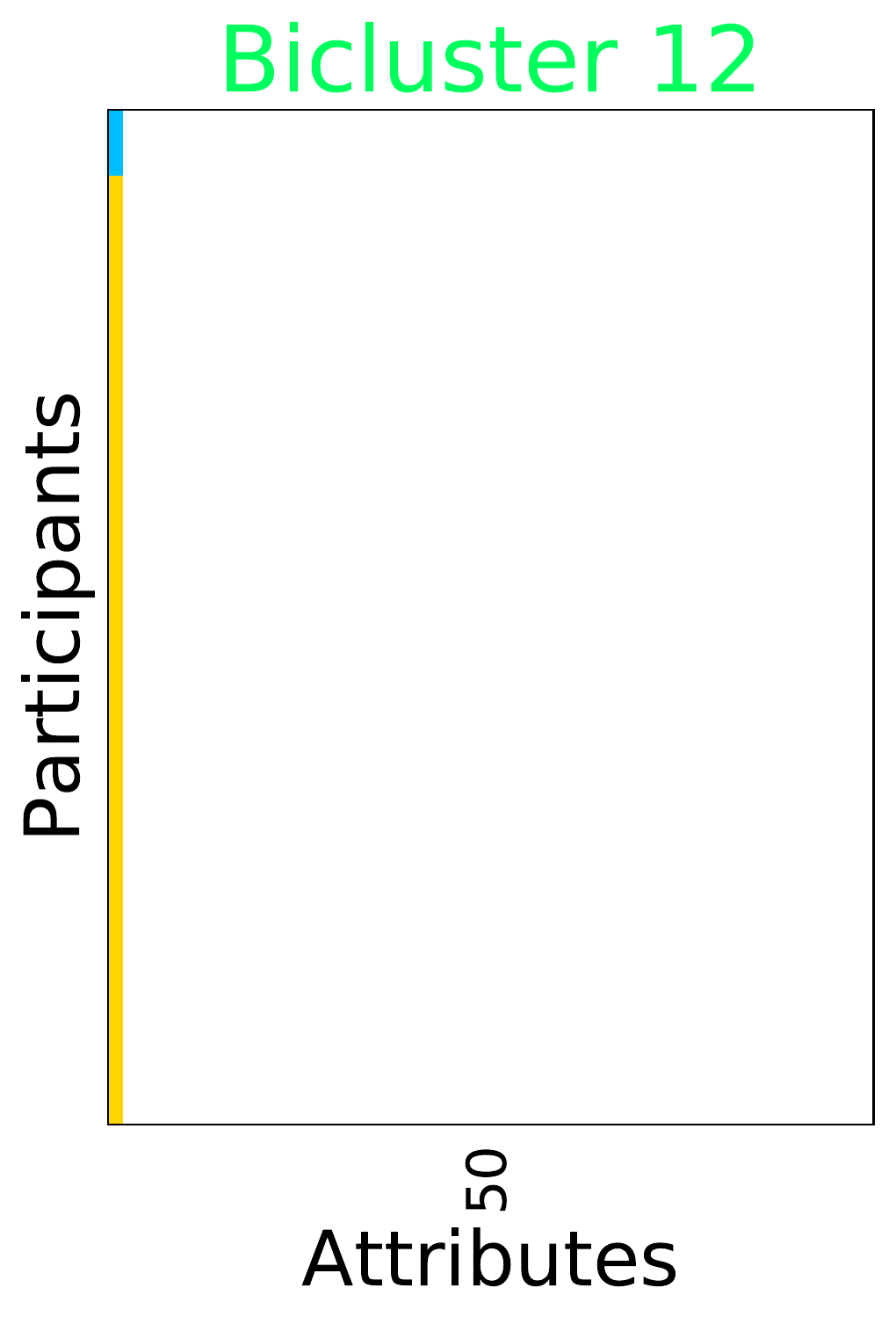}
  \includegraphics[height=0.2\hsize]{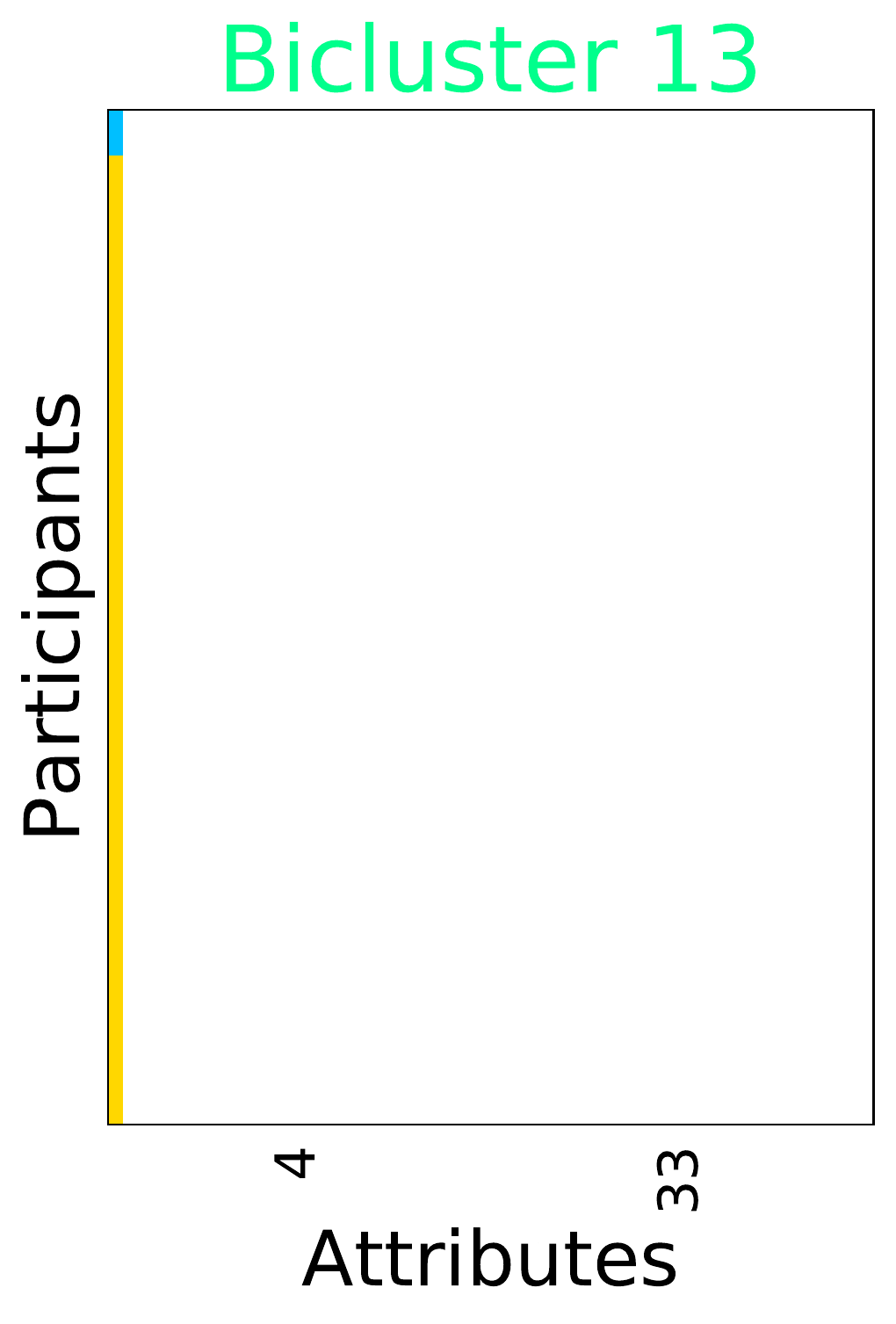}
  \includegraphics[height=0.2\hsize]{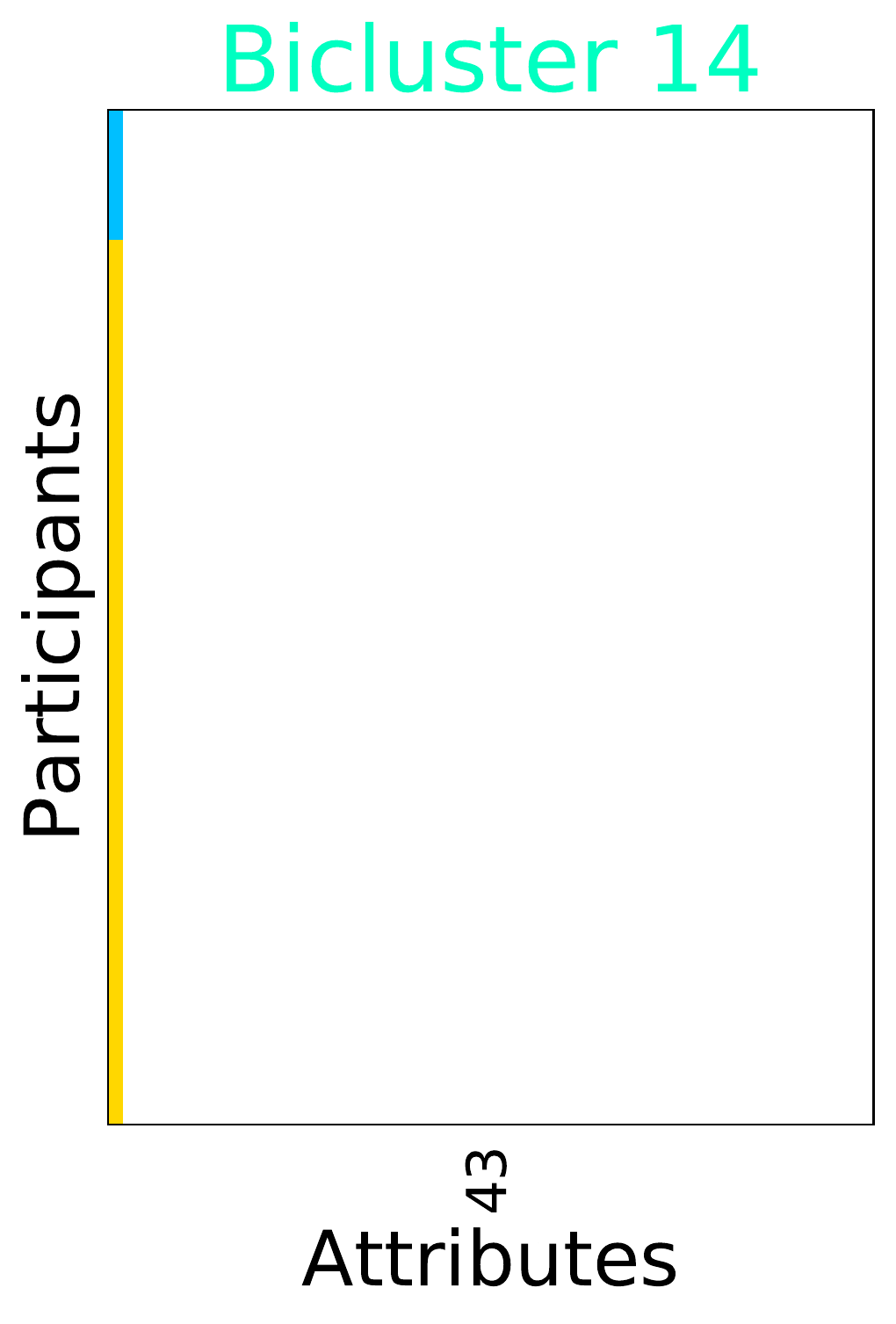}
  \includegraphics[height=0.2\hsize]{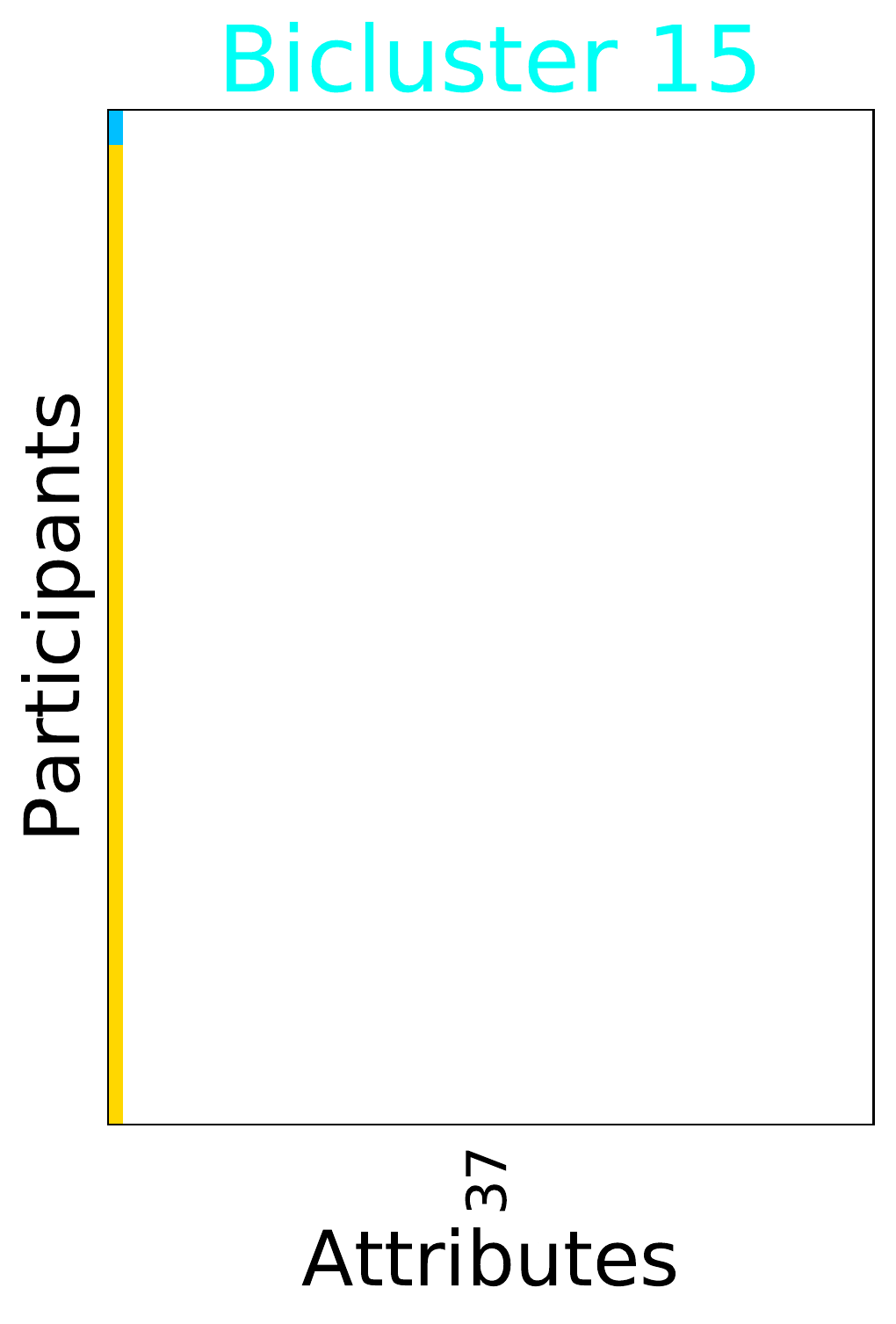}
  \includegraphics[height=0.2\hsize]{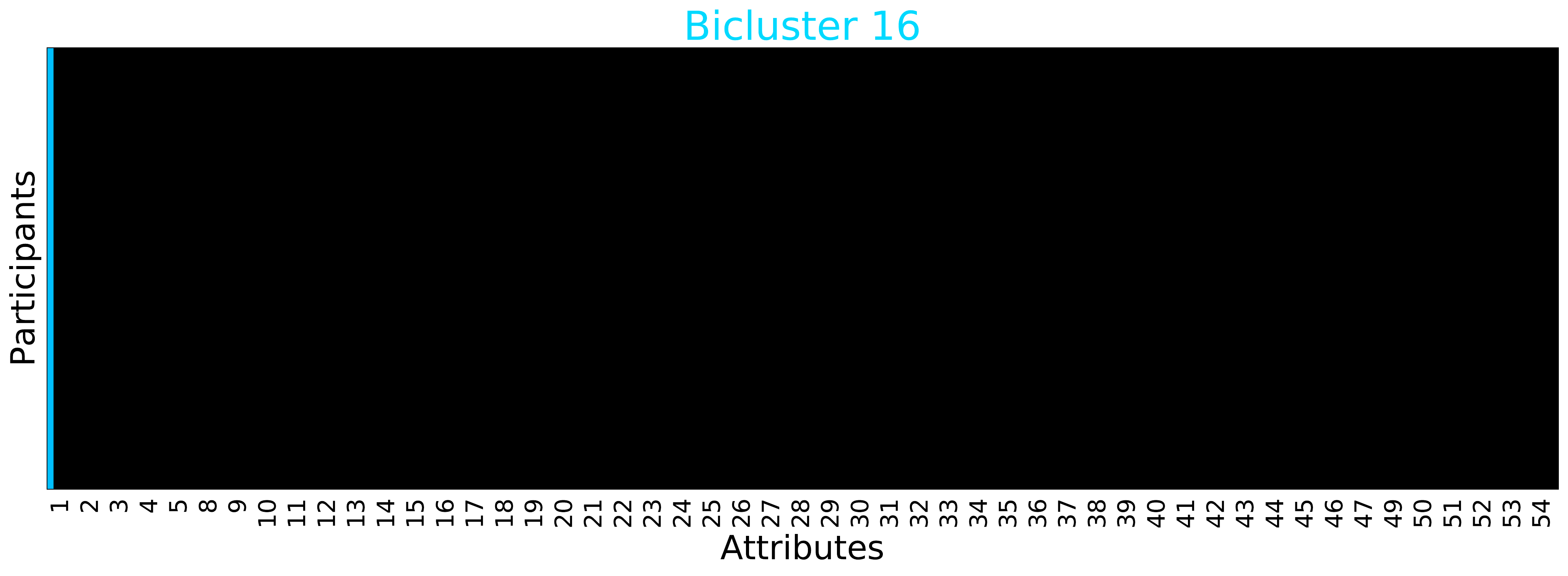}
  \includegraphics[height=0.2\hsize]{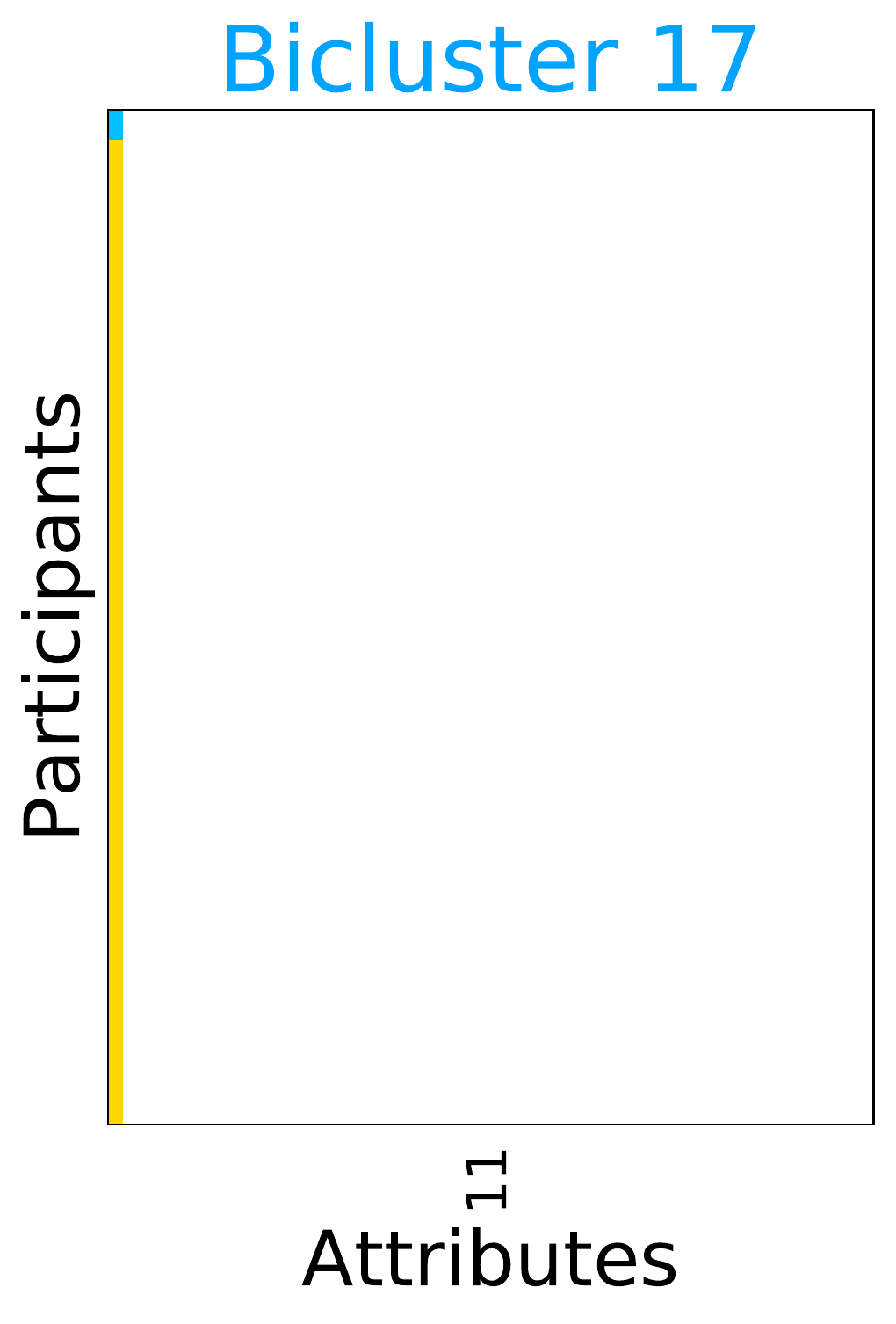}
  \includegraphics[height=0.2\hsize]{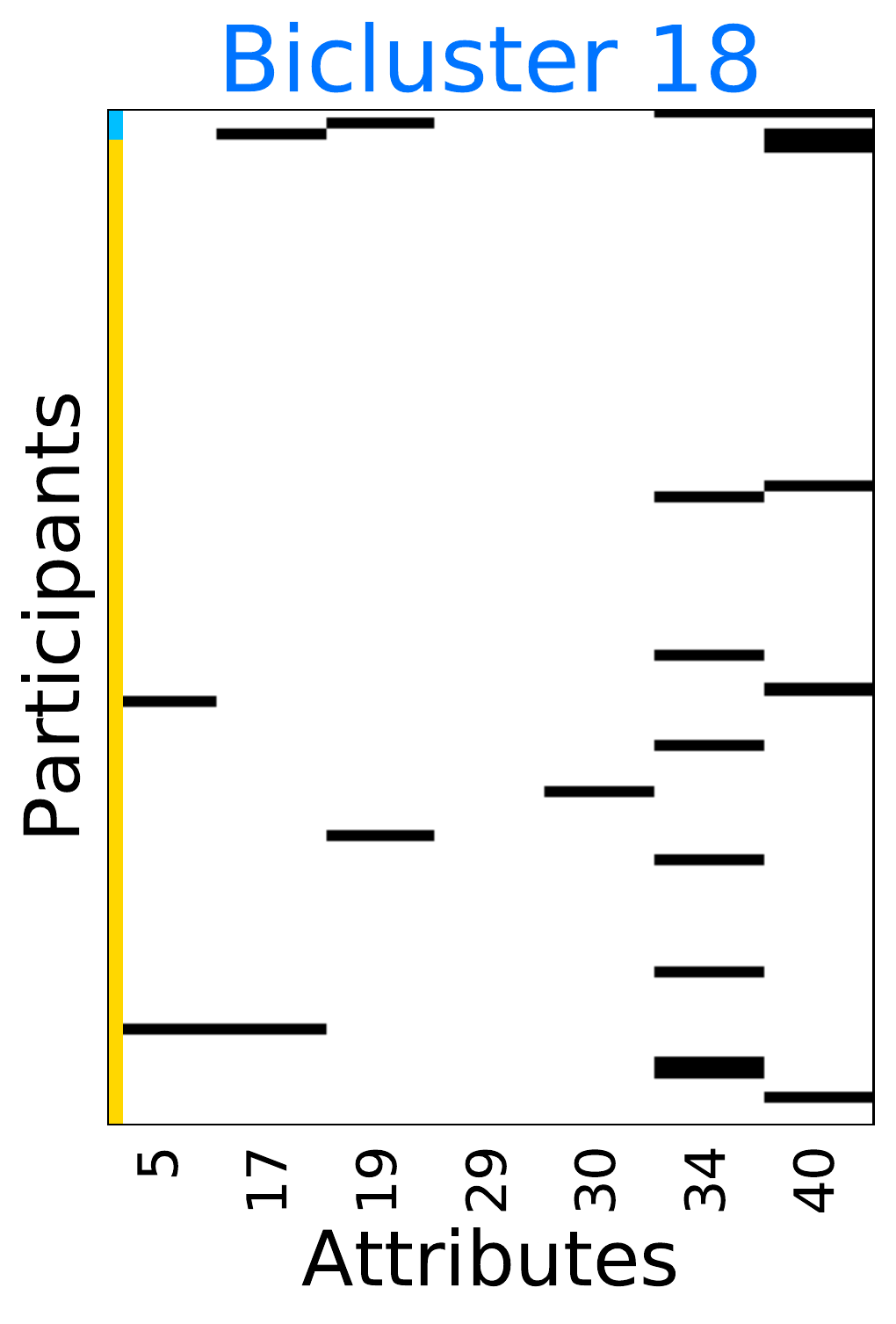}
  \includegraphics[height=0.2\hsize]{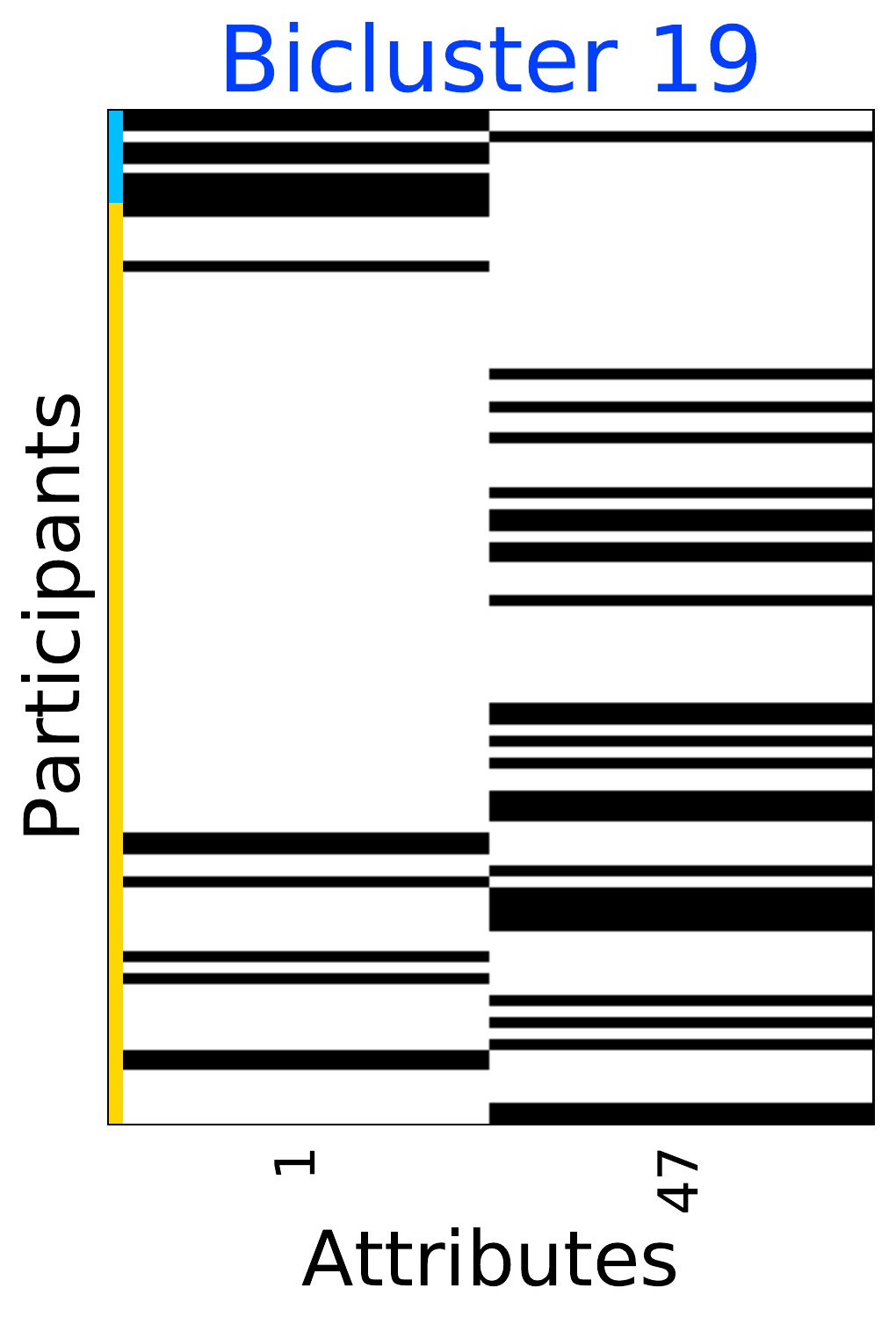}
  \includegraphics[height=0.2\hsize]{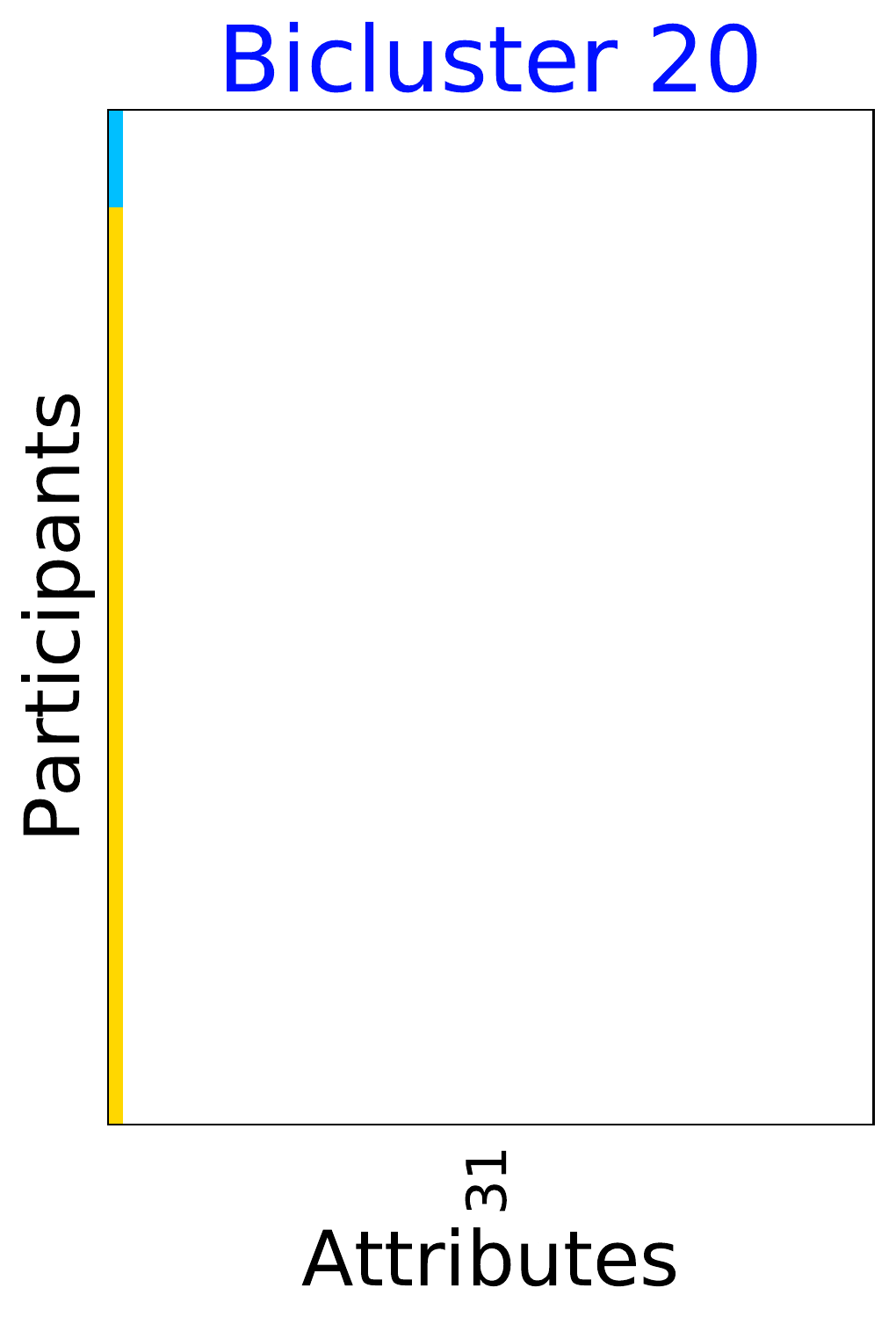}
  \includegraphics[height=0.2\hsize]{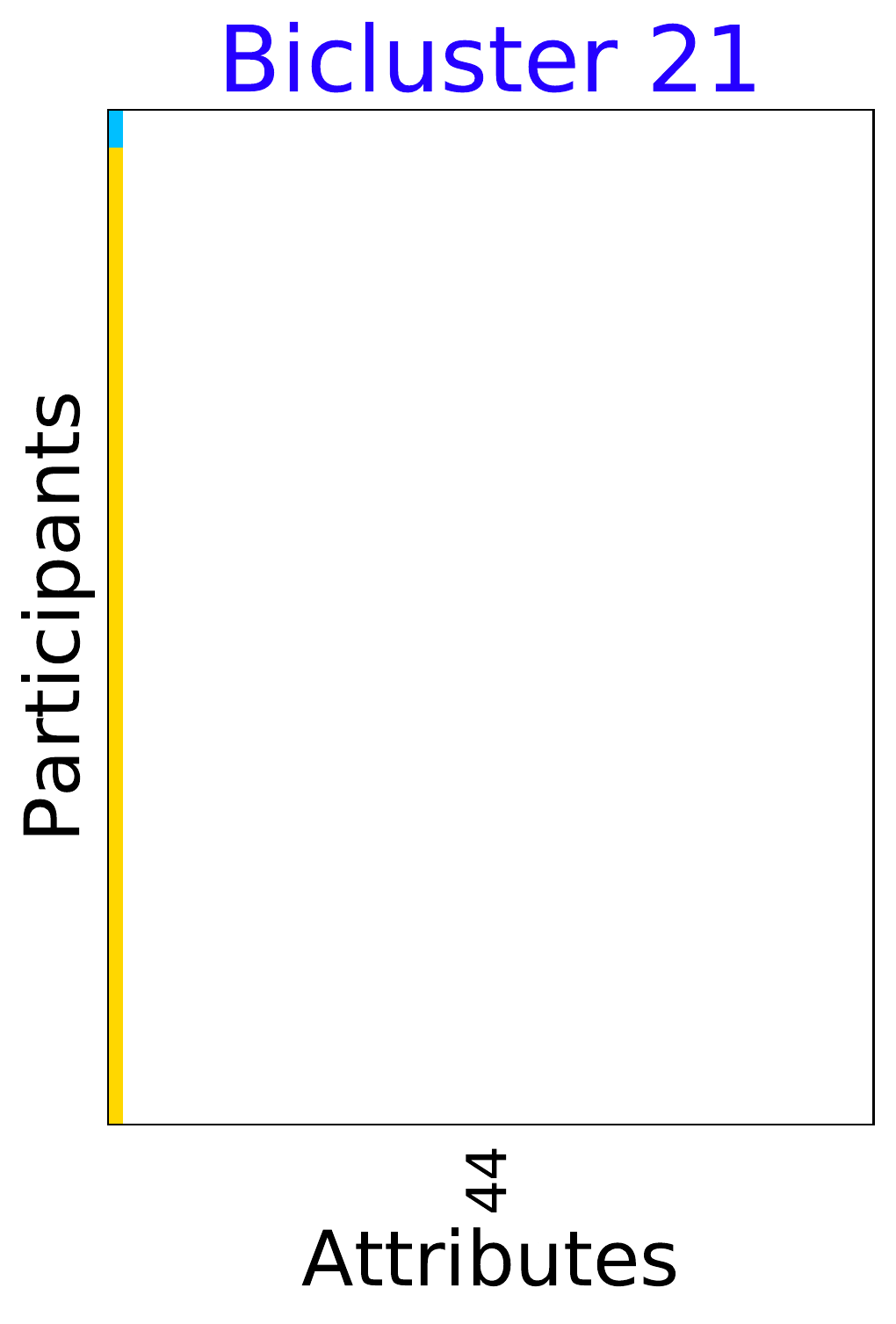}
  \includegraphics[height=0.2\hsize]{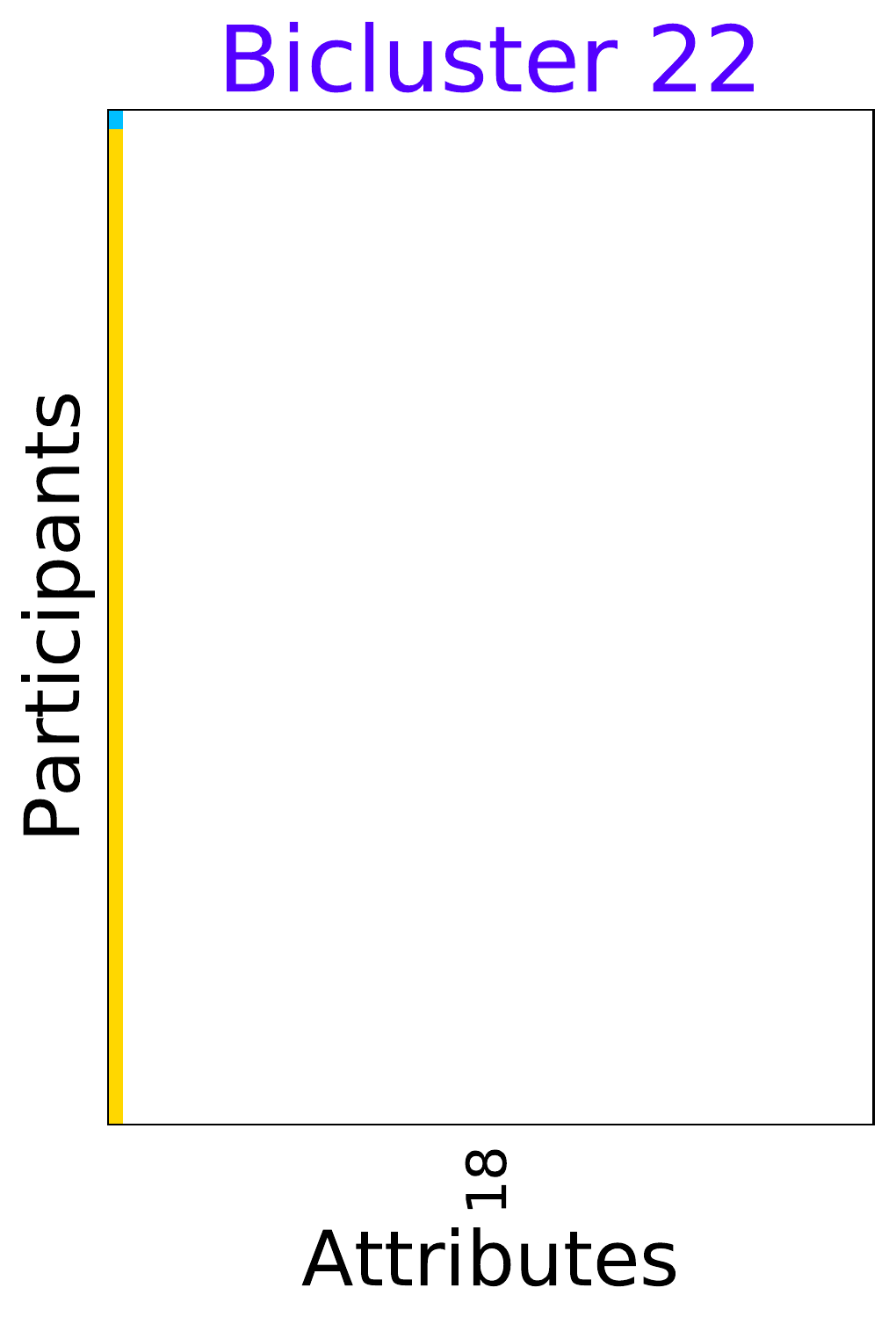}
  \includegraphics[height=0.2\hsize]{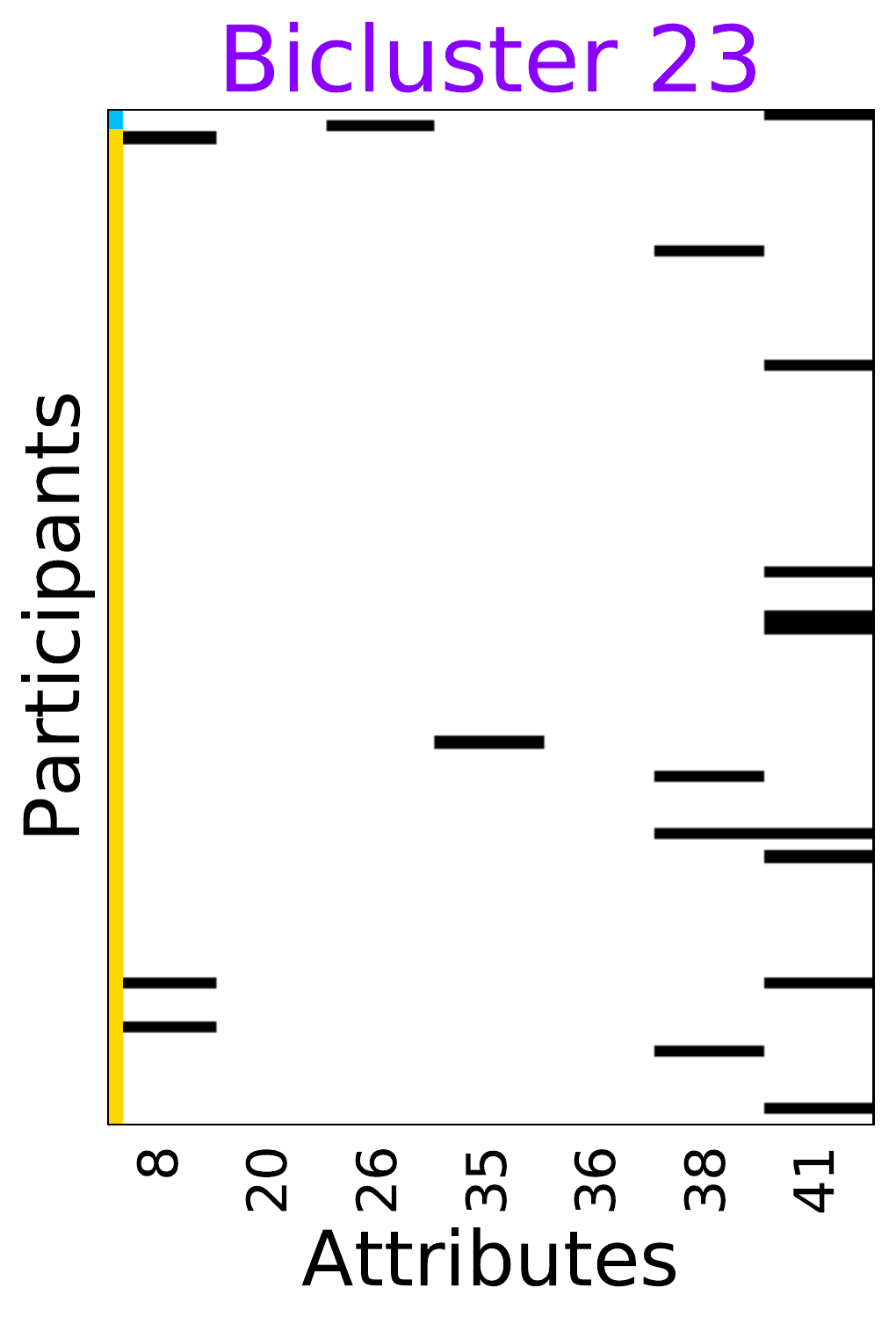}
  \includegraphics[height=0.2\hsize]{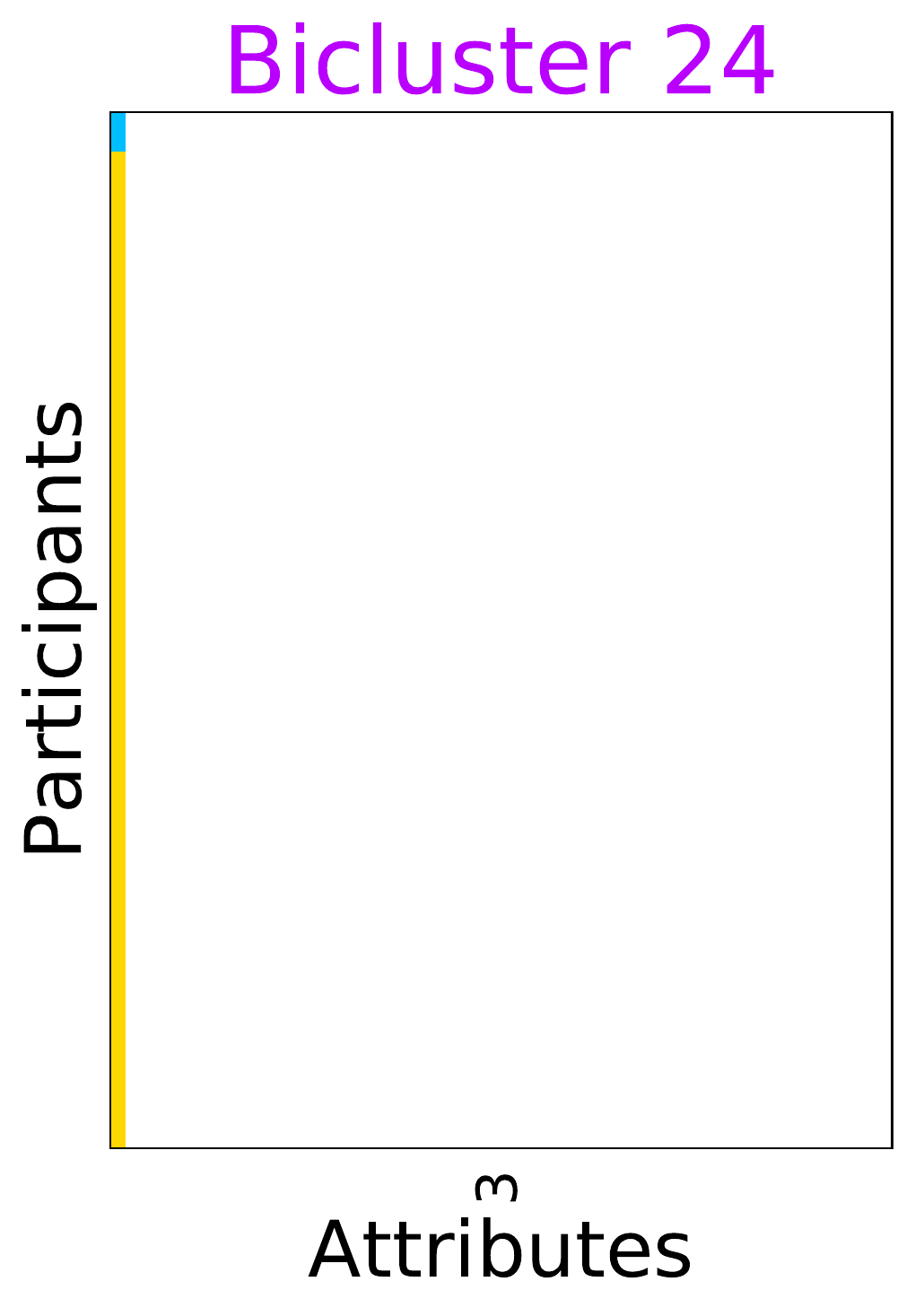}
  \includegraphics[height=0.2\hsize]{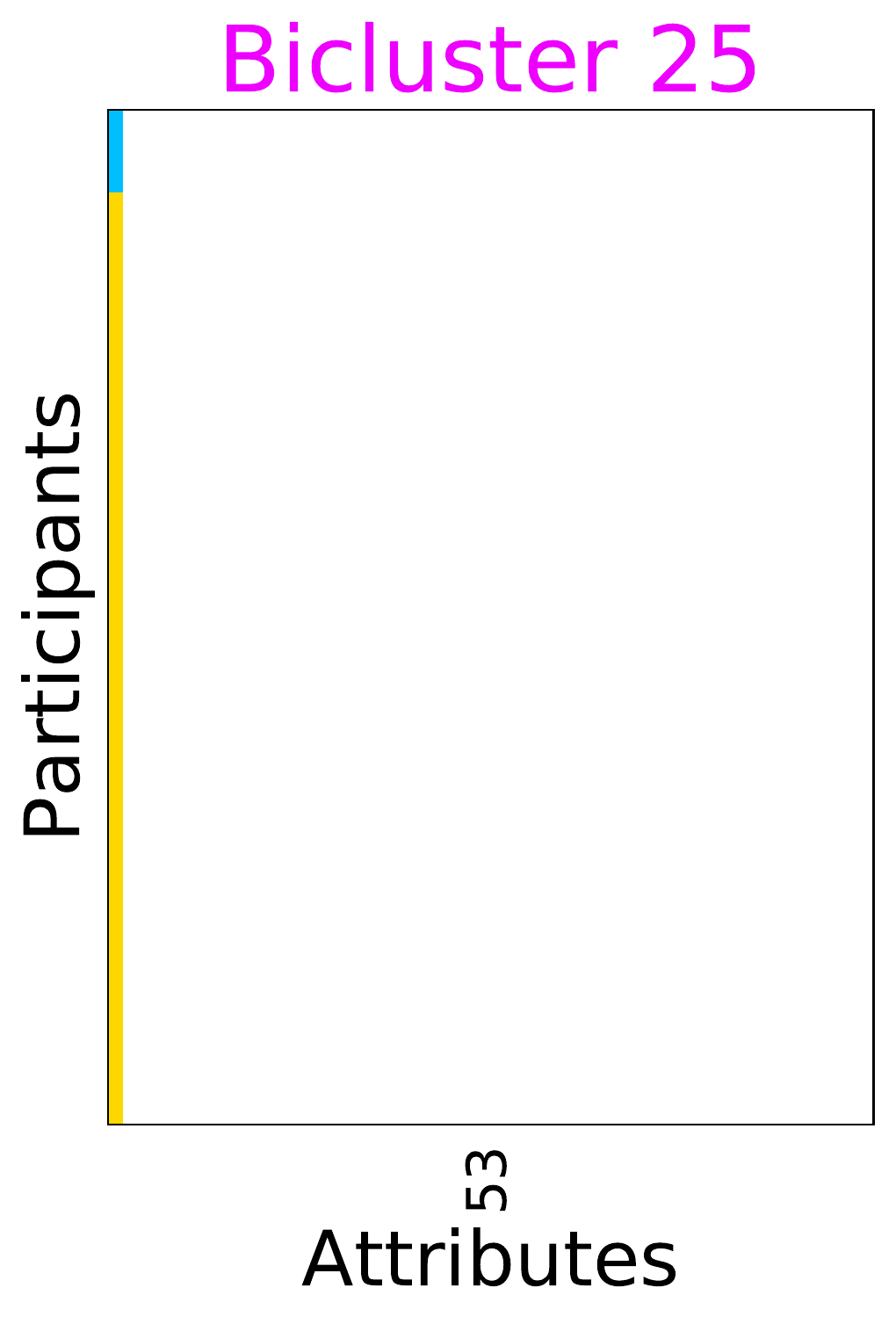}
  \includegraphics[height=0.2\hsize]{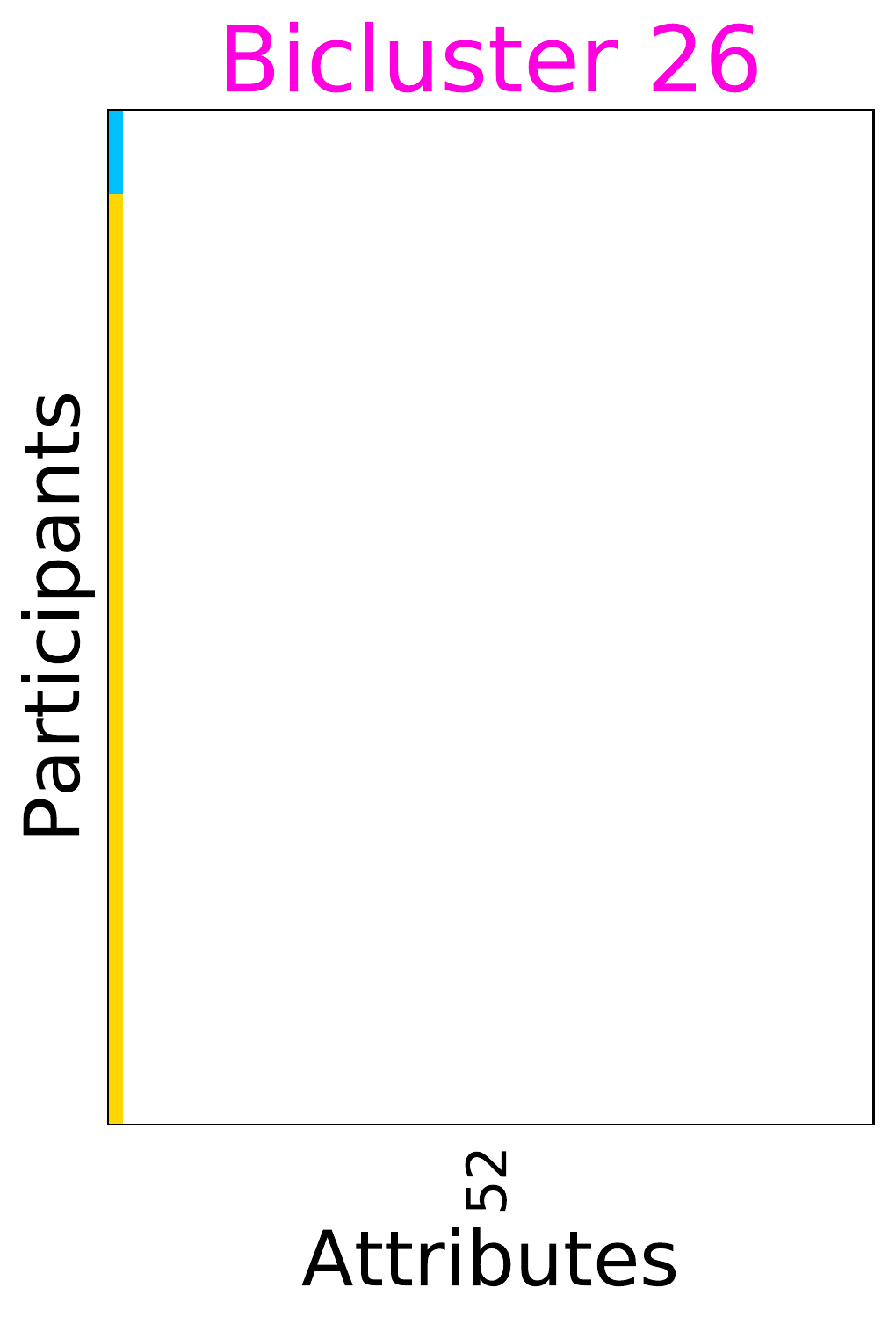}
  \includegraphics[height=0.2\hsize]{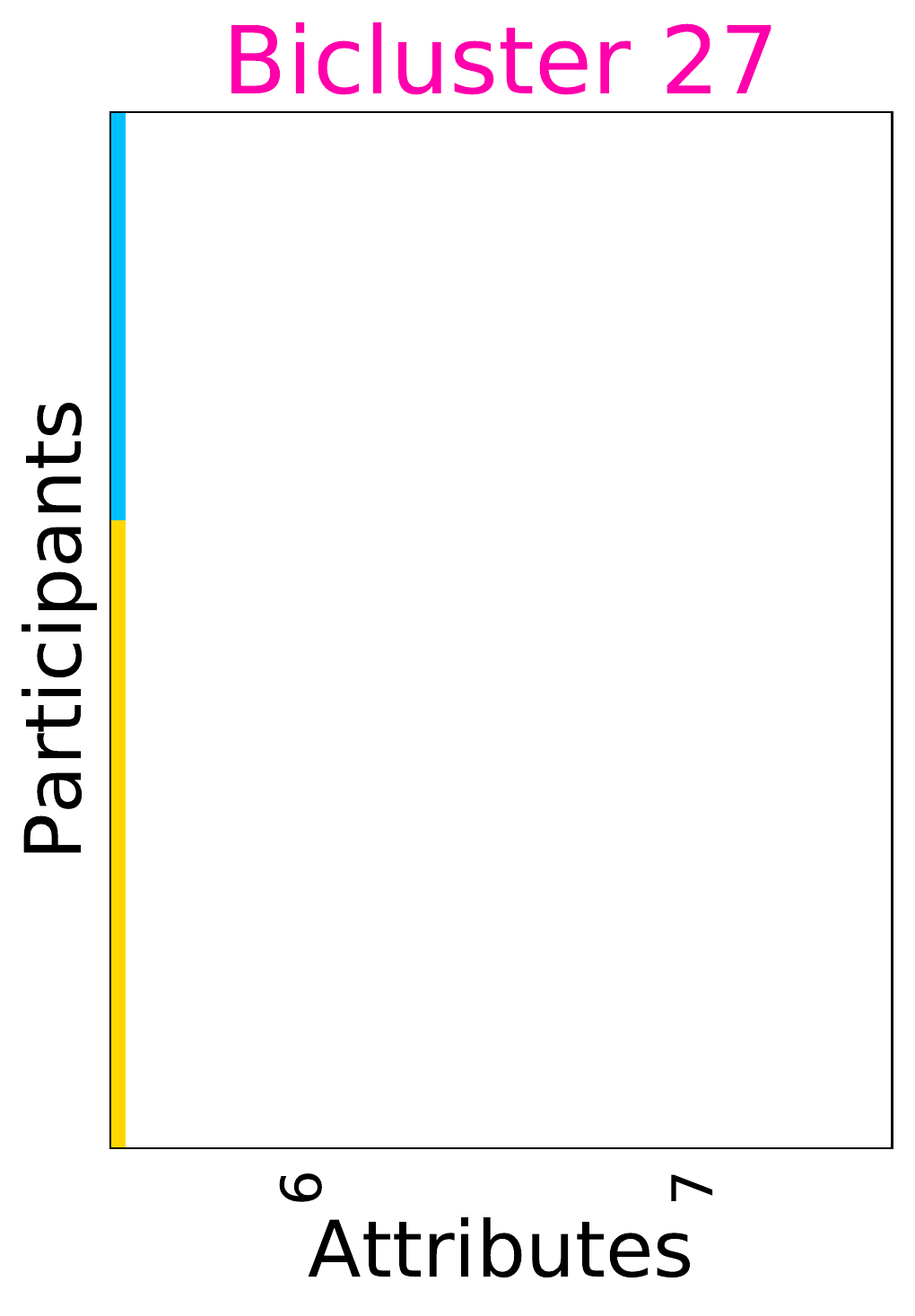}
  \includegraphics[height=0.2\hsize]{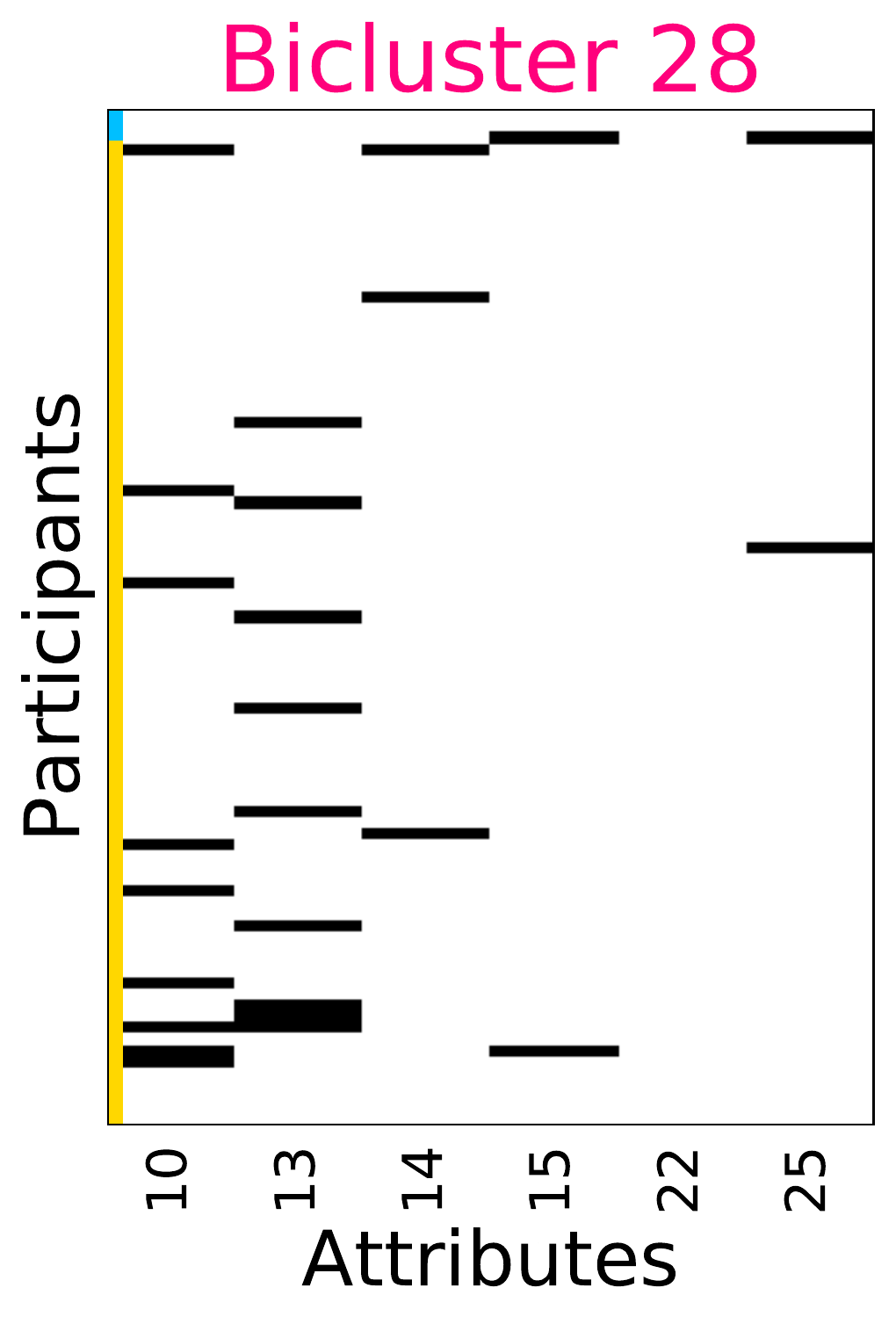}
  \includegraphics[height=0.2\hsize]{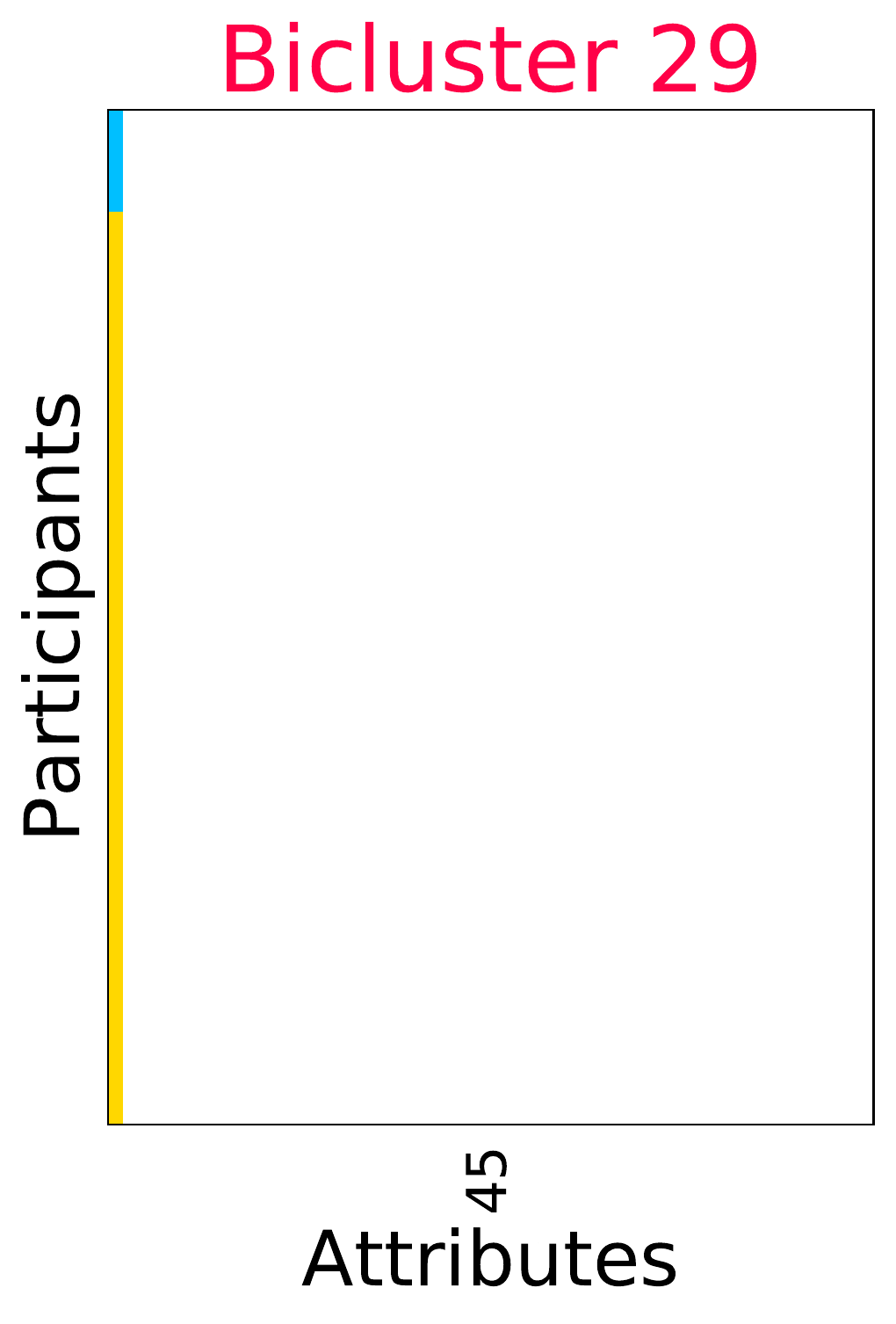}
  \caption{Estimated biclusters of the observed matrix when the proposed test accepted the null hypothesis. The font color of the title corresponds to the bicluster color in Figure \ref{fig:practical_A}, and the left side color for each row indicates the class of each participant in the bicluster, as in the legend of Figure \ref{fig:practical_A}.}
  \label{fig:practical_bic_proposed}
\end{figure}

\begin{table}[t]
\caption{Attributes of the Divorce Predictors data set \cite{Yontem2019}.}\vspace{2mm}
\scalebox{0.7}{\begin{tabular}{|c|l|}\hline
1 & If one of us apologizes when our discussion deteriorates, the discussion ends.\\
2 & I know we can ignore our differences, even if things get hard sometimes.\\
3 & When we need it, we can take our discussions with my spouse from the beginning and correct it.\\
4 & When I discuss with my spouse, to contact him will eventually work.\\
5 & The time I spent with my wife is special for us.\\
6 & We don't have time at home as partners.\\
7 & We are like two strangers who share the same environment at home rather than family.\\
8 & I enjoy our holidays with my wife.\\
9 & I enjoy traveling with my wife.\\
10 & Most of our goals are common to my spouse.\\
11 & I think that one day in the future, when I look back, I see that my spouse and I have been in harmony with each other.\\
12 & My spouse and I have similar values in terms of personal freedom.\\
13 & My spouse and I have similar sense of entertainment.\\
14 & Most of our goals for people (children, friends, etc.) are the same.\\
15 & Our dreams with my spouse are similar and harmonious.\\
16 & We're compatible with my spouse about what love should be.\\
17 & We share the same views about being happy in our life with my spouse.\\
18 & My spouse and I have similar ideas about how marriage should be.\\
19 & My spouse and I have similar ideas about how roles should be in marriage.\\
20 & My spouse and I have similar values in trust.\\
21 & I know exactly what my wife likes.\\
22 & I know how my spouse wants to be taken care of when she/he sick.\\
23 & I know my spouse's favorite food.\\
24 & I can tell you what kind of stress my spouse is facing in her/his life.\\
25 & I have knowledge of my spouse's inner world.\\
26 & I know my spouse's basic anxieties.\\
27 & I know what my spouse's current sources of stress are.\\
28 & I know my spouse's hopes and wishes.\\
29 & I know my spouse very well.\\
30 & I know my spouse's friends and their social relationships.\\
31 & I feel aggressive when I argue with my spouse.\\
32 & When discussing with my spouse, I usually use expressions such as ‘you always’ or ‘you never.’\\
33 & I can use negative statements about my spouse's personality during our discussions.\\
34 & I can use offensive expressions during our discussions.\\
35 & I can insult my spouse during our discussions.\\
36 & I can be humiliating when we discussions.\\
37 & My discussion with my spouse is not calm.\\
38 & I hate my spouse's way of open a subject.\\
39 & Our discussions often occur suddenly.\\
40 & We're just starting a discussion before I know what's going on.\\
41 & When I talk to my spouse about something, my calm suddenly breaks.\\
42 & When I argue with my spouse, I only go out and I don't say a word.\\
43 & I mostly stay silent to calm the environment a little bit.\\
44 & Sometimes I think it's good for me to leave home for a while.\\
45 & I'd rather stay silent than discuss with my spouse.\\
46 & Even if I'm right in the discussion, I stay silent to hurt my spouse.\\
47 & When I discuss with my spouse, I stay silent because I am afraid of not being able to control my anger.\\
48 & I feel right in our discussions.\\
49 & I have nothing to do with what I've been accused of.\\
50 & I'm not actually the one who's guilty about what I'm accused of.\\
51 & I'm not the one who's wrong about problems at home.\\
52 & I wouldn't hesitate to tell my spouse about her/his inadequacy.\\
53 & When I discuss, I remind my spouse of her/his inadequacy.\\
54 & I'm not afraid to tell my spouse about her/his incompetence.\\ \hline
\end{tabular}}
\label{tb:divorce_attribute}
\end{table}
% # # # # # # # # # # # # # # # # # # # # # # # # # # # # # # # # # # # # # # # # # # # # # # # # # #

%%%%%%%%%%%%%%%%%%%%%%%%%%%%%%%%%%%%%%%%%%%%%%%%%%%%%%%

\section{Discussion}
\label{sec:discussion}

This section explicates both the theoretical and practical perspectives on the proposed statistical test. 

We derived the asymptotic behavior of the proposed test statistic $T$ in both the null and alternative cases, where the null number of biclusters might increase with the matrix size (as the condition given in \ref{asmp:block_size}). As in the previous regular-grid based test \cite{Watanabe2021}, it is an important future work to reveal the non-asymptotic property of the test statistic $T$, that is, its convergence rate to the $TW_1$ distribution. To solve this problem, we need to derive the behavior of $T$ in case that the submatrix localization algorithm does \textit{not} output the correct bicluster structure, which requires more careful analysis. 

From a practical perspective, some studies \cite{Bendor2002,Liu2004} demonstrate the effectiveness of analyzing a gene expression data matrix by assuming the existence of \textit{order-preserving biclusters}, in which a set of rows (i.e., genes) has a similar linear ordering of columns (i.e., conditions). Such a definition of homogeneousness is different from ours, whereby we assume that each entry in a bicluster is generated in the i.i.d. sense. Additionally, some practical relational data matrices (e.g., MovieLens \cite{Harper2015} and Jester \cite{Goldberg2001} data sets) contain missing entries. It is an important topic in future research to construct a statistical test on $K$ for such cases deriving its theoretical guarantee. 

%%%%%%%%%%%%%%%%%%%%%%%%%%%%%%%%%%%%%%%%%%%%%%%%%%%%%%%

\section{Conclusion}
\label{sec:conclusion}

In this study, we proposed a new statistical test on the number of biclusters in a given relational data matrix and showed the asymptotic behavior of the proposed test statistic $T$ in both null and alternative cases. Unlike the previous study \cite{Watanabe2021}, we can apply the proposed method when the underlying bicluster structure is not necessarily represented by a regular grid. By sequentially testing the hypothetical numbers of biclusters in an ascending order, we can select an appropriate number of biclusters in a given observed matrix. We experimentally showed the asymptotic behavior of the proposed test statistic $T$ and its accuracy in selecting the correct number of biclusters with synthetic data matrices. Moreover, we analyzed the test result with a practical data set. 

%%%%%%%%%%%%%%%%%%%%%%%%%%%%%%%%%%%%%%%%%%%%%%%%%%%%%%%

\section*{Acknowledgments}

TS was partially supported by JSPS KAKENHI (18K19793, 18H03201, and 20H00576), Japan Digital Design, Fujitsu Laboratories Ltd., and JST CREST. 
We would like to thank Editage (\url{www.editage.com}) for English language editing. 

%%%%%%%%%%%%%%%%%%%%%%%%%%%%%%%%%%%%%%%%%%%%%%%%%%%%%%%

\clearpage
\begin{appendices}

\section{Proof of Lemma \ref{lm:zop2_eq1}}
\label{sec:lmd_1_upper}

\begin{proof}
From the assumption \ref{asmp:block_size}, the background can be divided to $H$ disjoint submatrices, whose row and column sizes are equal to or larger than $n_{\mathrm{min}}$ and $p_{\mathrm{min}}$, respectively. Let $X^{(k)} \in \mathbb{R}^{|I_k| \times |J_k|}$ be a submatrix of matrix $X \in \mathbb{R}^{n \times p}$ corresponding to the row and column indices $(I_k, J_k)$ of the $k$th bicluster. To distinguish the indices of the biclusters from those of the background submatrices, let $X^{(K + h)} \in \mathbb{R}^{|I_{K+h}| \times |J_{K+h}|}$ be a submatrix of matrix $X \in \mathbb{R}^{n \times p}$ corresponding to the row and column indices $(I_{K+h}, J_{K+h})$ of the $h$th background submatrix. 
We define a bicluster-wise constant matrix $Q^{(k)}$ for each $k$th bicluster ($k = 1, \dots, K$), 
\begin{align}
\label{eq:Q_k}
Q^{(k)} &\equiv Z^{(k)} - \frac{\tilde{s}_k}{s_k} \tilde{Z}^{(k)} = \frac{1}{s_k} \left( \tilde{P}^{(k)} - P^{(k)} \right) \nonumber \\
&= \frac{1}{|\mathcal{I}_k|} \left[ \sum_{(i, j) \in \mathcal{I}_k} \frac{A_{ij} - P_{ij}}{s_k} \right] \begin{bmatrix}
1 & \cdots & 1 \\
\vdots & & \vdots \\
1 & \cdots & 1
\end{bmatrix} \nonumber \\
&= \left( \frac{1}{|\mathcal{I}_k|} \sum_{(i, j) \in \mathcal{I}_k} Z_{ij} \right) \begin{bmatrix}
1 & \cdots & 1 \\
\vdots & & \vdots \\
1 & \cdots & 1
\end{bmatrix} \in \mathbb{R}^{|I_k| \times |J_k|}, 
\end{align}
and a submatrix-wise constant matrix $Q^{(K+h)}$ for each $h$th background submatrix ($h = 1, \dots, H$), 
\begin{align}
\label{eq:Q_prime_k}
Q^{(K+h)} &\equiv Z^{(K+h)} - \frac{\tilde{s}_0}{s_0} \tilde{Z}^{(K+h)} = \frac{1}{s_0} \left( \tilde{P}^{(K+h)} - P^{(K+h)} \right) \nonumber \\
&= \frac{1}{|\mathcal{I}_0|} \left[ \sum_{(i, j) \in \mathcal{I}_0} \frac{A_{ij} - P_{ij}}{s_0} \right] \begin{bmatrix}
1 & \cdots & 1 \\
\vdots & & \vdots \\
1 & \cdots & 1
\end{bmatrix} \nonumber \\
&= \left( \frac{1}{|\mathcal{I}_0|} \sum_{(i, j) \in \mathcal{I}_0} Z_{ij} \right) \begin{bmatrix}
1 & \cdots & 1 \\
\vdots & & \vdots \\
1 & \cdots & 1
\end{bmatrix} \in \mathbb{R}^{|I_{K+h}| \times |J_{K+h}|}. 
\end{align}

Based on these matrices, let $\underline{Z}^{(k)}$, $\underline{\tilde{Z}}^{(k)}$, and $\underline{Q}^{(k)}$, respectively, be $n \times p$ matrices whose entries in the $k$th bicluster (i.e., $\{ (i, j): i \in I_k, j \in J_k \}$) are $Z^{(k)}$, $\tilde{Z}^{(k)}$ and $Q^{(k)}$ and whose all the other entries are zero. Similarly, let $\underline{Z}^{(K+h)}$, $\underline{\tilde{Z}}^{(K+h)}$, and $\underline{Q}^{(K+h)}$, respectively, be $n \times p$ matrices whose entries in the $h$th background submatrix (i.e., $\{ (i, j): i \in I_{K+h}, j \in J_{K+h} \}$) are $Z^{(K+h)}$, $\tilde{Z}^{(K+h)}$ and $Q^{(K+h)}$ and whose all the other entries are zero. Figure \ref{fig:decompose_sub} shows an example of $\{ \underline{Z}^{(k)} \}$, where $k = 1, \dots, K+H$. 
Finally, we define matrix $Q$ by $Q \equiv \sum_{k = 1}^{K+H} \underline{Q}^{(k)}$. 
\begin{figure}[t]
  \centering
  \includegraphics[width=0.99\hsize]{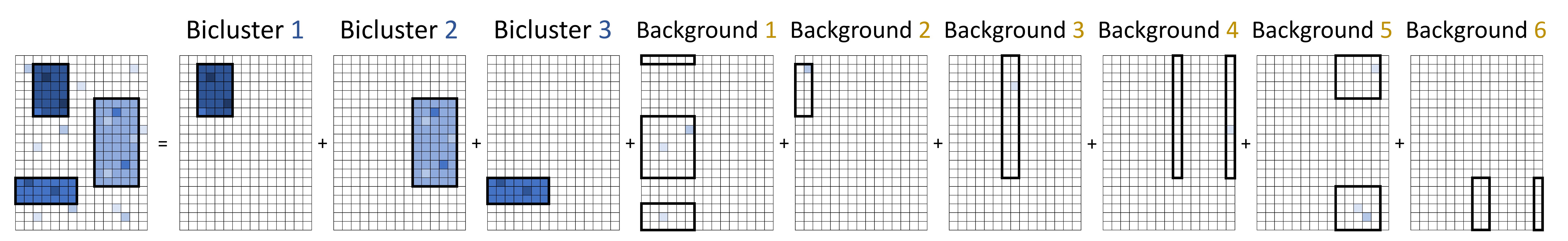}
  \caption{Decomposition of the noise matrix $Z$ to the $K$ biclusters $\{ \underline{Z}^{(k)} \}$, $k = 1, \dots, K$ and the $H$ background submatrices $\{ \underline{Z}^{(K+h)} \}$, $h = 1, \dots, H$. In the case of this figure, $K=3$ and $H=6$.} 
  \label{fig:decompose_sub}
\end{figure}

Let $\bm{v}_1$ be the normalized eigenvector of matrix $Z^{\top} Z$ corresponding to the maximum eigenvalue $\lambda_1$, as defined in (\ref{eq:v_eigenvec}). Since the largest singular value $\sqrt{\tilde{\lambda}_1}$ of matrix $\tilde{Z}$ is equal to the operator norm of $\tilde{Z}$, we have
\begin{align}
\label{eq:tilde_lmd1}
&\tilde{\lambda}_1 = \left( \sup_{\bm{u}} \frac{\| \tilde{Z} \bm{u} \|}{\| \bm{u} \|} \right)^2 
\geq \left( \frac{\| \tilde{Z} \bm{v}_1 \|}{\| \bm{v}_1 \|} \right)^2 
= \| \tilde{Z} \bm{v}_1 \|^2 
= \norm{ \sum_{k = 1}^{K+H} \underline{\tilde{Z}}^{(k)} \bm{v}_1 }^2 \nonumber \\
&= \norm{ \left[ \sum_{k = 1}^K \left( \frac{s_k}{\tilde{s}_k} \right) (\underline{Z}^{(k)} - \underline{Q}^{(k)}) \bm{v}_1 \right] + \left[ \sum_{h = 1}^H \left( \frac{s_0}{\tilde{s}_0} \right) (\underline{Z}^{(K+h)} - \underline{Q}^{(K+h)}) \bm{v}_1 \right] }^2 \nonumber \\
&= \norm{ \left[ \sum_{k = 1}^{K+H} (\underline{Z}^{(k)} - \underline{Q}^{(k)}) \bm{v}_1 \right] + \left[ \sum_{k = 1}^{K+H} \left( \frac{s_k}{\tilde{s}_k} - 1 \right) (\underline{Z}^{(k)} - \underline{Q}^{(k)}) \bm{v}_1 \right] }^2 \nonumber \\
&\geq \left[ \norm{ \sum_{k = 1}^{K+H} (\underline{Z}^{(k)} - \underline{Q}^{(k)}) \bm{v}_1 } - \norm{ \sum_{k = 1}^{K+H} \left( 1 - \frac{s_k}{\tilde{s}_k} \right) (\underline{Z}^{(k)} - \underline{Q}^{(k)}) \bm{v}_1 } \right]^2 \nonumber \\
&= \left[ \| (Z - Q) \bm{v}_1 \| - \norm{ \sum_{k = 1}^{K+H} \left( 1 - \frac{s_k}{\tilde{s}_k} \right) (\underline{Z}^{(k)} - \underline{Q}^{(k)}) \bm{v}_1 } \right]^2 \nonumber \\
&\geq \| (Z - Q) \bm{v}_1 \|^2 - 2 \| (Z - Q) \bm{v}_1 \| \norm{ \sum_{k = 1}^{K+H} \left( 1 - \frac{s_k}{\tilde{s}_k} \right) (\underline{Z}^{(k)} - \underline{Q}^{(k)}) \bm{v}_1 } \nonumber \\
&\geq \| (Z - Q) \bm{v}_1 \|^2 - 2 \| (Z - Q) \bm{v}_1 \| \left[ \sum_{k = 1}^{K+H} \left| 1 - \frac{s_k}{\tilde{s}_k} \right| \| (\underline{Z}^{(k)} - \underline{Q}^{(k)}) \bm{v}_1 \| \right] \nonumber \\
&\geq \| (Z - Q) \bm{v}_1 \|^2 - 2 \| (Z - Q) \bm{v}_1 \| \left[ \sum_{k = 1}^{K+H} \left| 1 - \frac{s_k}{\tilde{s}_k} \right| \left( \| \underline{Z}^{(k)} \bm{v}_1 \| + \| \underline{Q}^{(k)} \bm{v}_1 \| \right) \right] \nonumber \\
&\geq \| (Z - Q) \bm{v}_1 \|^2 - 2 \| (Z - Q) \bm{v}_1 \| \left[ \sum_{k = 1}^{K+H} \left| 1 - \frac{s_k}{\tilde{s}_k} \right| \left( \sqrt{\lambda_1^{(k)}} + \| \underline{Q}^{(k)} \bm{v}_1 \| \right) \right] \nonumber \\
&\geq \lambda_1 - 2 \sqrt{\lambda_1} \| Q \bm{v}_1 \| - 2 (\sqrt{\lambda_1} + \| Q \bm{v}_1 \|) \left[ \sum_{k = 1}^{K+H} \left| 1 - \frac{s_k}{\tilde{s}_k} \right| \left( \sqrt{\lambda_1^{(k)}} + \| \underline{Q}^{(k)} \bm{v}_1 \| \right) \right]. 
\end{align}
where $\lambda_1^{(k)}$ is the maximum eigenvalue of matrix $(\underline{Z}^{(k)})^{\top} \underline{Z}^{(k)}$ (which is equal to that of matrix $(Z^{(k)})^{\top} Z^{(k)}$). 
From the third line in (\ref{eq:tilde_lmd1}), we used the notation that $s_{K+1} = \dots = s_{K+H} = s_0$ for simplicity. 

Subsequently, we show the probabilistic orders of $\| \underline{Q}^{(k)} \bm{v}_1 \|$ and $\| Q \bm{v}_1 \|$. The non-zero entries in matrix $(\underline{Q}^{(k)})^{\top} \underline{Q}^{(k)}$ is only located in a submatrix $\{ (i, j): i \in J_k, j \in J_k \}$, and all of their values are $|I_k| \eta_k^2$ by (\ref{eq:Q_k}) and (\ref{eq:Q_prime_k}), where 
\begin{align}
\label{eq:nu_k}
\eta_k \equiv \frac{1}{|\mathcal{I}_k|} \sum_{(i, j) \in \mathcal{I}_k} Z_{ij} = O_p \left( \frac{1}{\sqrt{|\mathcal{I}_k|}} \right). 
\end{align}
Therefore, we have 
\begin{align}
\label{eq:underline_QQv}
(\underline{Q}^{(k)})^{\top} \underline{Q}^{(k)} \bm{v}_1 = |I_k| |J_k| \eta_k^2 (\bm{v}_1^{\top} \bm{u}^{(k)}) \bm{u}^{(k)}, 
\end{align}
where $\bm{u}^{(k)} \in \mathbb{R}^p$ is a vector whose entries are defined by $\bm{u}^{(k)}_j = \frac{1}{\sqrt{|J_k|}}$ if $j \in J_k$ and $\bm{u}^{(k)}_j = 0$ otherwise. Note that this vector satisfies $\| \bm{u}^{(k)} \| = 1$. From (\ref{eq:underline_QQv}), we have 
\begin{align}
\label{eq:underline_Qv_norm}
\| \underline{Q}^{(k)} \bm{v}_1 \| = \sqrt{\bm{v}_1^{\top} (\underline{Q}^{(k)})^{\top} \underline{Q}^{(k)} \bm{v}_1} 
= \sqrt{|I_k| |J_k| \eta_k^2 (\bm{v}_1^{\top} \bm{u}^{(k)})^2}. 
\end{align}

To upper bound the right side of (\ref{eq:underline_Qv_norm}), we refer to the following important property of each $j$th eigenvector $\bm{v}_j$ of matrix $Z^{\top} Z$, which has been proven in \cite{Bloemendal2016}. 
%%%
\begin{theorem}[Delocalization property of an eigenvector of a sample covariance matrix \cite{Bloemendal2016}]
Under the assumptions in Sect.~\ref{sec:method}, from Theorem 2.17 in \cite{Bloemendal2016}, a normalized eigenvector $\bm{v}_j$ of matrix $Z^{\top} Z$ (i.e., $\| \bm{v}_j \| = 1$) has a \textit{delocalization property}, that is, for all $\tilde{d} \in \mathbb{N}$, for any deterministic vectors $\{\bm{w}^{(i)}\}$ that satisfies $\| \bm{w}^{(i)} \| = 1$ for $i = 1, \dots, m^{\tilde{d}}$, for all $\epsilon > 0$, 
\begin{align}
\label{eq:delocalization}
\max_{i \in 1, \dots, m^{\tilde{d}}} \max_{j = 1, \dots, p} | \bm{v}_j^{\top} \bm{w}^{(i)} | = O_p \left( m^{-\frac{1}{2} + \epsilon} \right). 
\end{align}
\end{theorem}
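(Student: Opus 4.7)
The plan is to reduce the claim to a direct citation of Theorem~2.17 of \cite{Bloemendal2016} followed by a union bound over the finitely many (polynomial in $m$) pairs $(i,j)$. First I would verify that the standardized noise matrix $Z$ defined in (\ref{eq:Z_true}) satisfies the hypotheses of that theorem: the entries $Z_{ij}$ are independent with mean zero and unit variance by construction, and they have sub-exponential tails by Assumption~\ref{asmp:Z_exp_S}. Assumption~\ref{asmp:np_size} further guarantees that the aspect ratio $n/p$ stays bounded away from $0$ and $\infty$ as $m \to \infty$, which is the standard regime in which the isotropic local Marchenko--Pastur law of \cite{Bloemendal2016} applies.

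Second, I would invoke Theorem~2.17 of \cite{Bloemendal2016}, which in its isotropic formulation asserts that for any fixed unit vector $\bm{w}$, any index $j \in \{1, \dots, p\}$, any $\epsilon > 0$, and any $D > 0$,
\[
\Pr\!\left[\, |\bm{v}_j^{\top} \bm{w}| > m^{-\frac{1}{2} + \epsilon} \,\right] \leq m^{-D},
\]
for all sufficiently large $m$. The essential feature here is that the exponent $D$ in the tail bound can be chosen arbitrarily large, independently of $\bm{w}$ and $j$, and at the cost of only enlarging the threshold $m$ beyond which the inequality holds.

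Third, I would apply a union bound over the $m^{\tilde d}$ deterministic vectors and the $p$ eigenvector indices. Since $p = O(m)$ by Assumption~\ref{asmp:np_size}, there are at most $C\,m^{\tilde d + 1}$ pairs $(i,j)$ for some constant $C > 0$, so choosing $D = \tilde d + 2$ gives
\[
\Pr\!\left[\, \max_{i \leq m^{\tilde d}} \max_{j \leq p} |\bm{v}_j^{\top} \bm{w}^{(i)}| > m^{-\frac{1}{2} + \epsilon} \,\right] \leq C\, m^{\tilde d + 1} \cdot m^{-\tilde d - 2} = C\, m^{-1} \longrightarrow 0,
\]
from which the claimed $O_p\!\left(m^{-\frac{1}{2}+\epsilon}\right)$ bound follows directly by the definition of the $O_p$ notation given in Sect.~\ref{sec:method}.

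The only delicate point, rather than a genuine obstacle, is the translation between the polynomial-in-$m$ high-probability tail bound of \cite{Bloemendal2016} and the $O_p$ notation used in the theorem statement: because the tail probability $m^{-D}$ decays faster than any polynomial in $m$, the union bound absorbs any polynomial number of deterministic vectors, which is precisely what allows $\tilde d$ to be an arbitrary positive integer. Once the hypotheses of \cite{Bloemendal2016} are confirmed to follow from Assumptions~\ref{asmp:Z_exp_S} and \ref{asmp:np_size}, the argument reduces to a one-line citation combined with the union bound above.
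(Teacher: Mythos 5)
Your proposal is correct and matches what the paper does: the paper states this result as a direct import of Theorem~2.17 of \cite{Bloemendal2016} without further proof, and your verification of the hypotheses (independent, mean-zero, unit-variance, sub-exponential entries with bounded aspect ratio $n/p$) together with the union bound over the polynomially many pairs $(i,j)$ is exactly the standard argument that converts the stochastic-domination bound of \cite{Bloemendal2016} into the uniform $O_p\left(m^{-\frac{1}{2}+\epsilon}\right)$ statement used here. Nothing is missing.
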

%%%
%Note that (\ref{eq:delocalization}) holds for any fixed $j$ and $\bm{w}$ from Theorem 2.20 in \cite{Bloemendal2016}. 

Based on the above delocalization property of vector $\bm{v}_1$ and (\ref{eq:underline_Qv_norm}), we have
\begin{align}
\label{eq:Q_k_v1}
\| \underline{Q}^{(k)} \bm{v}_1 \| = \sqrt{|\mathcal{I}_k| O_p \left( \frac{1}{|\mathcal{I}_k|} \right) O_p \left( m^{-1 + 2 \epsilon} \right)} = O_p \left( m^{-\frac{1}{2} + \epsilon} \right), \ \mathrm{for\ all}\ \epsilon > 0. 
\end{align}

As for $\| Q \bm{v}_1 \|$, we can derive its upper bound by
\begin{align}
\label{eq:Q_v1}
&\| Q \bm{v}_1 \| = \norm{ \sum_{k = 1}^{K+H} \underline{Q}^{(k)} \bm{v}_1 } 
\leq \sum_{k = 1}^{K+H} \| \underline{Q}^{(k)} \bm{v}_1 \| 
= \sum_{k = 1}^{K+H} \sqrt{|\mathcal{I}_k| \eta_k^2 (\bm{v}_1^{\top} \bm{u}^{(k)})^2} \nonumber \\
&= \sum_{k = 1}^{K+H} \sqrt{|\mathcal{I}_k| O_p \left( \frac{1}{|\mathcal{I}_k|} \right)} |\bm{v}_1^{\top} \bm{u}^{(k)}| 
= \sum_{k = 1}^{K+H} O_p (1) O_p \left( m^{-\frac{1}{2} + \epsilon} \right) \nonumber \\
&= (K+H) O_p \left( m^{-\frac{1}{2} + \epsilon} \right). 
\end{align}
Here, we used the fact that (\ref{eq:delocalization}) holds from \cite{Bloemendal2016}. 

From Lemma \ref{lm:sigma_tilde} in Appendix \ref{sec:ap_sigma_tilde}, $| \tilde{s}_k - s_k | = O_p \left( \frac{1}{\sqrt{|\mathcal{I}_k|}} \right)$ holds, which results in
\begin{align}
\label{eq:1_s_tildes}
\left| 1 - \frac{s_k}{\tilde{s}_k} \right| = O_p \left( \frac{1}{\sqrt{|\mathcal{I}_k|}} \right). 
\end{align}

By substituting (\ref{eq:Q_k_v1}), (\ref{eq:Q_v1}), (\ref{eq:1_s_tildes}), and the fact that $\sqrt{\lambda_1^{(k)}} = O_p \left( |\mathcal{I}_k|^{\frac{1}{4}} \right)$ from (\ref{eq:T_true}), into (\ref{eq:tilde_lmd1}), we obtain
\begin{align}
\label{eq:tilde_lmd_lower_0}
\tilde{\lambda}_1 &\geq \lambda_1 - 2 (K+H) O_p \left( m^{\frac{1}{2}} \right) O_p \left( m^{-\frac{1}{2} + \epsilon} \right) \nonumber \\
&- 2 \left[ O_p \left( m^{\frac{1}{2}} \right) + (K+H) O_p \left( m^{-\frac{1}{2} + \epsilon} \right) \right] \nonumber \\
&\left\{ \sum_{k = 1}^{K+H} O_p \left( |\mathcal{I}_k|^{-\frac{1}{2}} \right) \left[ O_p \left( |\mathcal{I}_k|^{\frac{1}{4}} \right) + O_p \left( m^{-\frac{1}{2} + \epsilon} \right) \right] \right\} \nonumber \\
&= \lambda_1 - 2 (K+H) O_p \left( m^{\epsilon} \right) 
- 2 \left[ O_p \left( m^{\frac{1}{2}} \right) + (K+H) O_p \left( m^{-\frac{1}{2} + \epsilon} \right) \right] \nonumber \\
&\left\{ \sum_{k = 1}^{K+H} \left[ O_p \left( |\mathcal{I}_k|^{-\frac{1}{4}} \right) + O_p \left( |\mathcal{I}_k|^{-\frac{1}{2}} m^{-\frac{1}{2} + \epsilon} \right) \right] \right\}. 
\end{align}
By taking $\epsilon < \frac{1}{2}$, the lower bound in (\ref{eq:tilde_lmd_lower_0}) can be simplified as follows: 
\begin{align}
\tilde{\lambda}_1 &\geq \lambda_1 - 2 (K+H) O_p \left( m^{\epsilon} \right) 
- 2 \left[ O_p \left( m^{\frac{1}{2}} \right) + (K+H) O_p \left( m^{-\frac{1}{2} + \epsilon} \right) \right] \nonumber \\
&\left\{ \sum_{k = 1}^{K+H} \left[ O_p \left( |\mathcal{I}_k|^{-\frac{1}{4}} \right) + O_p \left( |\mathcal{I}_k|^{-\frac{1}{2}} |\mathcal{I}_k|^{-\frac{1}{4} + \frac{1}{2} \epsilon} \right) \right] \right\} \nonumber \\
&= \lambda_1 - 2 (K+H) O_p \left( m^{\epsilon} \right) \nonumber \\
&- 2 \left[ O_p \left( m^{\frac{1}{2}} \right) + (K+H) O_p \left( m^{-\frac{1}{2} + \epsilon} \right) \right] \left[ \sum_{k = 1}^{K+H} O_p \left( |\mathcal{I}_k|^{-\frac{1}{4}} \right) \right] \nonumber \\
&\geq \lambda_1 - 2 (K+H) O_p \left( m^{\epsilon} \right) 
- 2 \left[ O_p \left( m^{\frac{1}{2}} \right) + (K+H) O_p \left( m^{-\frac{1}{2} + \epsilon} \right) \right] \nonumber \\
&(K+H) O_p \left[ \left( \min_{k = 1, \dots, K+H} |\mathcal{I}_k| \right)^{-\frac{1}{4}} \right] \nonumber \\
&= \lambda_1 - 2 (K+H) O_p \left( m^{\epsilon} \right) - 2 (K+H) O_p \left( m^{\frac{1}{2} + \epsilon} \right)\nonumber \\
&O_p \left[ \left( \min_{k = 1, \dots, K+H} |\mathcal{I}_k| \right)^{-\frac{1}{4}} \right] \nonumber \\
&= \lambda_1 - 2 (K+H) O_p \left( m^{\epsilon} \right) \left\{ O_p (1) + O_p \left[ m^{\frac{1}{2}} \left( \min_{k = 1, \dots, K+H} |\mathcal{I}_k| \right)^{-\frac{1}{4}} \right] \right\} \nonumber \\
&= \lambda_1 - 2 (K+H) O_p \left( m^{\epsilon} \right) O_p \left[ m^{\frac{1}{2}} \left( \min_{k = 1, \dots, K+H} |\mathcal{I}_k| \right)^{-\frac{1}{4}} \right]\nonumber \\ %\ \ \ \left( \because \min_{k = 1, \dots, K+H} |\mathcal{I}_k| \leq np \right) \nonumber \\
&= \lambda_1 - 2 (K+H) O_p \left[ m^{\frac{1}{2} + \epsilon} \left( \min_{k = 1, \dots, K+H} |\mathcal{I}_k| \right)^{-\frac{1}{4}} \right]
\end{align}
From the assumption that $(K+H) \left( \min_{k = 1, \dots, K+H} |\mathcal{I}_k| \right)^{-\frac{1}{4}} = O \left( m^{-\frac{1}{6} - \epsilon_1} \right)$ for some $\epsilon_1 > 0$ in (\ref{eq:KH_condition_KHminIk}), by taking $\epsilon < \epsilon_1$, we have
\begin{align}
\lambda_1 \leq \tilde{\lambda}_1 + O_p \left( m^{\frac{1}{3} - (\epsilon_1 - \epsilon)} \right), 
\end{align}
which concludes the proof. 
\end{proof}

% ----------------------------------------------------------------------------------------------------------------------------------------

%\clearpage
\section{Proof of $| \tilde{s}_k - s_k | = O_p \left( \frac{1}{\sqrt{|\mathcal{I}_k|}} \right)$.}
\label{sec:ap_sigma_tilde}

Let $A^{(k)}$, $P^{(k)}$, and $\tilde{P}^{(k)}$, respectively, be the $k$th \textbf{null} bicluster ($k = 1, \dots, K$) or background ($k = 0$) of matrices $A$, $P$, and $\tilde{P}$. 
\setcounter{lemma}{0}
\renewcommand{\thelemma}{\Alph{section}\arabic{lemma}}
\begin{lemma}
\label{lm:sigma_tilde}
Under the assumption that $\mathbb{E} [Z_{ij}^4] < \infty$, 
\begin{align}
\label{eq:sigma_tilde_op}
| \tilde{s}_k - s_k | = O_p \left( \frac{1}{\sqrt{|\mathcal{I}_k|}} \right), 
\end{align}
where $\tilde{s}_k$ is defined as in (\ref{eq:tilde_Z}) and $\mathcal{I}_k \equiv \{ (i, j): g_{ij} = k \}$ (i.e., the set of entries in the $k$th group). 
\end{lemma}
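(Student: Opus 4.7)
The plan is to relate $\tilde{s}_k - s_k$ to $\tilde{s}_k^2 - s_k^2$ via the identity $\tilde{s}_k - s_k = (\tilde{s}_k^2 - s_k^2)/(\tilde{s}_k + s_k)$, bound the numerator using the i.i.d.\ structure of the centered entries $Y_{ij} \equiv A_{ij} - b_k = s_k Z_{ij}$ over $(i,j) \in \mathcal{I}_k$, and bound the denominator from below via assumption \ref{asmp:Z_exp_S}.

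First, I would write
\begin{align*}
\tilde{s}_k^2 &= \frac{1}{|\mathcal{I}_k|} \sum_{(i,j) \in \mathcal{I}_k} (A_{ij} - \tilde{b}_k)^2
= \frac{1}{|\mathcal{I}_k|} \sum_{(i,j) \in \mathcal{I}_k} Y_{ij}^2 - (\tilde{b}_k - b_k)^2,
\end{align*}
so that
\begin{align*}
\tilde{s}_k^2 - s_k^2 = \frac{s_k^2}{|\mathcal{I}_k|} \sum_{(i,j) \in \mathcal{I}_k} (Z_{ij}^2 - 1) - (\tilde{b}_k - b_k)^2.
\end{align*}
For the first term, since $\mathbb{E}[Z_{ij}^2 - 1] = 0$ and $\mathbb{V}[Z_{ij}^2 - 1] \leq \mathbb{E}[Z_{ij}^4] \leq C^{\mathrm{M}(4)} < \infty$ by assumption \ref{asmp:Z_exp_S}, Chebyshev's inequality applied to the sum of $|\mathcal{I}_k|$ independent centered terms gives $\frac{1}{|\mathcal{I}_k|} \sum_{(i,j) \in \mathcal{I}_k}(Z_{ij}^2 - 1) = O_p(|\mathcal{I}_k|^{-1/2})$. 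For the second term, $\tilde{b}_k - b_k = \frac{s_k}{|\mathcal{I}_k|} \sum_{(i,j)\in\mathcal{I}_k} Z_{ij}$ has variance $s_k^2/|\mathcal{I}_k|$, so $(\tilde{b}_k - b_k)^2 = O_p(|\mathcal{I}_k|^{-1})$. Combined with $\max_k s_k = O(1)$, this yields
\begin{align*}
|\tilde{s}_k^2 - s_k^2| = O_p\!\left(|\mathcal{I}_k|^{-1/2}\right).
\end{align*}

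Finally, to pass from the squared-quantity bound to the linear one, I observe that $s_k \geq \min_{k'} s_{k'} = \Omega(1)$ by assumption \ref{asmp:Z_exp_S}, so the denominator $\tilde{s}_k + s_k \geq s_k$ is bounded below by a positive constant with probability tending to one (in fact, deterministically from $s_k$ alone, since $\tilde{s}_k \geq 0$). Therefore
\begin{align*}
|\tilde{s}_k - s_k| = \frac{|\tilde{s}_k^2 - s_k^2|}{\tilde{s}_k + s_k} \leq \frac{|\tilde{s}_k^2 - s_k^2|}{s_k} = O_p\!\left(|\mathcal{I}_k|^{-1/2}\right),
\end{align*}
which is (\ref{eq:sigma_tilde_op}). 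The only delicate point is ensuring the variance of $Z_{ij}^2$ is finite, but this is an immediate consequence of the bounded fourth moment in \ref{asmp:Z_exp_S} (and in fact follows already from the sub-exponential tail), so no real obstacle arises.
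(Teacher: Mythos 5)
Your proposal is correct and follows essentially the same route as the paper's proof: the same decomposition $\tilde{s}_k^2 - s_k^2 = \frac{1}{|\mathcal{I}_k|}\sum ((A_{ij}-b_k)^2 - s_k^2) - (\tilde{b}_k - b_k)^2$, Chebyshev/Markov bounds on the two terms using the fourth-moment assumption, and the identity $|\tilde{s}_k - s_k| = |\tilde{s}_k^2 - s_k^2|/(\tilde{s}_k + s_k)$. Your deterministic lower bound $\tilde{s}_k + s_k \geq s_k = \Omega(1)$ is a slight (valid) simplification of the paper's appeal to the consistency of $\tilde{s}_k$, but the argument is otherwise the same.
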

\begin{proof}
By definition, we have 
\begin{align}
\label{eq:tilde_S2_def}
\tilde{s}_k^2 &\equiv \frac{1}{|\mathcal{I}_k|} \sum_{(i, j) \in \mathcal{I}_k} \left( A^{(k)}_{ij} - \tilde{b}_k \right)^2 
= \frac{1}{|\mathcal{I}_k|} \sum_{(i, j) \in \mathcal{I}_k} \left[ \left( A^{(k)}_{ij} \right)^2 - \tilde{b}_k^2 \right] \nonumber \\
&= \frac{1}{|\mathcal{I}_k|} \sum_{(i, j) \in \mathcal{I}_k} \left[ \left( A^{(k)}_{ij} \right)^2 - \tilde{b}_k^2 \right] - \frac{1}{|\mathcal{I}_k|} 2 b_k \sum_{(i, j) \in \mathcal{I}_k} \left( A^{(k)}_{ij} - \tilde{b}_k \right) \nonumber \\
&=\ \frac{1}{|\mathcal{I}_k|} \sum_{(i, j) \in \mathcal{I}_k} \left( A^{(k)}_{ij} - b_k \right)^2 - \left( b_k - \tilde{b}_k \right)^2, 
\end{align}
where $\tilde{b}_k \equiv \frac{1}{|\mathcal{I}_k|} \sum_{(i, j) \in \mathcal{I}_k} A^{(k)}_{ij}$. 

From (\ref{eq:tilde_S2_def}), we have
\begin{align}
\label{eq:sigma_tilde_diff}
\tilde{s}_k^2 - s_k^2 &= \frac{1}{|\mathcal{I}_k|} \sum_{(i, j) \in \mathcal{I}_k} \left( A^{(k)}_{ij} - b_k \right)^2 - s_k^2 - \left( b_k - \tilde{b}_k \right)^2 \nonumber \\
&= \frac{1}{|\mathcal{I}_k|} \sum_{(i, j) \in \mathcal{I}_k} Y^{(k)}_{ij} - \left( b_k - \tilde{b}_k \right)^2, 
\end{align}
where used the notation that $Y^{(k)}_{ij} \equiv \left( A^{(k)}_{ij} - b_k \right)^2 - s_k^2$. Based on the assumption that the entries $\left( A^{(k)}_{ij} \right)_{(i, j) \in \mathcal{I}_k}$ are generated in the i.i.d. sense in each $k$th group, the random variables $\left( Y^{(k)}_{ij} \right)_{(i, j) \in \mathcal{I}_k}$ are also independent, and their expectations and variances satisfy
\begin{align}
\label{eq:sigma_tilde_ev}
\mathbb{E} \left[ Y^{(k)}_{ij} \right] &= \mathbb{E} \left[ \left( A^{(k)}_{ij} - b_k \right)^2 \right] - s_k^2 = 0, \nonumber \\
\mathbb{V} \left[ Y^{(k)}_{ij} \right] &= \mathbb{E} \left[ \left( Y^{(k)}_{ij} \right)^2 \right] 
= \mathbb{E} \left[ \left\{ \left( A^{(k)}_{ij} - b_k \right)^2 - s_k^2 \right\}^2 \right] \nonumber \\
&= s_k^4 \left( \mathbb{E} \left[ \left( Z^{(k)}_{ij} \right)^4  \right] - 1 \right), 
\end{align}
which results in
\begin{align}
\label{eq:sigma_tilde_ev_mean}
\mathbb{E} \left[ \frac{1}{|\mathcal{I}_k|} \sum_{(i, j) \in \mathcal{I}_k} Y^{(k)}_{ij} \right] &= 0, \nonumber \\
\mathbb{V} \left[ \frac{1}{|\mathcal{I}_k|} \sum_{(i, j) \in \mathcal{I}_k} Y^{(k)}_{ij} \right] &= \frac{1}{|\mathcal{I}_k|} s_k^4 \left( \mathbb{E} \left[ \left( Z^{(k)}_{ij} \right)^4  \right] - 1 \right). 
\end{align}

From (\ref{eq:sigma_tilde_ev_mean}) and Chebyshev's inequality, for all $t>0$, we have
\begin{align}
\label{eq:chebyshev_abs}
\mathrm{Pr} \left[ \left| \frac{1}{|\mathcal{I}_k|} \sum_{(i, j) \in \mathcal{I}_k} Y^{(k)}_{ij} \right| \geq t \sqrt{ \frac{1}{|\mathcal{I}_k|} s_k^4 \left( \mathbb{E} \left[ \left( Z^{(k)}_{ij} \right)^4  \right] - 1 \right)} \right] \leq \frac{1}{t^2}, 
\end{align}
which results in 
\begin{align}
\label{eq:S_tilde_term1}
\left| \frac{1}{|\mathcal{I}_k|} \sum_{(i, j) \in \mathcal{I}_k} Y^{(k)}_{ij} \right| = O_p \left( \frac{1}{\sqrt{|\mathcal{I}_k|}} \right). 
\end{align}
from the assumption of $\mathbb{E} \left[ \left( Z^{(k)}_{ij} \right)^4  \right] < \infty$. 

As for the second term in (\ref{eq:sigma_tilde_diff}), we have
\begin{align}
\label{eq:sigma_tilde_pp}
\left( b_k - \tilde{b}_k \right)^2 &= \left[ \frac{1}{|\mathcal{I}_k|} \sum_{(i, j) \in \mathcal{I}_k} \left( P^{(k)}_{ij} - A^{(k)}_{ij} \right) \right]^2  
= \frac{s_k^2}{|\mathcal{I}_k|^2} \left( \sum_{(i, j) \in \mathcal{I}_k} Z^{(k)}_{ij} \right)^2. 
\end{align}

From (\ref{eq:sigma_tilde_pp}), we have 
\begin{align}
\label{eq:sigma_tilde_e_pp}
\mathbb{E} \left[ \left( b_k - \tilde{b}_k \right)^2 \right] 
&= \frac{s_k^2}{|\mathcal{I}_k|^2} \mathbb{E} \left[ \left( \sum_{(i, j) \in \mathcal{I}_k} Z^{(k)}_{ij} \right)^2 \right] 
= \frac{s_k^2}{|\mathcal{I}_k|^2} \mathbb{V} \left[ \sum_{(i, j) \in \mathcal{I}_k} Z^{(k)}_{ij} \right] \nonumber \\
&= \frac{s_k^2}{|\mathcal{I}_k|}, 
\end{align}
since $Z^{(k)}_{ij}$ has a unit variance. 

From (\ref{eq:sigma_tilde_e_pp}) and Markov's inequality, we have
\begin{align}
\label{eq:sigma_tilde_markov}
&\forall t>0, \ \mathrm{Pr} \left[ \left( b_k - \tilde{b}_k \right)^2 \geq t \right] \leq \frac{s_k^2}{|\mathcal{I}_k|} \frac{1}{t} \nonumber \\
\iff &\forall t'>0, \ \mathrm{Pr} \left[ \left( b_k - \tilde{b}_k \right)^2 \geq \frac{s_k^2}{|\mathcal{I}_k|} t' \right] \leq \frac{1}{t'}, 
\end{align}
which results in 
\begin{align}
\label{eq:S_tilde_term2}
\left( b_k - \tilde{b}_k \right)^2 = O_p \left( \frac{1}{|\mathcal{I}_k|} \right). 
\end{align}

Using (\ref{eq:S_tilde_term1}), (\ref{eq:S_tilde_term2}), and (\ref{eq:sigma_tilde_diff}), we finally obtain
\begin{align}
\label{eq:sigma_tilde_diff2}
| \tilde{s}_k^2 - s_k^2 | \leq | \frac{1}{|\mathcal{I}_k|} \sum_{(i, j) \in \mathcal{I}_k} Y^{(k)}_{ij} | + | \left( b_k - \tilde{b}_k \right)^2| 
= O_p \left( \frac{1}{\sqrt{|\mathcal{I}_k|}} \right), 
\end{align}
which results in
\begin{align}
\label{eq:sigma_tilde_diff3}
| \tilde{s}_k - s_k | = \frac{| \tilde{s}_k^2 - s_k^2 |}{| \tilde{s}_k + s_k |}. 
\end{align}

From (\ref{eq:sigma_tilde_diff2}), we see that $\tilde{s}_k$ converges in probability to $s_k$, and thus $\frac{1}{| \tilde{s}_k + s_k |}$ converges in probability to $\frac{1}{2 s_k} > 0$. Therefore, we have
\begin{align}
| \tilde{s}_k - s_k | = O_p \left( \frac{1}{\sqrt{|\mathcal{I}_k|}} \right), 
\end{align}
which concludes the proof. 
\end{proof}

% ----------------------------------------------------------------------------------------------------------------------------------------

\section{Proof of Lemma \ref{lm:zop2_eq2}}
\label{sec:lmd_1_lower}

\begin{proof}
Let $\tilde{\bm{v}}_1^{(k)} \in \mathbb{R}^{|J_k|}$ be a subvector of $\tilde{\bm{v}}_1$ corresponding to the columns of the $k$th submatrix in observed matrix $A$, and let $\tau_k \equiv \frac{s_k}{\tilde{s}_k}$. In (\ref{eq:1_s_tildes}), we have already shown that $\left| 1 - \tau_k \right| = O_p \left( \frac{1}{\sqrt{|\mathcal{I}_k|}} \right)$. The maximum eigenvalue $\tilde{\lambda}_1$ of matrix $\tilde{Z}^{\top} \tilde{Z}$ can be upper bounded as follows: 
\begin{align}
\label{eq:tilde_lambda_u1u2}
&\tilde{\lambda}_1 = \| \tilde{Z} \tilde{\bm{v}}_1 \|^2
= \norm{ \sum_{k=1}^{K+H} \tau_k \left( \underline{Z}^{(k)} - \underline{Q}^{(k)} \right) \tilde{\bm{v}}_1 }^2\ \ \ (\because (\ref{eq:Q_k})) \nonumber \\
&= \norm{ \sum_{k=1}^{K+H} \left[ \underline{Z}^{(k)} + (\tau_k - 1) \underline{Z}^{(k)} - \tau_k \underline{Q}^{(k)} \right] \tilde{\bm{v}}_1 }^2 \nonumber \\
&= \norm{ \left\{ Z + \sum_{k=1}^{K+H} \left[ (\tau_k - 1) \underline{Z}^{(k)} - \tau_k \underline{Q}^{(k)} \right] \right\} \tilde{\bm{v}}_1 }^2 \nonumber \\
&= \| Z \tilde{\bm{v}}_1 \|^2 + 2 \tilde{\bm{v}}_1^{\top} Z^{\top} \sum_{k=1}^{K+H} \left[ (\tau_k - 1) \underline{Z}^{(k)} - \tau_k \underline{Q}^{(k)} \right] \tilde{\bm{v}}_1 \nonumber \\
&+ \norm{ \sum_{k=1}^{K+H} \left[ (\tau_k - 1) \underline{Z}^{(k)} - \tau_k \underline{Q}^{(k)} \right] \tilde{\bm{v}}_1 }^2 \nonumber \\
&\leq \| Z \tilde{\bm{v}}_1 \|^2 + 2 \sqrt{\lambda_1} \sum_{k=1}^{K+H} |\tau_k- 1| \| \underline{Z}^{(k)} \tilde{\bm{v}}_1 \| - 2 \tilde{\bm{v}}_1^{\top} Z^{\top} \sum_{k=1}^{K+H} \tau_k \underline{Q}^{(k)} \tilde{\bm{v}}_1 \nonumber \\
&+ \norm{ \sum_{k=1}^{K+H} \left[ (\tau_k - 1) \underline{Z}^{(k)} - \tau_k \underline{Q}^{(k)} \right] \tilde{\bm{v}}_1 }^2 \nonumber \\
&= \| Z \tilde{\bm{v}}_1 \|^2 + 2 \sqrt{\lambda_1} \sum_{k=1}^{K+H} |\tau_k - 1| \| Z^{(k)} \tilde{\bm{v}}_1^{(k)} \| - 2 \tilde{\bm{v}}_1^{\top} Z^{\top} \sum_{k=1}^{K+H} \tau_k \underline{Q}^{(k)} \tilde{\bm{v}}_1 \nonumber \\
&+ \norm{ \sum_{k=1}^{K+H} \left[ (\tau_k - 1) \underline{Z}^{(k)} - \tau_k \underline{Q}^{(k)} \right] \tilde{\bm{v}}_1 }^2 \nonumber \\
&\leq \| Z \tilde{\bm{v}}_1 \|^2 + 2 \sqrt{\lambda_1} \sum_{k=1}^{K+H} |\tau_k - 1| \sqrt{\lambda_1^{(k)}} - 2 \tilde{\bm{v}}_1^{\top} Z^{\top} \sum_{k=1}^{K+H} \tau_k \underline{Q}^{(k)} \tilde{\bm{v}}_1 \nonumber \\
&+ \norm{ \sum_{k=1}^{K+H} \left[ (\tau_k - 1) \underline{Z}^{(k)} - \tau_k \underline{Q}^{(k)} \right] \tilde{\bm{v}}_1 }^2 \nonumber \\
&= \| Z \tilde{\bm{v}}_1 \|^2 + 2 O_p \left( m^{\frac{1}{2}} \right) \sum_{k=1}^{K+H} O_p \left( |\mathcal{I}_k|^{-\frac{1}{4}} \right) - 2 \tilde{\bm{v}}_1^{\top} Z^{\top} \sum_{k=1}^{K+H} \tau_k \underline{Q}^{(k)} \tilde{\bm{v}}_1 \nonumber \\
&+ \norm{ \sum_{k=1}^{K+H} \left[ (\tau_k - 1) \underline{Z}^{(k)} - \tau_k \underline{Q}^{(k)} \right] \tilde{\bm{v}}_1 }^2 \nonumber \\
&= \| Z \tilde{\bm{v}}_1 \|^2 + 2 (K+H) O_p \left[ m^{\frac{1}{2}} \left( \min_{k = 1, \dots, K+H} |\mathcal{I}_k| \right)^{-\frac{1}{4}} \right] - 2 \tilde{\bm{v}}_1^{\top} Z^{\top} \sum_{k=1}^{K+H} \tau_k \underline{Q}^{(k)} \tilde{\bm{v}}_1 \nonumber \\
&+ \norm{ \sum_{k=1}^{K+H} \left[ (\tau_k - 1) \underline{Z}^{(k)} - \tau_k \underline{Q}^{(k)} \right] \tilde{\bm{v}}_1 }^2 \nonumber \\
&\leq \| Z \tilde{\bm{v}}_1 \|^2 + 2 (K+H) O_p \left[ m^{\frac{1}{2}} \left( \min_{k = 1, \dots, K+H} |\mathcal{I}_k| \right)^{-\frac{1}{4}} \right] - 2 \tilde{\bm{v}}_1^{\top} Z^{\top} \sum_{k=1}^{K+H} \tau_k \underline{Q}^{(k)} \tilde{\bm{v}}_1 \nonumber \\
&+ \left[ \norm{ \sum_{k=1}^{K+H} (\tau_k - 1) \underline{Z}^{(k)} \tilde{\bm{v}}_1 } + \norm{ \sum_{k=1}^{K+H} \tau_k \underline{Q}^{(k)} \tilde{\bm{v}}_1 } \right]^2 \nonumber \\
&\leq \| Z \tilde{\bm{v}}_1 \|^2 + 2 (K+H) O_p \left[ m^{\frac{1}{2}} \left( \min_{k = 1, \dots, K+H} |\mathcal{I}_k| \right)^{-\frac{1}{4}} \right] - 2 \tilde{\bm{v}}_1^{\top} Z^{\top} \sum_{k=1}^{K+H} \tau_k \underline{Q}^{(k)} \tilde{\bm{v}}_1 \nonumber \\
&+ \left[ \sum_{k=1}^{K+H} |\tau_k - 1| \| \underline{Z}^{(k)} \tilde{\bm{v}}_1 \| + \sum_{k=1}^{K+H} \tau_k \| \underline{Q}^{(k)} \tilde{\bm{v}}_1 \| \right]^2 \nonumber \\
&\leq \| Z \tilde{\bm{v}}_1 \|^2 + 2 (K+H) O_p \left[ m^{\frac{1}{2}} \left( \min_{k = 1, \dots, K+H} |\mathcal{I}_k| \right)^{-\frac{1}{4}} \right] - 2 \sum_{k=1}^{K+H} \tau_k \tilde{\bm{v}}_1^{\top} Z^{\top} \underline{Q}^{(k)} \tilde{\bm{v}}_1 \nonumber \\
&+ \left[ \sum_{k=1}^{K+H} O_p \left( |\mathcal{I}_k|^{-\frac{1}{4}} \right) + \sum_{k=1}^{K+H} \tau_k \| \underline{Q}^{(k)} \tilde{\bm{v}}_1 \| \right]^2 \nonumber \\
&\left( \because \| \underline{Z}^{(k)} \tilde{\bm{v}}_1 \| \leq \sqrt{\lambda_1^{(k)}} = O_p \left( |\mathcal{I}_k|^{\frac{1}{4}} \right) \right) \nonumber \\
&= \| Z \tilde{\bm{v}}_1 \|^2 + O_p \left[ m^{\frac{1}{2}} C^{(K, H)} \right] - 2 \sum_{k=1}^{K+H} \tau_k \tilde{\bm{v}}_1^{\top} Z^{\top} \underline{Q}^{(k)} \tilde{\bm{v}}_1 \nonumber \\
&+ \left[ O_p \left( C^{(K, H)} \right) + \sum_{k=1}^{K+H} \tau_k \| \underline{Q}^{(k)} \tilde{\bm{v}}_1 \| \right]^2, 
\end{align}
where we denote $C^{(K, H)} \equiv (K+H) \left( \min_{k = 1, \dots, K+H} |\mathcal{I}_k| \right)^{-\frac{1}{4}}$. 

The eigenvectors $\{ \bm{v}_j \}$ of symmetric matrix $Z^{\top} Z$ form an orthonormal system, and thus there exists a unique set of coefficients $\{ c_j \}$ such that
\begin{align}
\label{eq:tilde_v_decompose}
\tilde{\bm{v}}_1 = \sum_{j = 1}^p c_j \bm{v}_j = \tilde{\bm{v}}^{\mathrm{L}} + \tilde{\bm{v}}^{\mathrm{S}}, 
\end{align}
where
\begin{align}
\label{eq:tilde_v1v2}
&\tilde{\bm{v}}^{\mathrm{L}} \equiv \sum_{j = 1}^t c_j \bm{v}_j, \ \ \ 
\tilde{\bm{v}}^{\mathrm{S}} \equiv \sum_{j = t + 1}^p c_j \bm{v}_j, \nonumber \\
&\lambda_t \geq \lambda_1 - n^d, \ \ \ 
\lambda_{t + 1} < \lambda_1 - n^d, \ \ \ 
d = \frac{5}{7}. 
\end{align}
By substituting (\ref{eq:tilde_v_decompose}) into the last term in (\ref{eq:tilde_lambda_u1u2}) and from the similar discussion as in (\ref{eq:underline_QQv}), 
\begin{align}
\label{eq:Qv_norm2}
&\| \underline{Q}^{(k)} \tilde{\bm{v}}_1 \|^2 = \tilde{\bm{v}}_1^{\top} (\underline{Q}^{(k)})^{\top} \underline{Q}^{(k)} \tilde{\bm{v}}_1 
= \sum_{j = 1}^p \sum_{j' = 1}^p c_j c_{j'} \bm{v}_j^{\top} (\underline{Q}^{(k)})^{\top} \underline{Q}^{(k)} \bm{v}_{j'} \nonumber \\
&= \sum_{j = 1}^p \sum_{j' = 1}^p c_j c_{j'} |I_k| |J_k| \eta_k^2 (\bm{v}_j^{\top} \bm{u}^{(k)}) (\bm{v}_{j'}^{\top} \bm{u}^{(k)}) 
= |I_k| |J_k| \eta_k^2 \left[ \sum_{j = 1}^p c_j (\bm{v}_j^{\top} \bm{u}^{(k)}) \right]^2 \nonumber \\
&\leq |I_k| |J_k| \eta_k^2 \left[ \sqrt{\sum_{j = 1}^p c_j^2} \sqrt{\sum_{j = 1}^p (\bm{v}_j^{\top} \bm{u}^{(k)})^2} \right]^2 
= |I_k| |J_k| \eta_k^2 \| \tilde{\bm{v}}_1 \|^2 \left[ \sum_{j = 1}^p (\bm{v}_j^{\top} \bm{u}^{(k)})^2 \right] \nonumber \\
&= |I_k| |J_k| \eta_k^2 \left[ \sum_{j = 1}^p (\bm{v}_j^{\top} \bm{u}^{(k)})^2 \right] 
\leq |I_k| |J_k| \eta_k^2\ p \max_{j = 1, \dots, p} (\bm{v}_j^{\top} \bm{u}^{(k)})^2 \nonumber \\
&= |\mathcal{I}_k| O_p \left( |\mathcal{I}_k|^{-1} \right) p\ O_p \left( m^{-1 + 2\epsilon} \right) 
= O_p \left( m^{2\epsilon} \right). 
\end{align}
Here, we used the fact that (\ref{eq:delocalization}) holds from \cite{Bloemendal2016}. 

By combining (\ref{eq:tilde_lambda_u1u2}) and (\ref{eq:Qv_norm2}), 
\begin{align}
\label{eq:tilde_lambda_u1u2_2}
&\tilde{\lambda}_1 \leq \| Z \tilde{\bm{v}}_1 \|^2 - 2 \sum_{k=1}^{K+H} \tau_k \tilde{\bm{v}}_1^{\top} Z^{\top} \underline{Q}^{(k)} \tilde{\bm{v}}_1 + O_p \left[ m^{\frac{1}{2}} C^{(K, H)} \right] \nonumber \\
&+ \left[ O_p \left( C^{(K, H)} \right) + \sum_{k=1}^{K+H} \tau_k O_p \left( m^{\epsilon} \right) \right]^2 \nonumber \\
&= \| Z \tilde{\bm{v}}_1 \|^2 - 2 \sum_{k=1}^{K+H} \tau_k \tilde{\bm{v}}_1^{\top} Z^{\top} \underline{Q}^{(k)} \tilde{\bm{v}}_1 + O_p \left[ m^{\frac{1}{2}} C^{(K, H)} \right] \nonumber \\
&+ \left[ O_p \left( C^{(K, H)} \right) + \sum_{k=1}^{K+H} \left( 1 + O_p \left( |\mathcal{I}_k|^{-\frac{1}{2}} \right) \right) O_p \left( m^{\epsilon} \right) \right]^2 \nonumber \\
&= \| Z \tilde{\bm{v}}_1 \|^2 - 2 \sum_{k=1}^{K+H} \tau_k \tilde{\bm{v}}_1^{\top} Z^{\top} \underline{Q}^{(k)} \tilde{\bm{v}}_1 + O_p \left[ m^{\frac{1}{2}} C^{(K, H)} \right] \nonumber \\
&+ \left[ O_p \left( C^{(K, H)} \right) + (K+H) O_p \left( m^{\epsilon} \right) \right]^2 \nonumber \\
&= \| Z \tilde{\bm{v}}_1 \|^2 - 2 \sum_{k=1}^{K+H} \tau_k \tilde{\bm{v}}_1^{\top} Z^{\top} \underline{Q}^{(k)} \tilde{\bm{v}}_1 + O_p \left[ (K+H) m^{\frac{1}{2}} \left( \min_{k = 1, \dots, K+H} |\mathcal{I}_k| \right)^{-\frac{1}{4}} \right] \nonumber \\
&+ O_p \left[ (K+H)^2 m^{2\epsilon} \right]. 
\end{align}

As for the third term in (\ref{eq:tilde_lambda_u1u2_2}), based on the assumption in (\ref{eq:KH_condition_KHminIk}), 
\begin{align}
\label{eq:KH_min_I_k}
O_p \left[ (K+H) m^{\frac{1}{2}} \left( \min_{k = 1, \dots, K+H} |\mathcal{I}_k| \right)^{-\frac{1}{4}} \right] = O_p \left( m^{\frac{1}{3} - \epsilon_1} \right). 
\end{align}
With regard to the fourth term in (\ref{eq:tilde_lambda_u1u2_2}), based on the assumption that $K+H = O \left( m^{\frac{1}{42} - \epsilon_1} \right)$ for some $\epsilon_1 > 0$ in (\ref{eq:KH_condition_KH_sum}), 
by taking $\epsilon < \epsilon_1$, 
\begin{align}
\label{eq:KH_2_m_2}
O_p \left[ (K+H)^2 m^{2\epsilon} \right] = O_p \left( m^{\frac{1}{21} - 2(\epsilon_1 - \epsilon)} \right). 
\end{align}

An upper bound of the first and second terms in (\ref{eq:tilde_lambda_u1u2_2}) is given by
\begin{align}
\label{eq:Zv_vZQv}
&\| Z \tilde{\bm{v}}_1 \|^2 - 2 \sum_{k=1}^{K+H} \tau_k \tilde{\bm{v}}_1^{\top} Z^{\top} \underline{Q}^{(k)} \tilde{\bm{v}}_1 \nonumber \\
&= (\tilde{\bm{v}}^{\mathrm{L}} + \tilde{\bm{v}}^{\mathrm{S}})^{\top} Z^{\top} Z (\tilde{\bm{v}}^{\mathrm{L}} + \tilde{\bm{v}}^{\mathrm{S}}) - 2 \sum_{k=1}^{K+H} \tau_k \tilde{\bm{v}}_1^{\top} Z^{\top} \underline{Q}^{(k)} \tilde{\bm{v}}_1 \nonumber \\
&= \left( \tilde{\bm{v}}^{\mathrm{L}} \right)^{\top} Z^{\top} Z \tilde{\bm{v}}^{\mathrm{L}} + \left( \tilde{\bm{v}}^{\mathrm{S}} \right)^{\top} Z^{\top} Z \tilde{\bm{v}}^{\mathrm{S}} - 2 \sum_{k=1}^{K+H} \tau_k \tilde{\bm{v}}_1^{\top} Z^{\top} \underline{Q}^{(k)} \tilde{\bm{v}}_1 \nonumber \\
&= \left( \sum_{j = 1}^t c_j \bm{v}_j \right)^{\top} \left( \sum_{j = 1}^t c_j Z^{\top} Z \bm{v}_j \right) + \left( \sum_{j = t+1}^p c_j \bm{v}_j \right)^{\top} \left( \sum_{j = t+1}^p c_j Z^{\top} Z \bm{v}_j \right) \nonumber \\
&- 2 \sum_{k=1}^{K+H} \tau_k \tilde{\bm{v}}_1^{\top} Z^{\top} \underline{Q}^{(k)} \tilde{\bm{v}}_1 \nonumber \\
&= \left( \sum_{j = 1}^t c_j \bm{v}_j \right)^{\top} \left( \sum_{j = 1}^t c_j \lambda_j \bm{v}_j \right) + \left( \sum_{j = t+1}^p c_j \bm{v}_j \right)^{\top} \left( \sum_{j = t+1}^p c_j \lambda_j \bm{v}_j \right) \nonumber \\
&- 2 \sum_{k=1}^{K+H} \tau_k \tilde{\bm{v}}_1^{\top} Z^{\top} \underline{Q}^{(k)} \tilde{\bm{v}}_1 \nonumber \\
&= \sum_{j = 1}^t c_j^2 \lambda_j \| \bm{v}_j \|^2 + \sum_{j = t+1}^p c_j^2 \lambda_j \| \bm{v}_j \|^2 - 2 \sum_{k=1}^{K+H} \tau_k \tilde{\bm{v}}_1^{\top} Z^{\top} \underline{Q}^{(k)} \tilde{\bm{v}}_1  \nonumber \\
&= \sum_{j = 1}^t c_j^2 \lambda_j + \sum_{j = t+1}^p c_j^2 \lambda_j - 2 \sum_{k=1}^{K+H} \tau_k \tilde{\bm{v}}_1^{\top} Z^{\top} \underline{Q}^{(k)} \tilde{\bm{v}}_1 \nonumber \\
&\leq \lambda_1 \sum_{j = 1}^t c_j^2 + \lambda_{t+1} \sum_{j = t+1}^p c_j^2 - 2 \sum_{k=1}^{K+H} \tau_k \tilde{\bm{v}}_1^{\top} Z^{\top} \underline{Q}^{(k)} \tilde{\bm{v}}_1 \nonumber \\
&\leq \lambda_1 \sum_{j = 1}^t c_j^2 + (\lambda_1 - n^d) \sum_{j = t+1}^p c_j^2 - 2 \sum_{k=1}^{K+H} \tau_k \tilde{\bm{v}}_1^{\top} Z^{\top} \underline{Q}^{(k)} \tilde{\bm{v}}_1\ \ \ (\because (\ref{eq:tilde_v1v2})) \nonumber \\
&= \lambda_1 \sum_{j = 1}^p c_j^2 - n^d \sum_{j = t+1}^p c_j^2 - 2 \sum_{k=1}^{K+H} \tau_k \tilde{\bm{v}}_1^{\top} Z^{\top} \underline{Q}^{(k)} \tilde{\bm{v}}_1 \nonumber \\
&= \lambda_1 \| \tilde{\bm{v}}_1 \|^2 - n^d \| \tilde{\bm{v}}^{\mathrm{S}} \|^2 - 2 \sum_{k=1}^{K+H} \tau_k \tilde{\bm{v}}_1^{\top} Z^{\top} \underline{Q}^{(k)} \tilde{\bm{v}}_1 \nonumber \\
&= \lambda_1 - n^d \| \tilde{\bm{v}}^{\mathrm{S}} \|^2 - 2 \sum_{k=1}^{K+H} \tau_k \tilde{\bm{v}}_1^{\top} Z^{\top} \underline{Q}^{(k)} \tilde{\bm{v}}^{\mathrm{L}} - 2 \sum_{k=1}^{K+H} \tau_k \tilde{\bm{v}}_1^{\top} Z^{\top} \underline{Q}^{(k)} \tilde{\bm{v}}^{\mathrm{S}}. 
\end{align}

Let $\bm{u}^{(k)} \in \mathbb{R}^p$ be a vector whose entries are defined by $\bm{u}^{(k)}_j = \frac{1}{\sqrt{|J_k|}}$ if $j \in J_k$ and $\bm{u}^{(k)}_j = 0$ otherwise. 
As for the third term in (\ref{eq:Zv_vZQv}), using the fact that $\underline{Q}^{(k)} \bm{v}_j = \eta_k |J_k| (\bm{v}_j^{\top} \bm{u}^{(k)}) \bm{u}^{(k)}$, for all $\epsilon > 0$, 
\begin{align}
\label{vZQv1_upper}
&- \tilde{\bm{v}}_1^{\top} Z^{\top} \underline{Q}^{(k)} \tilde{\bm{v}}^{\mathrm{L}} 
\leq | \tilde{\bm{v}}_1^{\top} Z^{\top} \underline{Q}^{(k)} \tilde{\bm{v}}^{\mathrm{L}} | 
= \left| \sum_{j = 1}^t c_j \tilde{\bm{v}}_1^{\top} Z^{\top} \underline{Q}^{(k)} \bm{v}_j \right| \nonumber \\
&= \left| \eta_k |J_k| \tilde{\bm{v}}_1^{\top} Z^{\top} \bm{u}^{(k)} \sum_{j = 1}^t c_j (\bm{v}_j^{\top} \bm{u}^{(k)}) \right| \nonumber \\
&= O_p \left( |\mathcal{I}_k|^{-\frac{1}{2}} \right) |J_k| \left| \sum_{j = 1}^t c_j (\bm{v}_j^{\top} \bm{u}^{(k)}) \right| \left| \tilde{\bm{v}}_1^{\top} Z^{\top} \bm{u}^{(k)} \right| \nonumber \\
&= O_p (1) \left| \sum_{j = 1}^t c_j (\bm{v}_j^{\top} \bm{u}^{(k)}) \right| \left| \tilde{\bm{v}}_1^{\top} Z^{\top} \bm{u}^{(k)} \right| \nonumber \\
&\leq O_p (1) \sqrt{\sum_{j = 1}^t c_j^2} \sqrt{\sum_{j = 1}^t | \bm{v}_j^{\top} \bm{u}^{(k)} |^2} \left| \tilde{\bm{v}}_1^{\top} Z^{\top} \bm{u}^{(k)} \right| \nonumber \\
&\leq O_p (1) \| \tilde{\bm{v}}_1 \| \sqrt{t}\ O_p \left( m^{-\frac{1}{2} + \epsilon} \right) \left| \tilde{\bm{v}}_1^{\top} Z^{\top} \bm{u}^{(k)} \right|\ \ \ %(\because \mathrm{(\ref{eq:delocalization})\ holds\ from\ \cite{Bloemendal2016}}) 
\nonumber \\
&= \sqrt{t}\ O_p \left( m^{-\frac{1}{2} + \epsilon} \right) \left| \tilde{\bm{v}}_1^{\top} Z^{\top} \bm{u}^{(k)} \right| 
\leq \sqrt{t}\ O_p \left( m^{-\frac{1}{2} + \epsilon} \right) \| \tilde{\bm{v}}_1^{\top} Z^{\top} \| \| \bm{u}^{(k)} \| \nonumber \\
&= \sqrt{t}\ O_p \left( m^{-\frac{1}{2} + \epsilon} \right) \| \tilde{\bm{v}}_1^{\top} Z^{\top} \| 
\leq \sqrt{t}\ O_p \left( m^{-\frac{1}{2} + \epsilon} \right) \sqrt{\lambda_1} \nonumber \\
&= \sqrt{t}\ O_p \left( m^{-\frac{1}{2} + \epsilon} \right) O_p \left( m^{\frac{1}{2}} \right) 
= \sqrt{t}\ O_p \left( m^{\epsilon} \right). 
\end{align}

With regard to the fourth term in (\ref{eq:Zv_vZQv}), we have
\begin{align}
\label{vZQv2_upper}
&-\tilde{\bm{v}}_1^{\top} Z^{\top} \underline{Q}^{(k)} \tilde{\bm{v}}^{\mathrm{S}} 
\leq |\tilde{\bm{v}}_1^{\top} Z^{\top} \underline{Q}^{(k)} \tilde{\bm{v}}^{\mathrm{S}}| 
\leq \| \tilde{\bm{v}}_1 \| \| Z^{\top} \underline{Q}^{(k)} \tilde{\bm{v}}^{\mathrm{S}} \| 
= \| Z^{\top} \underline{Q}^{(k)} \tilde{\bm{v}}^{\mathrm{S}} \| \nonumber \\
&\leq \| Z^{\top} \underline{Q}^{(k)} \|_{\mathrm{op}} \| \tilde{\bm{v}}^{\mathrm{S}} \| 
\leq \| Z \|_{\mathrm{op}} \| \underline{Q}^{(k)} \|_{\mathrm{F}} \| \tilde{\bm{v}}^{\mathrm{S}} \| 
= \sqrt{\lambda_1 |\mathcal{I}_k|}\ |\eta_k|\ \| \tilde{\bm{v}}^{\mathrm{S}} \|. 
\end{align}

By substituting (\ref{vZQv1_upper}) and (\ref{vZQv2_upper}) into (\ref{eq:Zv_vZQv}), 
\begin{align}
\label{eq:Zv_vZQv2}
&\| Z \tilde{\bm{v}}_1 \|^2 - 2 \sum_{k=1}^{K+H} \tau_k \tilde{\bm{v}}_1^{\top} Z^{\top} \underline{Q}^{(k)} \tilde{\bm{v}}_1 \nonumber \\
&\leq \lambda_1 - n^d \| \tilde{\bm{v}}^{\mathrm{S}} \|^2 + 2 \sum_{k=1}^{K+H} \tau_k \sqrt{t}\ O_p \left( m^{\epsilon} \right) + 2 \sum_{k=1}^{K+H} \tau_k \sqrt{\lambda_1 |\mathcal{I}_k|}\ |\eta_k|\ \| \tilde{\bm{v}}^{\mathrm{S}} \| \nonumber \\
&= \lambda_1 - n^d \| \tilde{\bm{v}}^{\mathrm{S}} \|^2 + 2 \sum_{k=1}^{K+H} \left[ 1 + O_p \left( |\mathcal{I}_k|^{-\frac{1}{2}} \right) \right] \sqrt{t}\ O_p \left( m^{\epsilon} \right) \nonumber \\
&+ 2 \sum_{k=1}^{K+H} \tau_k \sqrt{\lambda_1 |\mathcal{I}_k|}\ |\eta_k|\ \| \tilde{\bm{v}}^{\mathrm{S}} \| \nonumber \\
&= \lambda_1 - n^d \| \tilde{\bm{v}}^{\mathrm{S}} \|^2 + 2 \sum_{k=1}^{K+H} \sqrt{t}\ O_p \left( m^{\epsilon} \right) + 2 \sum_{k=1}^{K+H} \tau_k \sqrt{\lambda_1 |\mathcal{I}_k|}\ |\eta_k|\ \| \tilde{\bm{v}}^{\mathrm{S}} \| \nonumber \\
&= \lambda_1 - n^d \| \tilde{\bm{v}}^{\mathrm{S}} \|^2 + \sqrt{t}\ O_p \left[ (K+H) m^{\epsilon} \right] + 2 \sum_{k=1}^{K+H} \tau_k \sqrt{\lambda_1 |\mathcal{I}_k|}\ |\eta_k|\ \| \tilde{\bm{v}}^{\mathrm{S}} \|. 
\end{align}

From now on, we derive the probabilistic order of $t$. We denote the $j$th normalized eigenvalue of matrix $Z^{\top} Z$ as $\nu_j \equiv \frac{1}{n} \lambda_j$, and define the following variables: 
\begin{align}
\nu_+ \equiv \left( 1 + \sqrt{\frac{p}{n}} \right)^2, \ \ \ 
\nu_- \equiv \left( 1 - \sqrt{\frac{p}{n}} \right)^2, \ \ \ 
\epsilon_3 \equiv \nu_+ - \nu_1. 
\end{align}
Note that $|\epsilon_3| = O_p \left( \phi^C m^{-\frac{2}{3}} \right)$ holds for some constant $C>0$ and $\phi \equiv (\log p)^{\log \log p}$ from (4.1) of \cite{Pillai2014}. 
Since $\phi = o(m^{\epsilon_4})$ holds for any $\epsilon_4 > 0$, by taking $\epsilon_5 \equiv C \epsilon_4$, we have 
\begin{align}
\label{eq:eps3}
|\epsilon_3| = O_p \left( m^{-\frac{2}{3} + \epsilon_5} \right),\ \mathrm{for\ any}\ \epsilon_5 > 0. 
\end{align}

From (3.7) of \cite{Pillai2014}, we have
\begin{align}
\label{eq:bar_n_tp}
\left| \bar{n} - \frac{t}{p} \right| = O_p \left( m^{-1 + \epsilon_6} \right),\ \mathrm{for\ all}\ \epsilon_6 > 0, 
\end{align}
where $\bar{n} \equiv \int_{\nu_1 - n^{d-1}}^{\infty} q(x) \mathrm{d}x$ and 
\begin{align}
\label{eq:mp_dist}
q(x) = \frac{1}{2\pi} \frac{n}{p} \frac{\sqrt{\max \{ (\nu_+ - x)(x - \nu_-), 0 \}}}{x}. 
\end{align}

From (\ref{eq:mp_dist}), by taking $\epsilon_5 < d - \frac{1}{3} = \frac{8}{21}$, we have 
\begin{align}
\label{eq:q_nu_nd}
&q(\nu_1 - n^{d-1}) = q(\nu_+ - n^{d-1} - \epsilon_3) \nonumber \\
&= \frac{\sqrt{\nu_+ - \nu_-}}{\nu_+} \left[ n^{\frac{d-1}{2}} + O_p \left( m^{-\frac{1}{3} + \frac{\epsilon_5}{2}} \right) \right] \left[ 1 + O \left( m^{\frac{d-1}{2}} \right) + O_p \left( m^{-\frac{1}{3} + \frac{\epsilon_5}{2}} \right) \right] \nonumber \\
&= \frac{\sqrt{\nu_+ - \nu_-}}{\nu_+} n^{\frac{d-1}{2}} + O_p \left( m^{\frac{d-1}{2}} \right). 
\end{align}

From (\ref{eq:eps3}) and (\ref{eq:q_nu_nd}), by setting $\epsilon_5 < d - \frac{1}{3}$, 
\begin{align}
\label{eq:bar_n}
\bar{n} &= \int_{\nu_1 - n^{d-1}}^{\infty} q(x) \mathrm{d}x 
\leq \left| \int_{\nu_1 - n^{d-1}}^{\nu_+} q(x) \mathrm{d}x \right| + \left| \int_{\nu_+}^{\infty} q(x) \mathrm{d}x \right| \nonumber \\
&= \left| \int_{\nu_1 - n^{d-1}}^{\nu_+} q(x) \mathrm{d}x \right| 
\leq |\epsilon_3 + n^{d-1}|\ q(\nu_1 - n^{d-1}) \nonumber \\
&= O_p \left( m^{d-1} \right) O_p \left( m^{\frac{d-1}{2}} \right) 
= O_p \left( m^{\frac{3(d-1)}{2}} \right). 
\end{align}

From (\ref{eq:bar_n}) and (\ref{eq:bar_n_tp}), by setting $\epsilon_6 < \frac{3}{2} d - \frac{1}{2}$, 
\begin{align}
\label{eq:t_op_3_2}
t = O_p \left( m^{\frac{3}{2} d - \frac{1}{2}} \right). 
\end{align}

By substituting (\ref{eq:t_op_3_2}) into (\ref{eq:Zv_vZQv2}) and from the assumption in (\ref{eq:tilde_v1v2}) that $d = \frac{5}{7}$, for all $\epsilon > 0$, 
\begin{align}
\label{eq:Zv_vZQv2_2}
&\| Z \tilde{\bm{v}}_1 \|^2 - 2 \sum_{k=1}^{K+H} \tau_k \tilde{\bm{v}}_1^{\top} Z^{\top} \underline{Q}^{(k)} \tilde{\bm{v}}_1 \nonumber \\
&\leq \lambda_1 + O_p \left[ (K+H) m^{\frac{2}{7} + \epsilon} \right] + \| \tilde{\bm{v}}^{\mathrm{S}} \| \left( 2 \sqrt{\lambda_1}\ \sum_{k=1}^{K+H} \tau_k \sqrt{|\mathcal{I}_k|}\ |\eta_k| - n^d \| \tilde{\bm{v}}^{\mathrm{S}} \| \right) \nonumber \\
&= \lambda_1 + O_p \left[ (K+H) m^{\frac{2}{7} + \epsilon} \right] + \| \tilde{\bm{v}}^{\mathrm{S}} \| \left( 2 \sqrt{\lambda_1}\ \varpi - n^d \| \tilde{\bm{v}}^{\mathrm{S}} \| \right), 
\end{align}
where 
\begin{align}
\label{eq:varpi}
\varpi &\equiv \sum_{k=1}^{K+H} \tau_k \sqrt{|\mathcal{I}_k|}\ |\eta_k| &= \sum_{k=1}^{K+H} \left[ 1 + O_p \left( \frac{1}{\sqrt{|\mathcal{I}_k|}} \right) \right] \sqrt{|\mathcal{I}_k|}\ O_p \left( \frac{1}{\sqrt{|\mathcal{I}_k|}} \right) \nonumber \\
&= O_p (K+H). 
\end{align}

By using $d = \frac{5}{7}$ and (\ref{eq:varpi}), the third term in the right side of (\ref{eq:Zv_vZQv2_2}) can be upper bounded by
\begin{align}
\label{eq:Zv_op1_agmean}
&\| \tilde{\bm{v}}^{\mathrm{S}} \| \left( 2 \sqrt{\lambda_1}\ \varpi - n^d \| \tilde{\bm{v}}^{\mathrm{S}} \| \right) 
= 2 \| \tilde{\bm{v}}^{\mathrm{S}} \| \sqrt{\lambda_1}\ \varpi - n^d \| \tilde{\bm{v}}^{\mathrm{S}} \|^2 - \frac{\lambda_1 \varpi^2}{n^d} + \frac{\lambda_1 \varpi^2}{n^d} \nonumber \\
&= - \frac{(\sqrt{\lambda_1}\ \varpi - n^d \| \tilde{\bm{v}}^{\mathrm{S}} \|)^2}{n^d} + \frac{\lambda_1 \varpi^2}{n^d} 
\leq \frac{\lambda_1 \varpi^2}{n^d} = O_p \left[ (K+H)^2 m^{\frac{2}{7}} \right], 
\end{align}
which results in that
\begin{align}
\label{eq:eq:Zv_op1_4}
&\| Z \tilde{\bm{v}}_1 \|^2 - 2 \sum_{k=1}^{K+H} \tau_k \tilde{\bm{v}}_1^{\top} Z^{\top} \underline{Q}^{(k)} \tilde{\bm{v}}_1 \nonumber \\
&\leq \lambda_1 + O_p \left[ (K+H) m^{\frac{2}{7} + \epsilon} \right] + O_p \left[ (K+H)^2 m^{\frac{2}{7}} \right] \nonumber \\
&\leq \lambda_1 + O_p \left[ (K+H)^2 m^{\frac{2}{7} + \epsilon} \right],\ \mathrm{for\ all}\ \epsilon > 0. 
\end{align}
Therefore, from (\ref{eq:tilde_lambda_u1u2_2}), (\ref{eq:KH_min_I_k}), and (\ref{eq:KH_2_m_2}), for all $\epsilon > 0$, 
\begin{align}
&\tilde{\lambda}_1 \leq \lambda_1 + O_p \left[ (K+H)^2 m^{\frac{2}{7} + \epsilon} \right] + O_p \left( m^{\frac{1}{3} - \epsilon_1} \right) + O_p \left( m^{\frac{1}{21} - 2(\epsilon_1 - \epsilon)} \right). 
\end{align}

From the assumption that $K+H = O \left( m^{\frac{1}{42} - \epsilon_1} \right)$ for some $\epsilon_1 > 0$ in (\ref{eq:KH_condition_KH_sum}), by taking $\epsilon < \epsilon_1$, we finally obtain
\begin{align}
&\tilde{\lambda}_1 \leq \lambda_1 + O_p \left( m^{\frac{1}{3} - \tilde{\epsilon}} \right),\ \mathrm{for\ some}\ \tilde{\epsilon} > 0, 
\end{align}
which concludes the proof. 
\end{proof}

% # # # # # # # # # # # # # # # # # # # # # # # # # # # # # # # # # # # # # # # # # # # # # # # # # # # # # # #

\section{Disjoint submatrix localization algorithm based on simulated annealing}
\label{sec:consistency_approx}

In this section, we develop a simulated annealing (SA) algorithm to find the bicluster structure of a given observed matrix. As in \cite{Flynn2020}, the proposed algorithm is based on the (generalized) profile likelihood \cite{Murphy2000}. Given an estimated bicluster assignment $\hat{g}$, the generalized profile-likelihood criterion for an exponential family model is given by
\begin{align}
\label{eq:gen_PL_def}
F (\hat{g}) \equiv \sum_{k = 0}^K \hat{p}_k f \left( \frac{1}{|\hat{\mathcal{I}}_k|} \sum_{(i, j) \in \hat{\mathcal{I}}_k} A_{ij} \right), 
\end{align}
where $\hat{p}_k \in \mathbb{R}$ is the proportion of entries in the $k$th group ($k \in \{ 0, 1, \dots, K \}$) in the estimated bicluster structures to all the $np$ entries and $f: \mathbb{R} \mapsto \mathbb{R}$ is a given function. The specific definition of function $f$ for each experiment is given in Sect.~\ref{sec:experiments}. 

\subsubsection{The naive implementation of SA-based submatrix localization}
\label{sec:naive_sa}

Let $\mathcal{G}_K$ be a set of all bicluster structures with (non-empty) $K$ biclusters, which are disjoint, but which are not necessarily bi-disjoint (as shown in Figure \ref{fig:submatrix} (b)). In SA, we first define a sequence of temperatures $\{ T_t \}_{t=0}^{\infty}$, a threshold $\epsilon^{\mathrm{SA}}$, and the initial state (i.e., bicluster assignment) $\hat{g}^{(0)} \in \mathcal{G}_K$. For each state $g \in \mathcal{G}_K$, we also define a set of its neighbors $N(g) \subseteq \mathcal{G}_K$ and a transition probability $R(g, g') \in [0, 1]$ to a given state $g' \in \mathcal{G}_K$. Here, we set $R(g, g') = 0$ iff $g' \notin N(g)$. 

For each step $t = 0, 1, 2, \dots$, if $T_t < \epsilon$, we stop the algorithm and output the final state $\hat{g}^{(t)}$. If $T_t \geq \epsilon$, we randomly choose a candidate for the next state $\tilde{g} \in N(\hat{g}^{(t)})$ with probability $R(\hat{g}^{(t)}, \tilde{g})$, and compute the difference of the objective function value $\Delta F \equiv F(\tilde{g}) - F(\hat{g}^{(t)})$. 
If $\Delta F > 0$, we set the next state at $\hat{g}^{(t+1)} = \tilde{g}$. If $\Delta F \leq 0$, we set the next state at $\hat{g}^{(t+1)} = \tilde{g}$ with probability $\exp \left( \frac{\Delta F}{T_t} \right)$, and set it at the current state $\hat{g}^{(t+1)} = \hat{g}^{(t)}$ with probability $1 - \exp \left( \frac{\Delta F}{T_t} \right)$. 

Specifically, we propose Algorithm \ref{algo:max_PL_SA} as an example of SA for approximately maximizing the generalized profile likelihood $F$. In Algorithm \ref{algo:max_PL_SA}, we define that the neighbors $N(g)$ of a state $g$ is a set of all possible bicluster assignments that can be obtained by adding/removing one row or column to/from one bicluster in $g$. As for the transition probability, we define that one of the elements in $N(g)$ is chosen from the uniform distribution on $N(g)$ (i.e., $R(g, g') = 1/|N(g)|$ for $g' \in N(g)$). 

We can easily check that the above settings satisfy the following \textit{irreducibility} and \textit{weak reversibility}: 
\begin{itemize}
\item Irreducibility: for any pair $g, g' \in \mathcal{G}_K$, there exists some sequence of transitions from $g$ to $g'$ with non-zero probability. 
\item Weak reversibility: for any pair $g, g' \in \mathcal{G}_K$ and $\tilde{F} \in \mathbb{R}$, the following two propositions (P1) and (P2) are mutually equivalent: 
\begin{itemize}
\item (P1) there exists some sequence of transitions $g_1 = g, g_2, \dots, g_p = g'$ with non-zero probability that satisfies $F(g_t) \geq \tilde{F}$ for all $t \in \{1, \dots, p\}$. 
\item (P2) there exists some sequence of transitions $g_1 = g', g_2, \dots, g_p = g$ with non-zero probability that satisfies $F(g_t) \geq \tilde{F}$ for all $t \in \{1, \dots, p\}$. 
\end{itemize}
\end{itemize}
We define that a state $g$ is \textit{locally optimal} if there is no state $g' \in \mathcal{G}_K$ that satisfies the following two conditions simultaneously: $F(g') > F(g)$, and there exists some sequence of transitions $g_1 = g, g_2, \dots, g_p = g'$ with non-zero probability that satisfies $F(g_t) \geq F(g)$ for all $t \in \{1, \dots, p\}$. For a locally but not globally optimal solution $g$, we define its \textit{depth} as the minimum $r$ that satisfies the following condition: there exists some $g'$ such that $F(g') > F(g)$ and there exists some sequence of transitions $g_1 = g, g_2, \dots, g_p = g'$ with non-zero probability that satisfies $F(g_t) \geq F(g) - r$ for all $t \in \{1, \dots, p\}$. 
By setting the sequence of temperatures at $T_t = [\max_{g \in \mathcal{G}_K} F(g) - \min_{g \in \mathcal{G}_K} F(g)] / \log (t+2)$ for all $t \geq 0$ \cite{Hajek1988}, for example, the proposed Algorithm \ref{algo:max_PL_SA} also satisfies the following conditions: 
\begin{itemize}
\item $T_t \geq T_{t + 1}$ holds for all $t \geq 0$, and $\lim_{t \to \infty} T_t = 0$. 
\item $\sum_{t=0}^{\infty} \exp \left( -\frac{r^*}{T_t} \right) = +\infty$, where $r^*$ is the maximum depth of all the locally but not globally optimal solutions. 
\end{itemize}

It has been proven that under the above conditions, the probability that an SA algorithm outputs the global optimal solution converges to one in the limit of $t \to \infty$ \cite{Hajek1988}. 

%===
\begin{algorithm}%[t]
\caption{A naive SA algorithm for finding the maximum profile likelihood solution $\hat{g}$. }         
\label{algo:max_PL_SA}
\begin{algorithmic}[1]
\REQUIRE A cooling schedule of temperature $\{ T_t \}_{t=0}^{\infty}$ and a threshold $\epsilon^{\mathrm{SA}}$. 
\ENSURE Approximated optimal bicluster assignment $\hat{g}$. 
\STATE $t \gets 0$. 
\STATE Randomly generate an initial bicluster assignment $\hat{g}$, which is disjoint but not necessarily bi-disjoint (as shown in Figure \ref{fig:submatrix} (b)). 
\WHILE{$T_t \geq \epsilon^{\mathrm{SA}}$}
\STATE Set $\tilde{g} \gets \hat{g}$ and randomly choose an index $k_0$ from the uniform distribution on $\{ 1, \dots, 2K \}$. 
  \IF{$k_0 \leq K$}
    \STATE Set bicluster index $k \gets k_0$. 
    \STATE Let $I_k$ and $J_k = \{ j_1, \dots, j_{|J_k|} \}$, respectively, be the sets of row and column indices in the $k$th bicluster. We define \textit{add} and \textit{remove lists} as follows. 
    \STATE For $i \in I_k$, let $\mathcal{I}^{\mathrm{rem}}_{ki}$ be the set of entries in the $i$th row of the $k$th bicluster (i.e., $\{ (i, j_1), (i, j_2), \dots, (i, j_{|J_k|}) \}$). We define the \textit{remove list} as $\mathcal{I}^{\mathrm{rem}}_k = \{ \mathcal{I}^{\mathrm{rem}}_{ki} \}_{i \in I_k}$. 
    \STATE Let $\bar{I}_k$ be the set of row indices $i$ that satisfies $\bigcap_{s = 1}^{|J_k|} \bigcap_{k' = 1}^K [(i, j_s) \notin \mathcal{I}_{k'}]$. For $i \in \bar{I}_k$, let $\mathcal{I}^{\mathrm{add}}_{ki}$ be the set of entries $\{ (i, j_1), (i, j_2), \dots, (i, j_{|J_k|}) \}$. We define the \textit{add list} as $\mathcal{I}^{\mathrm{add}}_k = \{ \mathcal{I}^{\mathrm{add}}_{ki} \}_{i \in \bar{I}_k}$. 
    \STATE Let $\mathcal{I}_0$ be the set of background entries in $\tilde{g}$. Set $y^{\mathrm{add}} \gets (|\bar{I}_k| \geq 2) \cup [(|\bar{I}_k| = 1) \cap (\mathcal{I}_0 \neq \mathcal{I}^{\mathrm{add}}_k)]$, which is a flag of whether or not we can execute ``add'' operation. This guarantees that the set of background entries is not null. 
    \IF{$|I_k| \geq 2$ and $y^{\mathrm{add}} = \mathrm{True}$}
      \STATE Randomly choose $i$ from the uniform distribution on $\{ 1, \dots, |I_k| + |\bar{I}_k| \}$. If $i \leq |I_k|$, remove $\mathcal{I}^{\mathrm{rem}}_{ki}$ from the $k$th bicluster and add it to the background in $\tilde{g}$. If $i > |I_k|$, remove $\mathcal{I}^{\mathrm{add}}_{k(i-|I_k|)}$ from the background and add it to the $k$th bicluster in $\tilde{g}$. 
    \ELSIF{$|I_k| \geq 2$}
      \STATE Randomly choose $i$ from the uniform distribution on $\{ 1, \dots, |I_k| \}$. Remove $\mathcal{I}^{\mathrm{rem}}_{ki}$ from the $k$th bicluster and add it to the background in $\tilde{g}$. 
    \ELSIF{$y^{\mathrm{add}} = \mathrm{True}$}
      \STATE Randomly choose $i$ from the uniform distribution on $\{ 1, \dots, |\bar{I}_k| \}$. Remove $\mathcal{I}^{\mathrm{add}}_{ki}$ from the background and add it to the $k$th bicluster in $\tilde{g}$. 
    \ENDIF
  \ELSE
    \STATE Set the bicluster index $k \gets k_0 - K$. 
    \STATE Execute lines $7$ to $17$ by swapping the rows and columns in all the operations. 
  \ENDIF
  \IF{$F (\tilde{g}) - F (\hat{g}) > 0$}
  \STATE $\hat{g} \gets \tilde{g}$. 
  \ELSE
  \STATE With probability $\exp \left( \frac{F (\tilde{g}) - F (\hat{g})}{T_t} \right)$, $\hat{g} \gets \tilde{g}$. 
  \ENDIF
\STATE $t \gets t + 1$. 
\ENDWHILE
\end{algorithmic}
\end{algorithm}
%===

\subsubsection{A further approximated version of SA-based submatrix localization algorithm}
\label{sec:compress_algo}

Although the naive SA algorithm in Sect.~\ref{sec:naive_sa} is tractable compared to the exhaustive search, it still requires too many steps for the algorithm to converge. Therefore, in this subsection, we propose a further approximation of Algorithm \ref{algo:max_PL_SA}. The main idea here is to first compress an observed data matrix $A$ by using row-column clustering, and then apply an SA algorithm on the compressed data matrix. 

Remark that the \textbf{null} group-wise mean matrix $P$ with $K$ biclusters has at most $2^K$ distinct rows, depending on whether or not it includes each $k$th bicluster ($k = 1, \dots, K$). Based on this fact, we first apply a clustering method (e.g., hierarchical clustering) to the rows of matrix $A$, by setting the number of clusters at $L_1 \in \mathbb{N}$, which satisfies $\min \{2^K, n\} \leq L_1 \leq n$. 
Based on a similar discussion, we also perform column clustering with number of clusters $L_2$  that satisfies $\min \{2^K, p\} \leq L_2 \leq p$. 
Then, we define the compressed observed matrix $A^{\mathrm{comp}} \in \mathbb{R}^{L_1 \times L_2}$ and matrix $M \in \mathbb{N}^{L_1 \times L_2}$ as follows: 
\begin{align}
\label{eq:A_comp_M}
&A^{\mathrm{comp}} = (A^{\mathrm{comp}}_{hh'})_{1 \leq h \leq L_1, 1 \leq h' \leq L_2}, \ \ \ \ \ 
A^{\mathrm{comp}}_{hh'} = \frac{1}{|\mathcal{I}^{\mathrm{comp}}_{hh'}|} \sum_{(i, j) \in \mathcal{I}^{\mathrm{comp}}_{hh'}} A_{ij}, \nonumber \\
&M = (M_{hh'})_{1 \leq h \leq L_1, 1 \leq h' \leq L_2}, \ \ \ \ \ 
M_{hh'} = |\mathcal{I}^{\mathrm{comp}}_{hh'}|, 
\end{align}
where $\mathcal{I}^{\mathrm{comp}}_{hh'}$ is the set of entries of matrix $A$ in the $h$th row cluster and the $h'$th column cluster. 

Next, we apply an SA algorithm to the compressed observed matrix $A^{\mathrm{comp}}$. Let $\hat{g}^{\mathrm{comp}}_{hh'} \in \{ 0, 1, \dots, K \}$ be the estimated group index of the $(h, h')$th entry of matrix $A^{\mathrm{comp}}$, and let $\mathcal{J}^{\mathrm{comp}}_k \subseteq \{ (1, 1), \dots, (L_1, L_2) \}$ be the set of entries in the $k$th estimated group ($k = 0, 1, \dots, K$) of matrix $A^{\mathrm{comp}}$. Note that we have $\mathcal{J}^{\mathrm{comp}}_k = \{ (h, h'): \hat{g}^{\mathrm{comp}}_{hh'} = k \}$. 

The key insight is that we have
\begin{align}
&\hat{p}_k = \frac{1}{np} \sum_{(h, h') \in \mathcal{J}^{\mathrm{comp}}_k} M_{hh'}, \nonumber \\
&\frac{1}{|\hat{\mathcal{I}}_k|} \sum_{(i, j) \in \hat{\mathcal{I}}_k} A_{ij} = \frac{1}{np \hat{p}_k} \sum_{(h, h') \in \mathcal{J}^{\mathrm{comp}}_k} M_{hh'} A^{\mathrm{comp}}_{hh'}. 
\end{align}
Based on the above fact, we can compute the objective function value (i.e., profile likelihood) based on the matrices $A^{\mathrm{comp}}$ and $M$, and the bicluster assignment $\hat{g}^{\mathrm{comp}} = (\hat{g}^{\mathrm{comp}}_{hh'})_{1 \leq h \leq L_1, 1 \leq h' \leq L_2}$ by 
\begin{align}
\label{eq:gen_PL_compress}
F (\hat{g}^{\mathrm{comp}}) \equiv \sum_{k = 0}^K \left( \frac{1}{np} \sum_{(h, h') \in \mathcal{J}^{\mathrm{comp}}_k} M_{hh'} \right) f \left( \frac{1}{np \hat{p}_k} \sum_{(h, h') \in \mathcal{J}^{\mathrm{comp}}_k} M_{hh'} A^{\mathrm{comp}}_{hh'} \right). 
\end{align}
From these observations, Algorithm \ref{algo:max_PL_SA2} provides an approximated solution of Algorithm \ref{algo:max_PL_SA}. 

%===
\begin{algorithm}[t]
\caption{Approximated SA algorithm for finding the maximum profile likelihood solution $\hat{g}$. }         
\label{algo:max_PL_SA2}
\begin{algorithmic}[1]
\REQUIRE A set of row and column cluster numbers $(L_1, L_2)$ that satisfies $L_1 \geq 2^K$ and $L_2 \geq 2^K$, a cooling schedule of temperature $\{ T_t \}_{t=0}^{\infty}$ and a threshold $\epsilon^{\mathrm{SA}}$. 
\ENSURE An approximated optimal bicluster assignment $\hat{g}$. 
\STATE Apply a clustering algorithm to the rows of observed matrix $A$ with the number of clusters $L_1$. 
\STATE Apply a clustering algorithm to the columns of observed matrix $A$ with the number of clusters $L_2$. 
\STATE Let $\mathcal{I}^{\mathrm{comp}}_{hh'}$ be the set of entries of matrix $A$ in the $h$th row cluster and the $h'$th column cluster, and let $\mathcal{I}^{\mathrm{comp}} = (\mathcal{I}^{\mathrm{comp}}_{hh'})_{1 \leq h \leq L_1, 1 \leq h' \leq L_2}$.  Based on the clustering result $\mathcal{I}^{\mathrm{comp}}$, define the matrices $A^{\mathrm{comp}}$ and $M$ by (\ref{eq:A_comp_M}). 
\STATE $t \gets 0$. 
\STATE Randomly generate initial (compressed) bicluster assignment $\hat{g}^{\mathrm{comp}}$, which is disjoint but not necessarily bi-disjoint. 
\STATE Execute lines $3$ to $28$ in Algorithm \ref{algo:max_PL_SA} by replacing $A$ and $\hat{g}$ with $A^{\mathrm{comp}}$ and $\hat{g}^{\mathrm{comp}}$, respectively. As for the objective function value, we can compute it by using (\ref{eq:gen_PL_compress}). 
\STATE Convert the set of results $\mathcal{I}^{\mathrm{comp}}$ and $\hat{g}^{\mathrm{comp}}$ into the bicluster assignment $\hat{g}$ of the original observed matrix $A$. 
\end{algorithmic}
\end{algorithm}
%===

\end{appendices}

%===========================================================================

\clearpage
\bibliographystyle{abbrv}
\bibliography{template}

\end{document}